\tikzset{
	on each segment/.style={
		decorate,
		decoration={
			show path construction,
			moveto code={},
			lineto code={
				\path [#1]
				(\tikzinputsegmentfirst) -- (\tikzinputsegmentlast);
			},
			curveto code={
				\path [#1] (\tikzinputsegmentfirst)
				.. controls
				(\tikzinputsegmentsupporta) and (\tikzinputsegmentsupportb)
				..
				(\tikzinputsegmentlast);
			},
			closepath code={
				\path [#1]
				(\tikzinputsegmentfirst) -- (\tikzinputsegmentlast);
			},
		},
	},
	mid arrow/.style={postaction={decorate,decoration={
				markings,
				mark=at position .5 with {\arrow[#1]{stealth}}
	}}},
}
\makeatletter \@addtoreset{equation}{section}
\def\Maketitle{{\def\newpage{}\maketitle}}
\def\Appendix{\appendix
  \def\@seccntbformat##1{Appendix~\csname the##1\endcsname.~~}}
\tikzset{
	styleArrow/.style={postaction={decorate},decoration={markings,mark=at position 0.7 with {\arrow{Stealth}}}},
	blackCircle/.style={fill=black,thick,radius=0.1,inner sep=0},
	mid arrow/.style={postaction={decorate,decoration={
				markings,
				mark=at position .5 with {\arrow[#1]{stealth}}
	}}},
	whiteCircle/.style={fill=white,thick,radius=0.1,inner sep=0}
}
\tikzset{styleNode/.style={
		circle,draw,inner sep=1
	}
}
\tikzset{styleNodeFr/.style={
		rectangle,draw,inner sep=2
	}
}
\tikzset{
	special arrow/.style={
		decoration={
			markings,
			mark=at position #1 with {\arrow{Stealth}}
		},
		postaction={decorate}
	}
}
\newtheorem{Proposition}{Proposition}[section]
\newtheorem{Lemma}[Proposition]{Lemma}
\newtheorem*{Lemma*}{Lemma}
\newtheorem{Conjecture}[Proposition]{Conjecture}
\newtheorem{Theorem}[Proposition]{Theorem}
\newtheorem*{Theorem*}{Theorem}
\newtheorem{Corollary}[Proposition]{Corollary}
\theoremstyle{definition}
\newtheorem{Definition}[Proposition]{Definition}
\newtheorem{Notation}[Proposition]{Notation}
\newtheorem{Remark}[Proposition]{Remark}
\newtheorem{Example}[Proposition]{Example}
\def \sgn{\operatorname{sgn}}
\def \wt{\operatorname{wt}}
\title{Cluster Reductions, Mutations, and $q$-Painlev\'e Equations}
\author{Mikhail Bershtein \and Pavlo Gavrylenko \and Andrei Marshakov \and Mykola Semenyakin}
\date{}
\begin{document}


\Maketitle

\begin{flushright}
	\emph{In memory of Igor Krichever}
\end{flushright}
\vspace{0.5em}

\begin{abstract}
	We propose an extension of the Goncharov-Kenyon class of cluster integrable systems by their Hamiltonian reductions. This extension allows us to fill in the gap in cluster construction of the $q$-difference Painlev\'e equations, showing that all of them can be obtained as deatonomizations 
	of the reduced Goncharov-Kenyon systems.
	
	Conjecturally, the isomorphisms of reduced Goncharov-Kenyon integrable systems are given by mutations in another, dual in some sense, cluster structure.  These are the polynomial mutations of the spectral curve equations and polygon mutations of the corresponding decorated Newton polygons.
	In the Painlev\'e case the initial and dual cluster structures are isomorphic. It leads to self-duality between the spectral curve equation and the Painlev\'e Hamiltonian, and also extends the symmetry from affine to elliptic Weyl group.
\end{abstract}

\tableofcontents

\section{Introduction}

There are two ways to think about the subject and results of the paper. 

On one hand, we study the relation between cluster varieties and \(q\)-difference Painlev\'e equations, previously studied in \cite{Okubo:2015bilinear,Okubo:2017coprimeness,Bershtein:2018cluster,Mizuno,Suzuki:2020}. In particular, in \cite{Bershtein:2018cluster} the affine Weyl groups, which are symmetry groups of the \(q\)-Painlev\'e equations,  were realized  via mutations and permutations of cluster variables, i.e. as subgroups in the cluster mapping class groups. 
Furthermore, all \(q\)-Painlev\'e equations, except two mostly generic with \(E_7\) and \(E_8\) symmetries,  were obtained as deatonomizations of the Goncharov-Kenyon (GK) cluster integrable systems \cite{Goncharov:2013}.  
In this paper we fill this gap, namely we show that all \(q\)-Painlev\'e equations (including those two) can be obtained as deatonomizations of \emph{Hamiltonian reductions} of GK integrable systems. We also observe remarkable \emph{self-duality} between spectral curve and Hamiltonian for these (i.e. corresponding to Painlev\'e) integrable systems. Using this self-duality, we extend affine Weyl groups to \emph{elliptic} (or \emph{double affine}) Weyl groups \cite{Saito:1997} acting on dynamical variables and spectral parameters.

On the other hand, we initiate the study of \emph{cluster} Hamiltonian reductions. The cluster Poisson structure is a covering of a Poisson variety by charts with the Darboux-like coordinates and rational transition maps. Cluster coordinates simplify computations and allow to think about the problems of different nature in a unified framework. The Hamiltonian reduction is a powerful method to construct Poisson varieties and integrable systems on them. It is natural to ask whether Hamiltonian reduction of a cluster Poisson variety has natural cluster structure, and we conjecture that there exists a corresponding class of Hamiltonian reductions of the Goncharov-Kenyon systems. 

The original GK integrable systems are labeled by integral convex polygons~\(N\) while the reduced GK integrable systems are labeled by decorated polygons, where for any side of integral length \(n\) we assign a partition of \(n\).
\begin{figure}[h]
	\begin{center}
		\begin{tikzpicture}[font=\small]
			\begin{scope}[scale=1]
				
				\draw[fill] (-2,1) circle (1pt) -- (-2,0) circle (1pt) -- (-2,-1) circle (1pt) -- (-1,-1) circle (1pt) -- (0,-1) circle (1pt) -- (1,-1) circle (1pt) -- (2,-1) circle (1pt) -- (2,0) circle (1pt) 
				-- (2,1) circle (1pt) -- (1,1) circle (1pt)  -- (0,1) circle (1pt) -- (-1,1) circle (1pt) -- (-2,1) circle (1pt);
				\draw (0,0) circle (2pt);
				\draw (1,0) circle (2pt);
				\draw (-1,0) circle (2pt);
				
				\node[above] at (0,1) {(1,1,1,1)};
				\node[below] at (0,-1) {(1,1,1,1)};
				\node[right] at (2,0) {(2)};
				\node[left] at (-2,0) {(2)};
				
			\end{scope}
			
			\begin{scope}[scale=1,shift={(5,-1.1)},font = \small]
				\def\xs{0.8}
				\def\ys{1.1}
				\def\eps{0.07}		
				\def\curveheight{0.7}  
				
				\foreach \i in {0,...,2}
				{
					\draw (0.5*\xs+\i*\xs+\eps,0) .. controls (0.5*\xs+\i*\xs+\eps,\curveheight) and (1.5*\xs+\i*\xs,\curveheight) .. (1.5*\xs+\i*\xs,0);
					\draw (0.5*\xs+\i*\xs+\eps,2*\ys) .. controls (0.5*\xs+\i*\xs+\eps,2*\ys-\curveheight) and (1.5*\xs+\i*\xs,2*\ys-\curveheight) .. (1.5*\xs+\i*\xs,2*\ys);					
				}
				\draw (0,\ys) .. controls (\curveheight,\ys) and ((0.5*\xs,\curveheight) .. ((0.5*\xs,0);		
				\draw (0,\ys+\eps) .. controls (\curveheight,\ys+\eps) and (\curveheight,\ys+2*\eps) .. (0,\ys+2*\eps);
				\draw (0,\ys+3*\eps) .. controls (\curveheight,\ys+3*\eps) and (0.5*\xs,2*\ys-\curveheight) .. (0.5*\xs,2*\ys);
				
				\draw (4*\xs,\ys) .. controls (4*\xs-\curveheight,\ys) and ((3.5*\xs+\eps,\curveheight) .. ((3.5*\xs+\eps,0);		
				\draw (4*\xs,\ys+\eps) .. controls (4*\xs-\curveheight,\ys+\eps) and (4*\xs-\curveheight,\ys+2*\eps) .. (4*\xs,\ys+2*\eps);
				\draw (4*\xs,\ys+3*\eps) .. controls (4*\xs-\curveheight,\ys+3*\eps) and (3.5*\xs+\eps,2*\ys-\curveheight) .. (3.5*\xs+\eps,2*\ys);
				
				\newcommand{\cross}[2]{
					\draw[thick] (#1) -- ++(#2*\eps, #2*\eps);   
					\draw[thick] (#1) -- ++(-#2*\eps, -#2*\eps); 
					\draw[thick] (#1) -- ++(#2*\eps, -#2*\eps);  
					\draw[thick] (#1) -- ++(-#2*\eps, #2*\eps);  
				}
				\cross{-1.5*\eps,\ys+1.5*\eps}{1.25};
				\cross{4*\xs+1.5*\eps,\ys+1.5*\eps}{1.25};
				
				\draw[smooth] (1.8*\xs,\ys) .. controls (1.8*\xs+2*\eps,\ys-2*\eps) and (2.5*\xs-2*\eps,\ys-2*\eps) .. (2.5*\xs,\ys);
				\draw[smooth] (1.8*\xs+2*\eps,\ys-1*\eps) .. controls (1.8*\xs+4*\eps,\ys+0*\eps) and (2.5*\xs-4*\eps,\ys+0*\eps) .. (2.5*\xs-2*\eps,\ys-1*\eps);		
			\end{scope}				
		\end{tikzpicture}
%
%
	\end{center}
    	\caption{On the left decorated polygon, on the right   corresponding spectral curve.}
\end{figure}
The spectral curve of the corresponding integrable system has multicross singularities at infinity in the toric embedding. Conjecturally, the dimension of the phase space \(\mathcal{X}_{\mathrm{red}}\) of integrable system and the rank of Poisson bracket are 
\begin{gather}
		\dim (\mathcal{X}_{\mathrm{red}})=2 \operatorname{Area}(N)-1-\sum\nolimits_{h \in \text{parts of all partitions}} (h^2-1), \label{eq:xred intro}\\
		\operatorname{rank}\{\cdot,\cdot\}=2I-\sum\nolimits_{h \in \text{parts of all partitions}}h(h-1), \label{eq:P rank intro}
\end{gather}
where the summation runs over all parts of all partition in the decoration and \(I\) denotes number of integral points inside the polygon \(N\). Moreover, we conjecture that reduced GK systems possess natural cluster structures.

The decoration of \(N\) can be motivated by the analogy with moduli spaces of framed $G$-local systems on punctured surfaces. Assume for simplicity that \(G=PGL_n(\mathbb{C})\) and the surface is a sphere with \(k\) punctures. Then the local systems are parametrized by \(k\)-tuples of matrices \(M_1,\dots,M_k \in PGL_n (\mathbb{C})\) up to conjugation with the constraint \(M_1\cdot \dots\cdot M_k=1\). The moduli spaces of local systems are labeled by conjugacy classes of matrices \(M_i\). If these matrices are semisimple, then to each class we can assign a partition of \(n\). In analogy to GK systems, the punctures correspond to the sides of the polygon \(N\), so we have partitions assigned to sides. In case of generic monodromies, the cluster Poisson structure on the moduli space of framed local system was constructed in \cite{Fock:2006moduli}. It is expected that moduli spaces corresponding to non-generic punctures can be obtained from the \cite{Fock:2006moduli} moduli spaces via \emph{Hamiltonian reduction}.

In this paper, we mostly discuss reductions performed along one side of \(N\) in GK setting. This corresponds to one special puncture in the setting of local systems. Even in this one-sided case we mainly omit proofs; they will appear in a separate publication in greater generality. The multi-sided reduction is much less understood; for example, it is unclear for which decorations of given Newton polygon the moduli space $\mathcal{X}_{\mathrm{red}}$ is nonempty. Perhaps this can be viewed as analog of Deligne-Simpson problem \cite{Simpson:1992products} \cite{Crawley:2004indecomposable}.

The step from GK integrable systems to reduced GK integrable systems allows to fill a mentioned above gap in the Painlev\'e theory. Painlev\'e equations are second-order equations on one variable, therefore, the rank of Poisson bracket for the corresponding integrable systems should be equal to~2. Assuming that there are no reductions (i.e. all \(h=1\)) and taking into account formula \eqref{eq:P rank intro} we get \(I=1\). Hence Painlev\'e equations which can be obtained from GK integrable systems without reduction boil down to reflexive polygons \cite{Bershtein:2018cluster}. 

In this paper we consider more general class of polygons which was studied in \cite{Kasprzyk:2017minimality} under the name Fano polygons without remainders. Its relevance to the Painlev\'e theory was noted by Mizuno \cite{Mizuno}. For each such polygon one can naturally assign decoration and realize symmetry group of the corresponding \(q\)-Painlev\'e equation via cluster mutations. 

It  was very important for our work that connection between cluster mutations and polygons in \cite{Mizuno} is different from the one in GK integrable systems. Namely, the cluster mutations used in the paper \cite{Mizuno} \emph{mutate the polygon} contrary to mutations in \cite{Goncharov:2013} which preserve the polygon.  We claim that latter class of mutations should be viewed as cluster mutations of \(\mathcal{X}_{\mathrm{red}}\), while former class provides isomorphisms between \(\mathcal{X}_{\mathrm{red}}\) and \(\widetilde{\mathcal{X}}_{\mathrm{red}}\) that correspond to mutations of the decorated polygons. 

These two classes of mutations can be realized geometrically in terms of consistent bipartite graph \(\Gamma\) on a torus, which is the main combinatorial ingredient of the GK integrable system. Mutations used in \cite{Goncharov:2013} are transformations assigned to 4-gon faces (so-called spider moves), see Fig.~\ref{fi:Example} left. Convex polygon \(N\) labeling the GK integrable integrable system is a Newton polygon of a dimer partition function of \(\Gamma\). The polygon \(N\) is invariant under the face mutations. On the other hand, it was shown in \cite{Higashitani:2022} that there is another class of transformations of bipartite graphs that mutate polygon~\(N\). These transformations are assigned to zigzag paths, which are dual to faces in some sense (see Sec. \ref{ssec:PoI}).
We will call such transformations \emph{zigzag mutations}; they were studied recently in \cite{Higashitani:2022,Franco:2023twin,Franco:2023quiver,Cremonesi:2023zig}. The example of mutation for zigzag of length 4 is given on Fig.~\ref{fi:Example} right.

\begin{figure}[h]
	\begin{center}
	\includegraphics[scale=1]{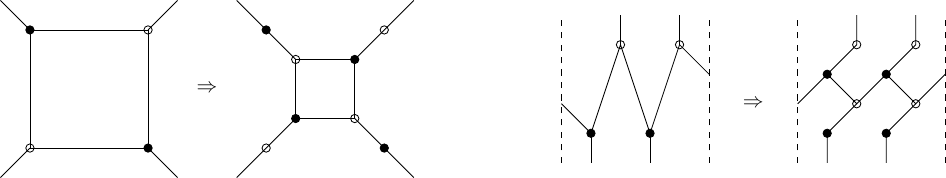}
\end{center}
	\caption{ \label{fi:Example} On the left face mutation, on the right zigzag mutation, drawn on the cylinder} 
\end{figure}

Hence, the results of \cite{Mizuno} suggest that there are two cluster descriptions of Painleve dynamics: one in terms of face mutations and another in terms of zigzag mutations. This is the main reason for the results about the Painlevé equations with which we started the introduction: self-duality and enhancement of affine Weyl groups to elliptic Weyl groups.  

Note that, while cluster mutations can be assigned to any face, only mutations corresponding to \emph{4-gon faces} have description in terms of bipartite graph on a torus. In fact, the length 4 restriction is also important for zigzag mutations. Combinatorially, in terms of the graph \(\Gamma\), zigzag mutation can be defined for any length of zigzag path~\cite{Higashitani:2022}, but if one promotes it to transformation of the cluster variables, then for the length greater than 4 this map is defined only on the submanifold, which is not Poisson and hence not cluster. At this point the Hamiltonian reductions become essential, namely, our claim is that for the length greater than 4 the zigzag mutation should be viewed as an isomorphism between cluster Hamiltonian reductions \(\mathcal{X}_{\mathrm{red}}\) and \(\widetilde{\mathcal{X}}_{\mathrm{red}}\). 

On the side note, one can dualize mutations of zigzags with the length greater than 4 and (at least combinatorially) define mutations of faces with more than 4 sides. These face mutations change the genus of surface, pushing us beyond the class of graphs on tori.



\paragraph{Plan of the paper} In Section \ref{sec:GK} we recall necessary details about dimer models, Goncharov-Kenyon integrable systems, and cluster varieties. In section \ref{sec:zigzag reductions} we discuss reductions and zigzag mutations. We announce here the main statements about the geometry of zigzag mutations and reductions of cluster integrable systems, though their proofs are postponed to the paper in progress~\cite{Inprog}. We believe this section is useful to give the right context for the next one, even though most of the statements are not formally used there. 

In Section \ref{sec:Painleve} we apply this to the study of \(q\)-Painlev\'e equations. Here all results are presented with complete proofs, whereby some of them require case-by-case considerations; the details about them are provided in subsequent sections \ref{sec:examples}, \ref{sec:E7}, \ref{sec:E8}. Sections \ref{sec:E7} and \ref{sec:E8} also provide us with non-trivial examples of reduced GK systems, i.e. nontrivial examples of the results and conjectures stated in Section~\ref{sec:zigzag reductions}.

\paragraph{Acknowledgment} We are grateful to V.~Fock, D.~Rachenkov, Y.~Saito, G.~Schrader, A.~Shapiro for useful discussions.  The results of the paper were reported at conferences and seminars in Moscow (May-June 2022), Trieste (June 2022), Sendai/Zoom (November 2022), Japan/Zoom (March 2023),  Paris/Zoom (May 2023), Beijing (July-August 2023), Sheffield (October 2023), Angers (April 2024), Saint Petersburg/Zoom (September 2024), Chernogolovka/Zoom (October 2024), Montreal (October 2024), Perimeter (October 2024); we are grateful to the organizers and participants for their interest and remarks.

The work of M.B. is partially supported by the European Research Council under the European Union’s Horizon 2020 research and innovation programme under grant agreement No 948885. M.B. is very grateful to Kavli IPMU  and especially Y.~Fukuda, M.~Kapranov, K.~Kurabayashi, H.~Nakajima, A.~Okounkov, T.~Shiga, K.~Vovk, for the hospitality during 2022-2023 years. 

\pagebreak

\section{Dimer models and Goncharov-Kenyon integrable systems} \label{sec:GK}

\subsection{Consistent bipartite graphs}

Let \(\Gamma\) be a bipartite graph on a real torus \(\Sigma=\mathbb{T}^2\) with vertices colored in black and white. Connected components of \(\Sigma \setminus \Gamma\) are called faces, we assume that all faces are contractible. Equivalently one can consider its preimage \(\widehat{\Gamma}\) on the universal cover of \(\Sigma\), which is the periodic bipartite graph on a plane, typical examples of \(\widehat{\Gamma}\) are square and hexagonal grids.


If there is a vertex \(v \in V(\Gamma)\) of valency 2 that is connected with two different vertices \(v_1\) and \(v_2\) one can delete the vertex \(v\) together with the edges \(vv_1\) and \(v v_2\), gluing \(v_1\) and \(v_2\). This operation is called \emph{contraction}, the inverse operation is called uncontraction. We call graphs related by the sequence of contractions and uncontractions to be equivalent.

Let \(V(\Gamma), E(\Gamma)\), and \(F(\Gamma)\) denote the sets of vertices, edges, and faces of  \(\Gamma\) correspondingly. For any face \(f \in F(\Gamma)\) one can consider the path  which goes counterclockwise (equivalently the path which turns left at any vertex) around it, we denote this path by the same letter \(f\). 

Another important set of paths on \(\Gamma\) consists of zigzags, we denote this set by \(Z(\Gamma)\). The zigzag \(\zeta\in Z(\Gamma)\) is a path, which turns \emph{right at black vertices} and \emph{left at white vertices}. It is usually convenient to draw a zigzag as a path that goes through the middles of the edges.
\begin{figure}[h]
    \centering
    \includegraphics[]{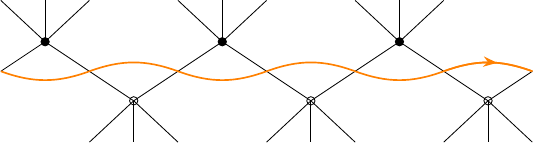}
	\caption{Zigzag path}
\end{figure}
Since graph \(\Gamma\) is finite any zigzag \(\zeta\) should be closed. Similarly one can define zigzags \(\hat{\zeta}\in Z(\hat{\Gamma})\) on the graph \(\hat{\Gamma}\).

\begin{Definition}\label{def:consistancy}
	The bipartite graph on a torus is called \emph{consistent} if it satisfies the following conditions
	\begin{enumerate}[label=(\alph*)]
		\item \label{it:1} Any zigzag \(\zeta\) represents nontrivial homology class \([\zeta]\neq 0 \in H_1(\Sigma)\)
		
		\item \label{it:2} There is no parallel \emph{bigons} on the universal cover, namely any two zigzags \(\hat{\zeta}_1, \hat{\zeta}_1,\) do not have pair of intersections, such that both paths go in the same direction from one intersection to the other.  
			
		\item \label{it:3} Any zigzag \(\hat{\zeta}\) on the universal cover does not have self-intersections. 
	\end{enumerate}
\end{Definition}

Here, we follow the terminology of \cite{Ishii:2011note}, essentially this is equivalent to the \emph{minimal bipartite graphs} in \cite{Goncharov:2013}. Actually the notion of intersection of two zigzag paths is a bit subtle in case of vertices of valency 2, see, e.g. \cite[Def. 3.4]{Ishii:2011note}. However, it follows from consistency conditions that any two-valent vertex is connected with two \emph{different} vertices and therefore can be 
contracted~\footnote{Let us sketch the proof of this property. Assume that black vertex \(b\) has valency 2 and both edges \(e_1, e_2\) connects \(b\) and white vertex \(w\). Consider closed path \(\gamma\) of length 2 which consist of edges \(e_1\) and \(e_2\). If \(\gamma\) is a face then we have homologically trivial zigzag. If the path \(\gamma\) is homologically trivial, but not contractible, then we have zigzag with self-intersection. Finally, if the path \(\gamma\) represents a nontrivial cycle in homology, than any zigzag has zero intersection with \(\gamma\). Hence for any zigzag its homology class is parallel or anti-parallel to \([\gamma]\). If there is a vertex of \(\Gamma\) of valency at least three, then there are two parallel zigzags that intersect at this point, that contradicts consistency. Finally, if all vertices of \(\Gamma\) have valency 2, then the connected components of \(\Sigma \setminus \Gamma\) are not contactable.}.
After a sequence of such contractions, one can get rid of such vertices and impose conditions \ref{it:2} and \ref{it:3} above.

It follows from the consistency conditions that any zigzag has no self-intersections on \(\Sigma\). Therefore, its homology class \([\zeta]\in H_1(\Sigma)\) is primitive. We call two zigzags \(\zeta_1,\zeta_2\) parallel if \([\zeta_1]=[\zeta_2]\). More generally we will say that two vectors \(u_1,u_2 \in \mathbb{R}^2\) are parallel if \(u_1=ku_2\) with \( k > 0\). 

It also follows from the consistency condition that parallel zigzags do not intersect in \(\Sigma\). Hence, for any primitive vector \(u \in \mathbb{Z}^2 \) there is a natural cyclic order on the set of zigzags parallel to \(u\).

\subsection{Dimer models}

The dimer model is a pair of bipartite graph \(\Gamma\) and weight function \([\wt]\in H^1(\Gamma,\mathbb{C}^*)\). 

Alternatively, the weight function is a map \(E(\Gamma) \rightarrow \mathbb{C}^*\), \(e \mapsto \wt(e)\) defined up to gauge transformations \(g \colon V(\Gamma) \rightarrow \mathbb{C}^*\), acting by \(\wt(e) \mapsto g(b)\wt(e)g(w)^{-1} \), where \(e=bw\) and \(b\) is black vertex and \(w\) is white vertex. Then for any path \(\gamma=(e_1,\dots, e_k)\), with \(e_j\in E(\Gamma)\) we define its weight to be \(\wt(\gamma)=\prod_{j=1}^k \wt(e_j)^{\pm 1}\), where sign is plus if the path goes through an edge from black to white vertex, and minus in the opposite case.  Clearly 
\(\wt(\gamma)\) is gauge invariant for any closed path \(\gamma\).

\begin{Notation} 
We will use several sets of variables in what follows:
	\begin{itemize}		
		\item  For any face \(f_i \in F(\Gamma)\) we define \emph{face variable} \(x_i=\wt(f_i)\) (recall that path \(f_i\) goes counterclockwise). Let \(\mathbf{x}\) denotes tuple \((x_1,\dots, x_{|F(\Gamma)|})\).
		
		\item For any zigzag path \(\zeta_j\) we define \emph{zigzag variable} \(z_j=(-1)^{|\zeta|/2-1}\sgn_K(\zeta)\wt(\zeta_j)\), where $|\zeta|$ is the length of $\zeta$ and $\sgn_K$ is Kasteleyn sign defined in Def.~\ref{def:Kasteleyn}. Let \(\mathbf{z}\) denotes tuple \((z_1,\dots, z_{|Z(\Gamma)|})\).
		
		\item For given symplectic basis \([A],[B]\in H_1(\Sigma)\) and given paths \(A\) and \(B\) which represent these cycles one can define \emph{spectral parameters} \(\lambda = \wt(A)\) and \(\mu = \wt(B)\). 
	\end{itemize}
\end{Notation}

Note that definition of \((\lambda, \mu)\) depends on choice of the paths \((A,B)\). If one modifies particular representatives \((\lambda,\mu)\) get multiplied by some monomials in face variables \(\mathbf{x}\)
. Under the change of symplectic basis \([A],[B]\in H_1(\Sigma)\) the spectral parameters are transformed as \(\lambda \mapsto \lambda^a\mu^b, \mu \mapsto \lambda^c\mu^d\), where \(\begin{pmatrix}
a & b \\ c &d
\end{pmatrix}\in SL(2,\mathbb{Z})\). 

Clearly, boundaries of the faces \(\{f_i\}\) and the paths \(A,B\) generate \(H_1(\Gamma)\). Therefore, for any closed path~\(\gamma\) its weight \(\wt(\gamma)\) is a monomial in face variables  \(\mathbf{x}\) and spectral parameters \((\lambda,\mu)\).

\begin{Definition}
	\emph{Dimer cover} (equivalent notion is \emph{perfect matching}) on graph \(\Gamma\) is a subset \(D \subset E(\Gamma)\) such that for any vertex \(v \in V(\Gamma)\) there exists unique edge \(e \in D\) incident to \(v\).
\end{Definition}
 It was shown in \cite[Lemma 3.11]{Goncharov:2013}, \cite[Prop. 7.1]{Ishii:2009:2} that consistent bipartite graphs possess dimer covers. For any two dimer covers \(D,D'\) their difference  \(D-D'\) is a union of cycles, so the weight \(\wt(D-D')=\wt(D) \wt(D')^{-1}\) is a monomial in \(\mathbf{x}, \lambda,\mu\). 

\begin{Definition} \label{def:Kasteleyn}
	The \emph{Kasteleyn sign} is a map \(\sgn_K \colon E(\Gamma)\rightarrow \{\pm1\} \) such that for any face 
	\begin{equation}
		\sgn_K(f)=\prod\nolimits_{e\in f} \sgn_K(e)=(-1)^{|f|/2-1}.
	\end{equation}	
\end{Definition}
Here and below by \(|\gamma|\) we denote the length of the path \(\gamma\). The Kasteleyn sign exists if and only if the number of vertices \(V(\Gamma)\) is even (see e.g. \cite[Th. 3.1]{Cimasoni:2007dimers}). Since for any consistent bipartite graph there exists dimer cover this condition is satisfied. Two Kasteleyn signs are called equivalent if one can be obtained from another by sequence of transformations which reverse  orientations of all edges adjacent to a vertex. For a graph on surface of genus \(g\) there are \(2^{2g}\) classes of equivalent Kasteleyn signs (see \cite[Th. 3.2]{Cimasoni:2007dimers}), so in our case there will be 4 equivalence classes.

The Kasteleyn operator \(\mathcal{K}\) maps from the vector space \(V_B\) with a basis labeled by black vertices to the vector space \(V_W\) with a basis labeled by white vertices. Its matrix elements are given by 
\begin{equation}
	\mathcal{K}_{wb}=\sum_{e \text{ connects } b \text{ and } w} \sgn_K(e) \wt(e).
\end{equation}
The dimer partition function is defined as a determinant
\begin{equation}
	\label{eq:ZKast}
	\mathcal{Z}(\mathbf{x}|\lambda,\mu) \sim \det \mathcal{K}.
\end{equation}
Here and below \(\sim\) stands for equality up to a monomial factor, in what follows the dimer partition function is always considered up to such normalization.
It is easy to see that the summands in \(\det \mathcal{K}\) correspond to dimer covers. Hence for any chosen dimer cover \(D_0\) the normalized determinant \(\wt(D_0)^{-1} \det \mathcal{K}\) is a Laurent polynomial in \(\mathbf{x}, \lambda,\mu\).

%

\begin{Definition}
	The quadratic form on \(H_{1}(\Sigma)\) is a map \(q\colon H_{1}(\Sigma) \rightarrow \mathbb{Z}/2\mathbb{Z}\) such that 
	\begin{equation}
		q(x+y)=q(x)+q(y)+x\cdot y,
	\end{equation}
	where \(x\cdot y\) denotes intersection pairing.	
\end{Definition}
For a surface of genus \(g\) there are \(2^{2g}\) quadratic forms, so there are 4 of them in our case.

\begin{Theorem}[{\cite{Kasteleyn:1963}, \cite[Th. 5.1]{Cimasoni:2007dimers}}] 
	Let \(D_0\) be a dimer cover and \(\sgn_K\) be a Kasteleyn sign. Then 
	\begin{equation} \label{eq:Z = sum D}
		\mathcal{Z}(\mathbf{x}|\lambda,\mu) \sim  \sum_{\alpha \in H_1(\Sigma)} (-1)^{q_{K,D_0}(\alpha)}\sum_{D,\, [D-D_0]=\alpha} \wt(D-D_0), 
	\end{equation}
	where \(q_{K,D_0}\) is a quadratic form. 
	
	Moreover,  the map \(K \mapsto q_{K,D_0}\) is a one to one correspondence between the set of equivalence classes of Kasteleyn signs and set of quadratic forms, this correspondence depends on choice of \(D_0\).
\end{Theorem}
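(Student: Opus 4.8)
The plan is to prove the two assertions of the theorem separately: first the Pfaffian/determinant expansion \eqref{eq:Z = sum D} for a fixed Kasteleyn sign, and then the bijection between equivalence classes of Kasteleyn signs and quadratic forms on $H_1(\Sigma)$. For the first part I would start from the defining determinant $\det\mathcal{K}$ and expand it over permutations. Each nonzero term of the expansion matches black vertices to white vertices, i.e.\ corresponds to a dimer cover $D$, with monomial weight $\prod_{e\in D}\sgn_K(e)\wt(e)$. Normalizing by the fixed cover $D_0$, every such term contributes $\pm\wt(D-D_0)$, where the overall sign is the product of the permutation sign and the Kasteleyn signs. The crux is to show that this combined sign depends on $D$ only through the homology class $\alpha=[D-D_0]$, and equals $(-1)^{q_{K,D_0}(\alpha)}$ for a quadratic form $q_{K,D_0}$. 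Since $D-D_0$ is a disjoint union of cycles, I would reduce the sign computation to a single cycle and show multiplicativity: the sign of a disjoint union of cycles factors, while $q$ is quadratic, so the identity $q(x+y)=q(x)+q(y)+x\cdot y$ must be matched by the intersection parities of the cycles.

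For a single cycle $C=D-D_0$, the sign contribution is governed by the Kasteleyn condition $\prod_{e\in f}\sgn_K(e)=(-1)^{|f|/2-1}$ around every face. The standard argument is that going around a contractible cycle the accumulated sign is determined by the enclosed faces and hence, via the face condition, by the enclosed area parity; for a cycle winding around the torus one instead picks up a contribution depending on its homology class. I would make this precise by writing $C$ as a sum of face boundaries plus a piece in the chosen basis $[A],[B]$, tracking how $\sgn_K$ accumulates under each face-boundary addition. The resulting function $\alpha\mapsto q_{K,D_0}(\alpha)$ will then automatically satisfy the quadratic-form relation because the sign of a product of two disjoint collections of cycles differs from the product of the individual signs precisely by $(-1)^{C_1\cdot C_2}$: when two cycles are superimposed, resolving their crossings into a single matching introduces a transposition for each transversal intersection, contributing the intersection-pairing term $x\cdot y$.

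For the second assertion I would argue that changing the Kasteleyn sign by an elementary move (reversing signs on all edges at one vertex) leaves \eqref{eq:Z = sum D} invariant up to the overall monomial normalization, because each dimer cover uses exactly one edge at that vertex, so every term changes by the same global $\pm1$; hence the quadratic form $q_{K,D_0}$ is an invariant of the equivalence class. To see the map $K\mapsto q_{K,D_0}$ is a bijection, I would count: by \cite[Th. 3.2]{Cimasoni:2007dimers} there are $2^{2g}$ equivalence classes of Kasteleyn signs, and there are exactly $2^{2g}$ quadratic forms on $H_1(\Sigma)$ refining the intersection form; injectivity then forces surjectivity. Injectivity follows because two inequivalent Kasteleyn signs must differ on some cycle not bounded by faces, producing distinct signs on some homology class and hence distinct quadratic forms.

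The main obstacle will be the careful bookkeeping of signs in the single-cycle computation, and in particular verifying that the intersection-pairing term emerges correctly when two cycles overlap. This requires relating the permutation sign of the composite matching to the number of transversal crossings of $D-D_0$, which is a genuinely combinatorial lemma rather than a formal manipulation; the toroidal (non-simply-connected) case, where homologically nontrivial cycles contribute the spin-structure data encoded by $q$, is the delicate point. Everything else---the expansion of the determinant, the invariance under sign moves, and the final counting argument---is routine once this sign lemma is in hand, and I would lean on \cite{Cimasoni:2007dimers} for the general surface version rather than reproving it from scratch.
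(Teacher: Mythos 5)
First, a point of comparison: the paper itself gives no proof of this theorem. It is quoted as a classical result, with the argument delegated entirely to the cited references (Kasteleyn for the determinant expansion, Cimasoni--Reshetikhin for the homological sign formula and the correspondence with quadratic forms). So the only meaningful benchmark is that literature, and your proposal is essentially a reconstruction of the standard argument found there: permutation expansion of \(\det\mathcal{K}\), identification of terms with dimer covers, reduction of the sign computation to the cycles of \(D\triangle D_0\), gauge invariance under vertex sign flips, and the \(2^{2g}\)-versus-\(2^{2g}\) counting plus injectivity for the bijection. In that sense you follow the same route as the paper's (implicit) proof, and your stated intention to lean on \cite{Cimasoni:2007dimers} for the hard lemma is exactly what the authors themselves do.

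Two cautions on the part you sketched rather than proved. (1) The cycles of a single symmetric difference \(D\triangle D_0\) are automatically vertex-disjoint, so all pairwise intersection numbers visible inside one expansion term vanish; your sentence claiming that the sign of a union of \emph{disjoint} collections of cycles differs from the product of individual signs by \((-1)^{C_1\cdot C_2}\) is internally inconsistent, since disjointness forces \(C_1\cdot C_2=0\). The intersection correction in \(q(x+y)=q(x)+q(y)+x\cdot y\) can only be detected by comparing cycles coming from \emph{different} dimer covers (say \(D_1\triangle D_0\) against \(D_2\triangle D_0\)), and such cycles may share vertices and edges, so ``transversal crossing'' is not well defined on the graph without either perturbing curves to general position on \(\Sigma\) or defining \(q_{K,D_0}\) directly from the Kasteleyn sign accumulated along arbitrary cycles. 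This is not a side lemma: it is the entire content of the theorem, and your outline correctly isolates it but does not prove it. (2) For the ``Moreover'' part, note that if \(q_{K,D_0}\) were defined only through the expansion \eqref{eq:Z = sum D}, it would be pinned down only on classes realized as \([D-D_0]\), and these need not exhaust \(H_1(\Sigma;\mathbb{Z}/2\mathbb{Z})\) (for the honeycomb graph with a single hexagonal face the Newton polygon is a unimodular triangle, and only three of the four mod-2 classes occur). So the bijection forces you to use the geometric definition of \(q_{K,D_0}\) on all cycles --- which your ``sum of face boundaries plus an \([A],[B]\)-piece'' bookkeeping does provide --- and to verify the quadratic relation for that definition; with this understood, your gauge-invariance and counting arguments for the bijection are correct.
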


If \([D-D_0]=a[A]+b[B] \in H_1(\Sigma)\) then \(\wt(D-D_0)=\lambda^a\mu^b m_x\) where \(m_x\) is a monomial in face variables. Thus, we can write the partition function as 
\begin{equation}
	\label{eq:Zcurve}
	\mathcal{Z}(\mathbf{x}|\lambda,\mu)= \sum_{(a,b)\in \mathbb{Z}^2} \lambda^a \mu^b \mathcal{Z}_{a,b}(\mathbf{x}).
\end{equation}
The \emph{Newton polygon} \(N\) of \(\mathcal{Z}(\lambda,\mu)\) is a convex hull of \((a,b)\in \mathbb{Z}^2\) such that \(\mathcal{Z}_{a,b}\neq 0\). Change of symplectic basis in \(H_1(\Sigma)\) leads to \(SL(2,\mathbb{Z})\) transformation of \(N\). Moreover, since the partition function is defined up to a monomial factor, the actual freedom in the definition of \(N\) is a group of affine transformations \(SA(2,\mathbb{Z})=SL(2,\mathbb{Z})\ltimes \mathbb{Z}^2\). The Newton polygon, or more precisely its orbit under \(SA(2,\mathbb{Z})\)-action is an important invariant of a consistent dimer model.

By consistent dimer models we call a pair: consistent bipartite graph \(\Gamma\) and \([\wt]\in H^1(\Gamma,\mathbb{C}^*	)\). Note that if two graphs \(\Gamma, \Gamma'\) are related by contraction of 2-valent vertices, there is a natural identification of \(H_1(\Gamma)\) and \(H_1(\Gamma')\), so one can identify the corresponding dimer models. 

There is a deep relation between zigzag paths and Newton polygon for consistent dimer models. Let \(E\) be a oriented side of \(N\), where the boundary of \(N\) is oriented counterclockwise, note that the vector \(E\) is not necessary primitive. Let us define 
\begin{equation}
	\mathcal{Z}(\mathbf{x}|\lambda,\mu)|_E=\sum\nolimits_{(a,b)\in E} \lambda^a \mu^b \mathcal{Z}_{a,b}(\mathbf{x}).
\end{equation}
 
\begin{Theorem}[{\cite[Sec. 5.2]{George2022inverse}, \cite[Cor. 4.27, Prop. 4.35]{Broomhead:2010}}]  \label{th:zigzags boundary}
	For any zigzag \(\zeta\) there is a side~\(E\) such that~\(E\) is parallel to \([\zeta]\). On the other hand for any side \(E\) we have 
	\begin{equation}\label{eq:Z|E prod}
		\mathcal{Z}(\mathbf{x}|\lambda,\mu)|_E \sim \prod_{ \zeta_j\in Z(\Gamma),\; [\zeta_j] \text{ parallel to } E} (1+z_j ).
	\end{equation}
\end{Theorem}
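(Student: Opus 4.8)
# Proof Proposal

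The plan is to establish the factorization in \eqref{eq:Z|E prod} by analyzing how dimer covers contribute to the boundary term $\mathcal{Z}(\mathbf{x}|\lambda,\mu)|_E$ and relating this to the zigzag structure. I would work from the Kasteleyn-based expression \eqref{eq:Z = sum D}, which writes $\mathcal{Z}$ as a signed sum over homology classes $\alpha = a[A]+b[B]$ of the quantity $\sum_{[D-D_0]=\alpha}\wt(D-D_0)$. The key geometric observation is that a lattice point $(a,b)$ lies on the side $E$ precisely when the homology class $\alpha = a[A]+b[B]$ is \emph{extremal} in the direction determined by the outward normal to $E$. Thus the boundary restriction $\mathcal{Z}|_E$ collects exactly those dimer covers $D$ whose difference $D-D_0$ realizes this extremal homology in the slab of lattice points lying on $E$.

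First I would fix the direction of $E$ and identify the primitive vector $u$ parallel to $[\zeta]$ for the relevant zigzags, so that $E = \ell\, u$ for some positive integer $\ell$ equal to the number of zigzags parallel to $E$ (this count being part of the statement to be verified, and following from Theorem~\ref{th:zigzags boundary}'s first assertion together with the consistency conditions). Next I would recall the standard fact from the theory of consistent dimer models that the extremal (corner and edge) perfect matchings are controlled by the zigzag paths: along a given direction, the dimer covers contributing to $\mathcal{Z}|_E$ differ from one another precisely by toggling edges along the zigzags $\zeta_j$ parallel to $E$. Concretely, each zigzag $\zeta_j$ parallel to $E$ can be ``switched'' independently, and switching $\zeta_j$ multiplies the monomial $\wt(D-D_0)$ by the zigzag weight (up to the Kasteleyn sign and the $(-1)^{|\zeta|/2-1}$ normalization built into the definition of $z_j$). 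Summing over all independent switches then produces the product $\prod_j (1+z_j)$. The sign bookkeeping is exactly arranged so that each factor appears as $(1+z_j)$ rather than $(1 \pm \wt(\zeta_j))$; this is where the Kasteleyn sign $\sgn_K(\zeta)$ and the quadratic form $q_{K,D_0}$ from \eqref{eq:Z = sum D} enter, and verifying that these signs combine correctly across the $2^\ell$ choices is the heart of the computation.

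The cleanest route is an induction on the number $\ell$ of zigzags parallel to $E$. The base case $\ell = 1$ amounts to showing that the extremal term in the given direction is $(1+z_1)$, which follows from the fact that a primitive side of $N$ corresponds to a single zigzag and the two boundary-lattice-point coefficients $\mathcal{Z}_{a,b}$ are (up to monomial normalization) $1$ and $z_1$. For the inductive step I would isolate one zigzag $\zeta_\ell$, observe that switching it commutes with the switches of the other parallel zigzags (because parallel zigzags do not intersect on $\Sigma$, a consequence of consistency noted in the excerpt), and use this independence to factor $(1+z_\ell)$ out of $\mathcal{Z}|_E$. The non-intersection of parallel zigzags is what guarantees the switches are genuinely independent and hence that the boundary polynomial factors rather than merely being a sum.

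The main obstacle I anticipate is the sign analysis: showing that each zigzag contributes a clean factor $(1+z_j)$ requires tracking how the Kasteleyn sign behaves under a zigzag switch, and confirming that the homology-dependent sign $(-1)^{q_{K,D_0}(\alpha)}$ together with the normalization $(-1)^{|\zeta_j|/2-1}$ conspires to cancel any spurious signs. One must check that switching a single zigzag changes the relevant quadratic-form value in a controlled way and that the product over independent switches reproduces exactly the signs absorbed into the $z_j$. A secondary subtlety is the reduction to a genuinely planar/combinatorial picture near the boundary: although the partition function lives on the torus, the extremal matchings can be analyzed by passing to an appropriate strip or by using the characterization of boundary matchings via zigzags, and one should verify that the gauge and monomial normalizations (the $\sim$ in \eqref{eq:Z|E prod}) are consistent with the definitions of $x_i$ and $z_j$. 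I would lean on the cited results \cite{George2022inverse, Broomhead:2010} for the correspondence between boundary matchings and zigzags, treating the present statement as a weighted refinement of that correspondence.
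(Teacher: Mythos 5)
The first thing to say is that the paper contains no proof of this statement: Theorem \ref{th:zigzags boundary} is imported from the literature (attributed to \cite[Sec. 5.2]{George2022inverse} and \cite[Cor. 4.27, Prop. 4.35]{Broomhead:2010}) and is used as a black box throughout. So there is no internal argument to compare yours against; what can be assessed is whether your outline would stand on its own as a proof. Your overall route — generate all matchings contributing to $\mathcal{Z}|_E$ from a corner matching by toggling the zigzags parallel to $E$ — is indeed the route the cited references take, so the strategy is sound.

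However, as written the proposal defers rather than proves the two facts carrying all the weight. Namely: (i) each zigzag $\zeta_j$ parallel to $E$ has its edges alternating in and out of the corner matching $D_0$, so that toggling it yields another perfect matching whose class is shifted by $[\zeta_j]$; and (ii) completeness — \emph{every} matching whose homology class lies on $E$ arises from $D_0$ by toggling a subset of these zigzags. Point (ii) is exactly where consistency enters (the factorization is false for inconsistent graphs), and neither (i) nor (ii) is argued in your sketch: they are the substance of \cite[Cor. 4.27, Prop. 4.35]{Broomhead:2010}, which you invoke at the end. So your proposal is a reduction of the weighted statement to the unweighted correspondence, not an independent proof; that is a legitimate structure, but it should be presented as such. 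Granting (i) and (ii), the sign bookkeeping you flag is genuine but routine: all classes contributing to the side have the form $\alpha_0+mu$ with $u=[\zeta_j]$ primitive and common to all parallel zigzags, and $q_{K,D_0}(\alpha_0+mu)\equiv q_{K,D_0}(\alpha_0)+m\bigl(q_{K,D_0}(u)+\alpha_0\cdot u\bigr) \pmod 2$, so each toggle contributes one fixed sign, which is what the prefactor $(-1)^{|\zeta_j|/2-1}\sgn_K(\zeta_j)$ in the definition of $z_j$ absorbs. Two further gaps: the first assertion of the theorem (every $[\zeta]$ is parallel to some side of $N$) is never addressed by your outline — in the references it emerges from the same analysis, with the zigzag classes, ordered cyclically by slope, tracing out the boundary of $N$, but it needs an argument. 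And your setup is mildly circular: you take $E=\ell u$ with $\ell$ the number of zigzags parallel to $E$ as an input ``following from the first assertion,'' but the first assertion gives no count; the equality of that number with $|E|_{\mathbb{Z}}$ is a \emph{consequence} of the factorization (compare degrees along $E$), so it must come out of the argument rather than go into it.
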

In particular this theorem means that there exist a one-to-one correspondence between the set of homology classes of zigzag paths on a consistent dimer model \(\Gamma\) and the set of primitive side segments of the Newton polygon \(N\). Also since \(\sim\) stands for monomial factor this theorem implies that if \((a,b)\) is a vertex of \(N\) then \(\mathcal{Z}_{a,b}\) is a monomial, i.e. there exists only one dimer configuration corresponding to \((a,b)\).

This theorem can be also stated in more geometric way. 

\begin{Definition} \label{def:spectral curve}
	The \emph{spectral curve} is a compactification \(\overline{\mathcal{C}}\) of the curve \(\mathcal{C}\) defined by an equation \(\mathcal{C}= \{(\lambda,\mu)|\mathcal{Z}(\lambda,\mu)=0\} \subset \mathbb{C}^*\times \mathbb{C}^*\).
\end{Definition}

The compactification \(\overline{\mathcal{C}}\) can be embedded into toric surface assigned to polygon \(N\) (see e.g \cite[Sec. 2.6]{George2022inverse}).
The intersection of \(\overline{\mathcal{C}}\) with divisor corresponding to the side \(E\) corresponds to roots of \(\mathcal{Z}(\mathbf{x}|\lambda,\mu)|_E\). Hence the theorem says that the number of such points is equal to integer length \(|E|_\mathbb{Z}\) of side \(E\) and these points are in one to one correspondence with zigzags parallel to \(E\).

Let \(I=I(N)\) denotes the number of integer points inside of \(N\) and \(B=B(N)\) denotes the number of integer points on the boundary of \(N\). Then the  number of points at infinity \(|\overline{\mathcal{C}}\setminus \mathcal{C}|=B\), and by previous theorem it is equal to the number of zigzags \(|Z(\Gamma)|=B\). Also, standard results say that genus of \(\overline{\mathcal{C}}\) is equal to \(g(\overline{C})=I\) in case of generic values of face variables~\(\mathbf{x}\) (see e.g. \cite[Theorem 1]{Khovanskii:1978}).

\subsection{Poisson structure. Integrability}
\label{ssec:PoI}

One can thicken graph \(\Gamma\) to make a ribbon graph, topologically this ribbon graph is a surface \(\Sigma=\mathbb{T}^2\) with \(F(\Gamma)\) holes. For a ribbon graph the cyclic order of the edges at each vertex is fixed. Let us define a \emph{dual bipartite ribbon graph} \(\Gamma^D\) by reversing the cyclic order at all black vertices \cite{Feng:2008dimer}, \cite{Goncharov:2013}. This dual bipartite ribbon graph is topologically a \emph{dual surface} \(\Sigma^D\) with holes, and it is easy to see that these holes correspond to zigzags on \(\Gamma\). This can be viewed as another useful way to think about definition of the dual surface \(\Sigma^D\), namely one just glues a disk to the graph \(\Gamma\) along each zigzag path, and these discs glued along \(\Gamma\) form a dual surface.

\begin{Theorem}[{\cite[Theorem 3.1]{Gulotta:2008properly}, \cite[Prop 3.15]{Goncharov:2013}}]  \label{th: number of faces}
	For consistent dimer model the number of faces \(|F(\Gamma)|\) is equal to \(2\operatorname{Area}(N)\).
\end{Theorem}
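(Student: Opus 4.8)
The plan is to reduce the statement to two Euler-characteristic computations together with Pick's theorem, isolating the genus of the dual surface \(\Sigma^D\) as the only nontrivial input. The ingredients at our disposal are Euler's relation for \(\Gamma\) on the torus, the dual bipartite ribbon surface \(\Sigma^D\) whose faces are exactly the zigzags, and the count \(|Z(\Gamma)| = B\) from Theorem~\ref{th:zigzags boundary}. Pick's theorem for the lattice polygon \(N\) reads \(\operatorname{Area}(N) = I + B/2 - 1\), i.e.\ \(2\operatorname{Area}(N) = 2I + B - 2\), so the target \(|F(\Gamma)| = 2\operatorname{Area}(N)\) is equivalent to \(|F(\Gamma)| = 2I + B - 2\).

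First I would run the bookkeeping. Since all faces of \(\Gamma\) are contractible and \(\Sigma = \mathbb{T}^2\), Euler's formula gives
\begin{equation*}
|V(\Gamma)| - |E(\Gamma)| + |F(\Gamma)| = 0.
\end{equation*}
The dual ribbon graph \(\Gamma^D\) has the same vertices and edges as \(\Gamma\) (reversing the cyclic order at black vertices changes only the ribbon structure), while its faces are the disks glued along the zigzag paths; hence \(\Gamma^D\) realizes a CW structure on the closed, connected surface \(\Sigma^D\) with \(|V(\Gamma)|\) vertices, \(|E(\Gamma)|\) edges and \(|Z(\Gamma)| = B\) faces. Therefore
\begin{equation*}
|V(\Gamma)| - |E(\Gamma)| + B = \chi(\Sigma^D) = 2 - 2g(\Sigma^D).
\end{equation*}
Subtracting the two relations eliminates \(|V(\Gamma)| - |E(\Gamma)|\) and yields \(|F(\Gamma)| = B - 2 + 2g(\Sigma^D)\). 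Comparing with Pick's theorem, the assertion is now equivalent to the single identity \(g(\Sigma^D) = I\).

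The crux is thus to prove that the genus of the dual surface equals the number \(I\) of interior lattice points of \(N\). Here I would invoke the genus count \(g(\overline{\mathcal{C}}) = I\) for the generic spectral curve recalled above, by identifying \(\Sigma^D\) topologically with \(\overline{\mathcal{C}}\): the consistency conditions guarantee that each zigzag is an embedded, non-self-intersecting circle on \(\Sigma^D\) bounding its glued disk and that distinct zigzags meet minimally, so that capping the zigzags off by disks reconstructs precisely the compactification of the spectral curve by its points at infinity, one per zigzag, matching the count \(|\overline{\mathcal{C}} \setminus \mathcal{C}| = B\) noted above. Making this homeomorphism precise — rather than merely matching Euler characteristics — is where consistency enters essentially, and this is the main obstacle: it is exactly the point where the combinatorics of \(\Gamma\) is converted into the toric geometry of \(\overline{\mathcal{C}}\).

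Alternatively, one can bypass the spectral curve and compute \(g(\Sigma^D)\) directly from the combinatorics of the zigzag homology classes: by Theorem~\ref{th:zigzags boundary} the classes \([\zeta_j]\) are the primitive side-segment vectors of \(N\), and a handle count for the surface \(\Sigma^D\) assembled along them reproduces \(I\). Either way the remaining steps are routine: substituting \(g(\Sigma^D) = I\) into \(|F(\Gamma)| = B - 2 + 2g(\Sigma^D)\) and applying Pick's theorem gives \(|F(\Gamma)| = 2I + B - 2 = 2\operatorname{Area}(N)\).
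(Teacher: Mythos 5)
Your Euler-characteristic bookkeeping is correct, but note that it is precisely the computation the paper itself displays, run in reverse. The paper does not prove this theorem at all: it imports it from Gulotta and Goncharov--Kenyon, where it is established by direct combinatorics of zigzag paths (triple-point diagrams, resp.\ properly ordered dimer models and intersection counts), and then \emph{uses} it, together with Euler's formula and Pick's theorem, to deduce \(g(\Sigma^D)=I\); only after that does the paper conclude that \(\Sigma^D\) is homeomorphic to \(\overline{\mathcal{C}}\), by matching the two genera. Your proposal inverts these arrows: you correctly reduce the claim to the single identity \(g(\Sigma^D)=I\), but then you obtain that identity from the homeomorphism \(\Sigma^D\cong\overline{\mathcal{C}}\) plus Khovanskii's formula \(g(\overline{\mathcal{C}})=I\). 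Relative to the paper's development this is circular: the homeomorphism you invoke is exactly what the paper derives \emph{from} the face-count theorem, not an independent input.

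Independently of the ordering of the arrows, the homeomorphism is never actually proved in your text. You acknowledge it is ``the main obstacle,'' and the evidence you offer does not suffice: knowing that both surfaces carry \(B=|Z(\Gamma)|\) distinguished points (Theorem~\ref{th:zigzags boundary} on one side, the toric boundary divisors on the other) matches punctures but says nothing about genus, which is the whole content of \(g(\Sigma^D)=I\). An independent proof of this identification is the spectral correspondence (Harnack-curve-type arguments of Kenyon--Okounkov and Goncharov--Kenyon), which is substantially deeper than the purely combinatorial, weight-independent statement you are trying to prove, and which moreover requires genericity of the weights that the statement itself does not. Your fallback --- a ``handle count'' of \(\Sigma^D\) assembled from the zigzag homology classes --- is indeed the shape of a correct combinatorial argument (each edge of \(\Gamma\) is a crossing of exactly two zigzags, and for consistent graphs two zigzags cross exactly \(|\det([\zeta_i],[\zeta_j])|\) times, which is what the cited references exploit), but you give no details, so nothing beyond the bookkeeping is established. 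To repair the proposal you must either carry out that intersection-count argument honestly, or treat the theorem as the imported combinatorial input that the paper takes it to be.
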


This theorem allows to compute the genus of the dual surface \(\Sigma^D\), indeed 
\begin{equation}
	2-2g(\Sigma^D)= |V(\Gamma)|-|E(\Gamma)|+|Z(\Gamma)|= |V(\Gamma)|-|E(\Gamma)|+B+|F(\Gamma)|-2\operatorname{Area}(N)=2-2I
\end{equation}
where we have used the Euler formula and Pick's theorem \(\operatorname{Area}(N)=I+B/2-1\). Hence topologically dual surface \(\Sigma^D\) is homeomorphic to spectral curve \(\overline{\mathcal{C}}\) since both them are of genus \(I\). Moreover, (the thickened) ribbon graph \(\Gamma^D\) is homeomorphic to \(\mathcal{C}\) since both of them have genus \(I\) and \(|B|=|Z(\Gamma)|\) punctures.

For any closed curve \(\gamma\) one can consider \(\wt(\gamma)\) as a function from \( H^1(\Gamma,\mathbb{C}^*)\) to \(\mathbb{C}^* \). In particular, we will consider the variables \(\mathbf{x},\mathbf{z},\lambda,\mu\) as such functions. 
Let us define the Poisson bracket between any two such functions as
\begin{equation}\label{eq:Poisson SigmaD}
	\{ \wt(\gamma_1),\wt(\gamma_2)\}= (\gamma_1\cdot \gamma_2)_{\Sigma^D} \; \wt(\gamma_1)\wt(\gamma_2),
\end{equation}
where \((\gamma_1\cdot \gamma_2)_{\Sigma^D}\) denotes intersection number on a surface \(\Sigma^D\).

The Poisson center for this bracket is generated by weights of zigzag paths \(z_j\), since zigzags of \(\Gamma\) are contractible on \(\Sigma^D\). From now on we will always assume that \(A,B\) paths are chosen to be zigzags or formal combinations of zigzags with rational coefficients (in particular this is necessary for Theorem \ref{th:Integrability}). Hence \(\lambda,\mu\) become Casimir functions (i.e. belong to the Poisson center), while other Casimir functions can be chosen to be trivial in \(H^1(\Sigma)\).

The definition of Poisson bracket for face variables \(\mathbf{x}\) can be restated in terms of (face) \emph{quivers}. Let us define quiver \(\mathcal{Q}\) to be a dual graph to \(\Gamma \subset \Sigma\), namely the vertices of \(\mathcal{Q}\) correspond to faces of \(\Gamma\) and edges of \(\mathcal{Q}\) correspond to edges of \(\Gamma\). The orientation of quiver \(\mathcal{Q}\) is chosen such that edges go clockwise around black vertices of \(\Gamma\) and counterclockwise around white vertices of \(\Gamma\). 

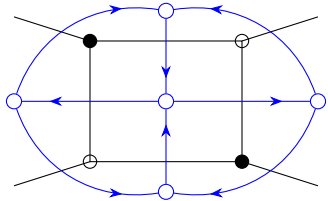
\begin{figure}[h]
		\begin{center}
	\begin{tikzpicture}[font= \small]
		\def\xs{1}
		\def\ys{0.8}

			\draw ($(-\xs,-\ys)$) circle (2.5pt);
			\draw ($(\xs,\ys)$) circle (2.5pt);
			\draw[fill] ($(\xs,-\ys)$) circle  (2.5pt);
			\draw[fill] ($(-\xs,\ys)$) circle  (2.5pt);

			\draw[] ($(-\xs,-\ys)$) -- ($(\xs,-\ys)$) -- 	($(\xs,\ys)$) -- ($(-\xs,\ys)$) -- cycle;
			
			\draw[] ($(-\xs,-\ys)$) -- ++ (-1*\xs,-0.4*\ys);
			\draw[] ($(\xs,-\ys)$) -- ++ (1*\xs,-0.4*\ys);
			\draw[] ($(-\xs,\ys)$) -- ++ (-1*\xs,0.4*\ys);
			\draw[] ($(\xs,\ys)$) -- ++ (1*\xs,0.4*\ys);

			\node[circle,draw,inner sep=2, blue] (x) at (0,0) {};
			\node[circle,draw,inner sep=2, blue] (x3) at (2*\xs,0) {};
			\node[circle,draw,inner sep=2, blue] (x1) at (-2*\xs,0) {};
			\node[circle,draw,inner sep=2, blue] (x2) at (0,1.5*\ys) {};
			\node[circle,draw,inner sep=2, blue] (x4) at (0,-1.5*\ys) {};
			
			\draw[special arrow=0.8, blue] (x) to (x3); 
			\draw[special arrow=0.8, blue] (x) to (x1); 
			\draw[special arrow=0.8, blue] (x2) to (x);
			\draw[special arrow=0.8, blue] (x4) to (x);
			
			\draw[special arrow=0.8, blue] (x1) to[bend left=40] (x2); 
			\draw[special arrow=0.8, blue] (x1) to[bend right=40] (x4); 
			\draw[special arrow=0.8, blue] (x3) to[bend left=40] (x4); 
			\draw[special arrow=0.8, blue] (x3) to[bend right=40] (x2); 
	\end{tikzpicture}	
	\end{center}	
	\caption{A piece of bipartite graph \(\Gamma\) and a corresponding piece of the quiver  \(\mathcal{Q}\) (in blue).}
\end{figure}
    
Let \(b\) denote adjacency matrix of quiver \(\mathcal{Q}\), namely for any two faces \(f_i\) and \(f_j\) let \(b_{ij}\) denotes number of edges from \(f_i\) to \(f_j\) minus number of edges from \(f_j\) to \(f_i\). Then it follows from the  formula \eqref{eq:Poisson SigmaD} that 
\begin{equation}\label{eq:Poisson cluster}
	\{x_{i},x_{j}\}= b_{ij}x_{i}x_{j}. 
\end{equation}
We can always remove oriented cycles of length 1 and 2 from the quiver, since  such operation preserves matrix \(b\), so we will always assume that there are no such cycles. Note that contraction of vertices of valency 2 in graph \(\Gamma\) mentioned above results to removing of cycle of length 2 in quiver~\(\mathcal{Q}\).

Theorem \ref{th:zigzags boundary} has an important corollary in terms of Poisson structure. Namely, since all zigzag's weights \(\{z_i\}\) are Casimir functions, there exists normalization of \(\mathcal{Z}(\mathbf{x}|\lambda,\mu)\) such that all \(\mathcal{Z}_{a,b}(\mathbf{x})\) for \((a,b) \in (\text{boundary of } N)\) are Casimir functions
\footnote{Note that zigzag variables \(\mathbf{z}\) cannot be expressed only through \(\mathcal{Z}_{a,b}(\mathbf{x})\) for \((a,b) \in (\text{boundary of } N)\) since latter do not depend on spectral parameters. Geometrically, such \(Z_{a,b}(\mathbf{x})\) are the Casimir functions on the cluster variety \(\mathcal{X}\) defined in the next section, while zigzag variables \(\mathbf{z}\) are not defined there.
}.
For example one choose normalization such that \(\mathcal{Z}_{a,b}=1\) for some vertex of \(N\).

On the other hand, it follows from the formula \eqref{eq:Poisson SigmaD} that the rank of the Poisson bracket is equal to \(2g(\Sigma^D)=2I\). The following theorem states that the set of \(I\) functions \(\mathcal{Z}_{a,b}(\mathbf{x})\) for \((a,b) \in (\text{interior of } N)\) defines an integrable system. We will call them the \emph{Goncharov-Kenyon integrable systems}. 

\begin{Theorem}[{\cite[Theorem 3.7]{Goncharov:2013}}]  \label{th:Integrability}
	The functions \(\mathcal{Z}_{a,b}(\mathbf{x})\) for \((a,b) \in (\text{interior of } N)\) Poisson commute and are algebraically independent.
\end{Theorem}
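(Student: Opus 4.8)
\emph{Strategy.} I would separate the two assertions. For involutivity the plan is to pass to the generating function \eqref{eq:Zcurve} and prove the single identity $\{\mathcal{Z}(\mathbf{x}\mid\lambda,\mu),\mathcal{Z}(\mathbf{x}\mid\lambda',\mu')\}=0$, the bracket acting only on the $\mathbf{x}$-dependence. This is legitimate precisely because $\lambda,\mu$ are Casimirs: the double bracket expands as $\sum \lambda^{a}\mu^{b}(\lambda')^{c}(\mu')^{d}\,\{\mathcal{Z}_{a,b},\mathcal{Z}_{c,d}\}$, so its vanishing is equivalent to $\{\mathcal{Z}_{a,b},\mathcal{Z}_{c,d}\}=0$ for all pairs. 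By the corollary of Theorem~\ref{th:zigzags boundary} the boundary coefficients are central, so only the interior--interior brackets are nontrivial, and these are exactly the ones we must kill.

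\emph{The cancellation.} Expanding both factors by \eqref{eq:Z = sum D} as signed sums over dimer covers, I would use that the face bracket \eqref{eq:Poisson cluster} is log-canonical: writing the face-monomial factor as $x^{m}=\wt_{x}(D-D_0)$ (so that $\wt(D-D_0)=\lambda^{a}\mu^{b}\,\wt_{x}(D-D_0)$) one has $\{x^{m},x^{m'}\}=(m^{\top}b\,m')\,x^{m+m'}$, and by \eqref{eq:Poisson SigmaD} the integer $m^{\top}b\,m'$ is the intersection number $\bigl((D-D_0)\cdot(D'-D_0)\bigr)_{\Sigma^D}$. The point is that $x^{m+m'}=\wt_{x}(D+D'-2D_0)$ depends only on the superposition $D\cup D'$, which splits into edges common to $D$ and $D'$ together with a disjoint union of alternating loops. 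Fixing the classes $(a,b),(c,d)$ and grouping by superposition, the remaining freedom is to flip a loop $\ell$ of $D\cup D'$ that is null-homologous on $\Sigma$ (so that $(a,b)$ and $(c,d)$ are preserved) but may be essential on the dual surface $\Sigma^{D}$. I would then exhibit a sign-reversing involution built from such a flip: redistributing $\ell$ leaves $x^{m+m'}$ fixed while changing the $\Sigma^{D}$-intersection and the Kasteleyn sign $q_{K,D_0}$ of \eqref{eq:Z = sum D} in a correlated fashion, so that the two terms cancel and each group sums to zero. It is exactly here that the excess genus $g(\Sigma^{D})=I$ over $g(\Sigma)=1$ enters, through the loops that are essential on $\Sigma^{D}$ but trivial on $\Sigma$.

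\emph{Main obstacle.} The delicate step is the sign bookkeeping: verifying that the change of the quadratic form $q_{K,D_0}$ under a loop flip exactly compensates the change of the intersection pairing on $\Sigma^{D}$. This is the technical heart of the argument, where Kasteleyn theory---the compatibility of $q_{K,D_0}$ with the intersection form---does the real work, and I expect it to require the most care.

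\emph{Algebraic independence.} For the second assertion I would argue by leading terms. Along a generic one-parameter subgroup $x_i=t^{w_i}$ each $\mathcal{Z}_{a,b}$ acquires a well-defined leading monomial $x^{m(a,b)}$; Laurent monomials are algebraically independent iff their exponent vectors are $\mathbb{Q}$-linearly independent, and the leading terms of an algebraically dependent family would themselves be dependent. Hence it suffices to produce a weight $w$ and dimer covers realizing a linearly independent family $\{m(a,b)\}_{(a,b)\in\operatorname{int}N}$ of face-winding vectors, of cardinality $I$. The existence of such a family is the combinatorial content, and it is consistent with the count $\operatorname{rank}\{\cdot,\cdot\}=2I=2g(\Sigma^{D})$: the $I$ interior Hamiltonians are the base coordinates of the genus-$I$ spectral (Jacobian) fibration attached to $\overline{\mathcal{C}}$ in Definition~\ref{def:spectral curve}, hence independent on a generic symplectic leaf. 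Together with involutivity this yields a Lagrangian fibration by half-dimensional tori, i.e. Liouville integrability.
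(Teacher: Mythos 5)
First, a point of reference: the paper you were given does not prove this statement at all --- it imports it verbatim from Goncharov--Kenyon (the bracketed citation in the theorem header), so your proposal can only be compared with the proof in that reference. Your overall plan for commutativity --- expand both Hamiltonians via \eqref{eq:Z = sum D}, use that by \eqref{eq:Poisson SigmaD} the bracket of two weight monomials is the $\Sigma^D$-intersection number times their product, group pairs $(D_1,D_2)$ by their superposition, and cancel within groups --- is indeed the shape of the cited combinatorial proof. But your cancellation mechanism fails at its central step, and the failure is visible inside your own text. As you note, since the monomials $\lambda^a\mu^b(\lambda')^c(\mu')^d$ are linearly independent, vanishing of $\{\mathcal{Z},\mathcal{Z}'\}$ is equivalent to vanishing in each sector with $\alpha=[D_1-D_0]$, $\beta=[D_2-D_0]$ fixed; hence the only admissible flips are along loops $\ell$ of the superposition with $[\ell]=0$ in $H_1(\Sigma)$. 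But such a flip leaves $\alpha$ and $\beta$ unchanged, and $q_{K,D_0}$ is a function of the homology class on $\Sigma$ only, so the Kasteleyn factor $(-1)^{q_{K,D_0}(\alpha)+q_{K,D_0}(\beta)}$ is \emph{constant on the whole sector}: it cannot ``change in a correlated fashion'' with anything. Moreover, writing $A=D_1-D_0$, $B=D_2-D_0$, a single flip replaces $(A,B)$ by $(A-\ell,B+\ell)$ and changes the intersection number \emph{additively}, $(A-\ell)\cdot(B+\ell)=A\cdot B+(A+B)\cdot\ell$ on $\Sigma^D$, not by an overall sign; so the two terms related by your involution do not cancel. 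What can work is averaging rather than pairing: since $A\cdot B=-\tfrac12\sum_j\varepsilon_j\,(D_1+D_2-2D_0)\cdot\ell_j$ is linear in the redistribution signs $\varepsilon_j$, summing over all redistributions of the $\Sigma$-trivial loops kills their contributions. But the contributions of loops essential on $\Sigma$ (which cannot be flipped without leaving the sector) survive this averaging, and disposing of them is the actual technical heart of the Goncharov--Kenyon argument; your proposal offers no mechanism for it. So the step you flagged as ``the technical heart'' is organized around a compensation (Kasteleyn sign versus intersection form) that provably cannot occur.

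The independence half is also only a plan. Reducing algebraic independence to linear independence of leading exponents under a generic one-parameter degeneration is a legitimate method, but the whole content is then to exhibit a weight and $I$ dimer covers whose exponent vectors are independent, and you merely assert that such a family exists. The closing appeal to the genus-$I$ spectral fibration (``the interior Hamiltonians are the base coordinates, hence independent'') is circular: that the $\mathcal{Z}_{a,b}$ serve as coordinates on the base is exactly the independence statement being proved. As it stands, neither half of the theorem is established by the proposal.
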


\begin{Remark}  \label{rem:q=1} 
	Several remarks are in order.	
	\begin{itemize}
		\item The functions \(\mathcal{Z}_{a,b}\) depend only on face variables \(\mathbf{x}\), so integrability statement concerns the Poisson bracket defined by formula \eqref{eq:Poisson cluster}. 
	
		\item The face variables satisfy constraint \(\prod_{f_i \in F(\Gamma)} x_i=1 \). 
	
		\item For any consistent dimer model the zigzag variables satisfy \(\prod_{\zeta_j \in Z(\Gamma)} z_j=1\). In order to see this note that each edge $e$ of \(\Gamma\) belongs to exactly two zigzags, and, moreover, these two zigzags go through \(e\) in opposite directions. Hence \( \prod \wt(\zeta_j)=1\) and \(\prod \sgn_K{\zeta_j}=1\). Furthermore 
		\begin{equation}
			\prod (-1)^{|\zeta_j|/2-1}=(-1)^{E(\Gamma)-Z(\Gamma)}=(-1)^{V(\Gamma)-2+2I}=1.
		\end{equation}
 		Here we used Euler formula for \(\Sigma^D\) and parity of \(V(\Gamma)\) which follows from the existence of a dimer configuration.
	\end{itemize}
\end{Remark}

\begin{Remark}  \label{rem:FM} 
Alternative approach to Goncharov-Kenyon integrable systems comes from the Poisson-Lie groups \cite{Fock:2016}. In this approach the phase space of the system is a double Bruhat cell in the quotient of coextended loop group by torus \(\widehat{PGL}(M)/H\). Such cells are parametrized by the elements of extended double affine Weyl group \(W^{\mathrm{ae}} (A_{M-1} \times A_{M-1})\). For such element one can construct Lax matrix in \(L(\lambda)\in \widehat{PGL}(M)\) as certain product of elementary matrices 
\begin{equation}
	 E_i=\exp(e_i),\quad F_i=\exp(f_i)\quad H_i(x)=\exp(x h^i),\quad 0\leq i \leq M-1,
\end{equation}
where \(e_i,f_i\) are simple root generators in Lie algebra \(\widehat{\mathfrak{sl}}_M\) and \(h^i\) are Cartan elements such that \([h^i,e_j]=\delta_{i,j}e_j\), \([h^i,f_j]=-\delta_{i,j}f_j\). Then the spectral curve equation \eqref{eq:Zcurve} arises as a Lax equation 
\begin{equation}
	\label{ZLax}
	\mathcal{Z}(\mathbf{x}|\lambda,\mu) \sim \det (L(\lambda)+\mu)=0.
\end{equation}
Note that established in \cite{Fock:2016} correspondence between the elements \(W^{\mathrm{ae}} (A_{M-1} \times A_{M-1})\) and polygons \(N\) up to \(SA(2,\mathbb{Z})\) action is not one to one. Even the number \(M\) is not uniquely determined by polygon~\(N\).

%

\end{Remark}

\subsection{Cluster mutations and face mutations}
\label{ssec:cluster}

In order to study further properties of dimer models it is convenient to recall cluster structure on them \cite{Goncharov:2013}. For the reference about \(\mathcal{X}\)-cluster varieties see for example \cite{Fock:2006cluster}.

\begin{Definition}
	Let \(n\) be a positive integer. \emph{Cluster seed} is a pair \(\mathsf{s}=(b,\mathbf{x})\), where \(b\) is \(n\times n\), skew-symmetric integral matrix and \(\mathbf{x}=(x_1,\dots,x_n)\) is a tuple of \(n\) variables corresponding to rows (or columns) of matrix \(b\).
	
	\emph{Cluster chart} is an algebraic torus \(\mathcal{X}_{\mathsf{s}}=(\mathbb{C}^*)^n\) such that \(\mathbf{x}\) are coordinate functions on it. The \emph{cluster Poisson bracket} is defined by the formula \(\{x_i,x_j\}=b_{ij}x_ix_j\).
\end{Definition}


For any consistent bipartite graph on a torus we can assign a seed given by matrix \(b\) which is adjacency matrix of quiver \(\mathcal{Q}\) and variables \(x_i\) corresponding to vertices of the quiver (equivalently faces of \(\Gamma\)). We denote the corresponding cluster chart by \(\mathcal{X}_\Gamma\).

For a consistent dimer model one can assign a point on the chart \(\mathcal{X}_\Gamma\) taking \(x_i=\wt(f_i)\). Note that the formula for cluster Poisson bracket for this seed is given by~\eqref{eq:Poisson cluster}. Such points constitute a subvariety defined by equation \(q=1\), where \(q=\prod_{f_i \in F(\Gamma)} x_i\). It is easy to see that \(q\) is a Casimir function for the cluster Poisson bracket. Indeed, in combinatorial language this means that for any vertex of \(\mathcal{Q}\) number of ingoing edges is equal to number of outgoing edges. This follows from the bipartite property of \(\Gamma\).

The integrability Theorem \ref{th:Integrability} works only on the Poisson subvariety given by \(q=1\). However, from the cluster point of view it is natural to consider arbitrary \(q\).  This leads for \(q\neq 1\) to deatonomization of cluster integrable systems e.g. \(q\)-difference Painlev\'e equations \cite{Bershtein:2018cluster}.

\begin{Definition} Mutation in a vertex \(k\) is a transformation of seeds \(\mu_k\colon \mathsf{s}=(b,\mathbf{x}) \rightarrow \tilde{\mathsf{s}}= (\tilde{b},\tilde{\mathbf{x}}) \) such that
\begin{equation}\label{eq:mutation rule}
	\tilde{b}_{ij}=
	\begin{cases} 		
		-b_{ij}, \quad \text{ if } i=k  \text{ or } j=k  
		\\ 
		b_{ij}+\frac{b_{ik}|b_{kj}|-b_{jk}|b_{ki}|}{2},\ \text{ otherwise }
	\end{cases}	
	\quad 
	\tilde{x}_i=\begin{cases} 
		x_k^{-1} \quad \text{ if } k=i 
		\\  
		x_i(1+x_k^{\operatorname{sgn}b_{ik}})^{b_{ik}} \quad \text{ if } k\neq i 
	\end{cases}.
\end{equation}
\end{Definition}

Cluster charts are connected by mutations. It is straightforward to check that mutation is a Poisson map and involution. By \(\mathcal{X}\) we denote the Poisson variety obtained by gluing of all cluster charts \(\mathcal{X}_\mathsf{s}\)  related to a given one by sequences of mutations. Such Poisson varieties are called \(\mathcal{X}\)-cluster varieties.

By \emph{cluster modular group} \(G_{\mathcal{Q}}\) we call the group of birational transformations of \(\mathcal{X}_\mathsf{s}\), generated by sequences of mutations (and permutations of vertices), which preserves the quiver \(\mathcal{Q}\). The group \(G_{\mathcal{Q}}\) depends not on seed \(\mathsf{s}\) but rather on mutational class of seed. In other words mutation \(\mathsf{s} \to \tilde{\mathsf{s}}\) gives a natural isomorphism of the corresponding cluster modular groups \(G_{\mathcal{Q}}\) and \(G_{\tilde{\mathcal{Q}}}\).

It is natural to ask for the interpretation of the mutations in terms of combinatorics of dimer models. If the face \(f_k\) is 4-gon, then the mutation in variable \(x_k\) corresponds to the move of dimer model depicted in Fig.~\ref{fi:face4}.

\begin{figure}[h]
    \centering
    \includegraphics[]{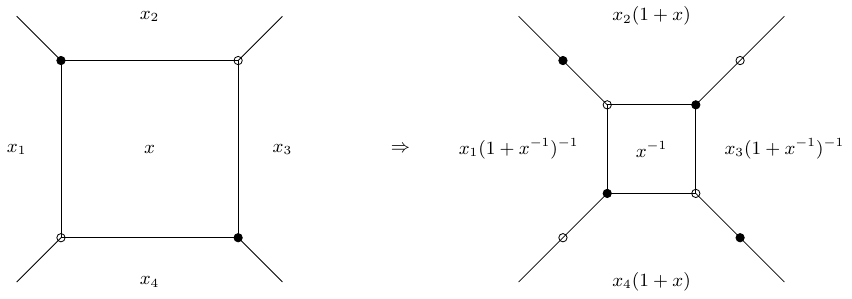}
	\caption{\label{fi:face4}
		4-gon face mutation (spider move)
		}
\end{figure}

The following proposition is standard

\begin{Proposition}\label{prop:4gon mutation}
	(a) The 4-gon mutation maps consistent dimer model to consistent dimer model
	
	(b) The 4-gon mutation  move preserves partition function \(\mathcal{Z}(\mathbf{x}|\lambda,\mu)\).
\end{Proposition}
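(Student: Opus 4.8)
The plan is to prove the two parts with different tools. Since consistency (Definition~\ref{def:consistancy}) is phrased entirely in terms of zigzag paths, I would establish part~(a) by tracking how the spider move acts on zigzags. Part~(b) is more computational, and I would establish it either through the Kasteleyn determinant \eqref{eq:ZKast} or through an explicit correspondence of dimer covers; in both cases the engine is a single local identity for the 4-gon face (the ``square-move'' identity), whose compatibility with the face-variable mutation \eqref{eq:mutation rule} is the whole point.

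For part~(a), I would first record the precise local picture of the move on the 4-gon face \(f_k\) together with its four boundary vertices (alternating in colour) and the legs emanating from them, as in Fig.~\ref{fi:face4}. The crucial observation is that the move acts on zigzags in a controlled way: a zigzag entering the mutated region is rerouted only locally, its homology class \([\zeta]\in H_1(\Sigma)\) is unchanged, and no zigzag is created or destroyed (consistently with the invariance of the Newton polygon and hence of \(B=|Z(\Gamma)|\) via Theorem~\ref{th:zigzags boundary}). Granting this, condition~\ref{it:1} transfers immediately from \(\Gamma\) to \(\tilde\Gamma\). For conditions~\ref{it:2} and~\ref{it:3} I would argue that, because the move is supported on a single face, any new parallel bigon or self-intersection on the universal cover must use the rerouted arcs near \(f_k\); a finite case analysis of how the (at most four) zigzags traverse the 4-gon then rules these out, using that \(\Gamma\) itself had none. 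I expect this case analysis to be the main obstacle in~(a): one must rule out that rerouting two zigzags which previously met transversally creates a parallel bigon, and that a zigzag passing twice through \(f_k\) acquires a self-intersection, while also handling the degenerate sub-cases (two-valent boundary vertices, or zigzags sharing more than one edge of \(f_k\), which should first be uncontracted).

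For part~(b), I would use \(\mathcal{Z}\sim\det\mathcal{K}\) from \eqref{eq:ZKast} and the fact that the spider move alters only the rows and columns of \(\mathcal{K}\) indexed by the vertices of \(f_k\). A Schur-complement (cofactor) expansion along these local rows and columns reduces the claim to the single algebraic identity \(\Delta = AC+BD\), where \(A,B,C,D\) are the (Kasteleyn-signed) weights around \(f_k\). Writing the face variable as \(x_k = AC/BD\) up to sign, one has \(\Delta = BD\,(1+x_k)\), so that \(\Delta \sim 1+x_k\) up to a monomial absorbed by the normalization \(\sim\); this is exactly the factor \((1+x_k^{\pm1})\) appearing in the cluster mutation \eqref{eq:mutation rule}, while the accompanying inversion \(x_k\mapsto x_k^{-1}\) reflects the \(90^\circ\) rotation of \(f_k\). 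Equivalently, I could give a bijective proof: partition the dimer covers of \(\tilde\Gamma\) by their restriction near \(f_k\), match them to covers of \(\Gamma\), and check that the weight ratio is the same monomial on each class, so summation yields \(\mathcal{Z}_{\tilde\Gamma}\sim\mathcal{Z}_\Gamma\). Since the move is local it does not disturb the \((\lambda,\mu)\)-grading beyond this monomial shift, so the identity holds coefficient by coefficient on \(N\).

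Finally I would check that the edge reconnection around \(f_k\) changes the adjacency matrix of the quiver \(\mathcal{Q}\) precisely by the combinatorial rule \(b\mapsto\tilde b\) in \eqref{eq:mutation rule}, so that the spider move and the cluster mutation \(\mu_k\) induce the same map on \(\mathcal{X}\); combined with the previous paragraph this says that \(\mathcal{Z}\), viewed as a function on \(\mathcal{X}\) (up to monomial), is simply re-expressed in the mutated chart. In part~(b) the genuine conceptual content is the square-move identity; the remaining difficulty is bookkeeping, namely tracking the Kasteleyn signs through the local determinant and keeping control of the overall monomial hidden in \(\sim\).
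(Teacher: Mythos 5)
The paper offers no proof of Proposition~\ref{prop:4gon mutation}: it is introduced with the words ``the following proposition is standard'' and left to the literature (the spider-move/urban-renewal computation going back to Goncharov--Kenyon and Postnikov, and consistency analyses of Ishii--Ueda type). So there is no in-paper argument to compare yours against; judged on its own, your proposal reconstructs exactly that standard argument --- rerouting of zigzags through the 4-gon for part~(a), the local Kasteleyn-determinant identity for part~(b) --- and is correct in outline.

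Two points need care. First, in part~(a), do not justify ``no zigzag is created or destroyed'' by invariance of the Newton polygon: that invariance is part~(b), and Theorem~\ref{th:zigzags boundary}, which ties sides of \(N\) to zigzags, presupposes consistency of the mutated graph --- precisely what part~(a) is establishing. You present this only as a sanity check, which is fine, but the actual content must come from the local rerouting picture of Fig.~\ref{fi:face4} together with the degenerate cases you list (two-valent boundary vertices, a single zigzag using two edges of \(f_k\)); that finite analysis is indeed where the work lies. Second, in part~(b), the quantity \(\Delta=AC+BD\) is not a monomial, so ``\(\mathcal{Z}\) is preserved up to \(\sim\)'' cannot be read off from the determinant identity alone: the statement only holds after the mutated partition function is re-expressed in the mutated face variables \(\tilde{\mathbf{x}}=\mu_k(\mathbf{x})\) of \eqref{eq:mutation rule}, whereupon the factor \(1+x_k\) extracted from \(\det\mathcal{K}\) cancels against the factors \((1+x_k^{\sgn b_{ik}})^{b_{ik}}\) absorbed into the neighbouring face variables. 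Your final paragraph (checking that the edge reconnection mutates the quiver adjacency matrix by \eqref{eq:mutation rule}) supplies exactly this wiring, so the argument closes --- but it must be run together with the determinant computation rather than stated as an independent afterthought, and the Kasteleyn sign bookkeeping (the 4-gon condition \(\sgn_K(f_k)=(-1)^{|f_k|/2-1}=-1\) is what makes the two terms of \(\Delta\) add rather than cancel) should be written out once, since that is where sign errors typically hide.
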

	
\begin{Theorem}
	Any two consistent bipartite graphs on a torus with the same Newton polygon \(N\) are connected by a sequence of  4-gon face mutations and contractions/uncontructions moves.
\end{Theorem}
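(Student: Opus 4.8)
The plan is to recode each consistent dimer model by the topological data of its zigzag paths, regarded as a system of oriented closed curves on $\Sigma$, and to reduce the statement to a rigidity property of such curve systems. First I would use Theorem~\ref{th:zigzags boundary} to note that the Newton polygon $N$ determines the multiset of homology classes $\{[\zeta]\}_{\zeta\in Z(\Gamma)}$ completely: for each primitive outer edge direction of $N$, the number of parallel zigzags equals the lattice length of the corresponding side. Hence two consistent graphs with the same $N$ give zigzag systems realizing the same homology data on the torus.

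Second, I would reinterpret consistency. Conditions \ref{it:2} and \ref{it:3} of Definition~\ref{def:consistancy} say precisely that the lifted curves on the universal cover have no self-intersections and no parallel bigons, i.e.\ the zigzag system is in \emph{minimal position}. Conversely, a minimal-position system of oriented curves reconstructs a bipartite graph: put a black or white vertex at each double crossing according to the relative orientation of the two strands, take the arcs between consecutive crossings as edges, and the complementary regions as faces. For a generic (simple-crossing) system this yields a trivalent consistent graph, and an arbitrary consistent $\Gamma$ differs from the trivalent one only by contractions of $2$-valent vertices; the face count then matches $2\operatorname{Area}(N)$ of Theorem~\ref{th: number of faces} automatically. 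Thus, up to contractions and uncontractions, passing to zigzags and back is a bijection between consistent graphs and minimal-position curve systems.

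Third, the problem becomes purely topological: any two minimal-position systems of oriented curves on the torus realizing the same homology classes are connected by isotopies together with \emph{triangle moves} (sliding one strand across the crossing of two others). This is the surface analog, in the torus setting where boundary data is replaced by homology, of the classical connectivity theorem for reduced alternating strand diagrams (Thurston; Postnikov). The final step is the translation dictionary: a triangle move is exactly the $4$-gon spider move of Proposition~\ref{prop:4gon mutation}, and isotopies that push a strand past a double crossing correspond to contractions and uncontractions of $2$-valent vertices, so the topological connectivity yields the desired sequence of dimer moves.

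The hard part is the rigidity statement of the third step and its faithful translation into dimer moves. One must verify that minimal position is preserved throughout---that no bigon or self-intersection is ever created along the way---so that every intermediate curve system is again a \emph{consistent} dimer model and not a mere bipartite graph, and that each elementary triangle move is realized by a single spider move rather than a larger nonlocal change. The degenerate behavior at $2$-valent vertices (cf.\ the footnote after Definition~\ref{def:consistancy}, where the very notion of zigzag intersection is delicate) is the main technical nuisance; I would neutralize it by contracting all $2$-valent vertices first, carrying out the rigidity argument for trivalent graphs, and recording the needed uncontractions at the end.
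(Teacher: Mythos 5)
The core of your argument, Step 3, is exactly where the proof breaks down. The connectivity-by-triangle-moves theorem you invoke is not a classical result in the setting you need: Postnikov's move-connectedness is for plabic graphs in the disk, and Thurston's theorem on minimal triple-crossing diagrams is likewise a statement governed by boundary data. What you need is the analogue on the torus, where boundary conditions are replaced by homology classes of closed strands, and that is precisely the nontrivial point. Indeed, the paper gives no proof of this theorem at all: it records that the statement was asserted in \cite[Th. 2.5]{Goncharov:2013}, remarks that ``it looks like only a weaker statement was proven in loc.\ cit.'', points to the discussion of this issue in \cite[pp. 396--397]{Bocklandt:2016dimer}, and says the claim appears to follow from the more general result of \cite{Galashin:2022move}. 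So your proposal reproduces the original Goncharov--Kenyon strategy and inherits exactly the gap that makes that strategy incomplete: one must show that two minimal-position zigzag systems on the torus with the same homology data can be joined by a path of triangle moves passing only through minimal-position systems, i.e.\ never creating a bigon, a self-intersection, or a homologically trivial strand at any intermediate stage. You flag this as ``the hard part'' but then treat it as known; it is not, and no reduction to the disk case is available, because on the torus the strands are closed curves whose isotopy classes interact globally rather than being pinned at boundary endpoints.

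A secondary point: your reconstruction dictionary in Step 2 is backwards. In the zigzag/strand correspondence, crossings of strands correspond to \emph{edges} of $\Gamma$ (each edge is traversed by exactly two zigzags, in opposite directions), while black and white \emph{vertices} correspond to the complementary regions around which all bounding strands turn coherently in one direction; the regions with alternating orientations are the faces. Placing a vertex at each crossing, as you propose, produces a 4-valent graph rather than a bipartite one. This is fixable, but the bijection between consistent graphs (up to contraction) and minimal-position strand systems has to be set up with the correct dictionary before the rigidity question of Step 3 can even be posed.
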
	

This fact was stated in \cite[Th. 2.5]{Goncharov:2013}, but it looks like only a weaker statement was proven in loc. cit. See also discussion in \cite[pp. 396--397]{Bocklandt:2016dimer}. On the other hand it looks like this claim follows from more general result of \cite{Galashin:2022move}.
	
\begin{Remark}
	There is no so transparent interpretation of the mutation in the non 4-gon faces. See \cite{Casals:2022microlocal} for the approach through weaves.  We will discuss another possible interpretation below.		
\end{Remark}

There is another class of transformations introduced in \cite{Inoue:2016toric}. Assume that two parallel zigzags \(\zeta_1,\zeta_2\) are such that the graph between them has a form of hexagonal grid, see Fig.~\ref{fi:zigzag permutation}.
\begin{figure}[h]
	\centering
    \includegraphics[]{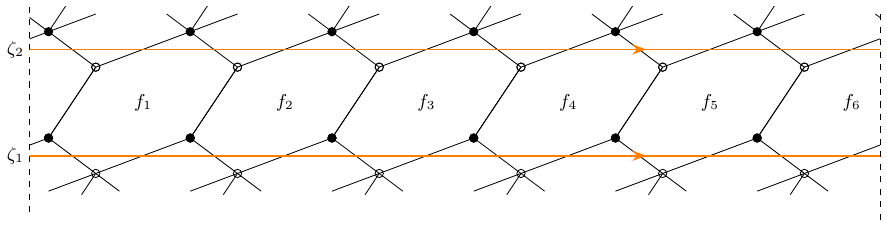}
	\caption{\label{fi:zigzag permutation} Two zigzags \(\zeta_1,\zeta_2\) in bipartite graph in which zigzag transposition can be applied.}	
\end{figure}

The faces \(f_1,\dots f_{l}\) (see Fig.~\ref{fi:zigzag permutation} for \(l=6\)) have natural cyclic order, hence we label them by \(j\in \mathbb{Z}/l \mathbb{Z}\), and denote mutation of seed at the vertex corresponding to \(f_j\) by \(\mu_j\).
\begin{Proposition}[{\cite[Th. 3.1]{Inoue:2019cluster}, \cite[Th. 7.7]{Goncharov:2018donaldson} \cite[Th. 3.6]{Masuda:2023birational}}]
	For \(j\in \mathbb{Z}/l \mathbb{Z}\) define
	\begin{equation}\label{eq:permutation of zigzags}
		R= \big(\mu_j \circ \mu_{j-1}\circ \dots \circ \mu_{j+3} \circ \mu_{j+2} \big)\circ (j+1,j+2)\circ \big( \mu_{j+2} \circ \mu_{j+3}\circ \dots \circ  \mu_{j-1} \circ\mu_j \big) 
	\end{equation}
	\begin{enumerate}[label=(\alph*)]
		\item Transformation \(R\) does not depend on the choice of \(j\).
		
		\item Transformation \(R\) preserves the quiver \(\mathcal{Q}\). 
	\end{enumerate}
\end{Proposition}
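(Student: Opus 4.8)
The plan is to exploit that both statements are \emph{local}: the mutations $\mu_j,\dots,\mu_{j+l-1}$ and the transposition $(j+1,j+2)$ only touch the faces $f_1,\dots,f_l$ of the hexagonal strip and the arrows incident to them, so it suffices to follow the evolution of the subquiver supported on these $l$ vertices together with the arrows joining them to the two families of faces lying across $\zeta_1$ and $\zeta_2$. First I would record this subquiver explicitly from the honeycomb geometry: the $f_1,\dots,f_l$ sit in a cyclic row, consecutive faces $f_k,f_{k+1}$ are joined by arrows, and each $f_k$ is attached by a fixed pattern of arrows to the faces across the two bounding zigzags. All combinatorics of $R$ takes place inside this finite piece, and each individual mutation changes only the arrows around the vertex mutated.

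\textbf{Part (b).} The key observation is that $R$ is a conjugation. Write $M=\mu_{j+2}\circ\mu_{j+3}\circ\dots\circ\mu_{j-1}\circ\mu_j$ for the block of mutations applied first (the rightmost factor in \eqref{eq:permutation of zigzags}) and $\sigma=(j+1,j+2)$. Since each $\mu_a$ is an involution and the leftmost block mutates at the same vertices in the opposite order, that block equals $M^{-1}$, so $R=M^{-1}\circ\sigma\circ M$. As $M^{-1}\circ M=\mathrm{id}$ on seeds, the quiver returns exactly to $\mathcal{Q}$ \emph{provided} $\sigma$ is an automorphism of the intermediate quiver $\mathcal{Q}_1:=M\cdot\mathcal{Q}$. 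Thus part (b) reduces to the single claim: after mutating at $j,j-1,\dots,j+2$, the transposition of $f_{j+1}$ and $f_{j+2}$ is a symmetry of $\mathcal{Q}_1$. This I would establish by computing $\mathcal{Q}_1$ directly: mutating along the strip in this order transports the arrows off one bounding zigzag and reattaches them to the other, leaving a quiver in which $f_{j+1}$ and $f_{j+2}$ become the two symmetric endpoints of the half-processed strip. Because the honeycomb structure localizes each mutation, this is a finite induction on the number of mutations already performed.

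\textbf{Part (a).} I would prove $R_j=R_{j+1}$ as birational maps of $\mathcal{X}_\mathsf{s}$. The two differ only in the vertex omitted from the mutation block and in the transposition used. Using the interchange relation $\sigma\circ\mu_a=\mu_{\sigma(a)}\circ\sigma$ together with far-commutativity of mutations (mutations at vertices joined by no arrow commute), I would rewrite $M_{j+1}$ in terms of $M_j$ and push the extra mutation through the transposition, collapsing $R_{j+1}$ onto $R_j$ by a finite sequence of elementary moves. Equivalently, one verifies that $R_j$ and $R_{j+1}$ induce the same transformation of the face variables $\mathbf{x}$ and of the Casimirs; combined with the quiver invariance from part (b), this identifies the two maps.

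\textbf{Main obstacle.} The hard part will be the explicit determination of $\mathcal{Q}_1=M\cdot\mathcal{Q}$ and the verification of its $f_{j+1}\leftrightarrow f_{j+2}$ symmetry: one must show that the long mutation sequence along the strip has exactly the advertised effect of carrying the arrows from one zigzag to the other \emph{without} producing spurious arrows, and, for part (a), that the far-commutativity relations invoked are genuinely applicable at each step, i.e.\ that the relevant vertices remain disconnected in the running quiver. These are finite but delicate bookkeeping problems; once the local quiver and its mutation dynamics are pinned down, both (a) and (b) follow formally from the conjugation structure $R=M^{-1}\circ\sigma\circ M$.
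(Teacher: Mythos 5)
You should first note that the paper itself does not prove this Proposition at all: it is imported from the literature (the three cited works), so your argument has to stand entirely on its own. Its skeleton is correct, and it is indeed the skeleton underlying the cited proofs: since mutations are involutions on seeds, the left block of \eqref{eq:permutation of zigzags} is the inverse of the right block \(M\), so \(R=M^{-1}\circ\sigma\circ M\) with \(\sigma=(j+1,j+2)\), and part (b) reduces to the single claim that \(\sigma\) is an automorphism of the intermediate quiver \(\mathcal{Q}_1=M(\mathcal{Q})\). But that claim is the entire mathematical content of (b), and you explicitly defer it; what you call the "finite induction on the number of mutations already performed" \emph{is} the proof, and none of it is carried out, so (b) is not established by your text.

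Part (a) is in worse shape: the mechanism you propose cannot work as described. In the strip of Fig.~\ref{fi:zigzag permutation} consecutive faces \(f_k,f_{k+1}\) share an edge of \(\Gamma\), so the corresponding vertices of \(\mathcal{Q}\) are joined by arrows and \(\mu_k,\mu_{k+1}\) do \emph{not} far-commute. Rewriting \(M_{j+1}\) in terms of \(M_j\) requires moving \(\mu_{j+1}\) past \(\mu_j\) and \(\mu_{j+2}\), which are precisely the forbidden commutations, and the interchange relation \(\sigma\circ\mu_a=\mu_{\sigma(a)}\circ\sigma\) does not help with that. Your fallback, "verify that \(R_j\) and \(R_{j+1}\) induce the same transformation of \(\mathbf{x}\) and of the Casimirs," is a restatement of the claim rather than a method. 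The proofs in the cited references avoid this by characterizing \(R\) intrinsically through its action on the dimer model, in the spirit of Proposition~\ref{prop:zigzag transposition}: \(R\) fixes the bipartite graph, swaps the weights of the two zigzags \(\zeta_1,\zeta_2\), fixes all other zigzag weights, and preserves the partition function; one then checks that each \(R_j\) equals this \(j\)-independently characterized map, which yields (a) and (b) simultaneously. Without either that characterization or the explicit computation of \(\mathcal{Q}_1\) and of the induced action on the face variables, both halves of your proposal remain genuine gaps.
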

\begin{Proposition}[{\cite{Inoue:2016toric},\cite{George:2022discrete}}]	 \label{prop:zigzag transposition}
	The transformation \(R\) can be extended to the transformtion of the dimer model such that 
	\begin{enumerate}[label=(\alph*)]	
		\item The bipartite graph \(\Gamma\) is preserved.
		
		\item The weights of face variables are transformed as cluster variables for transformation \eqref{eq:permutation of zigzags}.
		
		\item The weights of two zigzags \(\zeta_1\) and \(\zeta_2\) are swapped, while all other zigzag's weights are preserved.
		
		\item The partition function \(\mathcal{Z}(\lambda,\mu)\) is preserved.
	\end{enumerate}
\end{Proposition}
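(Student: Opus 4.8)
The plan is to establish all four statements by a local analysis inside the hexagonal strip bounded by $\zeta_1$ and $\zeta_2$, promoting the purely combinatorial mutation word \eqref{eq:permutation of zigzags} to an honest sequence of dimer-model moves. The preceding proposition already guarantees that $R$ preserves the quiver $\mathcal{Q}$ and is independent of the starting index $j$, so $R$ is a well-defined birational automorphism of the chart $\mathcal{X}_\Gamma$; what remains is to lift it to the dimer model itself. Since each of the mutations $\mu_1,\dots,\mu_l$ is supported on a face lying in the strip, $R$ acts as the identity on the part of $\Gamma$ outside the strip, and the whole argument reduces to the explicit finite piece shown in Fig.~\ref{fi:zigzag permutation}.

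First I would realize each factor $\mu_i$ as a spider move. The key point, read off from the honeycomb structure of the strip, is that at the moment $\mu_i$ is applied the face $f_i$ has become a $4$-gon (after the contractions/uncontractions of $2$-valent vertices produced by the earlier moves), so that Proposition \ref{prop:4gon mutation} applies and every step sends a consistent dimer model to a consistent one. Tracking the graph through the backward block $\mu_{j+2}\circ\cdots\circ\mu_j$ shows that it slides the zigzag $\zeta_2$ step by step across the band; the middle transposition $(j+1,j+2)$ relabels the two faces that have exchanged position, and the forward block $\mu_j\circ\cdots\circ\mu_{j+2}$ rebuilds the band in its original shape. Comparing the initial and final graphs edge by edge then yields statement (a): $\Gamma$ is recovered up to contraction/uncontraction of $2$-valent vertices.

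Statement (b) is immediate from the construction: by definition $R$ is the composition of the mutations in \eqref{eq:permutation of zigzags}, and the face weights $x_i=\wt(f_i)$ transform under each $\mu_i$ exactly by the cluster rule \eqref{eq:mutation rule}. For statement (d) I would invoke Proposition \ref{prop:4gon mutation}(b): each spider move preserves $\mathcal{Z}(\mathbf{x}|\lambda,\mu)$, so since by (a) the composite returns the graph to $\Gamma$, the transformed weights must reproduce the same partition function; the intermediate relabeling $(j+1,j+2)$ is precisely the identification of faces that makes the two copies of $\Gamma$ match up, and hence does not alter $\mathcal{Z}$.

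The main obstacle is statement (c), the claim that precisely the weights of $\zeta_1$ and $\zeta_2$ are interchanged while every other zigzag weight is fixed. Here I would follow each zigzag path through the individual spider moves, using that a spider move reroutes only the zigzags passing through the mutated $4$-gon; accumulating these local rerouting steps over the backward block, the transposition, and the forward block shows that the net permutation of zigzags is exactly the transposition of the two boundary zigzags of the strip. The associated weights can be pinned down as follows: statements (a) and (d) together with Theorem \ref{th:zigzags boundary} imply that for the side $E$ parallel to $[\zeta_1]=[\zeta_2]$ the product $\prod(1+z_j)$ is unchanged, so the multiset of weights of zigzags parallel to $E$ is preserved; the path-tracking then identifies the induced permutation of this multiset as the single transposition $z_1\leftrightarrow z_2$, all other weights being carried along unchanged. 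The delicate bookkeeping — verifying the $4$-gon condition at every step and keeping the zigzag trajectories straight through the whole word — is where the real work lies, and it is most safely carried out on the explicit local model of Fig.~\ref{fi:zigzag permutation}.
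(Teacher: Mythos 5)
There is a genuine gap, and it sits at the very first step of your construction. You propose to realize each mutation $\mu_i$ in the word \eqref{eq:permutation of zigzags} as a spider move, claiming that ``at the moment $\mu_i$ is applied the face $f_i$ has become a $4$-gon (after contractions/uncontractions of $2$-valent vertices produced by the earlier moves).'' This is false. The faces $f_1,\dots,f_l$ of the strip in Fig.~\ref{fi:zigzag permutation} are hexagons all of whose vertices are $3$-valent (each vertex carries two edges of a zigzag plus one edge either joining the two zigzags or leaving the strip), so no contraction is available at the start, and none is created by cluster mutation: the mutations in \eqref{eq:permutation of zigzags} are operations on the seed, not graph moves, and a mutation at a hexagonal face simply has no realization as a move of bipartite graphs on the torus. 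This is exactly the point of the remark following Proposition~\ref{prop:4gon mutation}, and the discussion of $2n$-gon face mutations (Fig.~\ref{fi:6-gon mutation}) makes it quantitative: a geometric realization of a $6$-gon mutation would increase the genus of the surface, so the intermediate seeds in \eqref{eq:permutation of zigzags} are not face seeds of any consistent dimer model on $\mathbb{T}^2$. Consequently your arguments for (a), (c) and (d) — tracking the graph, the zigzags, and the partition function step by step through Proposition~\ref{prop:4gon mutation} — all rest on a premise that fails at step one; only (b) survives, and it is true by definition. Note also that the nearby true statement (that consecutive parallel zigzags can be brought into the configuration of Fig.~\ref{fi:zigzag permutation} by $4$-gon mutations) is about preparing the strip, not about realizing $R$ itself by spider moves, so it cannot be used to repair the argument.

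For comparison: the paper does not prove Proposition~\ref{prop:zigzag transposition} at all; it cites \cite{Inoue:2016toric} and \cite{George:2022discrete}. In those references the transformation is the \emph{geometric $R$-matrix}: the new edge weights inside the strip are defined by explicit subtraction-free rational formulas (there is no sequence of intermediate dimer models), and properties (a)--(d) are then verified directly — (a) because the graph is never changed, (b) by comparing the explicit weight formulas with the cluster mutation rule applied to the composite word \eqref{eq:permutation of zigzags}, and (c), (d) by a computation with boundary measurements/partition functions localized to the cylinder. If you want a proof in the spirit of this paper, that is the route to take: define the weight transformation on the strip first, and only afterwards identify it with the cluster transformation $R$; do not try to factor $R$ itself through graph moves.
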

Such transformation was called \emph{geometric $R$ matrix} in \cite{Inoue:2016toric}, we also call it \emph{transposition of zigzags} \(\zeta_1,\zeta_2\)

The group generated by 4-gon face mutations was studied in \cite{George:2019cluster}, while the group generated by 4-gon face mutations and permutations of zigzags was studied in \cite{George:2022discrete}. It can be proven that for any two consecutive parallel zigzags \(\zeta_1,\zeta_2\) there is actually a sequence of 4-gon mutations, after which mutated zigzags \(\zeta_1,\zeta_2\) form a picture from Fig.~\ref{fi:zigzag permutation} and can be permuted (see \cite{George:2022discrete}, and also Lemma~\ref{lem:zigzag patch} below).

\section{Zigzag mutations and reductions} \label{sec:zigzag reductions}

This section is organized as follows. We start with the simplest length-4 zigzags mutation of bipartite graph and promote them to the transformation of the dimer model. This transformation cause mutation of the dimer partition function and corresponding Newton polygon. We study generalization of this results to zigzags of greater length. The corresponding transformation of weight function is defined on the subvariety of \(\mathcal{X}\). These subvarieties are not Poisson, but the transformation above naturally acts between their ``quotients'', that are Hamiltonian reductions. At the end of this section we discuss general (almost entirely conjectural) definitions and properties of reduced Goncharov-Kenyon systems labeled by decorated Newton polygons, and their zigzag mutations.


\subsection{Length 4 zigzags}

Let \(\zeta\) be a zigzag path on \(\Gamma\) of length \(4\). Note that on the dual surface \(\zeta\) is a boundary of 4-gon face.  Therefore, it is natural to consider dual to the face mutation, defined on Fig.~\ref{fi:face4}.  Using uncontractions one can make all vertices in \(\zeta\) to be 3-valent. Denote the corresponding zigzag variable by \(z=-\sgn_K(\zeta) \wt(\zeta) \).
\begin{Definition} \label{def:zigzag 4 mutatation}
	The zigzag mutation  of the bipartite graph along \(\zeta\) is defined as on Fig.~\ref{fi:zigzag4}
 	\begin{figure}[h]
		\begin{center}
			\includegraphics[scale=0.8]{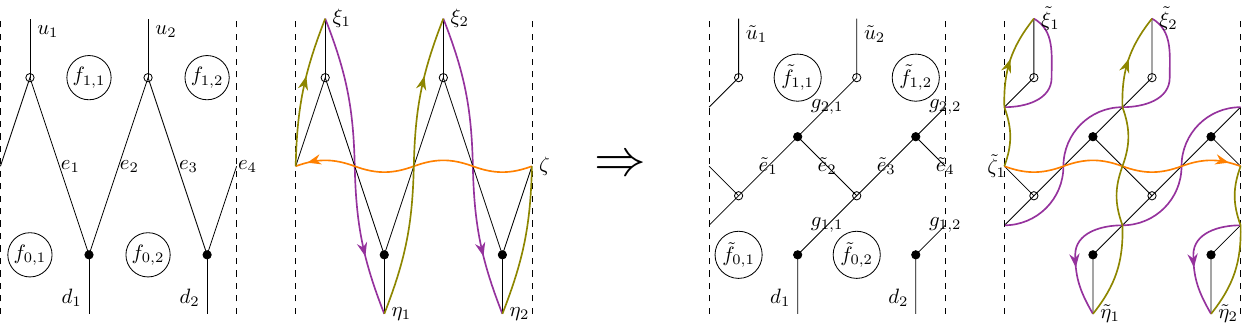}
		\end{center}		
		\caption{\label{fi:zigzag4}
			Length-4 zigzag mutation: bipartite graph, zigzags, edge and face labels before mutation (on the left) and after mutation (on the right), the face labels are encircled. 
			}
	\end{figure}

	The zigzag mutation of the dimer model is supplied with the following transformation of the edge weights \(\wt \mapsto{\widetilde{\wt}}\) and the Kasteleyn signs \(\sgn_K \mapsto \sgn_{\tilde{K}}\):
	\begin{subequations}\label{eq:weights zigzag4}
		\begin{align}
			&\widetilde{\wt}(\tilde{e}_i)=\wt(e_i)\wt(e_2)^{-1}\wt(e_4)^{-1},\; i=1,3; \qquad \widetilde{\wt}(\tilde{e}_i)=\wt(e_{i})^{-1},\; i=2,4;
			\\
			&\sgn_{\tilde{K}}(\tilde{e}_i)=\sgn_K(e_i)\sgn(e_2)\sgn(e_4),\; i=1,3; \qquad \sgn_{\tilde{K}}(\tilde{e}_i)=-\sgn_K(e_{i}),\;  i=2,4;
			\\
			&\widetilde{\wt}(g_{i,j})=1,\; 	\sgn_{\tilde{K}}(g_{i,j})=1,\; i=1,2,\; j=1,2;\\
			& \widetilde{\wt}(\tilde{u}_i)=\wt(u_i)(1+z)^{-1}, \; i=1,2.			 				
		\end{align}
	\end{subequations}
	All other weights and Kasteleyn signs remain intact.
\end{Definition}

\begin{Remark}Several remarks are in order:
		\begin{itemize}
%
		\item Such transformations of bipartite graphs as well as more general zigzag mutations were defined in \cite{Higashitani:2022}. 
		
		\item The definition of transformed weights and Kasteleyn signs is motivated by the transformation property of partition function given below. Note that the edge weights and Kasteleyn signs are defined up to gauge transformations and formula~\eqref{eq:weights zigzag4} corresponds to a particular choice.  
	\end{itemize}
\end{Remark}

Assume for simplicity that zigzag \(\zeta\) is horizontal and the spectral parameter \(\lambda\) is defined as \(\lambda^{-1}=z\).  Let us also assume, that the weights of the edges \(u_1,u_2\) (see Fig.~\ref{fi:zigzag4}) are proportional to another spectral parameter~\(\mu^{-1}\), while all other edge weights are \(\mu\)-independent. Then for any dimer configuration \(D\) its weight \(\wt(D)\) is proportional to \(\mu^{|D\cap \zeta|-2}\). Since \(|\zeta|=4\) the intersection \(D\cap \zeta\) can consist of \(2\), \(1\), or \(0\) edges and in appropriate normalization the contribution of \(D\) into partition function~\eqref{eq:Z = sum D} gives terms, proportional to \(\mu^{-1}\), \(1\) or \(\mu\) correspondingly.

In this normalization the partition function has form
\begin{equation}
	\label{eq:Z zigzag length 4}
	\mathcal{Z}(\mathbf{x}|\lambda,\mu)=\mu P_1(\lambda)+ P_0(\lambda)+\mu^{-1}P_{-1}(\lambda).
\end{equation}
Due to Theorem~\ref{th:zigzags boundary}, \((1+\lambda^{-1})\) divides \(P_1(\lambda)\).

\begin{Proposition}  \label{prop:zigzag4}
	
	(a) Length-4 zigzag mutation transforms partition function \eqref{eq:Z zigzag length 4} into
	\begin{equation}\label{eq:Z zigzag4}
		\tilde{\mathcal{Z}}(\mathbf{x}|\lambda,\nu)=\nu \tilde{P}_1(\lambda)+ P_0(\lambda)+\nu^{-1}\tilde{P}_{-1}(\lambda)
	\end{equation}
	with \(\nu=\mu(1+\lambda^{-1})\) and \(\tilde{P}_1(\lambda)=P_1(\lambda)(1+\lambda^{-1})^{-1}\), \(\tilde{P}_{-1}(\lambda)=P_{-1}(\lambda)(1+\lambda^{-1})\). In other words \(\tilde{\mathcal{Z}}(\mathbf{x}|\lambda,\nu)=\mathcal{Z}(\mathbf{x}|\lambda,\mu)\).
	
	(b) Length-4 zigzag mutation maps consistent dimer model to a consistent dimer model.	
\end{Proposition}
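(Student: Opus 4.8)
The plan is to handle the two parts separately, reducing each to a purely local analysis of the patch in Fig.~\ref{fi:zigzag4}, since outside this patch neither the graph, the weights, nor the Kasteleyn signs are changed by the mutation. For part (a) I would first note that the closing identity $\tilde{\mathcal{Z}}(\mathbf{x}|\lambda,\nu)=\mathcal{Z}(\mathbf{x}|\lambda,\mu)$ is a formal consequence of the three asserted rules $\nu=\mu(1+\lambda^{-1})$, $\tilde P_1=P_1(1+\lambda^{-1})^{-1}$, $\tilde P_{-1}=P_{-1}(1+\lambda^{-1})$: substituting them into \eqref{eq:Z zigzag4} and cancelling the factors $(1+\lambda^{-1})$ reproduces \eqref{eq:Z zigzag length 4} slice by slice. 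Hence the real content of (a) is to establish these three rules. For that I would expand $\mathcal{Z}$ via the Kasteleyn sum \eqref{eq:Z = sum D}, choose a disk $D$ containing the mutated region whose boundary meets only unaltered edges, and split each dimer cover into its restriction to $D$ and its restriction to the complement. Summing first over the exterior yields, for every admissible boundary state $\sigma$ on the vertices bordering $D$, a common factor $\mathcal{Z}^{\mathrm{out}}_\sigma(\mathbf{x})$ that the mutation leaves untouched; it therefore suffices to compare, for each $\sigma$, the interior contribution on $\Gamma$ with that on $\tilde\Gamma$ under the substitution defining $\nu$.

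The second step is the explicit local enumeration. Since $\zeta$ is horizontal with $\lambda^{-1}=z$ and only $u_1,u_2$ carry the factor $\mu^{-1}$, each local matching meets $\zeta$ in $0$, $1$, or $2$ edges and so contributes to $P_{-1}$, $P_0$, $P_1$ respectively. For each boundary state I would list the local matchings of the ``before'' and ``after'' patches, read off their weights and signs from \eqref{eq:weights zigzag4}, and check that the reweighting $\widetilde{\wt}(\tilde u_i)=\wt(u_i)(1+z)^{-1}$ together with the sign flips $\sgn_{\tilde K}(\tilde e_i)=-\sgn_K(e_i)$ on $i=2,4$ forces the $\nu^{+1}$-slice to pick up $(1+\lambda^{-1})^{-1}$ and the $\nu^{-1}$-slice to pick up $(1+\lambda^{-1})$, while the $\nu^{0}$-slice reproduces $P_0$. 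The divisibility $(1+\lambda^{-1})\mid P_1$, guaranteed by Theorem~\ref{th:zigzags boundary}, is exactly what makes $\tilde P_1$ a genuine Laurent polynomial, so that $\tilde{\mathcal{Z}}$ is again a legitimate dimer partition function.

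For part (b) I would verify the three conditions of Definition~\ref{def:consistancy} for $\tilde\Gamma$ by tracking how zigzags change. Away from $D$ the graph is unchanged, and inside $D$ (made trivalent by the prescribed uncontractions) the routing can be read off Fig.~\ref{fi:zigzag4}: the only global effect is that the length-$4$ zigzag $\zeta$ is replaced by a zigzag $\tilde\zeta$ with $[\tilde\zeta]=-[\zeta]$ (consistent with the factor $(1+\lambda^{-1})$ migrating from the $\mu^{+1}$ to the $\nu^{-1}$ boundary of the Newton polygon found in (a)), while every other zigzag is at most rerouted inside $D$ and keeps its homology class. Condition \ref{it:1} then holds because $[\tilde\zeta]=-[\zeta]\neq0$ and all other classes are unchanged; conditions \ref{it:2} and \ref{it:3} hold because a modification supported in a disk, under which all homology classes are preserved up to the single sign flip, cannot create a new self-intersection or a new parallel bigon on the universal cover. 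I would make this precise by checking that the finitely many strands entering $D$ leave it in the same cyclic order and with the same directions as in the consistent local configuration underlying Proposition~\ref{prop:4gon mutation}, so that no two strands can be arranged to run parallel between two intersections.

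The main obstacle is the local bookkeeping in (a): keeping the Kasteleyn data compatible across the mutation, so that the sign rule $\sgn_{\tilde K}(\tilde e_i)=-\sgn_K(e_i)$ induces the correct change of quadratic form in \eqref{eq:Z = sum D}, and confirming that the $(1+\lambda^{-1})$ factors distribute across the $\nu$-slices exactly as claimed without contaminating $P_0$. For (b) the delicate point is deducing a global statement (absence of new bigons and self-intersections) from a purely local modification, which is precisely where the consistency hypotheses on $\Gamma$ must be fed back in.
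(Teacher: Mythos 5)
Your plan for part (a) is essentially the paper's own proof: the paper decomposes $\mathcal{Z}$ according to the intersection $D\cap\zeta$ (which is exactly your boundary-state decomposition, with exterior sums $Z_0,\dots,Z_5$ playing the role of your $\mathcal{Z}^{\mathrm{out}}_\sigma$), uses the factorization $\wt_K(e_1)\wt_K(e_3)-\wt_K(e_2)\wt_K(e_4)=-(1+\lambda^{-1})\wt_K(e_2)\wt_K(e_4)$ on the $\mu^{+1}$ slice, and then matches the local weights term by term via \eqref{eq:weights zigzag4}, concluding $\mathcal{Z}(\lambda,\mu)=-\wt_K(e_2)\wt_K(e_4)\widetilde{\mathcal{Z}}(\lambda,\nu)$. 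Your outline for (a) is correct and takes the same route.

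Part (b), however, contains a genuine error. Your argument rests on the claim that, apart from the sign flip $[\tilde{\zeta}]=-[\zeta]$, \emph{every other zigzag keeps its homology class} after the mutation. This is false: as recorded in Proposition~\ref{prop:zigzag4 quivers}(b), the zigzags crossing the patch transform as $[\tilde{\xi}_i]=[\xi_i]+[\zeta]$, $i=1,2$ (only the $\eta_i$ keep their classes). This is not a technicality — it is the defining feature of zigzag mutations that they \emph{change} the zigzag homology data and hence mutate the Newton polygon, in contrast with the face (spider) move of Proposition~\ref{prop:4gon mutation}, which preserves it. Your proposed verification — that the strands leave the disk ``in the same cyclic order and with the same directions as in the consistent local configuration underlying Proposition~\ref{prop:4gon mutation}'' — would therefore fail: the strands demonstrably do \emph{not} exit with the same directions, and the analogy with the face mutation is exactly the wrong one to invoke here.

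Independently of this, the general principle you appeal to — that a modification supported in a disk and preserving homology classes cannot create a parallel bigon or a self-intersection — is unjustified: conditions \ref{it:2} and \ref{it:3} of Definition~\ref{def:consistancy} concern the actual paths on the universal cover, not their homology classes, and one can easily reroute two strands inside a disk so that they acquire a parallel bigon without changing any class. What the paper means by ``straightforward from the picture'' is the direct check: trace the mutated zigzags $\tilde{\zeta},\tilde{\eta}_i,\tilde{\xi}_i$ in the right-hand side of Fig.~\ref{fi:zigzag4}, note that all classes remain nonzero (for $\tilde{\xi}_i$ this uses $\det([\xi_i],[\zeta])\neq 0$, so $[\xi_i]+[\zeta]\neq 0$), and verify from the explicit local routing — not from a homological principle — that no self-intersections or parallel bigons are created. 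Your write-up of (b) would need to be replaced by this direct inspection.
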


The formula \eqref{eq:Z zigzag4} is a particular case of a polynomial mutation, see Definition \ref{def:mut polyn} below. It is convenient to introduce notation for the product of the weight and Kasteleyn sign 
\begin{equation}
	\wt_K(e)=\sgn_K(e)\wt(e),\qquad \widetilde{\wt}_{\tilde{K}}(e)=\sgn_{\tilde{K}}(e)\widetilde{\wt}(e).
\end{equation}

\begin{proof}
	(a) The partition function~\eqref{eq:Z = sum D} can be decomposed according to the intersection \(D\cap \zeta\). 
	Hence we get
	\begin{equation}
		\mathcal{Z}(\lambda,\mu)=\mu \big(\wt_K(e_1)\wt_K(e_3){-}\wt_K(e_2)\wt_K(e_4)\big) Z_5(\lambda)
		+ \sum\nolimits_{i=1}^4 \wt_K(e_i)  Z_i(\lambda) 		
		+ \mu^{-1} Z_0(\lambda).
	\end{equation}
	Here \(Z_0(\lambda)\) corresponds to dimer configurations with \(D\cap \zeta=\varnothing\), and the contributions \(Z_i(\lambda)\), \(1\leq i \leq 4\) correspond to the dimer configurations, such that \(D\cap \zeta=\{e_i\}\). The term \(Z_5(\lambda)\), corresponding to dimer configurations with intersection \(|D\cap \zeta|=2\), is proportional to
	\begin{equation}
		\wt_K(e_1)\wt_K(e_3)-\wt_K(e_2)\wt_K(e_4)
		 =-(1+\lambda^{-1})\wt_K(e_2)\wt_K(e_4).
	\end{equation}
		Denote the spectral parameter for dimer graph after mutation by \(\nu=\mu(1+\lambda^{-1})\). Due to our conventions, the weights \(\widetilde{\wt}(\tilde{u}_1), \widetilde{\wt}(\tilde{u}_2)\) are proportional to \(\nu^{-1}\), while other weights are \(\nu\)-independent. Hence the partition function for the zigzag mutated dimer model acquires the form
	\begin{multline}
		\widetilde{\mathcal{Z}}(\lambda,\nu)= \nu \Big(\prod\nolimits_{i,j=1}^2 \widetilde{\wt}_{\tilde{K}}(g_{i,j}) \Big) Z_5(\lambda)+ \Big( -\widetilde{\wt}_{\tilde{K}}(g_{2,2})\widetilde{\wt}_{\tilde{K}}(g_{1,1})\widetilde{\wt}_{\tilde{K}}(\tilde{e}_1)Z_1(\lambda)
		\\
		+\widetilde{\wt}_{\tilde{K}}(g_{2,1})\widetilde{\wt}_{\tilde{K}}(g_{1,1})\widetilde{\wt}_{\tilde{K}}(\tilde{e}_4)Z_2(\lambda)	-\widetilde{\wt}_{\tilde{K}}(g_{2,1})\widetilde{\wt}_{\tilde{K}}(g_{1,2})\widetilde{\wt}_{\tilde{K}}(\tilde{e}_3)Z_3(\lambda)
		\\ -\widetilde{\wt}_{\tilde{K}}(g_{2,2})\widetilde{\wt}_{\tilde{K}}(g_{1,2})\widetilde{\wt}_{\tilde{K}}(\tilde{e}_2)Z_4(\lambda)	\Big) 
		+ \nu^{-1} \Big(\widetilde{\wt}_{\tilde{K}}(\tilde{e}_1)\widetilde{\wt}_{\tilde{K}}(\tilde{e}_3)-\widetilde{\wt}_{\tilde{K}}(\tilde{e}_2)\widetilde{\wt}_{\tilde{K}}(\tilde{e}_4)\Big)Z_0(\lambda).		
	\end{multline}
	Using formulas \eqref{eq:weights zigzag4} one concludes, that 
	\begin{equation}
		\mathcal{Z}(\lambda,\mu)=-\wt_K(e_2)\wt_K(e_4) \widetilde{\mathcal{Z}}(\lambda,\nu).
	\end{equation}
	In other words the partition function is preserved up to monomial factor and transformation of spectral variables \(\mu\mapsto \nu=\mu(1+\lambda^{-1})\).
		
	(b) Straightforward from the picture of zigzags on the Fig.~\ref{fi:zigzag4}.
\end{proof}

For two vectors \(v_1=(x_1,y_1)\) and \(v_2=(x_2,y_2)\), let \(\det(v_1,v_2)=x_1y_2-x_2y_1\) be the oriented area of parallelogram with sides \(v_1,v_2\).

\begin{Definition} \label{def:zigzag quiver}
	\emph{Zigzag} (or \emph{dual}) quiver \(\mathcal{Q}^D\) is a quiver with vertices corresponding to zigzags and number of arrows from vertex \(\zeta_i\) to vertex \(\zeta_j\) equal to \(b_{ij}^D=\det([\zeta_i],[\zeta_j])\).	
\end{Definition}
In terms of Newton polygons the vertices of \(\mathcal{Q}^D\) correspond to the primitive segments on the boundary of \(N\). In other words, for any side \(E\) of \(N\) there are \(|E|\) vertices. For vertices \(\zeta_i,\zeta_j\) corresponding to sides \(E,E'\) correspondingly, we have 
\(b_{ij}^D=\det(E,E')/(|E||E'|)\).

It follows from the construction of the dual surface $\Sigma^D$ that the zigzag quiver \(\mathcal{Q}^D\) is a face quiver for dual graph \(\Gamma^D\subset \Sigma^D\).

\begin{Proposition} \label{prop:zigzag4 quivers}
	(a) Length-4 zigzag mutation preserves (face) quiver \(\mathcal{Q}\) and face variables.
	
	(b) Length 4 zigzag mutation is a mutation of the (zigzag) quiver \(\mathcal{Q}^D\) and zigzag variables.	 
\end{Proposition}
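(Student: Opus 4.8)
The plan is to derive both parts from a single structural fact: the length-4 zigzag mutation of \(\Gamma\) coincides, after passing to the dual surface, with the ordinary \(4\)-gon spider move on the dual graph \(\Gamma^D\subset\Sigma^D\). Recall that the faces of \(\Gamma^D\) are exactly the zigzags of \(\Gamma\) and that, by the remark following Definition~\ref{def:zigzag quiver}, the zigzag quiver \(\mathcal{Q}^D\) is the face quiver of \(\Gamma^D\); dually, the face quiver \(\mathcal{Q}\) of \(\Gamma\) is the zigzag quiver of \(\Gamma^D\). Under this correspondence the face variables of \(\Gamma\) are the zigzag variables of \(\Gamma^D\), while the zigzag variables \(z_j\) of \(\Gamma\) are the face variables of \(\Gamma^D\). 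A length-4 zigzag \(\zeta\) is precisely a \(4\)-gon face of \(\Gamma^D\), so the move dual to the face mutation of Figure~\ref{fi:face4}---which is how the zigzag mutation was defined---acts on \(\Gamma^D\) as the spider move at that \(4\)-gon face.

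Granting this identification, part~(b) is immediate: by the discussion preceding Figure~\ref{fi:face4}, the spider move at a \(4\)-gon face of \(\Gamma^D\) realizes the cluster mutation~\eqref{eq:mutation rule} of the face quiver of \(\Gamma^D\), i.e. of \(\mathcal{Q}^D\), at the vertex \(\zeta\), with the face variables of \(\Gamma^D\)---that is, the zigzag variables \(z_j\) of \(\Gamma\)---transforming by the cluster rule. In particular the mutated zigzag inverts, \(z\mapsto z^{-1}\), consistently with the substitution \(\mu\mapsto\nu=\mu(1+\lambda^{-1})\) and \(\lambda^{-1}=z\) of Proposition~\ref{prop:zigzag4}(a), while every other \(z_j\) is multiplied by \((1+z^{\operatorname{sgn}b^D_{\zeta j}})^{b^D_{\zeta j}}\) with \(b^D_{\zeta j}=\det([\zeta],[\zeta_j])\).

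Part~(a) is the dual half of the same fact. Applying Proposition~\ref{prop:4gon mutation} to \(\Gamma^D\), the spider move preserves the partition function of \(\Gamma^D\) and hence its Newton polygon; by Definition~\ref{def:zigzag quiver} this polygon determines the zigzag quiver of \(\Gamma^D\), which is \(\mathcal{Q}\), and the boundary factorization~\eqref{eq:Z|E prod} shows the individual zigzag variables of \(\Gamma^D\)---the face variables of \(\Gamma\)---are preserved. Equivalently, one checks (a) directly: inspecting Figure~\ref{fi:zigzag4} shows the faces of \(\Gamma\) and their adjacencies, hence \(\mathcal{Q}\) and its exchange matrix \(b\), are unchanged after the indicated contractions; and for each face incident to the mutated region one substitutes~\eqref{eq:weights zigzag4} into the face cycle and verifies that the factors \((1+z)^{-1}\) coming from \(\tilde u_1,\tilde u_2\), the factors \(\wt(e_2)^{-1}\wt(e_4)^{-1}\) coming from \(\tilde e_1,\tilde e_3\), and the inverted weights of \(\tilde e_2,\tilde e_4\) cancel, so that \(\widetilde{\wt}(f_i)=\wt(f_i)\). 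This is also exactly the content of the equality \(\tilde{\mathcal Z}(\mathbf{x}\,|\,\lambda,\nu)=\mathcal Z(\mathbf{x}\,|\,\lambda,\mu)\) in Proposition~\ref{prop:zigzag4}(a), where the face variables \(\mathbf{x}\) appear unchanged.

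The main obstacle is making the structural identification of the first paragraph rigorous: one must verify that the explicit edge-weight and Kasteleyn-sign rules~\eqref{eq:weights zigzag4}, together with the local replacement of Figure~\ref{fi:zigzag4}, correspond under the reversal of cyclic order at black vertices (the definition of \(\Gamma^D\)) to the standard spider-move weights on \(\Gamma^D\), in particular accounting correctly for the factor \((1+z)\) and for the signs. The remaining work, matching the multiplicities \(b^D_{\zeta j}=\det([\zeta],[\zeta_j])\) with the powers of \((1+z)\) picked up by each neighbouring zigzag, is bookkeeping once the duality is in place.
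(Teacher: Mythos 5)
Your proposal rests on a structural identification that is never established, and the machinery you invoke to exploit it does not apply where you need it. The central problem is that the dual surface \(\Sigma^D\) is \emph{not} a torus in general: as computed in Section~\ref{ssec:PoI}, \(g(\Sigma^D)=I\), the number of interior lattice points of \(N\). All of the results you apply to \(\Gamma^D\) --- Proposition~\ref{prop:4gon mutation} (spider move preserves the partition function), the Newton polygon of \(\Gamma^D\), and the boundary factorization~\eqref{eq:Z|E prod} from Theorem~\ref{th:zigzags boundary} --- are statements about consistent dimer models on \(\mathbb{T}^2\), and none of them makes sense for a bipartite graph on a genus-\(I\) surface (there is no rank-two homology lattice, hence no Newton polygon, no toric compactification, no \eqref{eq:Z|E prod}). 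So your duality-based derivation of part (a), and the appeal to ``spider move = cluster mutation'' on \(\Gamma^D\) in part (b), are unfounded as stated. A second, independent gap is the one you flag yourself but then treat as bookkeeping: the claim that the explicit rules \eqref{eq:weights zigzag4} \emph{are} the standard spider-move weight rules on \(\Gamma^D\). This is precisely the nontrivial content, and it is delicate because Kasteleyn signs do not dualize: a sign function satisfying \(\sgn_K(f)=(-1)^{|f|/2-1}\) on the faces of \(\Gamma\) does not satisfy the analogous condition on the faces of \(\Gamma^D\) (the zigzags of \(\Gamma\)) --- this mismatch is exactly why the zigzag variables carry the correction \(z_j=(-1)^{|\zeta_j|/2-1}\sgn_K(\zeta_j)\wt(\zeta_j)\). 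Deferring this identification means the proof of (b) has no completed step at all.

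For comparison, the paper proves both parts by a direct local check on Fig.~\ref{fi:zigzag4}, never passing to \(\Sigma^D\): for (a) one verifies that the arrow counts among \(f_1,\dots,f_4\) and the face weights \(x_1,\dots,x_4\) are unchanged under \eqref{eq:weights zigzag4}; for (b) one computes the transformation of homology classes, \([\tilde{\zeta}]=-[\zeta]\), \([\tilde{\eta}_i]=[\eta_i]\), \([\tilde{\xi}_i]=[\xi_i]+[\zeta]\), checks that \(b^D_{ij}=\det([\zeta_i],[\zeta_j])\) then transforms by the mutation rule \eqref{eq:mutation rule}, and computes from \eqref{eq:weights zigzag4} that \(\tilde{z}=z^{-1}\), \(\tilde{k}_i=k_i(1+z^{-1})^{-1}\), \(\tilde{h}_i=h_i(1+z)\), matching the cluster transformation. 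Your ``equivalently, one checks (a) directly'' fallback coincides with the paper's proof of (a) and is fine; to repair (b) you should likewise compute the zigzag homology classes and weights directly from Fig.~\ref{fi:zigzag4} and \eqref{eq:weights zigzag4}, rather than routing through a spider move on a surface where the supporting theorems are unavailable.
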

\begin{proof}
	(a) The only faces which require attention are \(f_1,f_2,f_3,f_4\) in Fig.\ref{fi:zigzag4}. It is straightforward to check that the numbers of arrows between them are preserved, and their face weights \(x_1,x_2,x_3,x_4\) are preserved under the transformation~\eqref{eq:weights zigzag4}.
	
	(b) The only zigzags which require attention are \(\zeta, \eta_1,\eta_2,\xi_1,\xi_2\) in Fig.\ref{fi:zigzag4}. It is straightforward to check that homology classes of this zigzags are transformed as 
	\begin{equation}
		[\tilde{\zeta}]=-[\zeta],\quad  [\tilde{\eta}_i]=[\eta_i],\;\; [\tilde{\xi}_i]=[\xi_i]+[\zeta],\;\; i=1,2.
	\end{equation}
	Using this formulas one can see that transformation of adjacency matrix \(b^D\) agrees with the mutation rule \eqref{eq:mutation rule}. Let us denote zigzag variables by 
	\begin{equation}
		z=(-1)^{|\zeta|/2-1}\wt_K(\zeta),\quad k_i=(-1)^{|\xi_i|/2-1}\wt_K(\xi_i),\;\;  h_i=(-1)^{|\eta_i|/2-1}\wt_K(\eta_i)\;\; i=1,2.
	\end{equation}
	and similarly for \(\tilde{z}, \tilde{k}_i, \tilde{h}_i\). Then, the transformation of zigzag variables reads	
	\begin{equation}
		\tilde{z}=z^{-1};\quad
		\tilde{k}_i=k_i(1+z^{-1})^{-1},\;\;  
		\tilde{h}_i=h_i(1+z),\;\; i=1,2,		
	\end{equation}
	which is also in agreement with the mutation rule~\eqref{eq:mutation rule}.
\end{proof}

\begin{Example} 
	In the Figs.~\ref{fi:ex square} and \ref{fi:ex triangle} we presented two consistent dimer models with the corresponding Newton polygons, face quivers and zigzag quivers. It is straightforward to check that corresponding bipartite graphs are related by a zigzag mutation along \(\zeta_1\).\footnote{To be more precise, first  uncontractions, second zigzag mutation, third contractions, and fourth \(SL(2,\mathbb{Z})\) transformation.} In agreement with Proposition~\ref{prop:zigzag4 quivers} the face quivers in Figs. \ref{fi:ex square}, \ref{fi:ex triangle} coincide while the zigzag quivers are related by mutation in vertex \(z_1\).
	\begin{figure}[h]
		\begin{center}
			\begin{tikzpicture}
			\node at (0,0) {\includegraphics{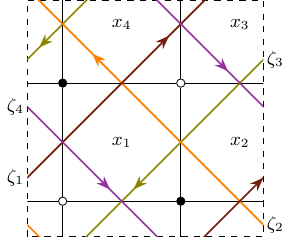}};
	

			\begin{scope}[shift={(4.5,0)}, scale=1, font = \small]
				
				\def\xs{1}
				\def\ys{1}
				
			\node[inner sep=0pt, outer sep=0pt, fill=black, circle, minimum size=3pt] at (0,-\ys) (A) {};
			\node[inner sep=0pt, outer sep=0pt, fill=black, circle, minimum size=3pt] at (\xs,0) (B) {};
			\node[inner sep=0pt, outer sep=0pt, fill=black, circle, minimum size=3pt] at (0,\ys) (C) {};
			\node[inner sep=0pt, outer sep=0pt, fill=black, circle, minimum size=3pt] at (-\xs,0) (D) {};
			
			\draw (0,0) circle (2pt);

				\draw[Brown, thick] (A) -- (B);
				\draw[orange, thick] (B) -- (C);
				\draw[olive, thick] (C) -- (D);
				\draw[Purple, thick] (D) -- (A);			
			\end{scope}
	
			
			\begin{scope}[shift={(7.5,0)},scale=1.5, font = \small]
				\def\xs{1}
				\def\ys{1}

				\node[styleNode] (x1) at(-0.5*\xs,-0.5*\ys){$x_1$};
				\node[styleNode] (x2) at(0.5*\xs,-0.5*\ys){$x_2$};
				\node[styleNode] (x3) at(0.5*\xs,0.5*\ys){$x_3$};
				\node[styleNode] (x4) at(-0.5*\xs,0.5*\ys){$x_4$};
				
				\draw[styleArrow](x1) to[bend left=15] (x2);
				\draw[styleArrow](x1) to[bend right=15] (x2);
				\draw[styleArrow](x2) to[bend left=15] (x3);
				\draw[styleArrow](x2) to[bend right=15] (x3);
				\draw[styleArrow](x3) to[bend left=15] (x4);
				\draw[styleArrow](x3) to[bend right=15] (x4);	\draw[styleArrow](x4) to[bend left=15] (x1);
				\draw[styleArrow](x4) to[bend right=15] (x1);
				
			\end{scope}
	
			
			\begin{scope}[shift={(10.5,0)},scale=1.5, font = \small]
				\def\xs{1}
				\def\ys{1}

				\node[styleNode] (z1) at(-0.5*\xs,-0.5*\ys){$z_1$};
				\node[styleNode] (z2) at(0.5*\xs,-0.5*\ys){$z_2$};
				\node[styleNode] (z3) at(0.5*\xs,0.5*\ys){$z_3$};
				\node[styleNode] (z4) at(-0.5*\xs,0.5*\ys){$z_4$};	
				
				\draw[styleArrow](z1) to[bend left=15] (z2);
				\draw[styleArrow](z1) to[bend right=15]  (z2);		
				\draw[styleArrow](z2) to[bend left=15] (z3);
				\draw[styleArrow](z2) to[bend right=15] (z3);
				\draw[styleArrow](z3) to[bend left=15] (z4);
				\draw[styleArrow](z3) to[bend right=15]  (z4);
				\draw[styleArrow](z4) to[bend left=15] (z1);
				\draw[styleArrow](z4) to[bend right=15] (z1);
				
			\end{scope}
			
			\end{tikzpicture}
		\end{center}
		\caption{\label{fi:ex square} Square: bipartite graph, Newton polygon, face and zigzag quivers}		
	\end{figure}

	\begin{figure}[h]
	\begin{center}	
		\begin{tikzpicture}
		
			\node at (0,0) {\includegraphics{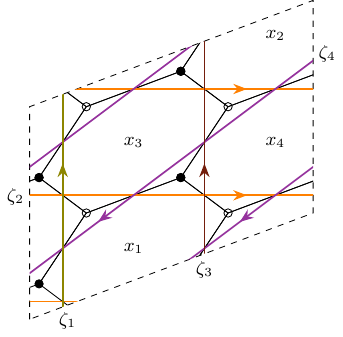}};


			\begin{scope}[shift={(4.5,0)}, scale=1, font = \small]
				
				\def\xs{1}
				\def\ys{1}
				
				\node[inner sep=0pt, outer sep=0pt, fill=black,  circle, minimum size=3pt] at (\xs,-1*\ys) (A) {};
				\node[inner sep=0pt, outer sep=0pt, fill=black,  circle, minimum size=3pt] at (\xs,1*\ys) (B) {};
				\node[inner sep=0pt, outer sep=0pt, fill=black,  circle, minimum size=3pt] at (-\xs,0*\ys) (C) {};
				
				\node[inner sep=0pt, outer sep=0pt, fill=black,  circle, minimum size=3pt] at (\xs,0) (D) {};

				\draw (0,0) circle (2pt);

				\draw[olive, thick] (A) -- (D);
				\draw[Brown, thick] (D) -- (B);
				\draw[Purple, thick] (B) -- (C);
				\draw[orange, thick] (C) -- (A);
				
			\end{scope}

			
%
%
%
%
%
%
%
%
	
			
			\begin{scope}[shift={(7.5,0)},scale=1.5, font = \small]
				\def\xs{1}
				\def\ys{1}

				\node[styleNode] (x1) at(-0.5*\xs,-0.5*\ys){$x_1$};
				\node[styleNode] (x2) at(0.5*\xs,-0.5*\ys){$x_2$};
				\node[styleNode] (x3) at(0.5*\xs,0.5*\ys){$x_3$};
				\node[styleNode] (x4) at(-0.5*\xs,0.5*\ys){$x_4$};			
				
				\draw[styleArrow](x1) to[bend left=15] (x2);
				\draw[styleArrow](x1) to[bend right=15] (x2); 			
				\draw[styleArrow](x2) to[bend left=15] (x3);
				\draw[styleArrow](x2) to[bend right=15] (x3);
				\draw[styleArrow](x3) to[bend left=15] (x4);
				\draw[styleArrow](x3) to[bend right=15] (x4); 			
				\draw[styleArrow](x4) to[bend left=15] (x1);
				\draw[styleArrow](x4) to[bend right=15] (x1);
				
			\end{scope}

			
			\begin{scope}[shift={(10.5,0)},scale=1.5, font = \small]
				\def\xs{1}
				\def\ys{1}

				\node[styleNode, ellipse, minimum height=0.5cm] (z1) at(0,-0.5*\ys){$z_1, z_3$};
				\node[styleNode] (z4) at(-0.5*\xs,0.5*\ys){$z_4$};
				\node[styleNode] (z2) at(0.5*\xs,0.5*\ys){$z_2$};
				
				\draw[styleArrow](z1) to[bend left=15] (z4);
				\draw[styleArrow](z1) to[bend right=10] (z4);	
				\draw[styleArrow](z4) to[bend left=15] (z2);
				\draw[styleArrow](z4) to[bend left=35] (z2);
				\draw[styleArrow](z4) to[bend right=15] (z2);
				\draw[styleArrow](z4) to[bend right=35] (z2);
				\draw[styleArrow](z2) to[bend left=15] (z1);
				\draw[styleArrow](z2) to[bend right=10] (z1);
				
			\end{scope}
		
		\end{tikzpicture}
		\end{center}	
		\caption{\label{fi:ex triangle} Triangle: bipartite graph, Newton polygon, face and zigzag quivers}		
	\end{figure}
	Moreover, one can compute Hamiltonians of the Goncharov-Kenyon integrable systems in both cases. In both cases there is only one  integral points inside Newton polygon. Hence, by Theorem~\ref{th:Integrability} the classical integrable system consists of a single Hamiltonian. It is straightforward to compute this Hamiltonian and find, that in both cases it can be normalized as 
	\begin{equation}
		\label{TodaHam}
		H=x_1^{-1/2}x_2^{-1/2}(1+x_1+x_1x_2 +x_1x_2x_3)
		.
	\end{equation}
	The equality of Hamiltonians for both models follows also from the transformation of partition function proven in Proposition~\ref{prop:zigzag4}. This is the Hamiltonian of closed or affine relativistic Toda with two particles \cite{Ruijsenaars:1990Toda}, \cite{Marshakov:2013} being the simplest example of Goncharov-Kenyon integrable system.
\end{Example}
This example teaches us that the correspondence between GK integrable systems and ($SA(2,\mathbb{Z})$ orbits in the set of) Newton polygons is not one to one: different Newton polygons can correspond to the same integrable systems. We will argue below that different Newton polygons, corresponding to equivalent integrable systems,  are related by polygon mutations.

For instance, consider the case \(I=1\) when the phase space of the integrable system has Poisson rank 2. Combinatorially it means that the Newton polygon \(N\) is reflexive. This case was studied in detail (see \cite{Bershtein:2018cluster} and references therein), and there are 16  ($SA(2,\mathbb{Z})$ non-equivalent) reflexive Newton polygons, but only 8 (mutation non-equivalent) quivers and integrable systems. It is easy to see, that in this case the Newton polygons (and dimer models) corresponding to the same face quiver are related by a composition of mutations along zigzags of length 4.

\subsection{Polynomial and polygon mutations}

There is a standard action of the group \(SA(2,\mathbb{Z})=SL(2,\mathbb{Z})\ltimes \mathbb{Z}^2\) on the Laurent polynomials~\( P(\lambda,\mu)\). For any \(\begin{pmatrix}
	a & b \\ c &d
\end{pmatrix}\in SL(2,\mathbb{Z})\) one can transform the variables as \(\lambda \mapsto \lambda^a\mu^b, \mu \mapsto \lambda^c\mu^d\), and for any \((n,m)\in \mathbb{Z}^2\) we can multiply polynomial by \(\lambda^n\mu^m\). This is consistent with natural \(SA(2,\mathbb{Z})\) action on the Newton polygons.

We also need another type of transformation of Laurent polynomials, which is called mutation.
Let \(P(\lambda,\mu)\) have the form \( 	P(\lambda,\mu)=\sum\nolimits_{k=-h'}^{h} \mu^k P_k(\lambda)\)	and there exists \(C\) such that
\begin{equation} \label{eq:polyn mut cond}
	(1+C\lambda^{-1})^k \text{ divides } P_k(\lambda),\; \text{ for all } k>0. 
\end{equation}
Then the mutation of the polynomial $P$ is defined by  
\begin{equation}\label{eq:PolMut}
	\tilde{P}(\lambda,\nu)=P(\lambda,\mu), \; \text{ where } \mu=\frac{\nu}{1+C\lambda^{-1}}.
\end{equation}
Note that conditions \eqref{eq:polyn mut cond} ensure that \(\tilde{P}\) is a Laurent polynomial. The transformation \eqref{eq:Z zigzag4} is an example of mutation for \(h=h'=1\).

It follows from the condition \eqref{eq:polyn mut cond} that \( P_h(\lambda)\) is not constant. Hence, Newton polygon of \(P\) has side parallel to \((-1,0)\) at the distance \(y=h\) from horizontal axis.

\begin{Definition}[\cite{Galkin:2010mutations,Akhtar:2012}]  \label{def:mut polyn} 
	Let \(P(\lambda,\mu)\) be a Laurent polynomal with Newton polygon \(N\).  Let \(E\) be a side of \(N\) and \(h\in \mathbb{Z}_{>0}\). Let \(g \in SA(2,\mathbb{Z})\) be a transformation which makes \(E\) to be a side parallel to \((-1,0)\) at \(y=h\). Assume that \(g(P)\) satisfies conditions~\eqref{eq:polyn mut cond}. Then the \emph{mutation of Laurent polynomial} \(\mu_{E,h}(P)\) is defined as composition of \(g\), mutation~\eqref{eq:PolMut} and \(g^{-1}\). 
	
	\emph{Mutation of the polygon} is a a corresponding transformation of the Newton polygon \(N\). 
\end{Definition}

Note that we are a bit sloppy in a notation \(\mu_{E,h}\) since the mutation of the polynomial depends not only on \(E\) and \(h\) but also on the choice of root \(-C\) of \(P_h(\lambda)\). 
In order to give more transparent combinatorial meaning of the mutation of polygon we will need a few standard notions.

\begin{Definition}
	Let \(p \in \mathbb{Z}^2\) be an integral point and \(l\) be an integral line. Let \(l_0\) be an integral line going through \(p\) and parallel to \(l\). Let \(n\) be a number of integral lines parallel to \(l\) between \(l\) and \(l_0\). The \emph{integral distance} from \(p\) to \(l\) is \(n+1\).
\end{Definition}

This definition is clearly \(SA(2,\mathbb{Z})\) invariant. Equivalently, by  \(SA(2,\mathbb{Z})\)-transform one can put \(l\) to be horizontal axis \(y=0\). If this transformation sends \(p\) to \(p'=(a,b)\) then the integral distance from \(p\) to \(l\) is \(|b|\). Note that it is natural to include signs and make this distance oriented, but we do not need this here.

\begin{Definition}
	\emph{Integral height} \(h_{E,N}\) of the polygon \(N\) with respect to the side \(E\) is a maximum of integral distances from \(p\) to \(E\) for \(p\in N\). 
\end{Definition}

Equivalently one can make \(SL(2,\mathbb{Z})\) transformation which makes side \(E\) horizontal. If the image of \(N\) is contained in the (minimal) strip \(a\leq y \leq b\), then \(h_{E,N}=b-a\). 

\medskip

A convex polygon is determined up to translation by vectors of its sides \(E_1, \dots, E_{|\text{sides}|}\). Similarly, an integral convex polygon is determined up to translation by the vectors of primitive segments on sides \(e_1,\dots, e_{|\text{segments}|}\) (each side \(E\) contributes \(|E|\) segments parallel to \(E\)). The mutation of polygon \(N\) depends on the choice of the side \(E\) and number \({ h\le h_{E,N}}\). Let \(h'=h_{E,N}-h\)~\footnote{Such notations are in agreement with the formula \( 	P(\lambda,\mu)=\sum\nolimits_{k=-h'}^{h} \mu^k P_k(\lambda)\) above.}. 

\begin{Proposition}[{e.g. \cite[Corr. 3]{Kasprzyk:2017minimality}}] \label{prop:mutatation edges}
	Under the notations above, the mutated polygon \(\widetilde{N}\) is determined by segments \(\tilde{e}_1,\dots, \tilde{e}_{|\widetilde{\text{segments}}|}\) such that
	\begin{itemize}
		\item Number of segments in \(\tilde{e}_i\) \emph{parallel} to \(E\) is equal to  number of such segments \(e_i\) minus \(h\).
		
		\item Number of segments in \(\tilde{e}_i\) \emph{antiparallel} to \(E\) is equal to  number of such segments \(e_i\) plus \(h'\).
		
		\item There is a one to one correspondence between segments \(\tilde{e}_i\) neither \emph{parallel} nor \emph{antiparallel} to \(E\) with analogous segments \(e_i\), 
		given by the formula 
		\begin{equation}\label{eq:mut segments}
			\tilde{e}_i=
			\begin{cases}
				e_i+\frac{\det(E,e_i)}{|E|^{2}}E,\; \text{ if } \det(E,e_i)>0
				\\
				e_i,\; \text{ if } \det(E,e_i)<0
			\end{cases}
		\end{equation}
	\end{itemize}
\end{Proposition}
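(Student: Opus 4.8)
The plan is to compute the Newton polygon $\widetilde N=\operatorname{Newt}(\widetilde P)$ of the mutated polynomial $\widetilde P=\mu_{E,h}(P)$ directly from $N=\operatorname{Newt}(P)$, and to recognize the induced transformation of $\partial N$ as the piecewise-linear shear recorded in the statement. Both the polynomial mutation of Definition~\ref{def:mut polyn} and the claimed edge formula are manifestly $SA(2,\mathbb{Z})$-equivariant — writing $E=|E|\hat E$ with $\hat E$ primitive, the correction term is $\frac{\det(E,e_i)}{|E|^2}E=\det(\hat E,e_i)\,\hat E$, an $SL(2,\mathbb{Z})$-invariant integral vector — so it suffices to work in the normalized frame produced by $g$: the side $E$ lies at the top, points along $(-1,0)$ at height $y=h$, and $N\subset\{-h'\le y\le h\}$ with $h'=h_{E,N}-h$.

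First I would pass to the row decomposition. Writing $P=\sum_{k=-h'}^{h}\mu^k P_k(\lambda)$ and $\widetilde P=\sum_k \nu^k\widetilde P_k(\lambda)$, formula~\eqref{eq:PolMut} gives $\widetilde P_k=P_k\,(1+C\lambda^{-1})^{-k}$, a genuine Laurent polynomial for every $k$ by the divisibility hypothesis~\eqref{eq:polyn mut cond} (for $k>0$) and trivially for $k\le0$. Since the Newton segment in $\lambda$ of a product of Laurent polynomials is the Minkowski sum of the factors' segments, and since the extreme coefficients $1$ and $C\neq0$ of $F=1+C\lambda^{-1}$ cannot cancel, multiplying or dividing by powers of $F$ shifts the \emph{left} endpoint of the $\lambda$-support while \emph{fixing} the right one. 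Concretely, if $P_k$ has $\lambda$-support $[\alpha_k,\beta_k]$ then $\widetilde P_k$ has $\lambda$-support $[\alpha_k+k,\beta_k]$: the right end is preserved and the left end is moved by the unimodular shear $S(a,k)=(a+k,k)$.

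To pass from rows to the global polygon I would use that $N$ is the convex hull of $\bigcup_k\{\alpha_k,\beta_k\}\times\{k\}$, so its left boundary is the graph of the lower convex envelope $L(k)$ of $k\mapsto\alpha_k$, its right boundary the graph of the upper concave envelope $R(k)$ of $k\mapsto\beta_k$, and the non-horizontal edges of $N$ are the linear pieces of $L$ and $R$. Because $\beta_k$ is unchanged, $\widetilde R=R$; because the convex envelope commutes with adding the linear function $k\mapsto k$, the new left boundary is $\widetilde L(k)=L(k)+k$. Hence $\widetilde N$ is obtained from $N$ by keeping the right boundary and applying $S$ to the left boundary. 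Reading off edges now yields the three cases: a right-boundary segment has $\det(E,e_i)<0$ and is preserved; a left-boundary segment has $\det(E,e_i)>0$ and maps to $S(e_i)=e_i+(e_{i,y},0)=e_i+\frac{\det(E,e_i)}{|E|^2}E$; the top side (parallel to $E$, length $|E|$) loses $h$ primitive segments since its left endpoint advances by $h$ under $S$ while its right endpoint stays fixed, and the bottom side (antiparallel to $E$) gains $h'$ segments for the same reason. The degenerate cases $h=|E|$ and a one-point bottom are absorbed automatically, producing respectively the collapse of the parallel edge and the creation of a fresh antiparallel edge of length $h'$. Since $S\in SL(2,\mathbb{Z})$ preserves primitivity, it induces the asserted bijection on the remaining segments.

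The main obstacle, and the point demanding the most care, is exactly this passage from the row-wise Newton segments to the boundary of the convex hull: the mutation is \emph{not} a global linear map but a piecewise-linear one that fixes one half of $\partial N$ and shears the other, and its clean justification rests on the commutation of the convex envelope with the linear shift, together with the absence of leading/trailing cancellation in the multiplication by $F^{\pm k}$ (which is where $C\neq0$ and hypothesis~\eqref{eq:polyn mut cond} enter). Once $\widetilde L=L+k$ and $\widetilde R=R$ are established, the edge bookkeeping — including the parallel/antiparallel counts and the degenerate collapses — is routine.
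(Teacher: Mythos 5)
Your proof is correct, but there is nothing in the paper to compare it against: Proposition~\ref{prop:mutatation edges} is stated without proof and attributed to \cite[Corr.~3]{Kasprzyk:2017minimality}, so your argument fills in a proof rather than reproduces one. What you do — reduce to the normalized frame using the $SA(2,\mathbb{Z})$-equivariance built into Definition~\ref{def:mut polyn}, decompose $P=\sum_{k}\mu^k P_k(\lambda)$ into rows, note that multiplying or dividing by $(1+C\lambda^{-1})^{k}$ fixes the right endpoint of the $\lambda$-support and shifts the left endpoint by $k$ (no cancellation at the extremes because the extreme coefficients are $1$ and $C\neq0$, with divisibility \eqref{eq:polyn mut cond} guaranteeing Laurent-ness for $k>0$), and then pass from rows to the hull via the observation that the left and right boundaries of $N$ are the convex and concave envelopes of the row endpoints and that envelopes commute with adding the linear function $k\mapsto k$ — is essentially the slice-wise mechanism by which combinatorial mutations of polygons are defined and analyzed in the cited reference, so you have in effect reconstructed the standard argument. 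Your bookkeeping is right: in the normalized frame $\frac{\det(E,e_i)}{|E|^{2}}E=(e_{i,y},0)$, so the case $\det(E,e_i)>0$ is exactly the shear $S:(x,y)\mapsto(x+y,y)$ applied to the left boundary, $\det(E,e_i)<0$ is the untouched right boundary, and the shift of left endpoints by $k$ at heights $k=h$ and $k=-h'$ gives the loss of $h$ parallel segments and gain of $h'$ antiparallel ones, including the degenerate collapses. One cosmetic point worth a sentence in a polished write-up: after the shear, a transformed left side can become parallel to an adjacent unchanged right side, so sides of $N$ need not map to sides of $\widetilde{N}$; this is harmless precisely because the statement (and your bijection) is formulated at the level of primitive segments rather than sides.
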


We give a couple of examples of polygon mutations on Fig.~\ref{fi:mutation polygon}. 
\begin{figure}[h]
	\begin{center}
		\begin{tikzpicture}[scale=1, font = \small]

			\begin{scope}

				\begin{scope}
					\draw[fill] (0,-1) circle (1pt) -- (1,-1) circle (1pt) -- (1,0) circle (1pt) -- (0,1) circle (1pt)  -- (-1,1) circle (1pt) -- (-1,0) circle (1pt) -- (0,-1) circle (1pt) ;
					\draw (0,0) circle (2pt);	
				\end{scope}
				
				\node at (1.5,0) {$\Rightarrow$};
				
				\begin{scope}[shift={(3,0)}]
					\draw[fill] (0,-1) circle (1pt) -- (1,-1) circle (1pt) -- (1,0) circle (1pt) -- (0,1) circle (1pt)  -- (-1,0) circle (1pt) -- (-1,-1) circle (1pt) -- (0,-1) circle (1pt) ;
					\draw (0,0) circle (2pt);	
				\end{scope}
				
			\end{scope}
			
			\begin{scope}[shift={(7.5,0.5)}]
				\begin{scope}
					\draw[fill] (0,1) circle (1pt) -- (-1,1) circle (1pt) -- (1,-2) circle (1pt) -- (0,1)  ;
					\draw (0,0) circle (2pt);	
				\end{scope}
				
				\node at (1.5,-0.5) {$\Rightarrow$};
				
				\begin{scope}[shift={(3,0)}]
					\draw[fill] (0,1) circle (1pt) -- (-1,-2)  circle (1pt) -- (0,-2) circle (1pt) -- (1,-2) circle (1pt)  -- (0,1);
					\draw (0,0) circle (2pt);	
					\draw (0,-1) circle (2pt);	
				\end{scope}
				
			\end{scope}

		\end{tikzpicture}	
	\end{center}
	\caption{\label{fi:mutation polygon} Examples of mutations of polygons: in both cases the mutation is performed along side parallel to vector \((-1,0)\) and \(h=1\).} 
\end{figure}
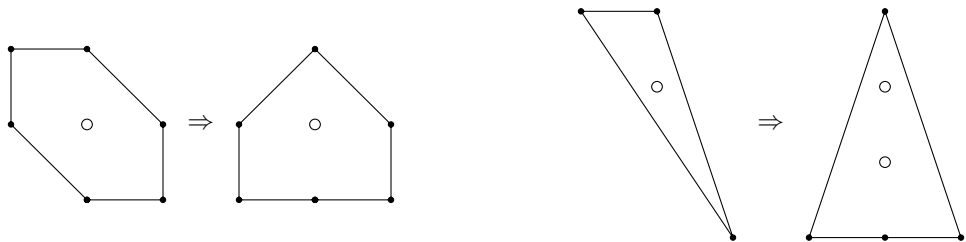

Note that even the fact that \(\sum_i \tilde{e}_i=0\) is not obvious from the description in Proposition \ref{prop:mutatation edges}. It follows from the following formulas for the height
\begin{equation}\label{eq:h in terms of edges}
	h_{E,N}=  \sum_{E' \in \text{sides of N},\; \det(E',E)>0} \frac{|\det(E',E)|}{|E|}   = \frac12 \sum_{E' \in \text{sides of N}} \frac{|\det(E',E)|}{|E|}. 
\end{equation}

\begin{Remark}	
	One can also define mutation by using variable \(\nu=\mu(\lambda+C)\) in the formula \eqref{eq:PolMut}. This is equivalent to the transformation in \eqref{eq:PolMut} up to \(SL(2,\mathbb{Z})\) action.
		In terms of the primitive segments this would correspond to the modification of \eqref{eq:mut segments} to 	$\tilde{e}_i=e_i$,\; if $\det(E,e_i)>0$ and $\tilde{e}_i=e_i-\frac{\det(E,e_i)}{|E|^{2}}E$, if  $\det(E,e_i)<0$. 
	
	In the cluster framework this correspond to distinction between signs in the definition of mutation \(\mu^+\) vs. \(\mu^-\), see e.g. \cite{Mizuno} and references therein.
	Also, jumping ahead, in the setting of zigzag mutations this correspond to zig mutations and zag mutations in \cite{Higashitani:2022}.
\end{Remark}

\subsection{Zigzag mutations 1: bipartite graphs} \label{ssec:zigzag mut 1}

\begin{Proposition} \label{prop:zigzag length}
	Let \(\zeta\) be a zigzag parallel to the side \(E\). 
	Then \(|\zeta|\geq 2h_{E,N}\).
\end{Proposition}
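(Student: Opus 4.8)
The plan is to read off $|\zeta|$ as a geometric intersection count on the torus $\Sigma$ and then bound it below by a homological intersection number. First I recall two structural facts. By consistency (Definition~\ref{def:consistancy}) the zigzag $\zeta$ is a simple closed curve on $\Sigma$ whose homology class is primitive, and since $\zeta$ is parallel to $E$ we have $[\zeta]=E/|E|$, where $|E|$ is the integral length of $E$. Next, as noted in Remark~\ref{rem:q=1}, every edge of $\Gamma$ lies on exactly two zigzags, traversed in opposite directions; drawing zigzags through edge midpoints, these two strands cross transversally at that midpoint. Applying this to the $|\zeta|$ edges of $\zeta$, and using that $\zeta$ has no self-intersection on $\Sigma$, each edge of $\zeta$ is a transversal intersection point of $\zeta$ with a uniquely determined other zigzag $\zeta'\neq\zeta$. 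Since distinct edges have distinct midpoints, this gives
\begin{equation}
  |\zeta| = \sum_{\zeta'\neq\zeta} \#(\zeta\cap\zeta'),
\end{equation}
where $\#(\zeta\cap\zeta')$ is the number of crossing points.

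The second step is the elementary topological bound that the unsigned number of transversal crossings of two closed curves is at least the absolute value of their homological intersection pairing, i.e. $\#(\zeta\cap\zeta')\geq |[\zeta]\cdot[\zeta']| = |\det([\zeta],[\zeta'])|$. I would then organize the sum by the sides of $N$: by Theorem~\ref{th:zigzags boundary} the zigzags parallel to a side $E'$ are exactly $|E'|$ in number, each with class $E'/|E'|$, so their total contribution is
\begin{equation}
  \sum_{\zeta'\text{ parallel to }E'} |\det([\zeta],[\zeta'])| = |E'|\cdot\frac{|\det(E,E')|}{|E|\,|E'|} = \frac{|\det(E,E')|}{|E|}.
\end{equation}
Summing over all sides $E'$ (the sides parallel or anti-parallel to $E$ contribute $0$, so they may be included freely) and invoking formula~\eqref{eq:h in terms of edges} yields $\sum_{E'} |\det(E,E')|/|E| = 2h_{E,N}$, whence $|\zeta|\geq 2h_{E,N}$.

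I expect the only genuinely delicate point to be the claim that each edge of $\zeta$ contributes exactly one transversal crossing. This relies on the midpoint drawing convention together with a local analysis at black and white vertices (where $\zeta$ turns right, resp.\ left), and one must first contract all two-valent vertices so that the notion of crossing is unambiguous (cf.\ the footnote after Definition~\ref{def:consistancy}). Once transversality at midpoints and the absence of self-intersections are in place, the remainder is the purely combinatorial bookkeeping above. I also note that for a consistent model the intersections are in fact minimal, so the inequality is really an equality $|\zeta|=2h_{E,N}$; the weaker statement $\geq$ is all the argument needs, and it follows already from the topological lower bound $\#(\zeta\cap\zeta')\geq|[\zeta]\cdot[\zeta']|$ without appealing to minimality.
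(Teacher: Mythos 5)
Your proof is correct and takes essentially the same route as the paper's: each edge of \(\zeta\) is shared with exactly one other zigzag, the number of shared edges (equivalently, midpoint crossings) with any \(\zeta'\) is bounded below by the homological intersection number \(|\det([\zeta'],[\zeta])|\), and summing over zigzags grouped by the sides of \(N\) via Theorem~\ref{th:zigzags boundary} together with formula~\eqref{eq:h in terms of edges} gives \(2h_{E,N}\).

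One correction to your closing aside: consistency does \emph{not} force the equality \(|\zeta|=2h_{E,N}\). Condition~\ref{it:2} of Definition~\ref{def:consistancy} forbids only \emph{parallel} bigons, so two zigzags may still cross more often than \(|\det([\zeta],[\zeta'])|\) through anti-parallel crossing pairs, and 2-valent vertices on \(\zeta\) also inflate its length. Indeed, immediately after this proposition the paper introduces the term \emph{minimal} precisely for those zigzags attaining equality, which in general not all zigzags do; your main argument is unaffected since, as you note, it only needs the inequality.
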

\begin{proof} 
	For any other zigzag \(\zeta'\) its intersection number with \(\zeta\) is equal to \(\det([\zeta'],[\zeta])\).  Hence, the number of common edges of \(\zeta\) and \(\zeta'\) is greater or equal to \(|\det([\zeta'],[\zeta])|\). Since any edge of \(\zeta\) should belong to exactly one another zigzag we have 
	\begin{equation}
		|\zeta|\geq \sum_{\zeta' \in \text{zigzags}} |\det([\zeta'],[\zeta])| = \sum_{E' \in \text{sides of N}} \frac{|\det(E',E)|}{|E|}=2h_{E,N}. 
	\end{equation}
	where we used \eqref{eq:h in terms of edges}.
\end{proof}
	
We call zigzag \(\zeta\) \emph{minimal} if its length equals to \(2h_{E,N}\), where \(E\) is the side of \(N\) parallel to \(\zeta\).  There are no 2-valent vertices on minimal zigzags, since otherwise by constructing these vertices one would decrease the zigzag length.

Let \(\zeta_1, \zeta_2\) be two minimal parallel zigzags, which are consecutive in the sense of cyclic order. Let \(|\zeta_1|=|\zeta_2|=2l\). It is easy to see from the proof of Proposition~\ref{prop:zigzag length}, that each \(\zeta_i\) has common edges with \(2l\) different zigzag's segments. Hence the minimal number of edges in the open strip between \(\zeta_1\) and \(\zeta_2\) is \(l\). Moreover, if there are exactly \(l\) edges between them, they connect white vertices of one zigzag with black vertices of another one (depending on orientation). Hence the strip between \(\zeta_1\) and \(\zeta_2\) is a union of \(l\) hexagons, see Fig.~\ref{fi:zigzag permutation}.

\begin{Definition}  \label{def:patch}
	Let  \(\zeta_1,\dots, \zeta_h\) be a set of parallel zigzags consecutive in the sense of cyclic order. We will say that bipartite graph between them is a \emph{hexagonal patch \(\Pi_{l,h}\) of height \(h\) and length \(2l\)}, \(h< l\) if 
	\begin{itemize}
		\item  zigzags \(\zeta_i\) are minimal of the length \(2l\), \(\forall  1 \leq i \leq h\);
		\item the bipartite graph between \(\zeta_i\) and \(\zeta_{i+1}\) consists of \(l\) edges, \(\forall 1 \leq i \leq h-1\).
	\end{itemize}	
\end{Definition}

As explained above, the bipartite graph between \(\zeta_1\) and \(\zeta_h\) consists of \(h-1\) layers of hexagons, see an example on Fig.~\ref{fi:patch}. 

\begin{figure}[h]
	\begin{center}
		\includegraphics[scale=1]{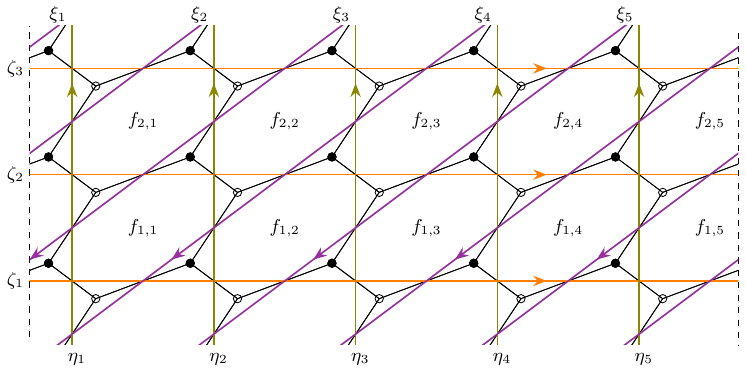}
	\end{center}		
	
	\caption{\label{fi:patch} Hexagonal patch \(\Pi_{5,3}\). It is assumed to be drawn on a cylinder, namely dashed lines on the left and right should are glued.}
\end{figure}

We label the hexagonal faces by \(f_{a,j}\), \(0 \leq a \leq h, 1 \leq j \leq l\) as on Fig.~\ref{fi:patch}. Let \(x_{a,j}=\wt(f_{a,j})\) be the corresponding face variables. We denote the segments of other zigzags~\footnote{Note that it is possible that, say, some $\xi_i$ and $\xi_j$ are segments of the same zigzag which intersects with all \(\zeta_k\) more than once.} intersecting with \(\zeta_1,\dots, \zeta_h\) by \(\xi_1,\dots, \xi_l, \eta_1,\dots, \eta_l \).
We also say, that side \(E\) corresponds to patch \(\Pi_{l,h}\) if it is parallel to zigzags \(\zeta_1,\dots,\zeta_h\).

The following lemma stays that any \(h\) parallel zigzags can be transformed to a patch using 4-gon mutations, contractions/uncontractions and zigzag transpositions \(R\) (for the case of \(h=2\) see~\cite{George:2022discrete}). Note that in this lemma (and eight other lemmas and theorems below) we refer to the paper in preparation \cite{Inprog}, so the reader can also consider these statements as conjectures supported by certain examples and computations given in this paper.

\begin{Lemma}[\cite{Inprog}]  \label{lem:zigzag patch}
	For any \(h\) parallel consecutive zigzags \(\zeta_1, \dots, \zeta_h\) there is a consistent dimer model such that all of them are minimal and the bipartite graph between them is given by hexagonal patch.
\end{Lemma}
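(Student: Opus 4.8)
The plan is to show that any collection of $h$ parallel consecutive zigzags $\zeta_1,\dots,\zeta_h$ can be brought to the hexagonal patch form $\Pi_{l,h}$ by a finite sequence of the moves already available to us: contractions/uncontractions, 4-gon face mutations (Proposition~\ref{prop:4gon mutation}), and zigzag transpositions $R$ (Proposition~\ref{prop:zigzag transposition}). The key point is that all of these moves preserve the Newton polygon $N$ (face mutations and transpositions preserve $\mathcal{Z}$, hence $N$) and map consistent dimer models to consistent dimer models. Therefore the side $E$ parallel to the $\zeta_i$ is fixed throughout, and so is the height $h_{E,N}=l$. By Proposition~\ref{prop:zigzag length} each $\zeta_i$ satisfies $|\zeta_i|\ge 2h_{E,N}=2l$, so minimality is exactly the statement that the lower bound is saturated; the goal is to reach a model where it is.

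First I would reduce to the two-zigzag case and then induct on $h$. The base case $h=2$ is precisely the statement of \cite{George:2022discrete} quoted just before the lemma: two consecutive parallel zigzags can be made minimal and into the hexagonal form of Fig.~\ref{fi:zigzag permutation} by 4-gon mutations and transpositions. For the inductive step I would first make $\zeta_1$ itself minimal. Minimality of a single zigzag means that $\zeta$ shares exactly $|\det([\zeta'],[\zeta])|$ edges with each other zigzag $\zeta'$, i.e.\ the intersection bound in the proof of Proposition~\ref{prop:zigzag length} is saturated and there are no 2-valent vertices on $\zeta$. Whenever the bound is not saturated there are ``excess'' edges shared with some $\zeta'$; these produce local 4-gon faces along $\zeta$ that can be removed by a 4-gon mutation (after an uncontraction), strictly decreasing $|\zeta|$. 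Since $|\zeta|\ge 2l$ is bounded below, this terminates with a minimal $\zeta_1$.

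The main work is the inductive step: assuming $\zeta_1,\dots,\zeta_{h-1}$ already form a patch $\Pi_{l,h-1}$, I would incorporate $\zeta_h$. First make $\zeta_h$ minimal as above. Then apply the $h=2$ result to the consecutive pair $\zeta_{h-1},\zeta_h$, arranging that the strip between them consists of exactly $l$ edges and is a single layer of hexagons, as in Definition~\ref{def:patch}. The delicate part is ensuring that the moves used to process the pair $(\zeta_{h-1},\zeta_h)$ do not disturb the already-assembled patch $\Pi_{l,h-1}$ sitting on the other side of $\zeta_{h-1}$. Here I would argue that the mutations and transpositions needed can be chosen to be supported in the open strip between $\zeta_{h-1}$ and $\zeta_h$ (together with edges incident to $\zeta_h$ from the far side), which is disjoint from the interior of $\Pi_{l,h-1}$; the transposition $R$ only swaps weights of the two zigzags it acts on and permutes local faces, so by choosing to transpose only zigzags crossing the new strip one keeps the classes $[\zeta_1],\dots,[\zeta_{h-1}]$ and the hexagonal structure between them intact.

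I expect the \emph{locality/non-interference} step to be the main obstacle: one must verify that making $\zeta_h$ minimal and collapsing the strip between $\zeta_{h-1}$ and $\zeta_h$ to $l$ hexagons can be done by moves that act trivially on the faces $f_{a,j}$ with $a\le h-2$ of the existing patch. The clean way to organize this is to pass to the universal cover and work with the $\zeta_i$ as disjoint parallel lines (they do not intersect, by consistency), so that ``between consecutive zigzags'' is a well-defined strip and the support of each move is controlled by which edges it touches. A subtlety flagged in the paper's own footnote is that some segments $\xi_i,\eta_j$ may belong to the \emph{same} global zigzag that crosses the $\zeta_k$ more than once; one must check that the transpositions used are genuinely transpositions of distinct parallel zigzags and that the bookkeeping of homology classes $[\zeta_i]$ remains consistent. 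Granting the promised results of \cite{Inprog} and the quoted $h=2$ case, the induction then closes and yields the patch $\Pi_{l,h}$.
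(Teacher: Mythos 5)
You should first be aware that the paper contains no proof of Lemma~\ref{lem:zigzag patch} to compare against: it is one of the statements attributed to the work in preparation \cite{Inprog}, and the text immediately before the lemma explicitly tells the reader that such statements may be regarded as conjectures supported by the examples and computations elsewhere in the paper (notably Sections~\ref{sec:E7} and \ref{sec:E8}). So your sketch can only be judged on its own terms. Its overall shape --- induct on \(h\), use the \(h=2\) case of \cite{George:2022discrete}, keep the Newton polygon and hence \(l=h_{E,N}\) fixed, and control the support of the moves --- is exactly the route the authors hint at, but two of its steps are genuine gaps rather than routine verifications, and they are precisely where the mathematical content of the lemma lives.

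The first gap is the minimalization/termination step. It is true, by the argument of Proposition~\ref{prop:zigzag length}, that a non-minimal \(\zeta\) must share more edges with some \(\zeta'\) than the homological bound \(|\det([\zeta'],[\zeta])|\), i.e.\ there is an excess bigon. But your claim that this excess ``produces local 4-gon faces along \(\zeta\) that can be removed by a 4-gon mutation, strictly decreasing \(|\zeta|\)'' is unsupported: a bigon between two zigzags is a region that can contain arbitrarily many faces of \(\Gamma\), and there need not be any 4-gon face adjacent to \(\zeta\) at all. Moreover, spider moves reroute all zigzags passing through the mutated face and can lengthen them as well as shorten them, so you have no monotone quantity guaranteeing termination. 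What is actually needed here is an innermost-bigon argument (in the spirit of the minimality discussions of \cite{Goncharov:2013} and the shrinking arguments of \cite{George:2022discrete}): one must show an innermost excess bigon can be emptied of faces by spider moves and then cancelled. Nothing in your sketch supplies this, and it is not a cosmetic omission --- it is the engine of the whole proof.

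The second gap is the non-interference step, which you correctly flag as the main obstacle but then only ``would argue.'' The difficulty is not bookkeeping: the transverse segments \(\xi_i,\eta_j\) run through \emph{all} of \(\zeta_1,\dots,\zeta_{h-1}\), so an excess bigon between \(\zeta_h\) and such a segment can extend across the already-assembled patch \(\Pi_{l,h-1}\), and the spider moves required to remove it may be forced to act inside the patch, destroying the hexagonal structure the induction assumes. Passing to the universal cover makes the strips well defined but does not make the supports of the moves disjoint from the patch. To close the induction you would need a lemma saying that excess bigons can always be pushed off the patch before being cancelled (or, alternatively, a different global strategy, e.g.\ exhibiting one explicit consistent graph with polygon \(N\) containing \(\Pi_{l,h}\) and invoking the connectedness-by-moves theorem stated after Proposition~\ref{prop:4gon mutation}, while tracking where the chosen zigzags go under the moves). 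As it stands, your argument establishes the statement only modulo the two claims that constitute its actual substance, which is consistent with the authors' own decision to defer the proof to \cite{Inprog}.
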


There are no 2-valent vertices on zigzags \(\zeta_1,\dots,\zeta_h\) since they are minimal. Performing uncontractions if necessary one can assume that vertices of top and bottom zigzags \(\zeta_1,\zeta_h\) which are connected with exterior part of the graph are 3-valent. Let \(d_{1},\dots, d_l\) denote the edges below \(\zeta_1\) and \(u_{1},\dots, u_{l}\) denote the edges above \(\zeta_h\), all outside the patch. Each of zigzag's segments \(\eta_1,\dots, \eta_l,\xi_1,\dots,\xi_l\) has one of the \(\mathbf{d}\)-edges and one from \(\mathbf{u}\). According to the directions of the segments we can idefine two maps:  \(\eta \colon \mathbf{d} \to \mathbf{u} \) and \(\xi \colon \mathbf{u} \to \mathbf{d} \). It is easy to see that their composition \(\xi\circ\eta \colon \mathbf{d} \to \mathbf{d} \) is shift by \(l-h\) {(for \(h\leq l\))}  along the direction of \(\zeta_1\) (to the right on Fig.~\ref{fi:patch})

\begin{Definition}  \label{def:zigzag mutations}
	Zigzag mutation of consistent bipartite graph along the patch \(\Pi_{l,h}\) (\(h\leq l\)) is a surgery which removes patch \(\Pi_{l,h}\) and glues  upside-down the ``dual'' patch \(\Pi_{l,l-h}\) so that the maps \(\eta \colon \mathbf{d} \to \mathbf{u} \) and \(\xi \colon \mathbf{u} \to \mathbf{d} \) are preserved.
\end{Definition}

In the example with \(l=5\) and \(h=3\) the transformation of bipartite graph is presented on Fig.~\ref{fi:mut l5h3}. The transformation of zigzags for this mutation is depicted on Fig.~\ref{fi:mut zigzag l5h3} (see also examples with \(l=2, h=1\) on Fig.~\ref{fi:zigzag4}, and for \(l=3, h=1\) on Fig.~\ref{fi:zigzag6}).
\begin{figure}[h]
	\begin{center}
		\includegraphics[scale=1]{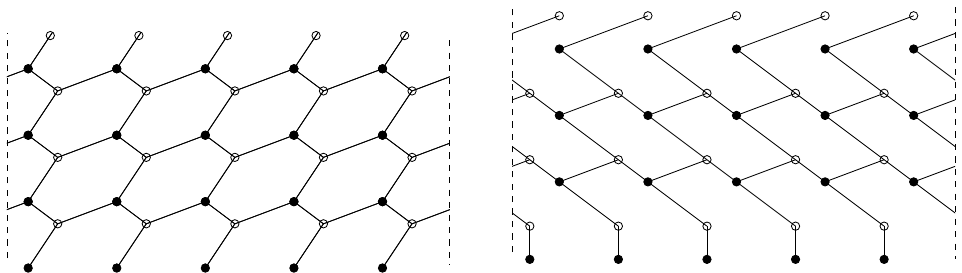}
	\end{center}		

	\caption{\label{fi:mut l5h3}Mutation along the patch \(\Pi_{5,3}\)  (with \(3\) zigzags of length \(10\)): graph transformation.}
\end{figure}
\begin{figure}[h]
	\begin{center}
		\includegraphics[scale=1]{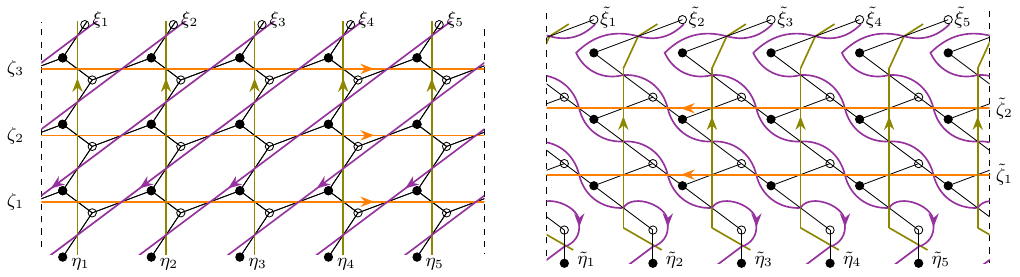}
	\end{center}	
	\caption{\label{fi:mut zigzag l5h3} Zigzag's transformation for the mutation from Fig.~\ref{fi:mut l5h3}.}
\end{figure}

Definition~\ref{def:zigzag mutations} is a special case of the construction from \cite{Higashitani:2022}.

\begin{Lemma} \label{lem:zigzag mut - polyn mut}
	Let zigzag mutation map a consistent bipartite graph \(\Gamma\) to \(\tilde{\Gamma}\). Then \(\tilde{\Gamma}\) is consistent bipartite graph. Moreover, the Newton polygon  corresponding to \(\tilde{\Gamma}\) is given by \(\widetilde{N}= \mu_{E,h}(N)\), where \(N\) is Newton polygon corresponding to \(\Gamma\), and \(E\) is the side corresponding to the patch.
\end{Lemma}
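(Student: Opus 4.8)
The plan is to prove the two assertions in turn: first that the mutated graph $\tilde\Gamma$ is again consistent in the sense of Definition~\ref{def:consistancy}, and then that its Newton polygon is $\mu_{E,h}(N)$. Throughout I would work with the explicit local surgery of Definition~\ref{def:zigzag mutations}, in which only the patch $\Pi_{l,h}$ is excised and replaced by the upside-down dual patch $\Pi_{l,l-h}$, with the boundary identifications $\eta\colon\mathbf{d}\to\mathbf{u}$ and $\xi\colon\mathbf{u}\to\mathbf{d}$ preserved. Since both patches are hexagonal grids, each is a consistent piece in isolation, and every zigzag disjoint from the patch is untouched; so all of the work concerns the finitely many zigzags that traverse the patch.

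For consistency I would first record how the classes of the affected zigzags change: the $h$ parallel zigzags $\zeta_1,\dots,\zeta_h$ of class $[\zeta]=E/|E|$ are destroyed, the $l-h$ internal zigzags of the upside-down patch $\Pi_{l,l-h}$ (of class $-[\zeta]$) are created, and each crossing zigzag $\zeta'$ is rerouted to a class of the form $[\zeta']+k[\zeta]$. Condition~\ref{it:1} then holds because the new internal zigzags are antiparallel to $E$ and each rerouted $\zeta'$ keeps a nonzero class ($[\zeta'],[\zeta]$ being independent). For conditions~\ref{it:2} and~\ref{it:3} I would argue that any bigon or self-intersection on the universal cover must have a strand inside the patch, and rule these out using the consistency of $\Pi_{l,l-h}$ together with the agreement of the exterior gluing data $\eta,\xi$ with those of the original consistent graph; this runs parallel to the length-4 verification in Proposition~\ref{prop:zigzag4}(b) and to the general construction of~\cite{Higashitani:2022}.

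For the Newton polygon, the strategy is to invoke Theorem~\ref{th:zigzags boundary}: up to $SA(2,\mathbb{Z})$ and translation, $N$ is reconstructed from the multiset of zigzag homology classes, since these are exactly the primitive side segments of $N$ arranged cyclically. It therefore suffices to match the change of this multiset to the segment transformation of Proposition~\ref{prop:mutatation edges}. Destroying the $h$ segments parallel to $E$ and creating $l-h=h_{E,N}-h=h'$ antiparallel ones reproduces the first two bullets of that proposition. For the remaining segments I would compute the shift of a crossing zigzag directly, generalizing Proposition~\ref{prop:zigzag4 quivers}(b): rerouting $\zeta'$ through $\Pi_{l,l-h}$ sends its class to $[\zeta']+\det([\zeta],[\zeta'])[\zeta]$ when $\det([\zeta],[\zeta'])>0$ and leaves it fixed when $\det([\zeta],[\zeta'])<0$. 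Writing $e_i=[\zeta']$ and $E=|E|[\zeta]$ one checks $\det([\zeta],[\zeta'])[\zeta]=\tfrac{\det(E,e_i)}{|E|^2}E$, which is precisely~\eqref{eq:mut segments}; hence $\widetilde{N}=\mu_{E,h}(N)$.

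I expect the main obstacle to be controlling the \emph{global} effect of this \emph{local} surgery on the crossing zigzags: showing that rerouting through $\Pi_{l,l-h}$ produces exactly the shift $\det([\zeta],[\zeta'])[\zeta]$ with no spurious extra winding, and simultaneously that no new parallel bigons or self-intersections are created at the patch boundary. A uniform way to handle both is to track each crossing zigzag as a class in the homology of the patch relative to its boundary and compare its value in $\Pi_{l,h}$ versus $\Pi_{l,l-h}$; once this relative computation is pinned down, both the consistency conditions and the segment formula~\eqref{eq:mut segments} follow from it.
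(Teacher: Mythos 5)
Your proposal is correct and follows essentially the same route as the paper: the paper likewise dismisses consistency as straightforward and proves the polygon statement by matching the transformation of zigzag homology classes against Proposition~\ref{prop:mutatation edges}, recording precisely that the $h$ parallel zigzags become $l-h$ antiparallel ones, $[\tilde{\eta}_i]=[\eta_i]$, and $[\tilde{\xi}_i]=[\xi_i]+[\zeta_1]$ in the homology of the cylinder relative to its boundary. In particular, your proposed resolution of the "main obstacle" via relative homology of the patch is exactly the paper's bookkeeping, of which your per-zigzag shift $[\zeta']\mapsto[\zeta']+\det([\zeta],[\zeta'])[\zeta]$ is the summed-up version.
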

\begin{proof}
	The check of consistency is straightforward. In order to show mutation property of the polygon it is sufficient to see that transformation of homology classes of zigzags coincides with that one from Proposition~\ref{prop:mutatation edges}. This follows from the following transformations of zigzag's segments (see Fig.~\ref{fi:mut zigzag l5h3} for \(l=5\) and \(h=3\)):
	\begin{equation}
		[\tilde{\zeta}_{1}]=\dots=[\tilde{\zeta}_{l-h}]=-[\zeta_1]=\dots=-[\zeta_h]; \qquad [\tilde{\eta}_i]=[\eta_i],\;\;  [\tilde{\xi}_i]=[\xi_i]+[\zeta_1], \;\;\forall 1\leq i \leq l.
	\end{equation}	
	Here \([\xi_i], [\eta_i], [\tilde{\xi}_i], [\tilde{\eta}_i]\) denote classes in homology of the cylinder relative to its boundary.
\end{proof}

\subsection{Zigzag mutations 2: dimer models} \label{ssec:zigzag mut 2}

Now it is natural to ask about the zigzag mutation of the dimer model. In other words, we want to find a transformation of the weights \(\wt \mapsto \widetilde{\wt}\) such that transformation of partition function \(\mathcal{Z}(\lambda,\mu) \mapsto \widetilde{\mathcal{Z}}(\lambda,\mu)\) would be mutation of Laurent polynomials.


Note that mutation of the polynomial requires the conditions \eqref{eq:polyn mut cond}. These conditions determine an ideal in the algebra of functions on the cluster chart. To be more precise, the condition \eqref{eq:polyn mut cond} for \(k=h\) means there exists \(h\) zigzags \(\zeta_1,\dots,\zeta_h\) parallel to \(E\) such that 
\begin{equation}\label{eq:z1=dots=zh}
	z_1=\dots = z_h=C\lambda^{-1},
\end{equation}
see~Theorem \ref{th:zigzags boundary}. This implies \(h-1\) relations in  \(\mathcal{O}(\mathcal{X}_\Gamma) \) given by \( z_1 z_2^{-1}= \dots= z_{h-1} z_h^{-1}= 1\). All these relation are imposed on the Casimir functions. The conditions~\eqref{eq:polyn mut cond} for \(1\leq k<h\) are certain linear relations for the Hamiltonians and Casimir functions of GK integrable system. It is easy to see that totally one has \(h-1 + h(h-1)/2=(h-1)(h+2)/2\) relations. We denote the ideal generated by these relations by \(I_{E,h} \subset \mathcal{O}(\mathcal{X}_\Gamma) \). 

It is convenient to consider a bit larger ideal, to be defined as follows.

\begin{Notation} We use the following notation
	\begin{equation}
		\mathsf{S}(x_1,\dots,x_k)=x_1+x_1 x_2+\ldots+x_1x_2\cdot\ldots\cdot x_k = x_1(1+x_2(1+\ldots x_{k-1}(1+x_k)\ldots )).
	\end{equation}
\end{Notation}

For given strip \(\Pi_{l,h}\) let us introduce (recall that \(x_{a,j}=\wt(f_{a,j})\), see Fig.~\ref{fi:patch})
\begin{equation}\label{eq:C,H}
	C_a= \prod\nolimits_{j=1}^{l} x_{a,j} ,  \;\;
	 H_{a,a+1}= \mathsf{S}(x_{a,1},\dots,x_{a,l-1}), \qquad \text{ for } 1\leq a < h.
\end{equation}
t is easy to see that \(C_a=z_az_{a+1}^{-1}\), in particular \(C_a\) is a Casimir function. The functions \(H_{a,a+1}\) serve as a generators of the algebra of Hamiltonians for Hamiltonian reduction.

\begin{Lemma}[\cite{Inprog}]  \label{lem:Sevostyanov}
	The functions \(H_{a,a+1}\) generate Poisson algebra with algebraic generators  \(H_{a,b}\) \(1\leq a<b \leq h\) and Poisson brackets 
\begin{equation}\label{eq:Sevostyanov}
	\{H_{a,b},H_{c,d}\}=
	\begin{cases}
		H_{a,b}H_{c,d} & \text{if } a=c, b<d \\
		H_{a,b}H_{c,d} & \text{if } a<c, b=d \\
		H_{a,d}        & \text{if } a<b=c<d \\
		H_{a,b}H_{c,d}+H_{a,d}H_{c,b} & \text{if } a<c<b<d \\
		0              & \text{if } a<c<d<b \\
		0              & \text{if } a<b<c<d
	\end{cases}
\end{equation}
\end{Lemma}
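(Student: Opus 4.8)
The plan is to reduce everything to the explicit log-canonical brackets of the face variables $x_{a,j}$ inside the patch $\Pi_{l,h}$, which are read off from the face quiver $\mathcal{Q}$ restricted to the patch via \eqref{eq:Poisson cluster}. First I would record these brackets: within a single row $\{x_{a,i},x_{a,i+1}\}=\pm x_{a,i}x_{a,i+1}$ (nearest neighbours along $\zeta_a$), between two adjacent rows a "staircase" pattern $\{x_{a,i},x_{a+1,j}\}$ supported on $j\in\{i,i-1\}$ (or its mirror, depending on the orientation convention of $\mathcal{Q}$), and zero between non-adjacent rows. The crucial structural input is that each $C_a=\prod_{j=1}^{l}x_{a,j}$ from \eqref{eq:C,H} is a Casimir, i.e. $\sum_j b_{(a,j),(b,k)}=0$ for every $(b,k)$; this is exactly what forces the inter-row brackets to telescope and is what makes $H_{a,a+1}=\mathsf{S}(x_{a,1},\dots,x_{a,l-1})$ use only the first $l-1$ variables.

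Next I would fix an explicit closed form for the non-consecutive generators, setting $H_{a,b}:=\{H_{a,c},H_{c,b}\}$ for any $a<c<b$, equivalently the iterated bracket of the consecutive $H_{a,a+1},\dots,H_{b-1,b}$. The first thing to prove is that this is independent of the intermediate index $c$, so that $H_{a,b}$ is well defined (this is the $a<b=c<d$ line of \eqref{eq:Sevostyanov} used as a definition, plus Jacobi compatibility). I expect $H_{a,b}$ to admit a manifest description as a signed sum over monotone lattice paths (staircases) climbing through the face variables of rows $a,a+1,\dots,b-1$, generalizing $\mathsf{S}$; producing and verifying this closed form is the cleanest way to get independence of $c$ and to make the subsequent bracket computations mechanical.

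With explicit formulas in hand the six cases split into an easy group and a hard one. The vanishing lines $a<b<c<d$ (disjoint) and $a<c<d<b$ (nested) follow from support considerations: disjoint intervals only involve non-adjacent rows with commuting variables, while the nested case collapses after telescoping precisely because of the Casimir property of the $C_a$. The two same-endpoint lines ($a=c$, $b<d$ and $a<c$, $b=d$) have only a boundary overlap, so no additive correction survives and one is left with the pure product $H_{a,b}H_{c,d}$ of coefficient $+1$; and the adjacent line $a<b=c<d$ is the defining relation above.

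The hard part will be the overlapping case $a<c<b<d$, whose answer $H_{a,b}H_{c,d}+H_{a,d}H_{c,b}$ carries the extra "crossed" term $H_{a,d}H_{c,b}$. Here the telescoping no longer collapses to one monomial family; instead the cross terms reorganize into two staircase families, one factoring directly as $H_{a,b}H_{c,d}$ and the other, after re-splitting the paths at the overlap $[c,b]$, as $H_{a,d}H_{c,b}$. I would handle this by a path-surgery bijection on the staircase monomials, or — likely cleaner — by realizing $1+H_{a,a+1}$ as an entry of a product of elementary $2\times2$ transfer matrices built from the $x_{a,j}$, assembling the $H_{a,b}$ into the entries of an upper-triangular Lax matrix $T$, and deriving the whole table \eqref{eq:Sevostyanov} from a single mixed linear-plus-quadratic $r$-matrix relation $\{T\overset{\otimes}{,}T\}=[r,T\otimes T]$. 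The linear term $H_{a,d}$ in the adjacent case and the two-term answer in the overlapping case are exactly the signatures of such an $r$-matrix structure, and this route would also yield the Jacobi-compatibility (hence well-definedness of $H_{a,b}$) for free.
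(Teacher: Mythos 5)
First, a structural point: the paper contains no proof of Lemma~\ref{lem:Sevostyanov}. It is one of the statements attributed to the work in preparation \cite{Inprog}, and the authors explicitly invite the reader to treat such statements as conjectures supported by the examples and computations in the paper. So there is no in-paper argument to compare yours against. The closest thing to a hint is the Remark following Theorem~\ref{th: mutation ideal}: the $h\times h$ parallel transport matrix $L$ of the patch is lower triangular, with $L_{a,a}=\wt(\zeta_a)$ and $L_{a+1,a}\sim H_{a,a+1}+1$, and $\mathbf{V}(J_{l,h})$ is the locus where $L$ is scalar. Combined with the attribution of \eqref{eq:Sevostyanov} to Sevostyanov's algebra \cite{Sevostyanov:1999}, whose origin is an $r$-matrix bracket, this indicates that your second route --- realizing the $H_{a,b}$ as normalized entries of a triangular transport matrix and extracting the whole table from a single quadratic $r$-matrix relation --- is almost certainly the intended mechanism, and it is the strongest part of your proposal. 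But you have only sketched it: identifying $H_{a,b}$ with $L_{b,a}$ up to normalization, and deriving the bracket of path-transport matrices from \eqref{eq:Poisson SigmaD}, are both substantive steps, not bookkeeping.

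The route you present as primary has a concrete error in its case analysis. Write the support of $H_{a,b}$ as the set of rows $a,\dots,b-1$ of the patch on which it depends. In the nested case $a<c<d<b$ the support of $H_{c,d}$ is entirely contained in that of $H_{a,b}$, and in the two same-endpoint cases one support contains the other; so none of these three brackets vanishes (or becomes a clean product) termwise, and your appeal to ``support considerations'' and ``boundary overlap'' fails. The stated answers $0$ and $H_{a,b}H_{c,d}$ arise from cancellations among staircase monomials of exactly the same kind as in the crossing case $a<c<b<d$ that you single out as hard; the Casimir property of the $C_a$ does not by itself produce them, since it concerns full row products while staircase monomials contain only partial row segments. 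The only case genuinely killed by support is the disjoint one $a<b<c<d$, where the two row sets have at least one full row strictly between them, so no two of their variables are adjacent in the quiver and every term vanishes; the only genuine boundary-interaction computation is the adjacent case $a<b=c<d$, which you instead use as the definition of $H_{a,d}$. That definitional move is itself circular as written: independence of the intermediate index $c$ in $H_{a,b}:=\{H_{a,c},H_{c,b}\}$ cannot come from Jacobi alone, only from the closed staircase formula (or matrix realization) that you never actually produce. So three of your six cases, plus the well-definedness of the generators, rest on reasoning that does not hold, and the proposal is viable only if the transfer-matrix construction is carried out in full.
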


This algebra is classical analog of the algebra introduced by Sevostyanov in \cite{Sevostyanov:1999}. 

\begin{Lemma}[\cite{Inprog}] \label{lem:J implies I}
	Assume we have consistent dimer model with patch \(\Pi_{l,h}\). Let \(J_{l,h}\subset \mathcal{O}(\mathcal{X}_{\Gamma})\) be an ideal generated by 
	\begin{equation}
		J_{l,h}=(\{C_a-1, H_{a,a+1}+1, H_{a,b} \mid 1\leq a < h, a+1<b\leq h\}).
		\label{Jdef}
	\end{equation}	
	Then the submanifold \(\mathbf{V}(J_{l,h})\subset \mathcal{X}_\Gamma\) defined by ideal \(J_{l,h}\) is a component of maximal dimension in \(\mathbf{V}(I_{E,h})\subset \mathcal{X}_\Gamma\). In particular, the polynomial mutation conditions \eqref{eq:polyn mut cond} are fulfilled on \(\mathbf{V}(J_{l,h})\).
\end{Lemma}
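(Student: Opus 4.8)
The plan is to separate the statement into two parts: that the divisibility conditions \eqref{eq:polyn mut cond} hold identically on $\mathbf{V}(J_{l,h})$, which simultaneously yields the final ``in particular'' clause and the inclusion $\mathbf{V}(J_{l,h})\subseteq\mathbf{V}(I_{E,h})$; and that $\mathbf{V}(J_{l,h})$ is an irreducible component of $\mathbf{V}(I_{E,h})$ of maximal dimension. Throughout I use the normalization of the preceding subsections, with $E$ horizontal and the patch zigzag variables written as $z_a=c_a\lambda^{-1}$, so that $\mathcal{Z}=\sum_{k=-h'}^{h}\mu^k P_k(\lambda)$ and, by Theorem~\ref{th:zigzags boundary}, $P_h(\lambda)\sim\prod_j(1+z_j)$ is a product over all zigzags parallel to $E$.

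For the first part I would compute the top coefficients $P_h,\dots,P_1$ directly from the patch $\Pi_{l,h}$. Since the $\zeta_1,\dots,\zeta_h$ are minimal and the graph between them is hexagonal (Definition~\ref{def:patch}), the block of the Kasteleyn matrix recording crossings of these zigzags is banded, and expanding its determinant row by row writes $P_{h-m}(\lambda)$ as a sum over the ways a matching can cross $m$ of the $h$ zigzags; the internal sums over positions within a given row are exactly the nested expressions $\mathsf{S}$, so that, after extracting the common factors coming from Theorem~\ref{th:zigzags boundary}, the obstruction to $(1+C\lambda^{-1})^{h-m}$ dividing $P_{h-m}$ becomes a polynomial in the functions $C_a$ and $H_{a,b}$ of \eqref{eq:C,H}. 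Imposing $C_a=1$ makes $z_1=\dots=z_h=C\lambda^{-1}$ and produces the factor $(1+C\lambda^{-1})^h$ in $P_h$; imposing in addition $H_{a,a+1}=-1$ and $H_{a,b}=0$ for $b>a+1$ then annihilates precisely these obstructions, so that each $P_k$ acquires its required zero of order $k$ at $\lambda=-C$. It is here that Lemma~\ref{lem:Sevostyanov} is used: it guarantees that the functions produced by the expansion are exactly the $H_{a,b}$ generated by the $H_{a,a+1}$, so no generators beyond those listed in \eqref{Jdef} appear and the verification closes.

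For the second part I would first show that $\mathbf{V}(J_{l,h})$ is a genuine component of the expected dimension by a local (Jacobian) computation: at a generic point the differentials of the generators in \eqref{Jdef} are linearly independent, which can be read off from the patch expressions $H_{a,a+1}=\mathsf{S}(x_{a,1},\dots,x_{a,l-1})$ (involving disjoint rows of face variables) and from the bracket relations of Lemma~\ref{lem:Sevostyanov} that exhibit the $H_{a,b}$ as transverse to the Casimir directions $C_a$. Hence $\mathbf{V}(J_{l,h})$ is smooth of codimension $(h-1)(h+2)/2$ near such a point and is cut out there by the conditions defining $I_{E,h}$. To promote this to maximality I would analyze the decomposition of $\mathbf{V}(I_{E,h})$: the leading condition $(1+C\lambda^{-1})^h\mid P_h$ forces $h$ of the $|E|$ zigzags parallel to $E$ to collide, the consecutive collisions reproduce the patch and the stratum $\mathbf{V}(J_{l,h})$, and one checks that non-consecutive collisions, collisions of more than $h$ zigzags, or identical vanishing of some $P_k$ all cut out strata of no larger dimension.

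The main obstacle is this last maximality claim, i.e. the global control of the decomposition of $\mathbf{V}(I_{E,h})$. The conditions \eqref{eq:polyn mut cond} are existential in the auxiliary root $-C$ and may be met through various degenerate collision patterns, and ruling out that any of these produces a component of strictly larger dimension than $\mathbf{V}(J_{l,h})$ is delicate. This, together with the careful bookkeeping of Kasteleyn signs needed to identify the determinant expansion with the $C_a$ and $H_{a,b}$, is the technical core of the argument and the reason the full proof is deferred to \cite{Inprog}.
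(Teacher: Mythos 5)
First, a point of calibration: the paper itself gives no proof of Lemma~\ref{lem:J implies I}. It is one of the nine statements attributed to the work in preparation \cite{Inprog}, which the authors explicitly say ``the reader can also consider \dots as conjectures supported by certain examples and computations given in this paper.'' The only evidence supplied here is the generator count (both $J_{l,h}$ and $I_{E,h}$ have $(h-1)(h+2)/2$ generators, so both vanishing loci have the same expected codimension) and the explicit embeddings $\mathbf{V}(J_{l,h})\subset\mathbf{V}(I_{E,h})$ exhibited in the $E_7$ and $E_8$ computations of Sections~\ref{sec:E7} and~\ref{sec:E8}. So there is no proof in the paper to compare your plan against, and your proposal must be judged on its own terms.

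On those terms, your two-part strategy (inclusion $\mathbf{V}(J_{l,h})\subseteq\mathbf{V}(I_{E,h})$ via explicit divisibility of the $P_k$, then a dimension and maximality analysis) is the natural one, but it has two genuine gaps. First, the identification of the obstructions to $(1+C\lambda^{-1})^{k}$ dividing $P_k$ with polynomials in the functions $C_a$, $H_{a,b}$ of \eqref{eq:C,H} is precisely the computation that must be carried out, and you assert it rather than do it; moreover, your appeal to Lemma~\ref{lem:Sevostyanov} at this point is a misuse --- that lemma only describes the abstract Poisson algebra generated by the $H_{a,a+1}$ (its generators $H_{a,b}$ and their brackets) and says nothing about which functions arise in a Kasteleyn determinant expansion, so it cannot ``close the verification'' for you. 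The tool actually suggested by the paper for this step is the parallel-transport matrix $L$ in the Remark following Theorem~\ref{th: mutation ideal}: on $\mathbf{V}(J_{l,h})$ one has $L=\lambda\,\mathds{1}_h$, and it is this scalarity that forces the multiple root structure of $\mathcal{Z}|_E$ and the divisibility of the lower $P_k$; your banded-determinant picture points in that direction but never arrives. Second, the maximality claim --- ruling out that non-consecutive collisions, collisions of more than $h$ zigzags, or identically vanishing coefficients cut out a component of $\mathbf{V}(I_{E,h})$ of strictly larger dimension --- is the actual content of the lemma beyond its ``in particular'' clause, and you explicitly leave it open, deferring it for the same reason the authors do. An outline that defers exactly the statement to be proved is not a proof; as written, your text establishes at most the inclusion heuristically together with the expected codimension of $\mathbf{V}(J_{l,h})$, which is no more than what the paper itself already records.
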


In particular, both \(\mathbf{V}(J_{l,h})\subset \mathcal{X}_\Gamma\) and \(\mathbf{V}(I_{E,h})\subset \mathcal{X}_\Gamma\) are of the same codimension.
We will see embedding \(\mathbf{V}(J_{l,h})\subset \mathbf{V}(I_{E,h})\) in the examples below (namely in Sections~\ref{sec:E7} and \ref{sec:E8}).

\begin{Theorem}[\cite{Inprog}]  \label{th: mutation ideal}
	Let \(\Gamma\) be a consistent bipartite graph which contains a patch \(\Pi_{l,h}\). Let \(\tilde{\Gamma}\) be a consistent bipartite graph which contains a patch \(\Pi_{l,l-h}\) and obtained from \(\Gamma\) by zigzag mutation along the patch. Then there is a transformation  \(\wt \mapsto \widetilde{\wt}\), \(K \mapsto \tilde{K}\), inducing a map from \(\mathbf{V}(J_{l,h}) \subset \mathcal{X}_\Gamma \) to \(\mathbf{V}(J_{l,l-h}) \subset \mathcal{X}_{\tilde \Gamma} \), such that the corresponding partition functions \(\mathcal{Z}\) and \(\tilde{\mathcal{Z}}\) are connected by mutation of polynomial.
\end{Theorem}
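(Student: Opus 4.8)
The plan is to follow the template of Proposition~\ref{prop:zigzag4}, which settles the case $l=2$, $h=1$, but now carrying out all computations on the reduced locus $\mathbf{V}(J_{l,h})$, where the polynomial mutation conditions are guaranteed by Lemma~\ref{lem:J implies I}. First I would fix the spectral parameters by choosing the $A$-cycle along the common direction of the patch zigzags $\zeta_1,\dots,\zeta_h$, so that on $\mathbf{V}(J_{l,h})$ the Casimir constraints $C_a=z_az_{a+1}^{-1}=1$ from \eqref{Jdef} force $z_1=\dots=z_h$ to a single common value, which after the choice of $A$-cycle I write as $C\lambda^{-1}$, matching the root $-C$ of Definition~\ref{def:mut polyn}; this is \eqref{eq:z1=dots=zh}. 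Since the $\zeta_a$ are minimal we have $h_{E,N}=l$, so $h'=l-h$ and $\mathcal{Z}=\sum_{k=-(l-h)}^{h}\mu^kP_k(\lambda)$. By Theorem~\ref{th:zigzags boundary} the boundary term factors as $\mathcal{Z}|_E\sim(1+C\lambda^{-1})^h$ times the contributions of the parallel zigzags outside the patch, and Lemma~\ref{lem:J implies I} ensures the divisibility \eqref{eq:polyn mut cond} holds for every $k>0$. Hence the polynomial mutation $\mu_{E,h}$ of Definition~\ref{def:mut polyn}, given by \eqref{eq:PolMut} with $\nu=\mu(1+C\lambda^{-1})$, is defined along the image of $\mathbf{V}(J_{l,h})$.

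The core of the argument is a local computation on the patch. Since the weight transformation $\wt\mapsto\widetilde{\wt}$, $K\mapsto\tilde{K}$ is supported inside and along the boundary of the patch, I would decompose the Kasteleyn sum \eqref{eq:Z = sum D} according to which of the $2l$ exterior edges $\mathbf{d}=(d_1,\dots,d_l)$ and $\mathbf{u}=(u_1,\dots,u_l)$ are occupied, so that each term factors as an exterior contribution, unchanged by the surgery, times a local patch partition function. It then suffices to show that the local partition functions of $\Pi_{l,h}$ and of the glued dual patch $\Pi_{l,l-h}$ are interchanged, for matching boundary data, under $\widetilde{\wt}$ together with the substitution $\nu=\mu(1+C\lambda^{-1})$. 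This is the generalization of the explicit four–term identity in the proof of Proposition~\ref{prop:zigzag4}: the interior edge weights and Kasteleyn signs are prescribed so that the hexagonal transfer structure of $\Pi_{l,h}$ is mapped to that of $\Pi_{l,l-h}$ (with orientations reversed as on Fig.~\ref{fi:mut zigzag l5h3}), while the reweighting $\widetilde{\wt}(\tilde{u}_i)=\wt(u_i)(1+C\lambda^{-1})^{-1}$ generalizing \eqref{eq:weights zigzag4} is exactly what absorbs the shift of the spectral parameter into $\nu$. Combined with Lemma~\ref{lem:zigzag mut - polyn mut}, which already identifies $\widetilde{N}=\mu_{E,h}(N)$, this yields $\tilde{\mathcal{Z}}(\lambda,\nu)=\mathcal{Z}(\lambda,\mu)$ up to a monomial, i.e.\ the asserted mutation of polynomials.

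Finally I would check that the induced map carries $\mathbf{V}(J_{l,h})$ into $\mathbf{V}(J_{l,l-h})$. Concretely, I would express the dual generators $\tilde{C}_a$ and $\tilde{H}_{a,b}$ of \eqref{eq:C,H}, \eqref{Jdef} attached to $\Pi_{l,l-h}$ as functions of the original face variables through the weight transformation, and verify that the original relations $C_a=1$, $H_{a,a+1}=-1$, $H_{a,b}=0$ imply their tilded counterparts. Here the Sevostyanov Poisson algebra of Lemma~\ref{lem:Sevostyanov} is the organizing tool: the nested polynomials $\mathsf{S}(x_{a,1},\dots,x_{a,l-1})$ transform through the cluster rule on the faces of the patch, and the relations \eqref{eq:Sevostyanov} control the products $H_{a,d}$ and $H_{a,b}H_{c,d}$ that appear, while the shift map $\xi\circ\eta$ records how the segments are relabeled by the surgery.

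The step I expect to be the main obstacle is precisely this last verification together with the local patch identity for general $h$. Unlike the case $l=2$, the coefficients $P_k(\lambda)$ with $0<k<h$ are constrained by the off-diagonal Hamiltonians $H_{a,b}$ with $b>a+1$, so one cannot simply match three Laurent coefficients as in Proposition~\ref{prop:zigzag4}; the higher divisibility by $(1+C\lambda^{-1})^k$ must be propagated through the entire Sevostyanov tower. The nonlinearity of the cluster transformations realizing the surgery means this bookkeeping has to be made uniform in $l$ and $h$, and it is this uniform control — rather than any single step — that constitutes the crux, which is presumably why the detailed proof is deferred to \cite{Inprog}.
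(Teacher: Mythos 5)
You should first be aware that the paper does not actually prove Theorem~\ref{th: mutation ideal}: it is one of the statements attributed to the work in preparation \cite{Inprog}, and the authors explicitly say that the reader may regard these statements as conjectures supported by examples. The only evidence the paper supplies is the complete proof of the special case $l=2$, $h=1$ (Proposition~\ref{prop:zigzag4}) and the worked example $l=3$, $h=1$ of Section~\ref{ssec:l=3 h=1}, where the transformed edge weights and Kasteleyn signs \eqref{eq:l3h1 edges} are written down explicitly and the mutation of the partition function is checked by hand. So there is no general argument in the paper against which your proposal can be matched step by step; the comparison can only be made against these special cases and against the hints the paper gives about the intended mechanism.

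With that caveat, your strategy is the natural one and is consistent with those special cases: the normalization $z_1=\dots=z_h=C\lambda^{-1}$ forced by the Casimir generators of $J_{l,h}$, the identification $h_{E,N}=l$ for minimal zigzags, the appeal to Lemma~\ref{lem:J implies I} for the divisibility conditions \eqref{eq:polyn mut cond}, the decomposition of the Kasteleyn sum by the occupation pattern of the exterior edges $\mathbf{d},\mathbf{u}$, and the reweighting $\widetilde{\wt}(\tilde u_i)=\wt(u_i)(1+C\lambda^{-1})^{-1}$ generalizing \eqref{eq:weights zigzag4} all match what happens for $l=2$ and $l=3$. The genuine gap is the one you yourself flag, and it is larger than your closing paragraph suggests. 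You never construct the interior weights and signs for general $(l,h)$, never prove the local identity matching the patch partition functions of $\Pi_{l,h}$ and $\Pi_{l,l-h}$ over all boundary occupation patterns (for $h>1$ a boundary pattern can cut the patch into many inequivalent families of configurations, so this is not a three-coefficient check as in Proposition~\ref{prop:zigzag4}), and never verify that $C_a=1$, $H_{a,a+1}=-1$, $H_{a,b}=0$ imply their tilded counterparts. Moreover, the $l=3$ example shows a subtlety your outline ignores entirely: the transformed weights \eqref{eq:l3h1 edges} depend on an auxiliary parameter $\beta$ that is invisible in the partition function but enters the face variables \eqref{eq:l3h1 faces}, so the induced map on $\mathbf{V}(J)$ is not canonical, and a general construction must produce and control such moduli. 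Finally, note that the remark the paper places immediately after the theorem indicates the intended organizing principle of \cite{Inprog}: $\mathbf{V}(J_{l,h})$ is characterized as the locus where the parallel transport matrix across the cut patch is scalar, $L=\lambda \mathds{1}_h$. That characterization handles the higher divisibility and the transport of the ideal simultaneously, and is arguably the structural input your Sevostyanov-tower bookkeeping is trying to reproduce by hand; without it (or an equivalent), your plan remains an outline at the same level of completeness as the paper itself rather than a proof.
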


In the Section~\ref{ssec:l=3 h=1} we present an explicit example of such transformations.

\begin{Remark} 
	Ideal \(J_{l,h}\) has another description. Consider patch \(\Pi_{l,h}\). Let \(e_a\) denote the edge of zigzag \(\zeta_i\), that separates faces \(f_{a-1,l}\) and \(f_{a,l}\). Similarly let \(e_1\) and \(e_h\) be edges of zigzags \(\zeta_1\) and \(\zeta_h\) correspondingly that separate faces \(f_{1,l}\) and \(f_{h-1,l}\) and boundary of the patch.

	Remove edges going outside the strip (they were labeled \(d_1,\dots,d_l,u_1,\dots,u_l\) above). Also cut $h$ edges \(e_1,\dots ,e_h\) defined above. Thus the cylinder becomes a strip. Orient remaining edges (and half-edges) along the direction of zigzags \(\zeta_1, \dots , \zeta_h\). See Fig.~\ref{fi:pathch orhgogonal} for an example of \(l=5, h=3\).
    \begin{figure}[h]
        \centering
        \includegraphics[scale=1]{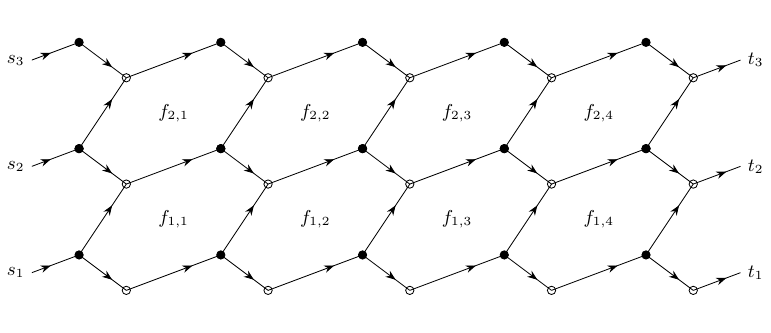}
        \caption{\label{fi:pathch orhgogonal}
        	Strip obtained from the Patch on Fig.~\ref{fi:patch}.}
    \end{figure}
    
	 After such surgery each zigzag \(\zeta_a\) becomes an oriented  segment. Let \(s_a\) denotes its source and \(t_a\) its target. Orient all edges of the patch parallel to zigzags \(\zeta_a\), see Fig.~\ref{fi:pathch orhgogonal}. Let \(L\) be \(h\times h\) parallel transport matrix defined by 
	 \begin{equation}
	 	L_{a,b}=\sum_{p\colon s_b \to t_a} \wt(p),
	 \end{equation}
	 where summation goes over oriented paths from \(s_b\) to \(t_a\) on the patch (after cuttings). The weight \(\wt(p)\) is defined as a product of weights of edges in \(p\). The weights of half-edges is defined such that product of weights of halfs of \(e_a\) is equal to \(\wt(e_a)\).
	 
	 It is easy to see that matrix \(L\) is lower triangular. Moreover, the diagonal elements are equal to the zigzag weights \(L_{a,a}=\wt(\zeta_a)\) and elements on the lower diagonal have the form \(L_{a+1,a}\sim H_{a,a+1}+1\). 
	 
	 The main message of this remark is that it can be shown that the submanifold \(\mathbf{V}(J_{l,h})\) is the submanifold on which the parallel transport matrix is scalar \(L=\lambda \mathds{1}_h\), for some \(\lambda\).
\end{Remark}

\begin{Remark}
	The patch \(\Pi_{l,h}\) has obvious cyclic \(\mathbb{Z}/l\mathbb{Z}\) symmetry . The definition \eqref{Jdef} of the ideal \(J_{l,h}\) naively is not invariant under this symmetry, since the Hamiltonians \(H_{a,a+1}\) defined in~\eqref{eq:C,H} depend on the choice of `initial'' face on each strip. However the ideal \(J_{l,h} \in \mathcal{O}(\mathcal{X}_\Gamma)\) actually does not depend on this choice. Indeed, if we replace \(H_{a,a+1}\), for example, by next telescoping sum $\mathsf{S}(x_{a,2},\dots,x_{a,l})$ we get an element of the same ideal due to relation
	\begin{multline}
		1+\mathsf{S}(x_{a,2},\dots,x_{a,l})=1+x_{a,2}+x_{a,2}x_{a,3}+\dots+x_{a,2}\cdots x_{a,l}
		\\
		=x_{a,1}^{-1}(C_a-1)+x_{a,1}^{-1}(1+H_{a,a+1}) \in J_{l,h}.
	\end{multline}	
\end{Remark}

\subsection{Zigzag mutations 3: Poisson geometry} \label{ssec:zigzag Poisson}

The ideal \(I_{E,h}\) is generated by linear combinations of Hamiltonians and Casimir functions. Hence, it is closed under the Poisson bracket. It follows from the Lemma~\ref{lem:Sevostyanov} that the ideal \(J_{l,h}\) has the same property.
However \(\mathbf{V}(J_{l,h})\) are not Poisson submanifolds for \(h>1\). 

\begin{Definition} \label{def:reduction} 
	Let \(\Gamma\) be a consistent bipartite graph containing patch \(\Pi_{l,h}\). \emph{Hamiltonian reduction with respect to the patch} is a spectrum of algebra 
	\begin{equation}
		(\mathcal{O}(\mathcal{X}_\Gamma)/J_{l,h})^{\{\bullet,J_{l,h}\}}= \Big\{ f+J_{l,h}\Big| \{f,J_{l,h}\}\subset J_{l,h}, f \in \mathcal{O}(\mathcal{X}_\Gamma)\Big\}.	
	\end{equation}
\end{Definition}

To be more precise, we want to perform Hamiltonian reduction of the whole cluster variety \(\mathcal{X}\), but above we defined reduction of only one cluster chart \(\mathcal{X}_\Gamma\) corresponding to bipartite graph \(\Gamma\). In order to have whole reduction \(\mathcal{X}_{\mathrm{red}}\) we should do the same procedure for all charts. 

It appears that there exist more convenient cluster charts (which do not correspond to bipartite graphs) where reduction can be done via monomial map.

\begin{Lemma}[\cite{Inprog}]   \label{lem:patch cluster chart}
    (a) 	Let \(\Gamma\) be a consistent bipartite graph containing a patch \(\Pi_{l,h}\). Let \(\mathsf{s}\) denotes corresponding seed. Then there exists a sequence of cluster mutations \(\boldsymbol{\mu}\colon \mathsf{s} \mapsto \mathsf{t} \) such that the image of the ideal \(J_{l,h}\) is generated by \(\{m_y^{(i)}+1\mid1 \leq i \leq (h-1)(h+2)/2\}\), where  \(\mathbf{y}\) denotes cluster variables in seed $\mathsf{t}$ and \(m_y^{(i)}\) are monomial in \(\mathbf{y}\).

    (b) Certain monomials in \(\mathbf{y}\) are coordinates on the reduction of the 
    chart \(\mathcal{X}_\mathsf{t}\). We will denote these coordinates by \(\mathbf{w}\).
    
    (c) The dimension of the reduction variety is equal to \(\dim \mathcal{X}_{\mathrm{red}} =\dim \mathcal{X} - (h^2-1)\).
    
    (d) The Poisson rank of the reduction variety \(\mathcal{X}_{\mathrm{red}}\) is equal to \(2I-h(h-1)\).
\end{Lemma}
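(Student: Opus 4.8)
The plan is to transport everything into the adapted chart of part (a), where the reduction becomes toric, and then read off (c) and (d) as linear algebra on the lattice of constraint exponents. The crucial structural input is that $J_{l,h}$ is \emph{coisotropic}, which follows at once from Lemma~\ref{lem:Sevostyanov}. Indeed, the $h-1$ generators $C_a-1$ are central since $C_a=z_az_{a+1}^{-1}$ lies in the Poisson center. For the remaining generators one checks, on $\mathbf V(J_{l,h})$ where $H_{a,a+1}=-1$ and $H_{a,b}=0$ for $b\ge a+2$, that each of the six cases of \eqref{eq:Sevostyanov} has right-hand side in $J_{l,h}$: either it is a single $H_{a',b'}$ with $b'\ge a'+2$, or every summand carries a factor $H_{a',b'}$ with $b'\ge a'+2$ (the apparent exception of a product of two adjacent $H$'s never arises, because the adjacency conditions $b=a+1$, $d=c+1$ are incompatible with the strict inequalities of the relevant cases). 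Thus all $(h-1)(h+2)/2$ generators pairwise Poisson-commute modulo $J_{l,h}$: the $h-1$ Casimir generators are central, and the remaining $h(h-1)/2$ Hamiltonian generators are first class.

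For (a) and (b) I would exhibit the mutation sequence $\boldsymbol\mu$ from the telescoping identity $1+\mathsf S(x_1,\dots,x_k)=1+x_1\bigl(1+\mathsf S(x_2,\dots,x_k)\bigr)$, which is precisely the shape of the exchange relation \eqref{eq:mutation rule}. Within each layer the faces $x_{a,1},\dots,x_{a,l}$ sit on a linear $A$-type subquiver of the patch, and the nested partial sums $\mathsf S(x_{a,j},\dots,x_{a,l-1})$ are the cluster variables produced by mutating this subquiver in order; hence one sweep collapses $1+H_{a,a+1}$ to $1+(\text{a single monomial})$ in the new variables $\mathbf y$. The $C_a$ are already monomial, and the higher $H_{a,b}$ are determined by the $H_{a,a+1}$ through \eqref{eq:Sevostyanov}; equivalently, in the language of the Remark after Theorem~\ref{th: mutation ideal} they are the off-diagonal entries of the transport matrix $L$, whose scalarity $L=\lambda\,\mathds 1_h$ turns into monomial conditions in the adapted chart. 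This produces the generators $m_y^{(i)}+1$ of (a). The reduced coordinates $\mathbf w$ of (b) are then the classes of those monomials $y^w$ that Poisson-commute with every constraint monomial $m_y^{(i)}$ (equivalently $\{y^w,m_y^{(i)}+1\}\in J_{l,h}$), taken modulo the sublattice $\Lambda$ spanned by the exponents of the $m_y^{(i)}$; writing $\Lambda^{\perp}$ for the orthogonal complement of $\Lambda$ under the constant log-canonical bracket of $\mathsf t$, the reduced chart is again an algebraic torus with cocharacter lattice $\Lambda^{\perp}/\Lambda$. I expect the construction and global consistency of $\boldsymbol\mu$, and in particular the monomialization of the higher $H_{a,b}$, to be the main obstacle; the counts below are then bookkeeping.

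For (c) I argue by coisotropic reduction in the sense of Definition~\ref{def:reduction}. By Lemma~\ref{lem:J implies I} the surface $\mathbf V(J_{l,h})$ has codimension $r=(h-1)(h+2)/2$, and since $J_{l,h}$ is coisotropic the reduction is the quotient of $\mathbf V(J_{l,h})$ by the characteristic foliation generated by the Hamiltonian vector fields of the generators. The $h-1$ central generators have vanishing Hamiltonian fields and merely fix a symplectic leaf, whereas the $h(h-1)/2$ Hamiltonian generators have linearly independent, nonvanishing Hamiltonian fields on $\mathbf V(J_{l,h})$ (their exponent vectors are independent modulo the global-Casimir lattice), each contributing one gauge direction. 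Hence
\begin{equation}
	\dim\mathcal X_{\mathrm{red}}=\dim\mathcal X-\tfrac{(h-1)(h+2)}2-\tfrac{h(h-1)}2=\dim\mathcal X-(h-1)(h+1)=\dim\mathcal X-(h^2-1).
\end{equation}

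For (d) I track the Poisson rank along the same reduction. On $\mathcal X$ the rank is $2I$, so the symplectic leaves have dimension $2I$; fixing the $h-1$ Casimir values $C_a=1$ selects such a leaf without changing its dimension. On this leaf the $h(h-1)/2$ first-class Hamiltonian constraints cut out a coisotropic submanifold whose characteristic foliation has leaves of dimension $h(h-1)/2$, so the reduced leaf is symplectic of dimension $2I-2\cdot\tfrac{h(h-1)}2=2I-h(h-1)$. Therefore the Poisson rank of $\mathcal X_{\mathrm{red}}$ equals $2I-h(h-1)$, in agreement with \eqref{eq:P rank intro}. As a consistency check, the number of Casimirs drops by exactly $h-1$, matching the $h-1$ central constraints that were fixed.
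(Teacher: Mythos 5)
You should first be aware that the paper itself does not prove this lemma: it is one of the statements attributed to the work in preparation \cite{Inprog}, and the authors explicitly say such statements ``can also be considered as conjectures supported by certain examples and computations given in this paper.'' The only material available for comparison is the worked examples — the single mutation monomializing $J_{3,2}$ in Section~\ref{ssec:l=3 h=1} (formulas \eqref{eq:example l3h1 y}, \eqref{ex:wy}), the four-mutation sequence $\mu_{13}\mu_{15}\mu_{5}\mu_{7}$ for the two $\Pi_{4,2}$ patches in Section~\ref{sec:E7}, and the $33$-step sequence \eqref{muE8} for the two $\Pi_{9,3}$ patches in Section~\ref{sec:E8}. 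Measured against that, the parts of your argument that can be checked are sound: your case analysis of \eqref{eq:Sevostyanov} establishing that $J_{l,h}$ is closed under the bracket is complete and correct (including the observation that two adjacent generators never produce a bare product term), and, granting (a), (b) and the independence of the constraint exponent vectors, your bookkeeping for (c) and (d) — codimension $(h-1)(h+2)/2$ from Lemma~\ref{lem:J implies I}, $h(h-1)/2$ gauge directions, hence a dimension drop of $h^2-1$ and a rank drop of $h(h-1)$ — is the standard coisotropic-reduction count with correct arithmetic. The lattice picture $\Lambda^{\perp}/\Lambda$ is also the right formalization of (b); note that in the monomial chart Poisson closure of the ideal actually \emph{forces} $\Lambda$ to be isotropic, which is what makes $\Lambda\subset\Lambda^{\perp}$ and the quotient torus well defined.

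The genuine gap is exactly where you flag it: part (a), which is the constructive heart of the lemma, is not established, and (b)--(d) are all conditional on it. Concretely: (i) the faces of a layer of $\Pi_{l,h}$ form a \emph{cycle} on the cylinder, so the subquiver you propose to sweep is of affine type $\tilde{A}_{l-1}$, not linear $A$-type, and mutations at vertices of layer $a$ change the cluster variables — hence the constraints $1+H_{a\pm 1,a\pm 2}$ and the higher $H_{a,b}$ — of the neighbouring layers, so ``one sweep per layer'' does not obviously produce monomial generators for all $(h-1)(h+2)/2$ constraints simultaneously; (ii) the monomialization of the higher $H_{a,b}$ is only asserted (via the transport-matrix remark), not derived; (iii) the linear independence of the constraint exponent vectors modulo the radical of the log-canonical form, which you need both for the codimension count in (c) and for the rank drop of exactly $h(h-1)$ in (d), is asserted parenthetically. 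The paper's own evidence suggests the construction is subtler than a layer-by-layer sweep: already for the two $\Pi_{9,3}$ patches the authors use the specific sequence \eqref{muE8}, with mutations interleaved among all layers and repeated vertices, rather than $h-1$ independent sweeps. So your proposal supplies the correct reduction framework and correct arithmetic, but it does not close the lemma; the missing constructive core is precisely what the authors defer to \cite{Inprog}.
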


We demonstrate this below by an example in Sect.~\ref{ssec:l=3 h=1}, formulas \eqref{eq:example l3h1 y} and \eqref{ex:wy}.
Far more nontrivial examples of this statement can be found in Sect.~\ref{sec:E7} (see Fig.~\ref{fig:E7quivers} and formulas \eqref{eq:JE7 in y}, \eqref{glu_var}) and Sect.~\ref{sec:E8} (see \eqref{muE8}, \eqref{eq:E8 J in y} and \eqref{wE8}).

It follows from monomiality that Poisson bracket of the coordinates \(\mathbf{w}\) on the reduction \(\mathcal{X}_{\mathrm{red}}\) has cluster form \(\{w_i,w_j\}=b_{ij}^{\mathrm{red}}w_iw_j\) for certain integral skew-symmetric matrix \(b^{\mathrm{red}}\). It is natural to define the corresponding cluster seed by \(\mathsf{t}^{\mathrm{red}}=(b^{\mathrm{red}},\mathbf{w})\). 

\begin{Theorem}[\cite{Inprog}]   \label{th:mut red patch}
	Let \(\Gamma\) be a consistent bipartite graph which contains a patch \(\Pi_{l,h}\). Let \(\tilde{\Gamma}\) be a consistent bipartate graph which contains a patch \(\Pi_{l,l-h}\) and obtained from \(\Gamma\) by zigzag mutation along the patch. Let us assume that weights for \(\Gamma\) and \(\tilde{\Gamma}\) are related as proposed in Theorem \ref{th: mutation ideal}. Let  \((b^{\mathrm{red}},\mathbf{w})\) and \((\tilde{b}^{\mathrm{red}},\tilde{\mathbf{w}})\) denote corresponding seeds after reduction. Then
    \(b^{\mathrm{red}}=\tilde{b}^{\mathrm{red}}\) and \(\mathbf{w}=\tilde{\mathbf{w}}\).
\end{Theorem}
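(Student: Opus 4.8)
The plan is to establish the two asserted equalities $b^{\mathrm{red}}=\tilde b^{\mathrm{red}}$ and $\mathbf{w}=\tilde{\mathbf{w}}$ by tracking the two constructions through a common intermediate object, namely the reduced seed data. The key conceptual point is that by Theorem~\ref{th: mutation ideal} we already have a concrete weight transformation $\wt\mapsto\widetilde{\wt}$ carrying $\mathbf{V}(J_{l,h})\subset\mathcal{X}_\Gamma$ to $\mathbf{V}(J_{l,l-h})\subset\mathcal{X}_{\tilde\Gamma}$, and by Lemma~\ref{lem:patch cluster chart} both reductions admit cluster charts $\mathsf{t}$, $\tilde{\mathsf{t}}$ in which $J_{l,h}$, $J_{l,l-h}$ become monomial-plus-one ideals and the reduced coordinates $\mathbf{w}$, $\tilde{\mathbf{w}}$ arise as monomials in the ambient cluster variables $\mathbf{y}$, $\tilde{\mathbf{y}}$. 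So the real content is that these two routes to $\mathcal{X}_{\mathrm{red}}$ are compatible.

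First I would identify the reduced Poisson variety intrinsically, independently of the chart. The cleanest formulation is via the parallel-transport description from the Remark following Theorem~\ref{th: mutation ideal}: $\mathbf{V}(J_{l,h})$ is precisely the locus where the $h\times h$ lower-triangular transport matrix $L$ satisfies $L=\lambda\,\mathds{1}_h$. I would argue that the zigzag-mutation weight transformation of Theorem~\ref{th: mutation ideal} is designed exactly so that the transport matrix across the mutated patch $\Pi_{l,l-h}$ equals (up to the spectral reparametrization $\mu\mapsto\nu$ governing the polynomial mutation) the same scalar condition. This is the analogue, at the level of the $L$-matrix, of the statement in Proposition~\ref{prop:zigzag4} that the partition function is preserved up to a monomial and a change of spectral variable; here the preservation of $\det(L(\lambda)+\mu)$ forces the reduced data to match. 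Concretely, the reduced algebra $(\mathcal{O}(\mathcal{X}_\Gamma)/J_{l,h})^{\{\bullet,J_{l,h}\}}$ and its analogue on the tilde side are identified through the isomorphism induced by $\wt\mapsto\widetilde{\wt}$, and I would check that this isomorphism is Poisson (it descends from a map between reductions of the two cluster varieties).

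Next I would match the distinguished monomial coordinates. The equality $\mathbf{w}=\tilde{\mathbf{w}}$ is a statement that the same functions on $\mathcal{X}_{\mathrm{red}}$ are singled out from both sides; I would verify this by computing the images of the coordinates $\mathbf{w}$ under the induced isomorphism and comparing with $\tilde{\mathbf{w}}$, using that both are built from face variables $x_{a,j}$ that survive the reduction. Once $\mathbf{w}=\tilde{\mathbf{w}}$ as functions, the equality of exchange matrices $b^{\mathrm{red}}=\tilde b^{\mathrm{red}}$ follows immediately from the cluster form $\{w_i,w_j\}=b^{\mathrm{red}}_{ij}w_iw_j$ together with the fact that the reduced Poisson bracket is intrinsic to $\mathcal{X}_{\mathrm{red}}$: the same functions with the same brackets force the same matrix. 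This is the same logic by which Proposition~\ref{prop:zigzag4 quivers}(a) deduced preservation of the face quiver from preservation of the face variables, now transported to the reduced setting.

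The main obstacle I expect is the bookkeeping of the spectral-parameter and gauge normalizations. The weight transformation in Theorem~\ref{th: mutation ideal} is only canonical up to gauge and monomial factors, and the reduced coordinates $\mathbf{w}$ are extracted as specific monomials in $\mathbf{y}$ via the (non-unique) mutation sequence $\boldsymbol\mu$ of Lemma~\ref{lem:patch cluster chart}; ensuring that the two independently chosen cluster charts $\mathsf{t}$ and $\tilde{\mathsf{t}}$ select literally the same $\mathbf{w}$ rather than merely Poisson-isomorphic coordinates is where the care is needed. I would handle this by fixing the gauge so that the transport matrix is genuinely $\lambda\,\mathds{1}_h$ on both sides and by choosing the mutation sequences compatibly, reducing the claim to the explicit verification that both sequences produce the same monomial assignment; the worked examples in Sections~\ref{sec:E7} and~\ref{sec:E8}, together with the $l=3,h=1$ case of Section~\ref{ssec:l=3 h=1}, provide the template for this matching and serve as the consistency checks that the normalizations have been chosen coherently.
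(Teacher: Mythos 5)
The first thing to say is that the paper does not actually contain a proof of Theorem~\ref{th:mut red patch}: it is announced with attribution to the work in preparation \cite{Inprog}, and the only support offered in this paper is the explicit $l=3$, $h=1$ computation of Section~\ref{ssec:l=3 h=1} (formulas \eqref{eq:l3h1 faces}--\eqref{ex:wy}) together with the $E_7$ and $E_8$ reductions of Sections~\ref{sec:E7}, \ref{sec:E8}. So your proposal cannot be checked against a written argument; it must stand on its own, and as written it has two genuine gaps. The logical skeleton you set up is fine --- if the weight map of Theorem~\ref{th: mutation ideal} descends to a Poisson isomorphism of the two Hamiltonian reductions and identifies $\mathbf{w}$ with $\tilde{\mathbf{w}}$, then $b^{\mathrm{red}}=\tilde b^{\mathrm{red}}$ follows from $\{w_i,w_j\}=b^{\mathrm{red}}_{ij}w_iw_j$ --- but both hypotheses of that implication are exactly what you leave unproved.

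The first gap is that the central equality $\mathbf{w}=\tilde{\mathbf{w}}$ is never argued: you propose to ``verify this by computing the images of the coordinates'' and later to reduce everything ``to the explicit verification that both sequences produce the same monomial assignment,'' with the worked examples as a template. But that verification \emph{is} the theorem, and for general $l,h$ there is nothing yet to compute with: neither the mutation sequence $\boldsymbol{\mu}$ of Lemma~\ref{lem:patch cluster chart} nor the monomial map $\mathbf{y}\mapsto\mathbf{w}$ is given by any general formula in the paper, and the non-uniqueness of these choices is precisely the difficulty you acknowledge and then set aside. The second gap is the inference that ``the preservation of $\det(L(\lambda)+\mu)$ forces the reduced data to match.'' The partition function is far too weak an invariant for this: it fixes only the Casimirs and Hamiltonians, so any two points on the same (positive-dimensional) Liouville torus of the reduced system have equal partition functions; moreover, the paper's own example shows that the weight transformation of Theorem~\ref{th: mutation ideal} contains a parameter $\beta$ on which the partition function does not depend while the face variables \eqref{eq:l3h1 faces} do, and the cancellation of $\beta$ in the monomials \eqref{ex:wy} is an instance of the statement being proved, not something the spectral curve gives you for free. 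To make this step honest you would need the full spectral transform (curve \emph{plus} divisor data) to be injective on $\mathcal{X}_{\mathrm{red}}$ and to be intertwined by the mutation, which is substantial and nowhere established. Relatedly, the Poisson property of the induced map between reductions is asserted (``I would check that this isomorphism is Poisson'') rather than shown; since $\mathbf{V}(J_{l,h})$ is not a Poisson subvariety for $h>1$, this requires proving that the weight map carries the centralizer algebra $(\mathcal{O}(\mathcal{X}_\Gamma)/J_{l,h})^{\{\bullet,J_{l,h}\}}$ to its tilde counterpart compatibly with brackets, which is a real computation and not a formality.
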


Recall that the ideal \(J_{l,h}\) is generated by \(h-1\) Casimir functions \(C_a\) and \(h(h-1)/2\) functions \(H_{a,b}\). Then it follows from Lemma~\ref{lem:patch cluster chart} (c) that these functions are algebraically independent and, moreover, functions \(H_{a,b}\) are Poisson independent in sense that Poisson commutativity with them gives 
\(h(h-1)/2\) independent constraints. 

\begin{Theorem}[\cite{Inprog}]\label{th:Z,J}
	 Let \(\Gamma\) be a consistent bipartite graph which contains a patch \(\Pi_{l,h}\). Then for any \((a,b)\in N\) we have \(\{\mathcal{Z}_{a,b},J_{l,h}\}\subset J_{l,h}\).
\end{Theorem}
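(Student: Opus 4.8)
The plan is to exploit that the Poisson bracket is a derivation and that $J_{l,h}$ is already closed under its own bracket (Section~\ref{ssec:zigzag Poisson}), so the problem collapses onto the generators in \eqref{Jdef}. Since $\{\mathcal{Z}_{a,b},f\phi\}=\{\mathcal{Z}_{a,b},f\}\phi+f\{\mathcal{Z}_{a,b},\phi\}$ and $\phi\in J_{l,h}$, it suffices to prove $\{\mathcal{Z}_{a,b},\phi\}\in J_{l,h}$ for $\phi$ ranging over $C_a-1$, $H_{a,a+1}+1$ and $H_{a,b}$, and for every lattice point $(a,b)\in N$. Two families are immediate. The $C_a=z_az_{a+1}^{-1}$ are products of zigzag variables, hence lie in the Poisson centre, so $\{\mathcal{Z}_{a,b},C_a-1\}=0$. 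And for every \emph{boundary} point $(a,b)$ the coefficient $\mathcal{Z}_{a,b}$ is itself a Casimir in the chosen normalisation (Section~\ref{ssec:PoI}), so $\{\mathcal{Z}_{a,b},\phi\}=0$ for all $\phi$. This leaves only an \emph{interior} point $(a,b)$, where $\mathcal{Z}_{a,b}$ is a Goncharov--Kenyon Hamiltonian, bracketed against the generators $H_{c,d}$.

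For the $H$-generators I would first reduce to the adjacent ones $H_{c,c+1}$. First note that $\{J_{l,h},H_{e,d}\}\subseteq J_{l,h}$ for every pair $e<d$: either $H_{e,d}\in J_{l,h}$ (the case $d>e+1$) and one invokes $\{J_{l,h},J_{l,h}\}\subseteq J_{l,h}$, or $d=e+1$, in which case $H_{e,e+1}$ differs from the generator $H_{e,e+1}+1\in J_{l,h}$ by the Poisson-central constant $1$, so $\{J_{l,h},H_{e,e+1}\}=\{J_{l,h},H_{e,e+1}+1\}\subseteq\{J_{l,h},J_{l,h}\}\subseteq J_{l,h}$. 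Now argue by induction on $d-c$. The base $d=c+1$ is the crux below. For $d-c\geq2$ use the Sevostyanov relation of Lemma~\ref{lem:Sevostyanov}, $H_{c,d}=\{H_{c,c+1},H_{c+1,d}\}$, together with the Jacobi identity $\{\mathcal{Z}_{a,b},H_{c,d}\}=\{\{\mathcal{Z}_{a,b},H_{c,c+1}\},H_{c+1,d}\}+\{H_{c,c+1},\{\mathcal{Z}_{a,b},H_{c+1,d}\}\}$. By the base case and the inductive hypothesis, $\{\mathcal{Z}_{a,b},H_{c,c+1}\}$ and $\{\mathcal{Z}_{a,b},H_{c+1,d}\}$ lie in $J_{l,h}$, and the observation $\{J_{l,h},H_{e,d}\}\subseteq J_{l,h}$ then places both terms in $J_{l,h}$. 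Hence the whole statement reduces to $\{\mathcal{Z}_{a,b},H_{c,c+1}\}\in J_{l,h}$ for interior $(a,b)$.

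For this remaining assertion the key is to relate $H_{c,c+1}$ to the coefficients of the partition function. The conditions \eqref{eq:polyn mut cond}, which cut out $I_{E,h}$ and whose top-dimensional component is $\mathbf{V}(J_{l,h})$ (Lemma~\ref{lem:J implies I}), become, once the Casimir relations $z_1=\dots=z_h$ of \eqref{eq:z1=dots=zh} are imposed, \emph{linear} relations among the Hamiltonians $\mathcal{Z}_{a',k}$ and Casimirs, with coefficients built from the central quantities $C=z_1\lambda$ and the $z_j$. Granting this, modulo the subideal $(C_a-1)$ the generator $1+H_{c,c+1}$ is congruent to a Casimir-linear combination $\sum_{a'}c_{a'}\,\mathcal{Z}_{a',k}$ of partition-function coefficients. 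Because the $c_{a'}$ are central, because interior coefficients Poisson-commute (Theorem~\ref{th:Integrability}) and boundary ones are central, the bracket $\{\mathcal{Z}_{a,b},\sum_{a'}c_{a'}\mathcal{Z}_{a',k}\}$ vanishes, whence $\{\mathcal{Z}_{a,b},1+H_{c,c+1}\}\in(C_a-1)\subseteq J_{l,h}$. Equivalently and more geometrically, this computation says precisely that $X_{\mathcal{Z}_{a,b}}$ is tangent to $\mathbf{V}(I_{E,h})$; since a Hamiltonian flow preserves each irreducible component, it is tangent to the component $\mathbf{V}(J_{l,h})$ as well.

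The main obstacle is this last step: making precise that, after freezing the Casimir data, the mutation conditions \eqref{eq:polyn mut cond} express $1+H_{c,c+1}$ linearly through the $\mathcal{Z}_{a',k}$ — that is, identifying the $H_{a,b}$ with the subleading coefficients forced by the divisibility $(1+C\lambda^{-1})^k\mid P_k$ — and, in the geometric formulation, upgrading tangency along $\mathbf{V}(J_{l,h})$ to the scheme-theoretic inclusion $\{\mathcal{Z}_{a,b},J_{l,h}\}\subseteq J_{l,h}$, which requires $J_{l,h}$ to be radical (equivalently, the reduction to be reduced). I expect the transfer-matrix description in the Remark following Theorem~\ref{th: mutation ideal}, where $L_{a+1,a}\sim H_{a,a+1}+1$ and $\mathbf{V}(J_{l,h})$ is the locus $L=\lambda\,\mathds{1}_h$, to supply exactly the needed bookkeeping: the $\mu$-expansion of $\det(L(\lambda)+\mu)$ realizes the $\mathcal{Z}_{a',k}$ as polynomials in the entries $L_{a,b}$, making the dependence of $H_{c,c+1}$ on the $\mathcal{Z}$'s explicit and controlling the relevant brackets directly.
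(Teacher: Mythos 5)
First, a point of reference: the paper does \emph{not} prove Theorem~\ref{th:Z,J}. It is one of the statements attributed to the paper in preparation \cite{Inprog}, which the authors explicitly say ``can also be considered as conjectures supported by certain examples''. So your proposal can only be judged against the paper's surrounding claims, not against a written proof.

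The scaffolding of your argument is sound: the Leibniz reduction to generators, the vanishing of brackets with the Casimir generators \(C_a-1\), the treatment of boundary coefficients \(\mathcal{Z}_{a,b}\) via the fixed normalization, and the Jacobi/Sevostyanov induction replacing the generators \(H_{a,b}\), \(b>a+1\), by the adjacent ones \(H_{c,c+1}\) are all correct (given Lemma~\ref{lem:Sevostyanov} and the Poisson-closedness of \(J_{l,h}\), which the paper asserts).

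The genuine gap is exactly the step you ``grant'': the claim that, modulo \((C_a-1)\), the generator \(1+H_{c,c+1}\) equals a Casimir-linear combination \(\sum_{a'}c_{a'}\mathcal{Z}_{a',k}\). This is not merely unproven — it is incompatible with the structure the paper sets up. If it held, then every generator of \(J_{l,h}\) would lie in the ideal generated by Casimir relations and Casimir-linear combinations of Hamiltonians, i.e.\ essentially in \(I_{E,h}\); this would force \(\mathbf{V}(I_{E,h})\subseteq\mathbf{V}(J_{l,h})\), hence equality. But Lemma~\ref{lem:J implies I} says \(\mathbf{V}(J_{l,h})\) is only \emph{a component} of maximal dimension of \(\mathbf{V}(I_{E,h})\), and the paper stresses the strict inclusion (``We will see embedding \(\mathbf{V}(J_{l,h})\subset \mathbf{V}(I_{E,h})\) in the examples below''). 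Indeed, immediately after Theorem~\ref{th:Z,J} the paper remarks that the analogous statement for \(I_{E,h}\) ``follows easily'' precisely because \(I_{E,h}\) is generated by linear combinations of the \(\mathcal{Z}_{a,b}\); your congruence would make Theorem~\ref{th:Z,J} that easy statement, erasing the distinction between \(J\) and \(I\) that makes the theorem nontrivial. What one can expect instead is a \emph{factorization}: modulo the Casimir relations, the level-\(k\) generator of \(I_{E,h}\) is divisible by \(1+H_{c,c+1}\) with a nontrivial cofactor \(G\) cutting out the remaining components of \(\mathbf{V}(I_{E,h})\) (degenerations where the multicross singularity at infinity forms while the parallel-transport matrix is a nontrivial Jordan block rather than scalar). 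With a factorization, your bracket computation only yields \(\{\mathcal{Z}_{a,b},1+H_{c,c+1}\}\,G\equiv -(1+H_{c,c+1})\{\mathcal{Z}_{a,b},G\}\) modulo \((C_a-1)\), and extracting \(\{\mathcal{Z}_{a,b},1+H_{c,c+1}\}\in J_{l,h}\) from this requires a coprimality/divisibility argument in the coordinate ring of \(\{C_\bullet=1\}\) — that is the actual content of the theorem, and it is absent. Your fallback geometric route (the Hamiltonian field preserves \(\mathbf{V}(I_{E,h})\), hence each of its components, hence \(\mathbf{V}(J_{l,h})\)) is a legitimate strategy, but as you note it only controls \(\sqrt{J_{l,h}}\); upgrading to \(\{\mathcal{Z}_{a,b},J_{l,h}\}\subseteq J_{l,h}\) needs \(J_{l,h}\) to be radical (indeed prime), which is established nowhere in the paper, so the proposal does not close either route.
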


This theorem means that Hamiltonians of Goncharov-Kenyon integrable system descend to well defined functions on the variety \(\mathcal{X}_{\mathrm{red}}\). Moreover, since they Poisson commute on \(\mathcal{X}_\Gamma\) they would commute on \(\mathcal{X}_{\mathrm{red}}\). 
Note that analogous statetement for the ideal \(I_{E,h}\) just reads \(\{\mathcal{Z}_{a,b},I_{E,h}\}=0 \) and follows easily from the fact, that \(I_{E,h}\) is generated by linear combinations of \(Z_{a,b}\)'s.

In other words one can descend dimer partition function \(\mathcal{Z}(\mathbf{x}|\lambda,\mu)\) to the reduction. We denote obtained function by \(\mathcal{Z}_{\mathrm{red}}(\mathbf{w}|\lambda,\mu)\).

\begin{Conjecture}
	The Goncharov-Kenyon Hamiltonians \(\mathcal{Z}_{a,b}(\mathbf{x})\) for \((a,b) \in (\text{interior of } N)\) define integrable system  on the subvariety of \(\mathcal{X}_{\mathrm{red}}\) given by equation \(q=1\).
\end{Conjecture}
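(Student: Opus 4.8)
The plan is to verify the three defining properties of an algebraic integrable system for the family $\{\mathcal{Z}^{\mathrm{red}}_{a,b}\}$ of descended Hamiltonians, with $(a,b)$ in the interior of $N$, on a generic symplectic leaf of the subvariety $\{q=1\}\subset\mathcal{X}_{\mathrm{red}}$: well-definedness, Poisson commutativity, and functional independence in the number equal to half the Poisson rank. The first two are immediate from the cited results. Theorem~\ref{th:Z,J} gives $\{\mathcal{Z}_{a,b},J_{l,h}\}\subset J_{l,h}$, so each $\mathcal{Z}_{a,b}$ descends to a well-defined $\mathcal{Z}^{\mathrm{red}}_{a,b}$ on $\mathcal{X}_{\mathrm{red}}$; and since the $\mathcal{Z}_{a,b}$ Poisson commute on $\mathcal{X}_\Gamma$ by Theorem~\ref{th:Integrability}, their descents commute on $\mathcal{X}_{\mathrm{red}}$ and hence on $\{q=1\}$. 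All the work is thus in the count.

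I would first pin down the target number. By Lemma~\ref{lem:patch cluster chart}(d) the Poisson rank on $\mathcal{X}_{\mathrm{red}}$ equals $2I-h(h-1)$, in agreement with \eqref{eq:P rank intro} for a single part $h$; since $q$ is a Casimir, cutting out $\{q=1\}$ lowers dimension and corank by one but leaves the rank of the restricted bracket unchanged, so a generic symplectic leaf has dimension $2I-h(h-1)$. Because the descended Hamiltonians commute, the number of functionally independent ones on such a leaf is automatically at most $I-h(h-1)/2$. Complete integrability therefore reduces to a single \emph{lower} bound: at least $I-h(h-1)/2$ of the $\mathcal{Z}^{\mathrm{red}}_{a,b}$ are functionally independent on a generic leaf.

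To establish this lower bound I would work on $\mathbf{V}(J_{l,h})\subset\mathcal{X}_\Gamma$, where the reduction is realized as the quotient by the flows of the $h(h-1)/2$ constraint functions $H_{a,b}$ of \eqref{eq:C,H}, \eqref{Jdef}. On $\mathcal{X}_\Gamma$ the full interior family $\{\mathcal{Z}_{a,b}\}$ consists of $I$ algebraically independent functions (Theorem~\ref{th:Integrability}). The divisibility conditions \eqref{eq:polyn mut cond} that define the reduction express exactly $\sum_{k=1}^{h-1}k=h(h-1)/2$ combinations of the interior coefficients $\mathcal{Z}_{a,k}$ in the rows $0<k<h$ adjacent to the decorated side as the fixed functions $H_{a,b}$; restricting to $\mathbf{V}(J_{l,h})$ freezes these, so at most $h(h-1)/2$ of the interior Hamiltonians can lose independence. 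Invoking the statement recorded before Theorem~\ref{th:Z,J}, that the $H_{a,b}$ impose $h(h-1)/2$ genuinely \emph{Poisson-independent} constraints, I would argue that exactly this many are lost and that the surviving $I-h(h-1)/2$ functions remain independent and, being invariant under the $H_{a,b}$-flows, descend to independent functions on $\mathcal{X}_{\mathrm{red}}\cap\{q=1\}$. Concretely, I would pass to the monomial chart $\mathsf{t}$ of Lemma~\ref{lem:patch cluster chart}, write the $\mathcal{Z}^{\mathrm{red}}_{a,b}$ in the reduced coordinates $\mathbf{w}$, and reduce independence to a generic-point Jacobian computation.

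The hard part will be this transversality claim: proving that freezing the $H_{a,b}$ removes precisely $h(h-1)/2$ and not more of the interior Hamiltonians, and that the survivors do not degenerate on a generic leaf. A robust alternative, which I would pursue if the direct Jacobian estimate proves unwieldy, is to exploit the zigzag-mutation equivalence of Theorem~\ref{th:mut red patch}: the reduced seed $(b^{\mathrm{red}},\mathbf{w})$ is invariant under zigzag mutation, while by Lemma~\ref{lem:zigzag mut - polyn mut} such a mutation realizes the polygon mutation $\widetilde N=\mu_{E,h}(N)$. Iterating mutations that shrink the decorated side and tracking how the interior lattice points are redistributed should let me compare $\mathcal{X}_{\mathrm{red}}$ with a model in which the reduced Hamiltonians are manifestly independent, and transport the conclusion back across the isomorphism. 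Matching the interior points of $N$ with the reduced Hamiltonians through these mutations is where I expect the real difficulty to lie.
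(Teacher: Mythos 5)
You have not proven this statement, and in fact the paper does not either: it is stated as a \emph{Conjecture}, and the paper's own text contains exactly the two observations you call ``immediate'' plus a reformulation of what remains open. Right after Theorem~\ref{th:Z,J} the authors note that the Hamiltonians descend to $\mathcal{X}_{\mathrm{red}}$ and Poisson commute there, and right after the Conjecture they explain that, in view of Lemma~\ref{lem:patch cluster chart}(d), the conjecture is equivalent to the claim that there are $I-h(h-1)/2$ algebraically independent Goncharov--Kenyon Hamiltonians on $\mathcal{X}_{\mathrm{red}}$, i.e.\ that beyond the $h(h-1)/2$ linear relations coming from \eqref{eq:polyn mut cond} (transported to $\mathbf{V}(J_{l,h})$ via Lemma~\ref{lem:J implies I}) ``there are no other constraints.'' Your proposal reproduces this reduction faithfully --- the rank count, the role of $q$ as a Casimir, the upper bound from commutativity, and the identification of the lower bound on functional independence as the sole remaining issue --- but the lower bound is precisely the open content of the Conjecture, and you leave it unproven.

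Concretely, the gap is your ``transversality claim.'' The assertion that freezing the $H_{a,b}$ removes \emph{exactly} $h(h-1)/2$ (and not more) of the interior Hamiltonians, and that the survivors remain independent on a generic symplectic leaf, is nowhere established: the Poisson-independence of the constraints $H_{a,b}$ recorded before Theorem~\ref{th:Z,J} bounds the drop in rank of the bracket, not the drop in the number of independent commuting Hamiltonians, and these are different statements (the Hamiltonians could a priori satisfy additional algebraic relations on $\mathbf{V}(J_{l,h})$ without any contradiction with that lemma). Your fallback route through zigzag mutations does not close the gap either: Theorem~\ref{th:mut red patch}, Lemma~\ref{lem:zigzag patch} and the matching of Hamiltonians under polynomial mutation are themselves deferred to the forthcoming reference \cite{Inprog} or stated conjecturally in Section~\ref{ssec:reduced GK}, and in addition you would need to produce a sequence of decorated-polygon mutations terminating in a trivially decorated polygon (where Theorem~\ref{th:Integrability} applies directly) together with a verified correspondence of interior lattice points and Hamiltonians along the way --- none of which is carried out. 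So what you have is a correct and well-organized framing of the problem, essentially identical to the paper's own discussion, but not a proof; the paper offers no proof to compare it against, only case-by-case verifications in the Painlev\'e setting of Sections~\ref{sec:examples}--\ref{sec:E8}.
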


Here \(q\) is a Casimir function given by \(q=\prod_{f_i \in F(\Gamma)} x_i\) in the seed \(\mathsf{s}\), but now considered as a function on  \(\mathcal{X}_{\mathrm{red}}\).

Taking into account Lemma~\ref{lem:patch cluster chart} (d) this conjecture means that there are \(I-h(h-1)/2\) algebraically independent Goncharov-Kenyon Hamiltonians on  \(\mathcal{X}_{\mathrm{red}}\). On the other hand, the only relations on these Hamiltonians imposed by the ideal \(I_{E,h}\) are \(h(h-1)/2\) linear relations given in conditions~\eqref{eq:polyn mut cond}. Using Lemma~\ref{lem:J implies I} we see that there are the same linear relations on \(\mathbf{V}(J_{l,h})\). So the conjecture means, that there are no other constraints on Goncharov-Kenyon Hamiltonians on the submanifold \(\mathbf{V}(J_{l,h})\).

We have the similar counting in terms of the genus of spectral curve, see the Definition~\ref{def:spectral curve}. Note that conditions~\eqref{eq:polyn mut cond} imply that closure of open curve \(\mathcal{C}\) has singularity of type \(x^h=y^h\) at the infinity. Its resolution decreases the genus by \(h(h-1)/2\) (this follows e.g. from the formula for genus in terms of Milnor number \cite[Cor 7.1.3]{Wall:2004singular} or from  adjunction formula). Hence we have

\begin{Proposition}
	Let \((\Gamma, \wt)\) be a consistent dimer model. Assume that \(\wt\) satisfies conditions~\eqref{eq:polyn mut cond} and generic with this property. Then the genus of the spectral curve \(\overline{\mathcal{C}}\) is equal to \(g(\overline{\mathcal{C}})=I-{h(h-1)}/{2}\).
\end{Proposition}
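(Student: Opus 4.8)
The plan is to obtain the geometric genus as the arithmetic genus of the curve in the toric surface minus the $\delta$-invariant of the singularity forced by \eqref{eq:polyn mut cond}. For generic $\wt$ \emph{not} subject to the constraints, $\overline{\mathcal{C}}$ is smooth of genus $I$ by \cite[Theorem 1]{Khovanskii:1978}; since $\overline{\mathcal{C}}$ moves in a fixed complete linear system on the toric surface $X_N$ attached to $N$ as $\wt$ varies, its arithmetic genus is a deformation invariant and stays $p_a(\overline{\mathcal{C}})=I$ even after $\wt$ is specialized to satisfy \eqref{eq:polyn mut cond}. For such specialized $\wt$ the curve becomes singular, and the genus of its normalization drops from $p_a$ by the total $\delta$-invariant, so that $g(\overline{\mathcal{C}})=I-\sum_P\delta_P$. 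Everything therefore reduces to locating the singularities and computing $\delta_P$.

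First I would locate them. By Theorem~\ref{th:zigzags boundary} the intersection of $\overline{\mathcal{C}}$ with the toric divisor $D_E$ attached to the side $E$ is cut out by \eqref{eq:Z|E prod}, $\mathcal{Z}|_E\sim\prod_{[\zeta_j]\parallel E}(1+z_j)$, so generically $\overline{\mathcal{C}}$ meets $D_E$ transversally in $|E|$ distinct points, one per zigzag. The constraint \eqref{eq:z1=dots=zh}, $z_1=\dots=z_h=C\lambda^{-1}$, forces $h$ of these factors to share the common root $\lambda=-C$, so the corresponding $h$ boundary points collapse to a single point $P\in D_E$ of multiplicity $h$; genericity keeps the curve smooth in the interior and transverse at the remaining boundary divisors, so that $P$ is the only singularity.

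Next I would identify $P$ by a local computation. After an $SA(2,\mathbb{Z})$-move making $E$ horizontal, write $\mathcal{Z}=\sum_{k=-h'}^{h}\mu^k P_k(\lambda)$, put $w=\mu^{-1}$ transverse to $D_E=\{w=0\}$ and $s=\lambda+C$ along it. The conditions \eqref{eq:polyn mut cond} say that $P_k$ is divisible by $(1+C\lambda^{-1})^k\sim s^k$ for $1\le k\le h$, whereas $P_0(-C)\neq 0$ generically, so the local equation $w^h\mathcal{Z}=0$ has leading homogeneous part of degree $h$ in $(w,s)$, namely $\sum_{k=0}^{h}u_k\,w^{h-k}s^{k}$, with $u_k$ the leading $s$-coefficients of $P_k$. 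For generic $\wt$ this binary form has $h$ distinct roots, so $P$ is an ordinary $h$-fold point with local model $x^h=y^h$ as claimed in the text. For an ordinary $h$-fold point $\delta_P=\binom{h}{2}=h(h-1)/2$; equivalently, the Milnor number equals $(h-1)^2$ and, for a singularity with $r=h$ branches, the relation $2\delta_P-r+1=(h-1)^2$ of \cite[Cor 7.1.3]{Wall:2004singular} gives the same $\delta_P$. Substituting into $g(\overline{\mathcal{C}})=p_a-\sum_P\delta_P$ yields $g(\overline{\mathcal{C}})=I-h(h-1)/2$.

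The main obstacle is the genericity input in the previous paragraph: one must prove that for generic $\wt$ satisfying \eqref{eq:polyn mut cond} the degree-$h$ leading form really has \emph{distinct} roots, i.e.\ the $h$ branches through $P$ are smooth and pairwise transverse, so that the singularity is no worse than an ordinary $h$-fold point and does not degenerate further. This amounts to showing that the coefficients $u_k$ (controlled by the Hamiltonians and Casimirs left unconstrained by \eqref{eq:polyn mut cond}) can be varied so as to make the discriminant of the binary form nonzero. The description of $\mathbf{V}(J_{l,h})$ as the locus where the parallel-transport matrix is scalar, $L=\lambda\,\mathds{1}_h$, together with the factorization \eqref{eq:Z|E prod}, is the natural tool for pinning down these leading terms and checking that the discriminant is not identically forced to vanish.
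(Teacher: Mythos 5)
Your proposal is correct and takes essentially the same route as the paper: the paper likewise observes that conditions~\eqref{eq:polyn mut cond} force a singularity of type $x^h=y^h$ (an ordinary $h$-fold point) at infinity, whose resolution drops the genus from $I$ by $\delta=h(h-1)/2$, citing the same Milnor-number formula of Wall (or adjunction). Your write-up simply fills in the details the paper leaves implicit — locating the singularity via Theorem~\ref{th:zigzags boundary}, the local leading-form computation, and the genericity check that the branches are pairwise transverse, which the paper absorbs into the hypothesis that $\wt$ is generic.
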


\subsection{Example: length 6 zigzags}\label{ssec:l=3 h=1}
Let us illustrate constructions above on the example of the patch with \(l=3\) and \(h=1\). According to the Definition~\ref{def:zigzag mutations} zigzag mutation leads to the transformation of the graphs depicted in Fig.~\ref{fi:zigzag6}.
	
	
	\begin{figure}[h]
		\begin{center}
			\includegraphics[scale=0.8]{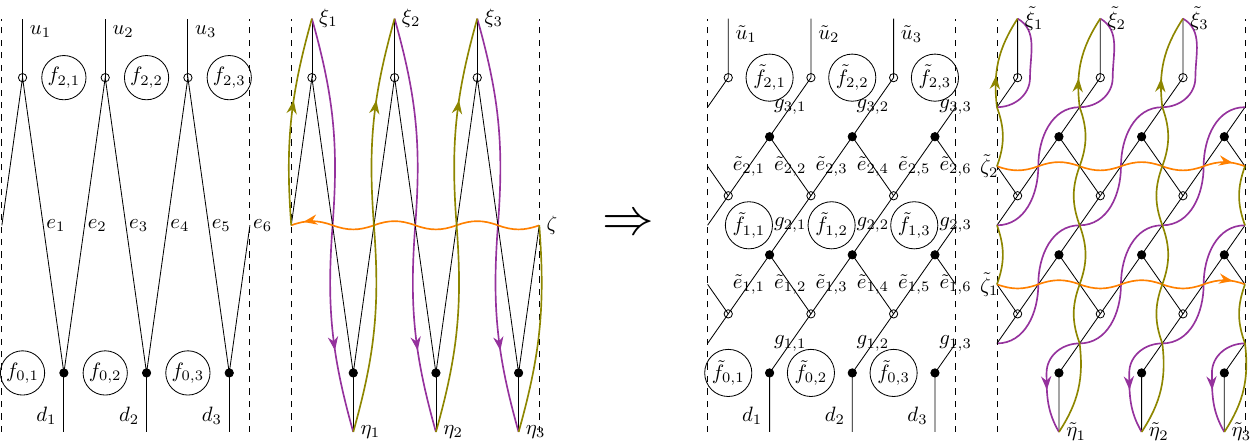}
		\end{center}		
		\caption{\label{fi:zigzag6}
			Length 6 zigzag mutation: transformation of graph, edge weights, and  zigzags.
		}
	\end{figure}
	
	Let us now give transformation of the edge weights and Kasteleyn orientation that ensures mutation of the partition function. For simplicity we will give formulas in a certain gauge. Namely, using gauge transformations we can assume that 
	\begin{equation}
		\sgn_K(e_i)=1,\; \wt(e_i)=1, \qquad 1\leq i \leq 5.
	\end{equation}
	Let us also assume that \(\lambda\) is chosen such that \(\lambda^{-1}=z=\wt(\zeta)\sgn_K(\zeta)=\wt(e_6)^{-1}\sgn_K(e_6)\). Finally, we assume that the weights of the edges \(u_1,u_2,u_3\) are proportional to an inverse of another spectral parameter~\(\mu^{-1}\) and all other edge weights are \(\mu\) independent. 
	
	
	Then transformed weights and Kasteleyn signs are given by 
	\begin{subequations}\label{eq:l3h1 edges}
		\begin{align}
			&\widetilde{\sgn}_{K}(\tilde{e}_{ij})=-\sgn_K(\zeta),\quad \widetilde{\wt}(\tilde{e}_{ij})=-\lambda \qquad \text{for } j=2i,
			\\
			&\widetilde{\sgn}_{K}(\tilde{e}_{ij})=-1,\quad \widetilde{\wt}(\tilde{e}_{ij})=-1 \qquad \text{for $j$ even and } j\neq 2i,
			\\
			&\widetilde{\sgn}_{K}(\tilde{e}_{ij})=1,\quad \widetilde{\wt}(\tilde{e}_{ij})=1 \qquad \text{for $j$ odd}, \quad \widetilde{\sgn}_{K}(\tilde{g}_{ij})=1, 
			\\
			& \widetilde{\wt}(\tilde{u}_i)=\wt(u_i)(1+\lambda)^{-1}, \text{ for } i=1,2,3,
			\\
			&
			\widetilde{\wt}(g_{1,1})= 1,\; \;
			\widetilde{\wt}(g_{1,2})= - 1-\beta^{-1},  \;\;
			\widetilde{\wt}(g_{1,3})= \beta^{-1},
			\\
			&\widetilde{\wt}(g_{2,1})=1, \; \;
			\widetilde{\wt}(g_{2,2})= - \frac{1}{ 1+\beta}, \;\; 
			\widetilde{\wt}(g_{2,3})= - \frac{\beta} {\beta+1},
			\\
			& \widetilde{\wt}(g_{3,1})=1, \;\; \widetilde{\wt}(g_{3,2})=\beta, \;\; \widetilde{\wt}(g_{3,3})= -1- \beta.		 
		\end{align}
	\end{subequations}
	These formulas illustrate Theorem~\ref{th: mutation ideal}.
	
	It is straightforward to check that partition function under this transformation is mutated as in formula~\eqref{eq:PolMut}, where \(\nu=\mu(1+\lambda)\). In particular, the partition function does not depend on~\(\beta\). Note that \(\beta\) is not a gauge transformation since the face variables depend on it, see \eqref{eq:l3h1 faces}.  The parameter \(\beta\) is dual to the single Hamiltonian of reduction (second formula in~\eqref{eq:example l3h1 ideal}) and will disappear after the Hamiltonian reduction below.

	The weights and signs \eqref{eq:l3h1 edges} are defined up to a gauge transformation. Here we used the gauge different from the one in Definition \ref{def:zigzag 4 mutatation}.
	
	The weights of face variables are transformed as 
	\begin{subequations}\label{eq:l3h1 faces}
		\begin{align}
			\tilde{x}_{0,1}&= -{\beta}x_{0,1},&\;		\tilde{x}_{0,2}&=\frac{1+\beta}{\beta}x_{0,2},&\;
			\tilde{x}_{0,3}&=\frac{1}{1+\beta}x_{0,3},   
			\\
			\tilde{x}_{1,1}&= \frac{1+\beta}{-\beta},&\;
			\tilde{x}_{1,2}&= \frac{-1}{1+\beta},&\;
			\tilde{x}_{1,3}&={\beta},
			\\
  			\tilde{x}_{2,1}&= \frac{1}{1+\beta}  x_{2,1},&\;
			\tilde{x}_{2,2}&= -{\beta}x_{2,2},&\;
  			\tilde{x}_{2,3}&= \frac{1+\beta}{\beta}x_{2,3}
  			.
		\end{align}
	\end{subequations}
	where \(x_{i,j}=\wt(f_{i,j}), \tilde{x}_{i,j}=\widetilde{\wt}(f_{i,j})\). The point in  \(\mathcal{X}_{\tilde \Gamma} \) defined by \eqref{eq:l3h1 faces} belongs to \(\mathbf{V}(J_{3,2})\). Explicitly this means (cf. with formula~\eqref{eq:C,H}) that
	\begin{equation}\label{eq:example l3h1 ideal}
		1-\tilde{x}_{1,1}\tilde{x}_{1,2}\tilde{x}_{1,3}=0,\quad 1+\tilde{x}_{1,1}+\tilde{x}_{1,1}\tilde{x}_{1,2}=0.
	\end{equation}

	The face quiver for variables \(\tilde{x}\) is drawn on Fig.~\ref{fi:l3h1 example}. 
	
	\begin{figure}[h]
		\begin{center}
			\begin{tikzpicture}[scale=1.5, font = \small]
				\def\xs{1} 
				\def\ys{1}
				\node[styleNode] (x11) at(0.5*\xs,\ys){$\tilde{x}_{2,1}$};
				\node[styleNode] (x12) at(1.5*\xs,\ys){$\tilde{x}_{2,2}$};
				\node[styleNode] (x13) at(2.5*\xs,\ys){$\tilde{x}_{2,3}$};
				\node[styleNode] (x21) at(0,0){$\tilde{x}_{1,1}$};
				\node[styleNode] (x22) at(\xs,0){$\tilde{x}_{1,2}$};
				\node[styleNode] (x23) at(2*\xs,0){$\tilde{x}_{1,3}$};
				\node[styleNode] (x31) at(0.5*\xs,-\ys) {$\tilde{x}_{0,1}$};
				\node[styleNode] (x32) at(1.5*\xs,-\ys) {$\tilde{x}_{0,2}$};
				\node[styleNode] (x33) at(2.5*\xs,-\ys){$\tilde{x}_{0,3}$};
				
				\draw[styleArrow] (x21) to (x22);
				\draw[styleArrow] (x22) to (x23);
				\draw[styleArrow] (x23) to [bend left=20] (x21);
				
				\draw[styleArrow] (x32) to (x22);
				\draw[styleArrow] (x33) to (x23);
				\draw[styleArrow] (x31) to (x21);
				\draw[styleArrow] (x22) to (x31);
				\draw[styleArrow] (x23) to (x32);
				\draw[styleArrow] (x21) to [bend right=5] (x33);

				\draw[styleArrow] (x22) to (x12);
				\draw[styleArrow] (x23) to (x13);
				\draw[styleArrow] (x21) to (x11);
				\draw[styleArrow] (x13) to (x22);
				\draw[styleArrow] (x11) to [bend right=5] (x23);
				\draw[styleArrow] (x12) to (x21);
				
			\end{tikzpicture}
			\qquad\qquad 
			\begin{tikzpicture}[scale=1.5, font = \small]
				\def\xs{1} 
				\def\ys{1}
				\node[styleNode] (x11) at(0.5*\xs,\ys){$y_{2,1}$};
				\node[styleNode] (x12) at(1.5*\xs,\ys){$y_{2,2}$};
				\node[styleNode] (x13) at(2.5*\xs,\ys){$y_{2,3}$};
				\node[styleNode] (x21) at(0,0){$y_{1,1}$};
				\node[styleNode] (x22) at(\xs,0){$y_{1,2}$};
				\node[styleNode] (x23) at(2*\xs,0){$y_{1,3}$};
				\node[styleNode] (x31) at(0.5*\xs,-\ys) {$y_{0,1}$};
				\node[styleNode] (x32) at(1.5*\xs,-\ys) {$y_{0,2}$};
				\node[styleNode] (x33) at(2.5*\xs,-\ys){$y_{0,3}$};
				
				\draw[styleArrow] (x23) to (x22);
				\draw[styleArrow] (x21) to [bend right=20] (x23);
				
				\draw[styleArrow] (x23) to (x33);
				\draw[styleArrow] (x31) to (x21);
				\draw[styleArrow] (x22) to (x31);
				\draw[styleArrow] (x32) to (x23);

				\draw[styleArrow] (x22) to (x12);
				\draw[styleArrow] (x13) to (x23);
				\draw[styleArrow] (x23) to [bend left=5] (x11);
				\draw[styleArrow] (x12) to (x21);
				
				\draw[styleArrow] (x11) to [bend right=20](x32);
				\draw[styleArrow] (x33) to (x32);
				\draw[styleArrow] (x11) to [bend left=20](x13);
				\draw[styleArrow] (x33) to [bend right=10](x13);
			\end{tikzpicture}			
			\qquad\qquad 
			\begin{tikzpicture}[scale=1.5, font = \small]
				\def\xs{1} 
				\def\ys{1}
				\node[styleNode] (x11) at(0.5*\xs,\ys){$w_{2,1}$};
				\node[styleNode] (x12) at(1.5*\xs,\ys){$w_{2,2}$};
				\node[styleNode] (x13) at(2.5*\xs,\ys){$w_{2,3}$};
				\node[styleNode] (x31) at(0.5*\xs,-\ys) {$w_{0,1}$};
				\node[styleNode] (x32) at(1.5*\xs,-\ys) {$w_{0,2}$};
				\node[styleNode] (x33) at(2.5*\xs,-\ys){$w_{0,3}$};
				
				\draw[styleArrow] (x11) to (x32);
				\draw[styleArrow] (x12) to (x33);
				\draw[styleArrow] (x13) to [bend left=10] (x31);				
				
				\draw[styleArrow] (x31) to (x11);
				\draw[styleArrow] (x32) to (x12);
				\draw[styleArrow] (x33) to (x13);
				
				\draw[styleArrow] (x33) to (x32);
				\draw[styleArrow] (x32) to (x31);
				\draw[styleArrow] (x31) to [bend right=20](x33);
				
				\draw[styleArrow] (x13) to (x12);
				\draw[styleArrow] (x12) to (x11);				
				\draw[styleArrow] (x11) to [bend left=20](x13);				
			\end{tikzpicture}
		\end{center}
		\caption{ \label{fi:l3h1 example} Quiver for the patch after zigzag mutation, same quiver after (face) mutation in \(\tilde{x}_{2,3}\), quiver after reduction.} 
	\end{figure}
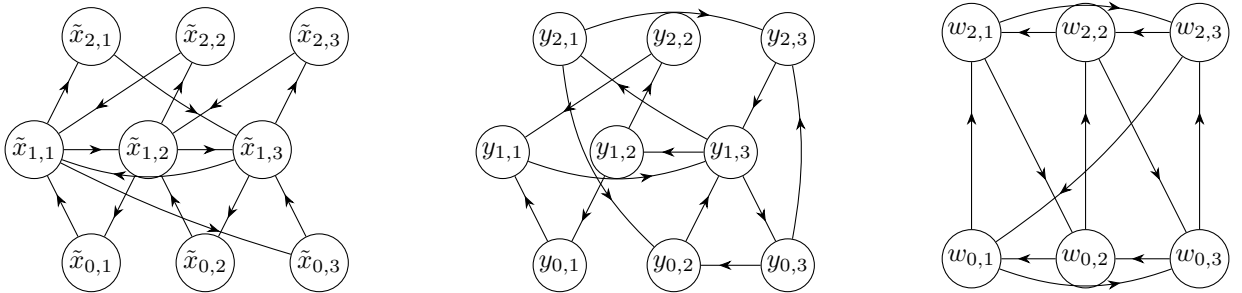
	
	Let us do mutation in the vertex \(\tilde{x}_{1,3}\). The resulting quiver is drawn on Fig.\ref{fi:l3h1 example} in the center. We denote cluster variables after mutation by  \(\mathbf{y}\). The ideal \(J_{3,2}\) in new chart is given by relations 
	\begin{equation}\label{eq:example l3h1 y}
		1+y_{1,1}=0,\quad  1+y_{1,2}=0.
	\end{equation}
	These relations are equivalent to defining relations \eqref{eq:example l3h1 ideal} however now they are given by binomials as in Lemma~\ref{lem:patch cluster chart}. Now we can perform Hamiltonian reduction with respect to the ideal generated by~\eqref{eq:example l3h1 y}. It is easy to see that the algebra of functions which Poisson commute with Hamiltonians~\eqref{eq:example l3h1 y} is generated by 
	\begin{equation}
		\label{ex:wy}
		w_{2,1}=y_{2,1},\;\; w_{2,2}=-y_{2,2}y_{1,3},\;\; w_{2,3}=y_{2,3},\;\; w_{0,1}=-y_{0,1}y_{1,3},\;\; w_{0,2}=y_{0,2},\;\; w_{0,3}=y_{0,3}.
	\end{equation} 
	The Poisson bracket between these functions has the cluster form~\eqref{eq:Poisson cluster}, where the adjacency matrix \(b^{\mathrm{red}}\) corresponds to the quiver drawn on Fig.\ref{fi:l3h1 example} on the right. Note that this quiver coincides with the face quiver for the patch before mutation (see. Fig. \ref{fi:zigzag6} left). Furthermore, using formulas for the face variables~\eqref{eq:l3h1 faces} we get 
	\begin{equation}
		w_{0,1}=x_{0,1},\;\; w_{0,2}=x_{0,2},\;\; w_{0,3}=x_{0,3},\;\; w_{2,1}=x_{2,1},\;\; w_{2,2}=x_{2,2},\;\; w_{2,3}=x_{2,3}.
	\end{equation} 
	Therefore the (cluster) face variables befor zigzag mutation are equal to the cluster variables after mutation and reduction. This illustrates Theorem \ref{th:mut red patch}.
	
	Remark once again that while the face variables \eqref{eq:l3h1 faces} depend on \(\beta\), however after reduction this dependence is gone.

\subsection{Reduced Goncharov-Kenyon integrable systems}
\label{ssec:reduced GK}

Conjecturally the results of previous sections can be extended into more general setting. 

Let us start from the generalization of the reduction construction. Let \(\Gamma,[\wt]\) be a consistent dimer model and \(\mathsf{s}\) be a corresponding seed. In the Definition~\ref{def:reduction} the reduction was performed by the ideal \(J_{l,h}\) assigned to a patch \(\Pi_{l,h}\). The patch itself is assigned to set of \(h\) parallel zigzags which are consecutive in the cyclic order (see Definition~\ref{def:patch}). Any \(h\) zigzags \(\zeta_1,\dots, \zeta_h\) that are ordered (but not necessary consecutively) in cyclic order can be made consecutive using zigzag transpositions defined in Prop. \ref{prop:zigzag transposition}. Then using Lemma~\ref{lem:zigzag patch} we can transform these zigzags into a patch and define ideal \(J_{|\zeta_1|,h}\). Using inverse transformation we can define corresponding ideal \(J_{\zeta_1,\dots, \zeta_h} \subset \mathcal{O}(\mathcal{X}_\Gamma)\).

\begin{Lemma}[\cite{Inprog}]
	The ideal \(J_{\zeta_1,\dots, \zeta_h} \subset \mathcal{O}(\mathcal{X}_\Gamma)\) does not depend on choices in the construction above.
\end{Lemma}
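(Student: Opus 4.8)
The plan is to exhibit an intrinsic description of $J_{\zeta_1,\dots,\zeta_h}$ that refers only to the unordered set $\{\zeta_1,\dots,\zeta_h\}$ (all parallel to one side $E$ of $N$) together with the canonical Hamiltonians and Casimir functions on $\mathcal{X}_\Gamma$, so that every route through the construction visibly produces the same ideal. First I would record that each elementary move used in the construction — a zigzag transposition $R$ (Proposition~\ref{prop:zigzag transposition}), a $4$-gon face mutation (Proposition~\ref{prop:4gon mutation}), or a contraction/uncontraction — induces a Poisson isomorphism between the relevant cluster charts of $\mathcal{X}$ that preserves the partition function $\mathcal{Z}(\mathbf{x}|\lambda,\mu)$ up to a monomial and fixes every zigzag variable, except that a transposition interchanges the two variables of the zigzags it swaps. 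Composing the moves that bring $\zeta_1,\dots,\zeta_h$ into consecutive position and then form the patch $\Pi_{l,h}$ (Lemma~\ref{lem:zigzag patch}, with $l=h_{E,N}$ fixed by the side) yields an isomorphism $\Phi\colon\mathcal{X}_\Gamma\to\mathcal{X}_{\Gamma'}$, and by definition $J_{\zeta_1,\dots,\zeta_h}=\Phi^{\ast}J_{l,h}$. Since the chosen zigzags are transposed only past the intervening ones and never among themselves, tracking the identity of each zigzag through $\Phi$ shows that the Casimirs $z_{\zeta_1},\dots,z_{\zeta_h}$ are carried precisely to the patch variables $z_1,\dots,z_h$ of \eqref{eq:z1=dots=zh}, with their weights preserved.

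Next I would invoke Lemma~\ref{lem:J implies I}, which identifies $\mathbf{V}(J_{l,h})$ with the top-dimensional component of $\mathbf{V}(I_{E,h})$ lying in the coincidence locus $\{z_1=\dots=z_h\}$. The ideal $I_{E,h}$ is generated by the divisibility conditions \eqref{eq:polyn mut cond}, that is, by linear combinations of the Goncharov--Kenyon Hamiltonians $\mathcal{Z}_{a,b}$ and the boundary Casimirs, all of which are intrinsic functions on $\mathcal{X}_\Gamma$ determined by the Newton polygon and the side $E$ alone. Because $\Phi$ preserves $\mathcal{Z}$ up to a monomial and carries $\{z_{\zeta_a}=z_{\zeta_{a+1}}\}$ to $\{z_a=z_{a+1}\}$, it preserves both $\mathbf{V}(I_{E,h})$ and the coincidence locus, and hence sends top-dimensional components to top-dimensional components. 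Pulling the characterization of Lemma~\ref{lem:J implies I} back through $\Phi$ then shows that $\mathbf{V}(J_{\zeta_1,\dots,\zeta_h})$ is the top-dimensional component of $\mathbf{V}(I_{E,h})\subset\mathcal{X}_\Gamma$ contained in $\{z_{\zeta_1}=\dots=z_{\zeta_h}\}$ — a description that mentions no transposition sequence, no patch-forming sequence, and (by the remark on the cyclic $\mathbb{Z}/l\mathbb{Z}$ symmetry of the patch) no choice of initial face on the strips.

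To conclude, I would compare two admissible constructions $\Phi,\Phi'\colon\mathcal{X}_\Gamma\to\mathcal{X}_{\Gamma'},\mathcal{X}_{\Gamma''}$. The composite $\Phi'\circ\Phi^{-1}$ is again a product of the elementary moves, hence a Poisson isomorphism preserving $\mathcal{Z}$ and the variables $z_{\zeta_1},\dots,z_{\zeta_h}$; it therefore fixes $I_{E,h}$ and the coincidence locus, and so fixes their common top-dimensional component. As both $\Phi^{\ast}J_{l,h}$ and $(\Phi')^{\ast}J_{l,h}$ equal the ideal of that one fixed component, they coincide, which is the assertion of the lemma.

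The main obstacle I anticipate lies in the uniqueness of this top-dimensional component inside $\{z_{\zeta_1}=\dots=z_{\zeta_h}\}$: when more than $h$ zigzags are parallel to $E$, the full locus $\mathbf{V}(I_{E,h})$ splits into several components indexed by which $h$ of the parallel zigzag variables are forced to coincide, and one must verify that intersecting with the \emph{specific} subvariety $\{z_{\zeta_1}=\dots=z_{\zeta_h}\}$ singles out exactly one component of the codimension $(h-1)(h+2)/2$ guaranteed by Lemma~\ref{lem:J implies I}. Establishing this cleanly, rather than by an appeal to genericity, is the delicate point. I expect the most robust tool to be the alternative description of $\mathbf{V}(J_{l,h})$ as the locus where the $h\times h$ parallel-transport matrix $L$ equals $\lambda\,\mathds{1}_h$, since that matrix can be read off directly from the band of the $h$ zigzags — even before contracting it to a patch — and its scalarity is manifestly an intrinsic condition on $\{\zeta_1,\dots,\zeta_h\}$.
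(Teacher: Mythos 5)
There is nothing in the paper to compare your argument against: this lemma is one of the statements whose proof the paper explicitly defers to the work in preparation \cite{Inprog} (the paper even says such statements ``can also be considered as conjectures''). So your proposal can only be judged on its own terms. Its overall shape is sensible and uses the right ingredients: factor the construction into elementary moves (zigzag transpositions $R$ of Proposition~\ref{prop:zigzag transposition}, $4$-gon mutations of Proposition~\ref{prop:4gon mutation}, contractions), observe that each is a Poisson birational map preserving $\mathcal{Z}$ up to a monomial and permuting zigzag variables in a controlled way, and then try to pin down $\Phi^{*}J_{l,h}$ by a characterization that depends only on the unordered set $\{\zeta_1,\dots,\zeta_h\}$.

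As a proof, however, it has two genuine gaps. First, the characterization you rely on --- that $\mathbf{V}(J_{l,h})$ is the \emph{unique} maximal-dimensional component of $\mathbf{V}(I_{E,h})$ inside the coincidence locus $\{z_{\zeta_1}=\dots=z_{\zeta_h}\}$ --- is exactly what Lemma~\ref{lem:J implies I} does not provide: it asserts only that $\mathbf{V}(J_{l,h})$ is \emph{a} component of maximal dimension, and you yourself concede you cannot rule out several such components except ``by an appeal to genericity.'' Since your final step is that $\Phi'\circ\Phi^{-1}$ fixes ``their common top-dimensional component,'' the argument is incomplete precisely at its load-bearing point. Second, a gap you do not flag: even granting uniqueness of the component, you deduce equality of \emph{ideals} from equality of their zero loci (``both equal the ideal of that one fixed component''). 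That inference requires $J_{l,h}$ --- hence each of its pullbacks --- to be the full vanishing ideal of its zero set, i.e.\ radical (in fact prime); this is never established, and the generators \eqref{Jdef} are not visibly radical, while the lemma being proved is a statement about ideals, not varieties. A plausible repair is via Lemma~\ref{lem:patch cluster chart}: in the chart $\mathsf{t}$ the ideal is generated by binomials $m_y^{(i)}+1$, and such an ideal is prime provided the exponent vectors span a saturated sublattice --- but this needs to be stated and verified, as does the routine care required to pull ideals back along birational mutation maps. Your closing suggestion to work instead with the intrinsic parallel-transport condition $L=\lambda\mathds{1}_h$ is a good instinct (it bypasses both the patch-forming sequence and the component-counting issue), but in your text it remains an expectation rather than an argument, so the proposal as written does not close the lemma.
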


Hence one can perform Hamiltonian reduction with respect to ideal \(J_{\zeta_1,\dots, \zeta_h}\). Furthermore, one can perform Hamiltonian reduction with respect to ideals corresponding several to sets of zigzags. This motivates the following definition.

\begin{Definition} \label{def:dimer decorations}
	\emph{Consistent dimer model with decoration} is the following data 
	\begin{itemize}
		\item Consistent dimer model \((\Gamma, [\wt])\)
		\item Partition of set of zigzags 
		\(
			\bigcup_{E, i} \{\zeta_{E,i,1},\zeta_{E,i,2},\dots, \zeta_{E,i,h_{E,i}}\}, 
		\)	
		where \(E\) runs over sides of Newton polygon \(N\) and for any \(E,i\) zigzags \(\zeta_{E,i,1},\dots, \zeta_{E,i,h_{E,i}}\) are ordered in cyclic order and parallel to \(E\).
	\end{itemize}
\end{Definition}

For a consistent dimer model with decoration denote collection of zigzags \(\zeta_{E,i,1},\dots, \zeta_{E,i,h_{E,i}}\) by  \(\boldsymbol{\zeta}_{E,i}\). As was explained above, for any such collection we have an ideal \(J_{E,i}=J_{\boldsymbol{\zeta}_{E,i}}\). Let \(J\) be a an ideal generated by \(J_{E,i}\) for all \(E,i\).  The following lemma implies that this ideal is closed under the Poisson bracket.
\begin{Lemma}[\cite{Inprog}]
	Let \(E,E'\) denote two (possibly coincident) sides of \(N\). Then for any \(i,i'\) we have
	\(	\left\{J_{E,i},J_{E',i'}\right\}
		\subset 
		\left(J_{E,i},J_{E',i'}
		\right)\).

%
\end{Lemma}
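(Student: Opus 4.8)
The plan is to reduce the statement to brackets of the distinguished generators of the two ideals and to dispose of the Casimir generators immediately. Recall from \eqref{Jdef} and \eqref{eq:C,H} that each $J_{E,i}=J_{\boldsymbol{\zeta}_{E,i}}$ is generated by the shifted Casimir functions $C^{(E,i)}_a-1$ (equal to $z_a z_{a+1}^{-1}-1$ in the corresponding patch, hence Poisson-central) together with the Hamiltonian generators $H^{(E,i)}_{a,a+1}+1$ and $H^{(E,i)}_{a,b}$ for $b>a+1$. By bilinearity and the Leibniz rule it suffices to bound the bracket of one generator of $J_{E,i}$ against one generator of $J_{E',i'}$. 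Any bracket in which at least one factor is a Casimir generator vanishes identically and so lies trivially in $\left(J_{E,i},J_{E',i'}\right)$. Thus the whole statement reduces to controlling $\{H^{(E,i)}_{a,b},H^{(E',i')}_{c,d}\}$.

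For the non-transverse cases the argument uses only the Sevostyanov algebra of Lemma~\ref{lem:Sevostyanov}. When $(E,i)=(E',i')$ the claim is the single-patch closure already noted in Section~\ref{ssec:zigzag Poisson}: reducing the closed bracket formula \eqref{eq:Sevostyanov} modulo the defining relations $H_{a,a+1}\equiv-1$ and $H_{a,b}\equiv0$ ($b>a+1$) shows each $\{H_{a,b},H_{c,d}\}\in J_{E,i}$. When $E=E'$ but $i\neq i'$ both families are parallel to the same side; since zigzag transpositions $R$ (Proposition~\ref{prop:zigzag transposition}) and $4$-gon mutations are Poisson isomorphisms, the desired inclusion is invariant under them, so, using Lemma~\ref{lem:zigzag patch}, I may pass to a consistent graph in which the zigzags of $\boldsymbol{\zeta}_{E,i}$ and $\boldsymbol{\zeta}_{E',i'}$ are minimal, consecutive, and arranged so that the two collections occupy disjoint consecutive index-blocks of one hexagonal patch. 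Then the generators of $J_{E,i}$ carry indices $a<b$ in the first block and those of $J_{E',i'}$ carry indices $c<d$ in the second, whence $a<b<c<d$ and the last case of \eqref{eq:Sevostyanov} gives $\{H^{(E,i)}_{a,b},H^{(E',i')}_{c,d}\}=0$.

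The genuinely new content, and the main obstacle, is the transverse case $E\neq E'$: the two families have distinct homology classes, necessarily intersect on $\Sigma$, and cannot be merged into a common patch, so the Sevostyanov algebra is unavailable and the patches genuinely overlap. My plan is to use the parallel-transport (``$L$-matrix'') presentation from the Remark following Theorem~\ref{th:Z,J}, in which $\mathbf{V}(J_{E,i})$ is cut out by the scalarity condition $L^{(E,i)}=\lambda\,\mathds{1}$, with the lower-diagonal entries of $L^{(E,i)}$ equal to $\sim H^{(E,i)}_{a,a+1}+1$. The key step is to compute the cross bracket $\{L^{(E,i)},L^{(E',i')}\}$ directly from the surface formula \eqref{eq:Poisson SigmaD}, tracking how the chains representing the telescoping sums $\mathsf{S}$ of one patch pair under the intersection form on $\Sigma^D$ with those of the other; the expected outcome is a quadratic, $r$-matrix-type expression of the schematic form $\{L^{(E,i)}_{1},L^{(E',i')}_{2}\}=r\,L^{(E,i)}_{1}L^{(E',i')}_{2}-L^{(E,i)}_{1}L^{(E',i')}_{2}\,r$. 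Granting such a form, every entry of the cross bracket is a combination of products $L^{(E,i)}_{ab}L^{(E',i')}_{cd}$, and on the common locus $\mathbf{V}(J_{E,i})\cap\mathbf{V}(J_{E',i'})$ each matrix is scalar, so the bracket lies in $\left(J_{E,i},J_{E',i'}\right)$. I expect the hard part to be exactly this intersection computation: because the two strips overlap near every intersection point of the zigzag families, one must check that the non-local contributions reorganize into the quadratic form and do not produce terms outside the combined ideal.

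Finally, I would record that this lemma, together with the single-patch closure $\{J_{E,i},J_{E,i}\}\subset J_{E,i}$, implies by the Leibniz rule that the full sum ideal $J=\bigl(\bigcup_{E,i}J_{E,i}\bigr)$ is closed under the Poisson bracket, which is precisely what is needed to perform the simultaneous Hamiltonian reduction along all decorating collections of zigzags.
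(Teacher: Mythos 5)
First, a point of reference: the paper does not actually prove this lemma. It is one of the statements attributed to the forthcoming work \cite{Inprog}, which the authors explicitly ask the reader to treat as a conjecture supported by examples (notably Sections \ref{sec:E7} and \ref{sec:E8}). So there is no in-paper argument to compare yours against; what follows is an assessment of your proposal on its own terms.

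The preparatory reductions you make are sound: disposing of the Casimir generators $C_a-1$, reducing to brackets of generators via Leibniz, and handling the case $(E,i)=(E',i')$ by reducing the Sevostyanov relations \eqref{eq:Sevostyanov} modulo $H_{a,a+1}\equiv-1$, $H_{a,b}\equiv0$ is exactly the closure argument the paper itself invokes in Section \ref{ssec:zigzag Poisson}. However, there is a genuine gap, and it sits precisely where you locate the difficulty: the case $E\neq E'$. Your argument there is not a proof but a plan — you ``grant'' that the cross bracket of the two parallel-transport matrices has the quadratic $r$-matrix form, and this granted identity is the entire mathematical content of the lemma in the transverse case. Nothing in the paper (nor in your proposal) establishes it; the intersection combinatorics of two overlapping patches on $\Sigma^D$ is exactly what must be computed, and without it the conclusion does not follow. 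Note also that your trichotomy is slightly off: when $E'$ is \emph{antiparallel} to $E$, the two families have zero intersection number and need not ``necessarily intersect,'' yet their zigzags can still share edges, so the patches interact; this case is neither covered by your Sevostyanov argument nor accurately described by your transverse setup, and it needs its own treatment.

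A second, more repairable, issue concerns $E=E'$, $i\neq i'$. You merge both collections into a single hexagonal patch $\Pi_{l,h_1+h_2}$, but such a patch need not exist: Definition \ref{def:patch} requires the height not to exceed $l=h_{E,N}$, and one can have $h_{E,i}+h_{E,i'}>h_{E,N}$ (e.g.\ a decoration $(2,2)$ on a side of integral length $4$ of a polygon of height $2$ with respect to that side). The fix is to keep the two collections in \emph{separate} patches: the generators of $J_{E,i}$ involve only face variables of the hexagons strictly inside patch $i$, the hexagons of the two patches are separated by the intermediate strips and hence never share an edge, so by \eqref{eq:Poisson cluster} the cross brackets vanish identically. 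Even this, though, silently uses a simultaneous-patchification statement (Lemma \ref{lem:zigzag patch} applied to two disjoint collections at once) and the independence-of-choices lemma for $J_{\zeta_1,\dots,\zeta_h}$, both of which are themselves deferred to \cite{Inprog}. In summary: your scaffolding is correct and your identification of the hard step is accurate, but the proposal does not prove the lemma; the transverse (and antiparallel) cross-bracket computation remains open.
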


The main conjecture of the paper is that under certain conditions the Hamiltonian reduction of cluster variety $\mathcal{X}$ by ideal \(J\) is a phase space of a certain integrable system. Let us first formulate the meaning of reduction conditions in terms of dimer partition function and spectral curve.

Recall that number of zigzags parallel to the side \(E\) is equal to the \(|E|_\mathbb{Z}\) (integral length of the segment \(E\)). Hence, the partition of set of zigzags used in Definition~\ref{def:dimer decorations} induces partition of \(|E|\) for any side \(E\). This motivates the following definition.

\begin{Definition}
	A  decorated Newton polygon is a pair \((N,\mathbf{H})\), where \(N\) is a convex integral polygon, and \(\mathbf{H}=(H_E\mid E \in \text{sides of } N)\) is a collection of partitions \(H_E=\{h_{E,i}\}\) of \(|E|_\mathbb{Z}\).
\end{Definition}


Recall that the reduction condition $J_{E,i}$ implied that the open curve \(\mathcal{C}\) has singularity of type \(x^{h_{E,i}}=y^{h_{E,i}}\). Resolution of all these singularities would decrease genus and we get 
\begin{equation}\label{eq:genus reduction}
	g(\overline{\mathcal{C}})=I-\sum\nolimits_{E,i}h_{E,i}(h_{E,i}-1)/2.
\end{equation}

\begin{Conjecture} \label{conj:generic}
	Under the certain conditions for decorated Newton polygon there exists integrable system such that 
	\begin{enumerate}[label=(\alph*)]
		\item \label{it:red} The integrable system is reduction of Goncharov-Kenyon integrable system corresponding to Newton polygon \(N\).

		\item The dimension of the phase space is equal to $\dim \mathcal{X}_{N,\mathbf{H}}-1$, where 
		\begin{equation} \label{eq:dim X_red}
			\dim \mathcal{X}_{N,\mathbf{H}}=2 \operatorname{Area}(N)-\sum\nolimits_{E,i} (h_{E,i}^2-1) 
		\end{equation}
		
		\item The rank of the Poisson bracket is equal to 
		\begin{equation} \label{eq:rank X_red}
			\operatorname{rk}\{\cdot,\cdot\}_{\mathcal{X}_{N,\mathbf{H}}} = 2I-\sum_{E,i}h_{E,i}(h_{E,i}-1).
		\end{equation}

		\item The phase space is a subvariety given by equation \(q=1\) in the \(\mathcal{X}\)-cluster variety \(\mathcal{X}_{N,\mathbf{H}}\), where \(q\) is certain Casimir function. 
	\end{enumerate}	
\end{Conjecture}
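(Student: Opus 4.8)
The plan is to reduce the multi-patch statement to the single-patch results already announced and then assemble the numerology additively. I would proceed by induction on the number of reduction groups $\boldsymbol{\zeta}_{E,i}$ in the decoration. The base case of a single patch is exactly Lemma~\ref{lem:patch cluster chart}: part (c) gives the dimension drop $h^2-1$ and part (d) the Poisson-rank drop $h(h-1)$, while Theorem~\ref{th:Z,J} guarantees that the Goncharov--Kenyon Hamiltonians descend to the reduction. For the inductive step the key input is the compatibility Lemma $\{J_{E,i},J_{E',i'}\}\subset(J_{E,i},J_{E',i'})$, which ensures that after reducing by one ideal $J_{E,i}$ the remaining ideals still generate a Poisson ideal on the quotient. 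Thus the reductions can be carried out one group at a time, and at each step I would pass to a cluster chart (via Lemma~\ref{lem:patch cluster chart}(a)) in which the relevant ideal becomes binomial, so that the reduction is a toric/monomial quotient and the new coordinates again have cluster Poisson form, allowing the induction to continue.

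With this scheme in place, claims (b) and (c) follow by additivity. For (b), I would start from $\dim\mathcal{X}=|F(\Gamma)|=2\operatorname{Area}(N)$ (Theorem~\ref{th: number of faces}) and subtract $h_{E,i}^2-1$ for each group, giving \eqref{eq:dim X_red}; the phase space is then the level set $q=1$, one dimension lower. For (c), I would start from $\operatorname{rk}\{\cdot,\cdot\}=2g(\Sigma^D)=2I$ and subtract $h_{E,i}(h_{E,i}-1)$ at each step, yielding \eqref{eq:rank X_red}. The subtlety here is to verify that successive reductions remove rank independently; this should follow from the compatibility Lemma, since the reduction Hamiltonians $H_{a,b}$ of distinct groups Poisson-commute modulo the ideals, so their Hamiltonian vector fields span transverse directions in the symplectic leaves.

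For the integrability statement (a), I would combine Theorem~\ref{th:Z,J} with a dimension count. The descended functions $\mathcal{Z}_{a,b}(\mathbf{w})$ for $(a,b)$ interior to $N$ remain in involution, since they commute on $\mathcal{X}_\Gamma$ and the bracket descends. One then needs exactly $I-\sum_{E,i}h_{E,i}(h_{E,i}-1)/2$ of them to be functionally independent on $\{q=1\}$, which is half the reduced rank from (c). The $I$ original Hamiltonians satisfy precisely $\sum_{E,i}h_{E,i}(h_{E,i}-1)/2$ linear relations imposed by conditions~\eqref{eq:polyn mut cond} (equivalently by Lemma~\ref{lem:J implies I}), and the content of (a) is that no further relations appear on $\mathbf{V}(J)$.

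The hard part will be twofold. First, one must identify the ``certain conditions'' on the decorated polygon under which $\mathbf{V}(J)$ is nonempty and of the expected dimension --- this is the Deligne--Simpson-type constraint flagged in the introduction, and for multi-sided decorations it is genuinely delicate, because the patches for different sides are brought into consecutive position by different sequences of zigzag transpositions which need not be simultaneously realizable. Second, and most importantly for (a), one must prove that the descended Hamiltonians remain \emph{algebraically independent} after reduction, i.e. that imposing the constraints $H_{a,b}=0$ and $C_a=1$ creates no unexpected coincidences among the $\mathcal{Z}_{a,b}$. This is where genericity of the weights, together with a careful analysis of the Sevostyanov algebra of Lemma~\ref{lem:Sevostyanov}, must be used to control the interaction between the reduction Hamiltonians and the integrals of motion; I expect this independence argument, rather than the additive numerology, to be the genuine obstacle.
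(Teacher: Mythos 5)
The first thing to say is that the statement you are addressing is the paper's \emph{main conjecture} (Conjecture~\ref{conj:generic}): the paper offers no proof of it, only supporting evidence, namely the single-patch results announced with reference to the work in preparation~\cite{Inprog} (Lemmas~\ref{lem:Sevostyanov}, \ref{lem:J implies I}, \ref{lem:patch cluster chart} and Theorems~\ref{th: mutation ideal}, \ref{th:mut red patch}, \ref{th:Z,J}) together with the worked $E_7$ and $E_8$ reductions of Sections~\ref{sec:E7} and~\ref{sec:E8}. So there is no paper proof to compare against; the question is whether your scheme actually closes the conjecture, and it does not. To your credit, the numerology in parts (b) and (c) — starting from $\dim\mathcal{X}=|F(\Gamma)|=2\operatorname{Area}(N)$ and $\operatorname{rank}\{\cdot,\cdot\}=2g(\Sigma^D)=2I$ and subtracting the per-group contributions — is exactly consistent with the paper, as is your reading of part (a): the paper states explicitly after Theorem~\ref{th:Z,J} that the conjecture amounts to the assertion that the $\sum_{E,i}h_{E,i}(h_{E,i}-1)/2$ linear relations forced by~\eqref{eq:polyn mut cond} are the \emph{only} relations among the descended Hamiltonians.

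There are, however, two genuine gaps. First, your induction on the number of groups $\boldsymbol{\zeta}_{E,i}$ does not go through as stated: Lemma~\ref{lem:patch cluster chart} applies to a cluster chart $\mathcal{X}_\Gamma$ of a consistent bipartite graph containing a patch, but after reducing by the first ideal $J_{E,i}$ the quotient is no longer known to be the cluster variety of any bipartite graph on the torus, so the lemma cannot simply be re-invoked for the next group; the compatibility lemma $\{J_{E,i},J_{E',i'}\}\subset(J_{E,i},J_{E',i'})$ guarantees the remaining ideals stay Poisson, but not that a chart with the required binomial/patch structure exists on the quotient. Finding a single seed in which the \emph{full} ideal $J$ becomes binomial is precisely the content of Conjecture~\ref{conj:reductioncord}, which is itself open and which your argument would need as input; tellingly, in the $E_7$ and $E_8$ examples the paper handles both patches simultaneously on one graph with one mutation sequence, rather than sequentially. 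Second, and more fundamentally, your treatment of part (a) defers the algebraic independence of the descended Hamiltonians on $\mathbf{V}(J)$ to ``genericity plus analysis of the Sevostyanov algebra'' — but that independence \emph{is} the conjecture, not a technical step one can postpone; a proposal that reduces the statement to this assertion plus unproven announced lemmas is a correct organization of the problem (and your identification of the Deligne--Simpson-type nonemptiness question matches the paper's own assessment), but it is a restatement of what must be proven rather than a proof.
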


Similarly to the discussion after Theorem~\ref{th:Z,J} above, the property \ref{it:red} means that we can descend dimer partition function \(\mathcal{Z}_{N}(\lambda,\mu|\mathbf{x})\) to a function \(\mathcal{Z}_{N,\mathbf{H}}(\lambda,\mu|\mathbf{w})\). In latter serves as an equation of the spectral curve of reduced integrable system.

More precisely the reduction mentioned above is expected to include the following analogs of Lemma~\ref{lem:patch cluster chart}.

\begin{Conjecture}
    \label{conj:reductioncord}
	(a) There exists a sequence of cluster mutations \(\boldsymbol{\mu}\colon \mathsf{s} \mapsto \mathsf{t} \) such that the image of the ideal \(J\) is generated by \(\{m_y^{(I)}+1\mid1 \leq I \leq \sum_{E,i} (h_{E,i}-1)(h_{E,i}+2)/2\}\), where  \(\mathbf{y}\) denotes cluster variables in seed $\mathsf{t}$ and \(m_y^{(I)}\) are monomial in \(\mathbf{y}\).
	
	(b) One can define cluster coordinates \(\mathbf{w}\) on the reduction variety \(\mathcal{X}_{\mathrm{red}}\) of the chart \(\mathcal{X}_\mathsf{t}\) to be monomials in \(\mathbf{y}\). 
\end{Conjecture}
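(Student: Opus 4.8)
The plan is to bootstrap from the single-collection statement of Lemma~\ref{lem:patch cluster chart} by induction on the number of collections $(E,i)$ appearing in the decoration, reducing one collection at a time and, after each reduction, working on the resulting reduced cluster variety. The additivity of the target count $\sum_{E,i}(h_{E,i}-1)(h_{E,i}+2)/2$, each summand of which is exactly the single-patch count of Lemma~\ref{lem:patch cluster chart}(a), is the combinatorial shadow of this scheme and strongly suggests that the reductions decouple. Note that a naive attempt to arrange all collections into disjoint patches simultaneously must fail: zigzags parallel to distinct sides have nonzero intersection number, so their patches necessarily overlap in $\Gamma$. The inductive descent circumvents this, since only one patch need be present at any given step.

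\textbf{Steps.} First I would single out one collection $\boldsymbol{\zeta}_{E_1,i_1}$ and, using the zigzag transpositions of Proposition~\ref{prop:zigzag transposition} to make its zigzags consecutive and Lemma~\ref{lem:zigzag patch} to straighten them, realize it as a patch $\Pi_{l,h}$. Lemma~\ref{lem:patch cluster chart} then supplies mutations $\boldsymbol{\mu}_{1}\colon\mathsf{s}\mapsto\mathsf{t}_1$ putting $J_{E_1,i_1}$ into binomial form $\{m^{(I)}_{y}+1\}$ together with monomial reduction coordinates $\mathbf{w}^{(1)}$ on the corresponding reduction $\mathcal{X}^{(1)}_{\mathrm{red}}$, which by part (b) carries a cluster Poisson structure $\{w_i,w_j\}=b^{\mathrm{red}}_{ij}w_iw_j$. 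Next I would descend the remaining ideals $J_{E',i'}$ to $\mathcal{X}^{(1)}_{\mathrm{red}}$; the Poisson-closure lemma above, giving $\{J_{E,i},J_{E',i'}\}\subset(J_{E,i},J_{E',i'})$, guarantees that each $J_{E',i'}$ maps to a well-defined ideal of the Poisson centralizer and that the total descended ideal is again Poisson-closed. Iterating over all collections and composing the mutation sequences yields the required seed $\mathsf{t}$, with $J$ generated by the disjoint union of the binomial generators, of total cardinality $\sum_{E,i}(h_{E,i}-1)(h_{E,i}+2)/2$.

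\textbf{Part (b) and bookkeeping.} Once (a) holds, the reduction coordinates are built monomial by monomial. The constraints read $m^{(I)}_y=-1$, and a Laurent monomial $y^{\mathbf v}$ descends to a function on the reduction precisely when its exponent vector $\mathbf v$ is Poisson-orthogonal, with respect to the exchange matrix of $\mathsf t$, to the exponent vectors of all the $m^{(I)}_y$ modulo the span of those vectors. Choosing a lattice basis $\mathbf w$ of this orthogonal complement, exactly as in the worked example of Section~\ref{ssec:l=3 h=1}, produces monomial coordinates whose pairwise brackets are automatically log-canonical, hence of cluster form, defining the reduced seed. Summing the single-patch contributions of Lemma~\ref{lem:patch cluster chart}(c)--(d) then reproduces the dimension and rank formulas \eqref{eq:dim X_red} and \eqref{eq:rank X_red}.

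\textbf{Main obstacle.} The genuine difficulty is that after the first reduction the variety $\mathcal{X}^{(1)}_{\mathrm{red}}$ is no longer the cluster variety of a consistent dimer model, so Lemma~\ref{lem:patch cluster chart}, whose proof relies on the bipartite graph $\Gamma$ and its patch, cannot be invoked verbatim at the inductive step. What is really needed is an \emph{intrinsic, cluster-theoretic reformulation} of the patch data and of Lemma~\ref{lem:patch cluster chart}: a characterization of the Sevostyanov-type ideals $J_{E',i'}$ and their monomializing mutations purely in terms of the seed $(b^{\mathrm{red}},\mathbf w)$, together with a proof that the image of a patch ideal under reduction by a transverse patch ideal is again of this intrinsic type. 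Establishing this stability of the patch/Sevostyanov structure under reduction --- in particular controlling the interaction between collections parallel to distinct sides, whose patches overlap --- is where the real work lies; the same-side case, in which the collections are merely nested blocks in the cyclic order of parallel zigzags and the patches can be stacked, is comparatively routine and can serve as a warm-up.
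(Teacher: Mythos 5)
First, a calibration: the statement you were asked to prove is presented in the paper as a \emph{conjecture}, and the paper contains no proof of it. Conjecture~\ref{conj:reductioncord} is supported there only by the single-collection Lemma~\ref{lem:patch cluster chart} (itself deferred to the work in progress \cite{Inprog}) and by explicit case-by-case verifications for the $E_7$ and $E_8$ decorated polygons: the mutation sequence $\mu_{13}\mu_{15}\mu_5\mu_7$ producing the binomial ideal \eqref{eq:JE7 in y} in Section~\ref{sec:E7}, and the sequence \eqref{muE8} producing \eqref{eq:E8 J in y} with monomial coordinates \eqref{wE8} in Section~\ref{sec:E8}. So there is no paper proof to compare your attempt against; the only question is whether your argument closes the conjecture, and it does not.

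The decisive gap is the one you flagged yourself, and it is not a technicality: after reducing the first collection, $\mathcal{X}^{(1)}_{\mathrm{red}}$ is no longer the cluster chart of a consistent dimer model, so Lemma~\ref{lem:patch cluster chart} cannot be invoked at the next step, and you supply no intrinsic substitute — at that point the induction simply stops. There is a second, compounding problem: even granting such a substitute, your scheme performs mutations on the \emph{reduced} varieties, whereas part (a) demands a single sequence $\boldsymbol{\mu}\colon \mathsf{s}\mapsto\mathsf{t}$ of mutations of the \emph{ambient} seed. Converting your construction into that form requires lifting mutations of the $\mathbf{w}$-variables to mutation sequences of the ambient cluster variables, and this lifting is itself only conjectural — the paper says exactly this after \eqref{wE8} and verifies it only for some of the variables — so your inductive step assumes a statement of essentially the same strength as the one being proved. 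Finally, the structural claim you use to motivate the induction, namely that realizing all patches simultaneously ``must fail'' because zigzags attached to distinct sides intersect, is contradicted by the paper's own supporting evidence: in the $E_8$ example the two collections sit on non-parallel sides of the triangle, their zigzags do intersect, and yet both patches coexist on the hexagonal grid of Fig.~\ref{fi:3*9graph} (overlap of patches is not an obstruction), with the single sequence \eqref{muE8} monomializing the whole ideal $J$ at once; the same simultaneity holds in the $E_7$ case. The evidence thus points toward a direct simultaneous construction rather than a one-collection-at-a-time descent. What you have is a plausible strategy with an honestly flagged missing ingredient, not a proof.
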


Moreover, there should exist different seeds \(\mathsf{t}\) which are convenient for reduction (i.e., the ideal is given by the binomials). These seeds will define different seeds \(\mathsf{s}_{\mathrm{red}}\), which should be related by mutations in the cluster variety \(\mathcal{X}_{\mathrm{red}}\).

\bigskip

Perhaps it is more accurate to study the symplectic leaves. We will call extended decoration an assignment of numbers  \(z_{E,i}\in \mathbb{C}^*\) for all parts \(h_{E,i}\). These numbers would be zigzag variables corresponding to zigzags \(\zeta_{E,i,j}\) (but do not depend on \(j\)). For any coefficients \(r_{E,i}\) such that \(\sum_{E,i} r_{E,i}E/|E|_{\mathbb{Z}}=0\) the number \(\prod \zeta_{E,i}^{r_{E,i}}\)
mean to be Casimir function. Let us denote the corresponding symplectic leaf by \(\mathcal{X}_{N,\mathbf{H},\mathbf{z}}\).  The formula \eqref{eq:rank X_red} gives dimension of \(\mathcal{X}_{N,\mathbf{H},\mathbf{z}}\) for generic values of \(\mathbf{z}\) subject of constraint \(\prod z_{E,i}^{h_{E,i}}=1\). In the analogy with space of local systems, the varieties \(\mathcal{X}_{N,\mathbf{H},\mathbf{z}}\) are analogs of the character varieties.

At this point we do not know precise form the conditions in the Conjecture \ref{conj:generic}. The obvious necessary condition is nonnegativity of the dimension of the phase space. Another quite natural condition is a boundness of the partition \(h_{E,i}\leq h_{E,N}\), this follows from the condition \(h \leq  l\) in the definition of the patch.

Note that the problem whether  \(\mathcal{X}_{N,\mathbf{H},\mathbf{z}}\) is empty can be viewed as an analogue of the Deligne-Simpson problem in case of semi-simple orbits. It is tempting to conjecture that analogously to the solution of the latter one \cite{Crawley:2004indecomposable} the answer could be given in terms of root system of a Kac-Moody Lie algebra. Namely to a decorated Newton polygon one have to assign Dynkin diagram and vector in the root lattice. 

\begin{Example} \label{Ex:decorated polygons} 
	(a) Assume that Newton polygon is a triangle with vertices \((0,0), (n,0), (0,n)\). Let partitions assigned to the sides of this polygons be \(h^{(i)}_1\geq h^{(i)}_1 \ge \dots h^{(i)}_{k_i}\), for \(i=1,2,3\). Then (conjecturally) the corresponding Dynkin diagram has a star shape, with one central vertex and three lines departing from it. The length of these lines are equal to \(k_1-1, k_2-1, k_3-1\). The toot markings corresponding to $i$-th line are given by \(h_1^{(i)},h^{(i)}_1+h^{(i)}_2,\dots, h^{(i)}_1+h^{(i)}_2+\dots,+h^{(i)}_{k_i}=n\) from end to the center. In particular the marking corresponding to the central vertex is equal to \(n\).
	
	In Fig.~\ref{fi:Example Triangle} we depicted example with \(n=6\) and partitions \((1^6), (2^3), (3^2)\). Note the obtained Dynkin diagram is \(E_8^{(1)}\) diagram. Moreover the markings are equal to markings of the imaginary root of \(E_8^{(1)}\). This example correspond to $q$-Painlev\'e equation with \(W^\mathrm{ae}(E_8)\) symmetry, see Section~\ref{sec:E8} and in particular Rem.~\ref{rem:E8 two polygons}.
	\begin{figure}[h]
	\centering
        \centering
        \includegraphics[scale=1]{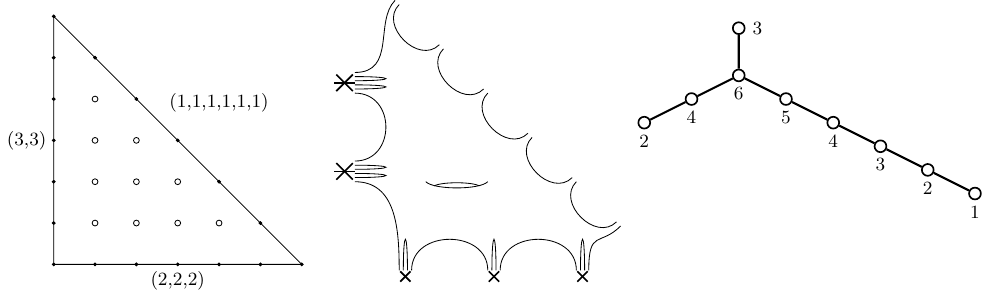}
		\caption{\label{fi:Example Triangle} Decorated Newton polygon, corresponding spectral curve with singular points at infinity, corresponding Dynkin diagram with root markings}
	\end{figure}

	(b) Assume that Newton polygon in rectangular with with vertices \((0,0), (n,0), (n,m), (0,m)\). Assume that partitions corresponding to vertical sides are \(h^{(1)},h^{(3)}\) and partitions corresponding to horizontal sides are \(h^{(2)}, h^{(4)}\). Then (conjecturally) the corresponding Dynkin diagram has an ``H'' shape. It means that we have two central nodes connected by one edge. There are two lines from the first one of the length \(l(h^{(1)})-1, l(h^{(3)})-1\) and two lines from the second one of the of the length \(l(h^{(2)})-1, l(h^{(4)})-1\). The root markings of the central vertices are equal to \(m\) and \(n\) and markings for nodes on the lines are defined as in previous case.
	
	In Fig.~\ref{fi:Example Rectangle} we depicted example with \(n=4,m=2\), and partitions \((2)\) for vertical sides and \((1^4)\) for horizontal sides. We obtain then \(E_7^{(1)}\) Dynkin diagram, and moreover, with the markings of the imaginary root of \(E_7^{(1)}\). This example corresponds to $q$-Painlev\'e equation with \(W^\mathrm{ae}(E_7)\) symmetry, see Section~\ref{sec:E7}. 
	\begin{figure}[h]
		\centering
        \includegraphics[scale=1]{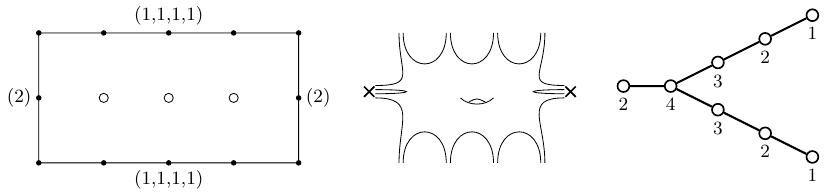}
		\caption{\label{fi:Example Rectangle} \label{Fig:decoratedE7} Decorated Newton polygon, the corresponding spectral curve with singular points at infinity and Dynkin diagram with root markings.}
	\end{figure}
\end{Example}
	An essential class of examples comes from the pointed Painlev\'e polygons. We discuss them in the next section.

\bigskip 

At the end of the section let us return to zigzag mutations. They correspond to the isomorphisms between reductions of the Goncharov-Kenyon integrable systems.

Consider consistent dimer model with decoration \((\Gamma, [\wt], \bigcup \boldsymbol{\zeta}_{E,i})\). For any collection \(\zeta_{E,i}\) using zigzag transpositions and Lemma~\ref{lem:zigzag patch} one can transform this dimer model to a one, where \(\zeta_{E,i}\) confine the \(\Pi_{l,h}\) patch with \(l=h_{E,N}\) and \(h=h_{E,i}\). Then we can perform zigzag mutation along \(\Pi_{l,h}\) as in Definition~\ref{def:zigzag mutations}.  The decoration is changed as follows: the collection \(\boldsymbol{\zeta}_{E,i}\) will be replaced by the collection of \(h_{E,N}-h_{E,i}\) zigzags going into the opposite direction. For any other collection \(\boldsymbol{\zeta}_{E',i'}\) its zigzags after mutation remain parallel and ordered, i.e. form new collection. 

Such zigzag mutation corresponds to mutation of decorated Newton polygons in the sense of definition below (the proof of Lemma~\ref{lem:zigzag mut - polyn mut} works without changes). 

\begin{Definition}
	Let \((N,\mathbf{H})\) be a decorated Newton polygon, \(E\) is a side of \(N\) and \(h\) belongs to the partition \(H_E\). Let \(\bar{H}\) denote partition corresponding to the antiparallel to \(E\) side of \(N\),  if it exists, and \(\bar{H}=\varnothing\) otherwise. Then mutation \((\widetilde{N},\widetilde{\mathbf{H}})=\mu_{E,h}(N,\mathbf{H})\) is defined as \(\widetilde{N}=\mu_{E,h}(N)\) with decoration
	\begin{itemize}		
		\item \(\tilde{H}_{\tilde{E}'}=H_{E'}\), where \(E'\) is a side of \(N\) neither parallel, nor antiparallel to \(E\), and \(\tilde{E}'\) is the corresponding side of \(\widetilde{N}\).
		\item Partition \(\widetilde{H}\), corresponding to the side parallel to \(E\) is equal to \(H_E\setminus h\).
		\item Partition \(\widetilde{H}\), corresponding to side antiparallel to \(E\) is equal to  \(\bar{H} \sqcup (h_{E,N}-h)\).
	\end{itemize}	
\end{Definition}
The definition of zigzag quiver (Definition~\ref{def:zigzag quiver}) for decorated polygons changes as follows:
\begin{Definition}
	Let \((N,\mathbf{H})\) be a decorated Newton polygon. \emph{Zigzag quiver} \(\mathcal{Q}^D\) is defined as 
	\begin{itemize}
		\item vertices correspond to the sides, namely for any side \(E\) there are \(l(H_{E})\) vertices;
		\item number of edges, between vertices corresponding to the sides \(E\) and \(E'\), is equal to \(\dfrac{\det(E,E')}{|E|_{\mathbb{Z}}|E'|_{\mathbb{Z}}}\).
	\end{itemize}
\end{Definition}
Equivalently, the number of edges is equal to the oriented area of parallelogram with sides given by two primitive vectors parallel to \(E\) and \(E'\). In terms of consistent dimer models with decoration the vertices of the quiver \(\mathcal{Q}^D\) are in one two one correspondence with collections of parallel zigzags \(\boldsymbol{\zeta}_{E,i}\) and number of edges between vertices corresponding to \(\boldsymbol{\zeta}_{E,i}\) and \(\boldsymbol{\zeta}_{E',i'}\) is equal to \(\det([\zeta_{E,i,1}],[\zeta_{E',i',1}])\).

\begin{Example}
	The zigzag quivers for the decorated Newton polygons studied in Example~\ref{Ex:decorated polygons} are depicted in Fig.~\ref{fi:quivers decoration}.
	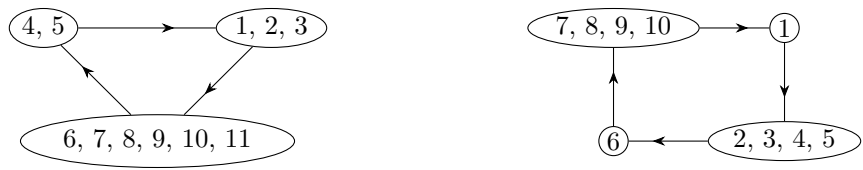
\begin{figure}[h]
		\centering
		\begin{tikzpicture}[font=\small]
			\def\xs{1.5}
			\def\ys{1.5}

			\begin{scope}
				
				\node[styleNode, ellipse, minimum height=0.7cm] (ybottom) at(0,0){6, 7, 8, 9, 10, 11};
				
				\node[styleNode, ellipse] (yright) at(\xs,\ys){1, 2, 3};
				
				\node[styleNode, ellipse] (yleft) at(-\xs,\ys){4, 5};
				
				\draw[styleArrow](yright) to[] (ybottom); 		
				\draw[styleArrow](ybottom) to[] (yleft); 		
				\draw[styleArrow](yleft) to[] (yright); 		
			\end{scope}
			\begin{scope}[shift={(4*\xs,0)}]
				\node[styleNode] (y1) at(0,0){6};
				
				\node[styleNode, ellipse] (y2) at(0,\ys){7, 8, 9, 10};
				
				\node[styleNode] (y3) at(1.5*\xs,\ys){1};

				\node[styleNode, ellipse] (y4) at(1.5*\xs,0){2, 3, 4, 5};
				
				\draw[styleArrow](y1) to (y2); 		
				\draw[styleArrow](y2) to (y3); 		
				\draw[styleArrow](y3) to (y4); 		
				\draw[styleArrow](y4) to (y1);

			\end{scope}
		\end{tikzpicture}
		\caption{\label{fi:quivers decoration} On the left quiver corresponding to the decorated Newton polygon on Fig.~\ref{fi:Example Triangle}, on the right the one for Fig.~\ref{fi:Example Rectangle}}
	\end{figure}
	
	Here and below several labels of vertices inside one ellipse means that that corresponding vertices have the same edges. In terms of dimer models such vertices correspond to parallel zigzags. 
	
	One can show (see Sections \ref{sec:E7}, \ref{sec:E8}) that Weyl groups of the  Dynkin diagrams depicted on Figs.~\ref{fi:Example Triangle} and \ref{fi:Example Rectangle} are embedded into cluster modular groups of these quivers.  
\end{Example}

In the definition of mutation of consistent dimer model we did not discuss mutation of the weights \(\wt\). It is defined only on the submanifolds \(\mathbf{V}(J)\) and given by Theorem~\ref{th: mutation ideal}. The following Theorem and Conjecture are generalizations of Proposition~\ref{prop:zigzag4 quivers} and Theorem~\ref{th:mut red patch}. 

\begin{Theorem}[\cite{Inprog}]
	Under the assumptions above, zigzag mutation of consistent dimer model with decoration gives mutation of cluster seed with adjacency matrix given by \(\mathcal{Q}^D\) and cluster variables are equal to zigzag variables.
\end{Theorem}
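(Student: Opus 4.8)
The plan is to establish the two assertions separately, exactly mirroring the proof of Proposition~\ref{prop:zigzag4 quivers} (which settled the case $l=2,\,h=1$), and then to glue the local picture of a single patch to the global decorated quiver. First I would fix the collection $\boldsymbol{\zeta}_{E,i}$ along which we mutate, with $h=h_{E,i}$ and $l=h_{E,N}$. Using zigzag transpositions (Proposition~\ref{prop:zigzag transposition}) together with Lemma~\ref{lem:zigzag patch}, I would bring the $h$ zigzags of the collection into a hexagonal patch $\Pi_{l,h}$; these preparatory moves are permutations and compositions of $4$-gon mutations, so they do not alter the mutation class of the seed. On the reduction locus $\mathbf{V}(J)$ the Casimirs $C_a=z_a z_{a+1}^{-1}$ all equal $1$ (see \eqref{eq:C,H}), so the $h$ zigzags of the collection carry one common zigzag variable $z_{E,i}$; this is precisely what lets the whole collection be a single vertex of the decorated quiver $\mathcal{Q}^D$, with extended decoration as in Section~\ref{ssec:reduced GK}.

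For the quiver part I would start from the transformation of homology classes recorded in the proof of Lemma~\ref{lem:zigzag mut - polyn mut},
\[
	[\tilde{\zeta}_1]=\dots=[\tilde{\zeta}_{l-h}]=-[\zeta_1]=\dots=-[\zeta_h],\qquad [\tilde{\eta}_j]=[\eta_j],\qquad [\tilde{\xi}_j]=[\xi_j]+[\zeta_1],
\]
and substitute these into $b^D_{ij}=\det([\zeta_i],[\zeta_j])$ (Definition~\ref{def:zigzag quiver}), using bilinearity of $\det$. For the mutated vertex $k=(E,i)$ one gets $\tilde{b}^D_{kj}=-b^D_{kj}$ because $[\tilde{\zeta}_k]=-[\zeta_k]$, reproducing the first line of \eqref{eq:mutation rule}. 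For a pair $i,j\neq k$ the increment is a sum of terms $\det([\zeta_k],[\zeta_j])$ and $\det([\zeta_i],[\zeta_k])$, arising exactly for the $\xi$-type zigzags, and the $\xi/\eta$ dichotomy is governed by the sign of the pairing with $[\zeta_k]$ (equivalently by the sign condition in the polygon mutation \eqref{eq:mut segments} of Proposition~\ref{prop:mutatation edges}). This is what has to be matched against the term $\tfrac12\bigl(b_{ik}|b_{kj}|-b_{jk}|b_{ki}|\bigr)$.

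For the variable part I would use Theorem~\ref{th: mutation ideal}: the partition function undergoes the polynomial mutation \eqref{eq:PolMut} with $\nu=\mu(1+C\lambda^{-1})$, where $C\lambda^{-1}$ is the common value on $\mathbf{V}(J)$ of the zigzag variables of the mutated collection. Reading off, via Theorem~\ref{th:zigzags boundary}, the factorization of $\mathcal{Z}|_{E'}$ for each affected side $E'$, I would extract the transformation of zigzag variables: the mutated collection inverts, $\tilde{z}_{E,i}=z_{E,i}^{-1}$, while every other collection $\boldsymbol{\zeta}_{E',i'}$ is rescaled by a factor $\bigl(1+z_{E,i}^{\,\sgn b^D_{(E',i'),(E,i)}}\bigr)^{b^D_{(E',i'),(E,i)}}$. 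Comparing with the second line of \eqref{eq:mutation rule} then identifies the zigzag mutation with the cluster mutation of the seed $(b^D,\mathbf{z})$ at the vertex $(E,i)$.

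The hard part will be the bookkeeping in the last two steps rather than any single conceptual difficulty. In the quiver step the pairings $b^D_{ij}=\det([\zeta_i],[\zeta_j])$ need not be $\pm1$, and a single zigzag may cross the patch with multiplicity, so one must check that the \emph{multiplicities} and signs of the $\det$-increments assemble exactly into the quadratic mutation term, not merely up to $\pm1$. In the variable step the polynomial mutation \eqref{eq:PolMut} only directly controls the two sides parallel and antiparallel to $E$; relating the $(1+z)^{\pm}$ factors picked up by the remaining sides to the $\det$-pairings (and hence to the exponents $b^D$ and the signs $\sgn b^D$) forces one to use the geometry of the patch, i.e. the $\eta/\xi$ maps and the relative-homology computation behind Lemma~\ref{lem:zigzag mut - polyn mut}, rather than the polynomial identity alone.
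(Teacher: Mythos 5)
A preliminary remark: the paper itself does not prove this theorem. It is one of the statements attributed to \cite{Inprog}, and the paper explicitly warns that the reader ``can also consider these statements as conjectures supported by certain examples and computations''. So your proposal can only be measured against the evidence the paper does contain: the proof of the length-4 case (Proposition~\ref{prop:zigzag4 quivers}), the polygon-level Lemma~\ref{lem:zigzag mut - polyn mut}, and the worked example of Section~\ref{ssec:l=3 h=1}. Your overall architecture --- reduce to a patch by transpositions and Lemma~\ref{lem:zigzag patch}, verify the \(b\)-matrix transformation through the polygon mutation, verify the variables through the partition function --- is the natural generalization of that evidence. The quiver half of your plan does go through: since the entries of the decorated quiver \(\mathcal{Q}^D\) are determinants of \emph{primitive} side vectors, compatibility of the polygon mutation of Proposition~\ref{prop:mutatation edges} with the matrix mutation rule \eqref{eq:mutation rule} is a short bilinearity computation (the shear \(e'\mapsto e'+\det(u,e')\,u\), applied exactly when \(\det(u,e')>0\), reproduces the term \(\tfrac12\bigl(b_{ik}|b_{kj}|-b_{jk}|b_{ki}|\bigr)\), including entries of absolute value larger than \(1\)), so the ``multiplicity'' worry you raise there is not an actual obstacle.

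The genuine gap is in the variable half. You propose to \emph{extract} the transformation of the zigzag variables of the other collections from the polynomial mutation of \(\mathcal{Z}\) (Theorem~\ref{th: mutation ideal}) together with the boundary factorization of Theorem~\ref{th:zigzags boundary}. This cannot work as stated. Theorem~\ref{th:zigzags boundary} determines the zigzag variables only up to the monomial normalization hidden in ``\(\sim\)'', and the identity \(\tilde{\mathcal{Z}}(\lambda,\nu)=\mathcal{Z}(\lambda,\mu)\) holds only after the change of spectral parameter \(\nu=\mu(1+C\lambda^{-1})\). If you actually track what this substitution does to the coefficients of \(\mathcal{Z}\) along a side transversal to \(E\), you find a purely \emph{monomial} transformation: one transversal part of the boundary is untouched, the other is rescaled by powers of \(C\). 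No factor \((1+z_k^{\pm 1})\) is visible at the level of the Laurent polynomial; the cluster factors \((1+z_k^{\sgn b})^{b}\) appear only when the zigzag \emph{weights} --- functions on \(H^1(\Gamma,\mathbb{C}^*)\) carrying \(\mu\) to the power of the intersection number --- are re-expressed through the new parameter \(\nu\). In other words, the partition-function identity cannot distinguish the cluster transformation from its composition with a monomial change of variables; this is precisely why the paper proves part (b) of Proposition~\ref{prop:zigzag4 quivers} from the explicit edge-weight transformation \eqref{eq:weights zigzag4}, and not from the partition-function statement of Proposition~\ref{prop:zigzag4}. To close your argument you must exhibit (or cite from \cite{Inprog}) the generalization of \eqref{eq:weights zigzag4} to the patch \(\Pi_{l,h}\) --- the map \(\wt\mapsto\widetilde{\wt}\) whose existence Theorem~\ref{th: mutation ideal} asserts, and of which the paper displays only the \(l=3\), \(h=1\) instance \eqref{eq:l3h1 edges} --- and then compute its effect on the weights of the transversal zigzags. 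That weight-level computation, not the boundary bookkeeping, is the actual content of the theorem beyond length 4.
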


\begin{Conjecture}
	Under the assumptions above, zigzag mutation of consistent dimer model with decoration is an isomorphism of cluster varieties \(\mathcal{X}_{N,\mathbf{H}}\) and \(\mathcal{X}_{\widetilde{N},\widetilde{\mathbf{H}}}\). Furthermore, it induces isomorphism of reduced Goncharov-Kenyon integrable systems.
\end{Conjecture}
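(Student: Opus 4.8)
The plan is to reduce the decorated statement to the single-patch results already available and then to glue them together by reduction in stages. First I would use the zigzag transpositions $R$ of Proposition~\ref{prop:zigzag transposition} together with Lemma~\ref{lem:zigzag patch} to bring the collection $\boldsymbol{\zeta}_{E,i}$ along which we mutate into a consecutive hexagonal patch $\Pi_{l,h}$ with $l=h_{E,N}$ and $h=h_{E,i}$, while leaving every other collection $\boldsymbol{\zeta}_{E',i'}$ parallel and cyclically ordered. Since each $R$ is a composition of $4$-gon mutations and a permutation that preserves the bipartite graph and the partition function, it is a cluster Poisson isomorphism; by the Lemma asserting that $J_{\boldsymbol{\zeta}_{E',i'}}$ is independent of the auxiliary choices, these transpositions descend to isomorphisms of the corresponding reductions. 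Hence without loss of generality the mutated collection confines a patch and Definition~\ref{def:zigzag mutations} applies.

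Next I would invoke reduction in stages. The full ideal $J$ is generated by the $J_{E',i'}$, and by the Lemma preceding Definition~\ref{def:dimer decorations} one has $\{J_{E,i},J_{E',i'}\}\subset(J_{E,i},J_{E',i'})$; thus $J$ is closed under the Poisson bracket and the reduction by $J$ can be computed by reducing successively by the individual $J_{E',i'}$. I would first reduce by all spectator ideals $J_{E',i'}$ with $(E',i')\ne(E,i)$. The zigzag mutation leaves these collections parallel and ordered, and by Theorem~\ref{th: mutation ideal} the weight transformation $\wt\mapsto\widetilde{\wt}$ acts on face and zigzag variables away from the patch as an ordinary cluster transformation, so it carries the spectator generators of $J_{E',i'}$ to those of $\widetilde{J}_{E',i'}$. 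Therefore the mutation commutes with the spectator reductions, and it remains to treat a single patch inside the partially reduced variety.

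On that remaining patch I would apply the local results directly: Theorem~\ref{th: mutation ideal} furnishes a weight transformation mapping $\mathbf{V}(J_{l,h})$ to $\mathbf{V}(J_{l,l-h})$ with the reduced partition functions related by the polynomial mutation of Definition~\ref{def:mut polyn}, and Theorem~\ref{th:mut red patch} gives $b^{\mathrm{red}}=\tilde b^{\mathrm{red}}$ and $\mathbf{w}=\tilde{\mathbf{w}}$ for the reduced seeds. Combined with the Theorem immediately preceding this Conjecture, which identifies the effect on zigzag variables with a mutation of the seed $(\mathcal{Q}^D,\mathbf{z})$, this exhibits the two fully reduced cluster Poisson varieties $\mathcal{X}_{N,\mathbf{H}}$ and $\mathcal{X}_{\widetilde N,\widetilde{\mathbf{H}}}$ as related by a cluster mutation, hence isomorphic. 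For the integrable-system claim I would use that $\mathcal{Z}_{N,\mathbf{H}}$ and $\mathcal{Z}_{\widetilde N,\widetilde{\mathbf{H}}}$ coincide after the substitution $\mu=\nu/(1+C\lambda^{-1})$ of Definition~\ref{def:mut polyn}; since this is an $SA(2,\mathbb{Z})$-type change of the spectral coordinates, the reduced spectral curves are identified, and by Theorem~\ref{th:Z,J} the descended Goncharov--Kenyon Hamiltonians correspond, yielding the isomorphism of reduced integrable systems.

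The hard part is precisely what forces this to remain a conjecture: justifying reduction in stages as a genuine operation on cluster Poisson varieties. The commutation relation only guarantees that $J$ is Poisson-closed; to reduce by the spectator ideals first one must know that after each stage the next collection still cuts out a coisotropic subvariety of the expected codimension, that the monomial reduction charts of Lemma~\ref{lem:patch cluster chart} (in the form of Conjecture~\ref{conj:reductioncord}) persist, and that the codimensions add, so that the total dimension and rank are given by \eqref{eq:dim X_red} and \eqref{eq:rank X_red} with no unexpected drop. This additivity is exactly the nonemptiness and genericity question of Deligne--Simpson type flagged in the text, so establishing it uniformly over all decorations is the decisive obstacle; the explicit computations of Sections~\ref{sec:E7} and \ref{sec:E8} would serve as the confirming special cases.
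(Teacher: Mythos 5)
You should first note that the paper does not prove this statement: it is stated as a Conjecture, its supporting machinery (Theorems~\ref{th: mutation ideal}, \ref{th:mut red patch}, Lemma~\ref{lem:zigzag patch}, and the two lemmas on the ideals \(J_{\boldsymbol{\zeta}_{E,i}}\)) is deferred to the paper in preparation \cite{Inprog}, and one key ingredient — the existence of monomial reduction charts for arbitrary decorations, Conjecture~\ref{conj:reductioncord} — is itself left conjectural even there. So there is no proof in the paper to compare yours against; the relevant question is whether your argument closes the gap, and it does not.

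Your overall strategy (use the transpositions of Proposition~\ref{prop:zigzag transposition} and Lemma~\ref{lem:zigzag patch} to bring the mutating collection into a patch, then invoke the single-patch Theorems~\ref{th: mutation ideal} and \ref{th:mut red patch}) is the natural one and matches how the paper's machinery is organized, but two steps are genuinely unsupported. First, your reduction in stages — reducing by the spectator ideals and then treating the remaining patch inside the partially reduced variety — does not follow from the Poisson-closure relation \(\{J_{E,i},J_{E',i'}\}\subset(J_{E,i},J_{E',i'})\); as you yourself concede, one needs coisotropicity and codimension additivity at every stage, which is exactly the Deligne--Simpson-type nonemptiness and genericity problem the paper flags as open, so this cannot be treated as a routine lemma. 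Second, your claim that the weight transformation of Theorem~\ref{th: mutation ideal} carries the spectator generators of \(J_{E',i'}\) to those of \(\widetilde{J}_{E',i'}\) is not part of that theorem: it only provides a map \(\mathbf{V}(J_{l,h})\to\mathbf{V}(J_{l,l-h})\) relating the two partition functions by polynomial mutation, and says nothing about how functions attached to other zigzag collections transform; since that map is defined only on a non-Poisson subvariety, commutation with the other reductions requires a separate argument that neither you nor the paper supplies. A smaller but real error: the substitution \(\mu=\nu/(1+C\lambda^{-1})\) is not an \(SA(2,\mathbb{Z})\)-type change of coordinates — the entire point of Definition~\ref{def:mut polyn} is that polynomial mutation is a different, non-monomial birational transformation; the identification of reduced spectral curves still holds (the map is birational, hence an isomorphism of smooth compactifications), but not for the reason you give. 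In sum, your write-up is a reasonable reduction of the conjecture to the single-patch results plus unproven gluing statements, which is precisely why the authors state it as a conjecture rather than a theorem.
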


Similarly to Theorem~\ref{th: mutation ideal} the dimer partition function \(\mathcal{Z}_{N,\mathbf{H}}\) and \(\mathcal{Z}_{\widetilde{N},\widetilde{\mathbf{H}}}\) are expected to be related by polynomial mutation.

In particular, this conjecture implies that dimension and Poisson rank given by formulas~\eqref{eq:dim X_red} and \eqref{eq:rank X_red} correspondingly are preserved under zigzag mutations. It is easy to check this directly.

\subsection{$2n$-gon face mutations}
Recall that duality \(\Sigma \mapsto \Sigma^D\) swaps faces and zigzags. We discussed above zigzag mutations, in particular for zigzags of length greater than 4. It is natural to ask for dual transformation which would correspond to faces of length greater than 4.

The first nontrivial example is given on Fig.~\ref{fi:6-gon mutation}. 
\begin{figure}[h]
    \centering
    \includegraphics{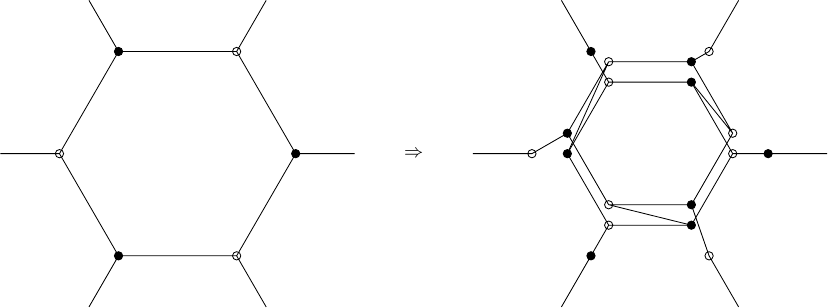}
    \caption{Mutation in 6-gon face}
    \label{fi:6-gon mutation}
\end{figure}
Here after mutation one 6-gon face was replaced by two new 6-gon faces. Moreover the genus of the surface was increased by 1. In other words in this surgery an additional handle is glued. The increasing of the genus by 1 in dual picture corresponds to increasing of the number of integral points inside the Newton polygon by 1 (which is fixed by the Hamiltonian reduction).

We hope to discuss this approach to face mutations elsewhere.

\section{Painlev\'e case} \label{sec:Painleve}

\subsection{\(q\)-difference Painlev\'e equations}

The \(q\)-difference Painleve equations were classified by Sakai \cite{Sakai:2001}. They are classified by their symmetry given by affine extended Weyl groups \(W^{\mathrm{ae}}(E_n)\). \footnote{For the symmetry type $E_1^{(1)}$ the symmetry group is slightly bigger and for $(E_1^{(1)})_{|\alpha^2|=8}$ and $E_2^{(1)}$ is slightly smaller.}
\begin{figure}[h]
	\begin{center}
			\begin{tikzcd}[row sep=scriptsize, column sep=scriptsize, font = \small]
				& & & & & & & (E_1^{(1)})_{|\alpha^2|=8} \arrow[rd] &\\
				{E_8^{(1)}} \arrow[r] & 
				{E_7^{(1)}}  \arrow[r] &	
				{E_6^{(1)}} \arrow[r] &
				{E_5^{(1)}}  \arrow[r]  &	
				{E_4^{(1)}} \arrow[r]  &
				{E_3^{(1)}}  \arrow[r]  &	
				{E_2^{(1)}} \arrow[r]  \arrow[ru] & {E_1^{(1)}}  &
				{E_0^{(1)}}
		\end{tikzcd}
	\end{center}
	\caption{$q$-Painlev\'e equations by symmetry type}
	\label{fi:qPeq}
\end{figure}
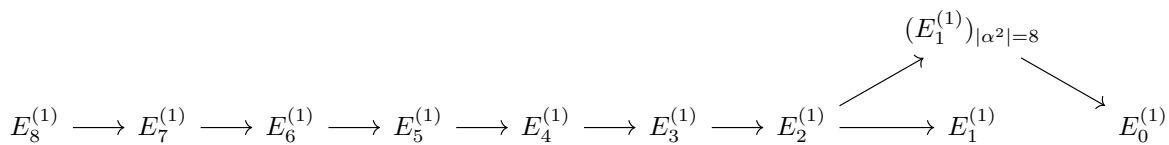

Here and below we use notations $E_5^{(1)}=D_5^{(1)}$, $E_4^{(1)}=A_4^{(1)}$, $E_3^{(1)}=(A_2+A_1)^{(1)}$, $E_2^{(1)}=(A_1+A_1)^{(1)}$, $E_1^{(1)}=A_1^{(1)}$. The word extended here means that external automorphisms acting on the affine Dynkin diagram of \(E_n^{(1)}\) are also added to the group \(W^{\mathrm{ae}}(E_n)\). Hence the group is generated by simple reflections \(s_i,\) \(i=0,\dots,n\) and also external automorphisms. Usually we denote the latter generators by \(\pi\) and write them as permutation of vertices of Dynkin diagram.

In the Sakai's geometric approach (see \cite{Sakai:2001}, \cite{KNY15} for many more details) to each \(q\)-Painlev\'e equation assigned a family of rational surfaces \(\mathcal{S}\). This family is parametrized by a collection of \emph{root variables} \(\mathbf{a}=(a_0,\dots,a_n)\subset (\mathbb{C}^*)^{n+1}\), i.e. for any \(\mathbf{a}\) we have a surface \(\mathcal{S}_{\mathbf{a}}\). The group \(W^{\mathrm{ae}}(E_n)\) acts multiplicatively on the root variables by formula 
\begin{equation}
\label{eq:actiononroot}
	s_i(a_j)=a_ja_i^{-C_{ij}},\; i=0,\dots,n,\qquad \pi(a_i)=a_{\pi(i)},
\end{equation}
where \(C\) denotes Cartan matrix of the root system \(E_8^{(1)}\). The multiplicative shift in difference equations \(q\) is a root variable corresponding to  imaginary root. The group \(W^{\mathrm{ae}}(E_n)\) acts on family \(\mathcal{S}\) such that \(w\colon \mathcal{S}_{\mathbf{a}}\rightarrow \mathcal{S}_{w(\mathbf{a})}\).

The group \(W^{\mathrm{ae}}(E_n)\) has a decomposition \(W^{\mathrm{ae}}(E_n)\simeq W(E_n) \ltimes P\), where \(W(E_n)\) is finite Weyl group and \(P\) is a weight lattice for \(E_n\). Any translation \(t \in P \subset W^{\mathrm{ae}}(E_n)\) introduces certain dynamics on the family \(\mathcal{S}\) which can be called Painlev\'e dynamics.

The case \(q=1\) is usually called autonomous. In this case translations  \( P \subset W^{\mathrm{ae}}(E_n)\) preserve the root variables \(\mathbf{a}\) (but act on \(\mathcal{S}_{\mathbf{a}}\) non-trivially). Moreover, there is a pencil of $P$-invariant elliptic curves on surfaces \(\mathcal{S}_{\mathbf{a}}\). Let \(\lambda_d,\mu_d\) denote some local coordinates on \(\mathcal{S}_{\mathbf{a}}\), then this pencil is determined by equation \(f_d(\mathbf{a}|\lambda_d,\mu_d)=c\). The function \(f_d(\mathbf{a}|\lambda_d,\mu_d)\) can be viewed as a Hamiltonian of the autonomous Painlev\'e equation.

\subsection{Painlev\'e pointed polygons}
\begin{Definition}
	Let \(N\) be an integral polygon such that \((0,0)\in N\). We call \(N\) to be \emph{Painlev\'e pointed polygons} if 
	\begin{enumerate}[label=(\alph*)]
		\item for all vertices \((a,b)\) of \(N\), \(\operatorname{gcd}(a,b)=1\) \label{it:a}
		\item for any side \(E\subset N\) the integral distance from \((0,0)\) to \(E\) devides \(|E|_{\mathbb{Z}}\). \label{it:b}
	\end{enumerate} 
\end{Definition}

Such polygons were introduces in \cite{Kasprzyk:2017minimality} under the name \emph{ Fano polygon without remainders}. It is easy to see that under condition \ref{it:b} the condition \ref{it:a} is equivalent to the fact that coordinates of all vertices of \(N\) are coprime. 

Let \(E\) be a side of \(N\) and let \(h_E\) be an integral distance from \((0,0)\) to \(E\).\footnote{Not to be confused with \(h_{E,N}\) which denotes the height ot the polygon \(N\). Clearly \(h_{E,N}>h_E\).} In this case we call mutation \(\mu_{E,h_E}\) by mutation with respect to origin. It is easy to see that set of Painlev\'e pointed polygons is closed under such mutations.

\begin{Theorem}[{\cite[Theorem 6]{Kasprzyk:2017minimality}}]  \label{th:KNP}
	Using mutations with respect to origin any Painlev\'e pointed polygons can be mutated to exactly one of the pointed polygons drawn on Fig.~\ref{fi:Pain polig}.
	\begin{figure}[h]
		\begin{center}
			\begin{tabular}{c c c c }
				$E_8$ & $E_7$ & $E_6$ &$E_5$
				\\		
				\begin{tikzpicture}[scale=0.7, font = \small]
					\draw[fill] (-4,2) circle (1pt) -- (-1,1) circle  (1pt) -- (2,0) circle (1pt) -- (5,-1) circle (1pt) -- (4,-1) circle (1pt) -- (3,-1) circle (1pt) -- (2,-1) circle (1pt) -- 
					(1,-1) circle (1pt) -- (0,-1) circle (1pt) --  (-1,-1) circle (1pt) -- (-2,-1) circle (1pt) 
					-- (-3,-1) circle (1pt) -- (-4,-1) circle (1pt)  --  (-4,0) circle (1pt) -- (-4,1) circle (1pt) -- (-4,2);
					\draw[fill] (-1,0) circle (2pt);
					\draw (-2,0) circle (2pt);
					\draw (-3,0) circle (2pt);
					\draw (-3,1) circle (2pt);
					
					\draw (0,0)  circle (2pt);
					\draw (1,0) circle (2pt);
					\draw (-2,1) circle (2pt);
					
				\end{tikzpicture}
				&
				\begin{tikzpicture}[scale=0.7, font = \small]
					
					\draw[fill] (-2,1) circle (1pt) -- (-2,0) circle  (1pt) -- (-2,-1) circle (1pt) -- (-1,-1) circle (1pt) -- (0,-1) circle (1pt) -- (1,-1) circle (1pt) -- (2,-1) circle (1pt) -- (2,0) circle (1pt) 
					-- (2,1) circle (1pt) -- (1,1) circle (1pt)  --  (0,1) circle (1pt) -- (-1,1) circle (1pt) -- (-2,1) circle (1pt);
					\draw[fill] (0,0) circle (2pt);
					\draw (1,0) circle (2pt);
					\draw (-1,0) circle (2pt);
					
				\end{tikzpicture}
				&
				\begin{tikzpicture}[scale=0.7, font = \small]
					
					\draw[fill] (-1,-1) circle (1pt) -- (0,-1) circle  (1pt) -- (1,-1) circle (1pt) -- (2,-1) circle (1pt) -- (1,0) circle (1pt) -- (0,1) circle (1pt) -- (-1,2) circle (1pt) -- (-1,1) circle (1pt) -- (-1,0) circle (1pt) -- (-1,-1) circle (1pt) ;
					\draw[fill] (0,0) circle (2pt);	
				\end{tikzpicture}
				&
				\begin{tikzpicture}[scale=0.7, font = \small]
					
					\draw[fill] (0,-1) circle (1pt) -- (1,-1) circle  (1pt) -- (1,0) circle (1pt) -- (1,1) circle (1pt) -- (0,1) circle (1pt)  -- (-1,1) circle (1pt) -- (-1,0) circle (1pt) -- (-1,-1) circle (1pt) -- (0,-1) circle (1pt) ;
					\draw[fill] (0,0) circle (2pt);	
				\end{tikzpicture}			
			\end{tabular} 
			\vspace{0.1cm}
			
			\begin{tabular}{c c c c c c}
				$E_4$ & $E_3$ & $E_2$ &$E_1$ &$E_1$ &$E_0$ \vspace{0.1cm}
				\\ 
				\begin{tikzpicture}[scale=0.7, font = \small]
					
					\draw[fill] (0,-1) circle (1pt) -- (1,-1) circle  (1pt) -- (1,0) circle (1pt) -- (1,1) circle (1pt) -- (0,1) circle (1pt)  -- (-1,1) circle (1pt) -- (-1,0) circle (1pt) -- (0,-1) circle (1pt) ;
					\draw[fill] (0,0) circle (2pt);	
				\end{tikzpicture}			
				&
				\begin{tikzpicture}[scale=0.7, font = \small]
					
					\draw[fill] (0,-1) circle (1pt) -- (1,-1) circle  (1pt) -- (1,0) circle (1pt) -- (0,1) circle (1pt)  -- (-1,1) circle (1pt) -- (-1,0) circle (1pt) -- (0,-1) circle (1pt) ;
					\draw[fill] (0,0) circle (2pt);	
				\end{tikzpicture}			
				&
				\begin{tikzpicture}[scale=0.7, font = \small]
		
					\draw[fill] (0,-1) circle (1pt) -- (1,-1) circle  (1pt) -- (1,0) circle (1pt) -- (0,1) circle (1pt)   -- (-1,0) circle (1pt) -- (0,-1) circle (1pt) ;
					\draw[fill] (0,0) circle (2pt);	
				\end{tikzpicture}			
				&
				\begin{tikzpicture}[scale=0.7, font = \small]
					
					\draw[fill] (0,-1) circle (1pt) -- (1,0) circle  (1pt) -- (0,1) circle (1pt)  --  (-1,0) circle (1pt) -- (0,-1) circle (1pt) ;
					\draw[fill] (0,0) circle (2pt);	
				\end{tikzpicture}
				&
				\begin{tikzpicture}[scale=0.7, font = \small]
					
					\draw[fill] (1,-1) circle (1pt) -- (1,0) circle  (1pt) -- (0,1) circle (1pt)  --  (-1,0) circle (1pt) -- (1,-1) circle (1pt) ;
					\draw[fill] (0,0) circle (2pt);	
				\end{tikzpicture}			
				&
				\begin{tikzpicture}[scale=0.7, font = \small]
					
					\draw[fill] (1,-1) circle (1pt) -- (0,1) circle  (1pt) -- (-1,0) circle (1pt)   -- (1,-1) circle (1pt) ;
					\draw[fill] (0,0) circle (2pt);	
				\end{tikzpicture}
			\end{tabular}
		\end{center}
		\caption{\label{fi:Pain polig} Representatives of the mutation equivalence classes of Painlev\'e pointed polygons}
	\end{figure}
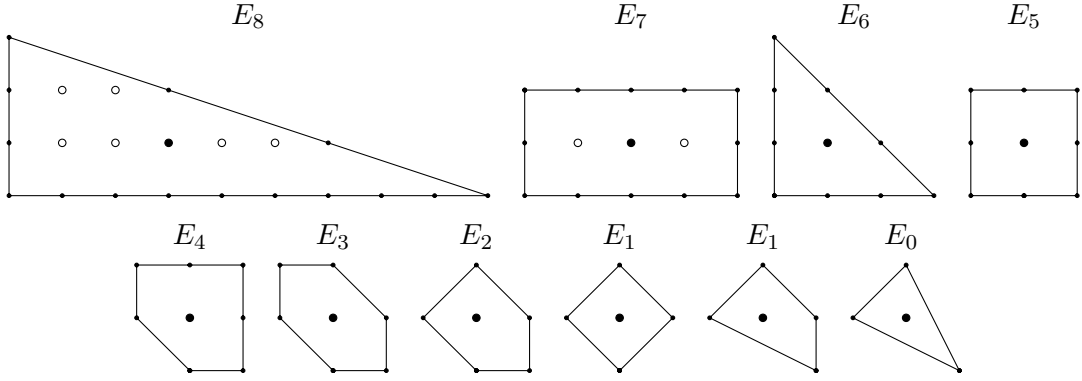
\end{Theorem}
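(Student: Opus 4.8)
The plan is to treat this as a classification up to mutation-equivalence, in two movements: an \emph{existence} statement that every Painlev\'e pointed polygon can be reduced by origin-mutations to one admitting no further reduction, and a \emph{rigidity} statement separating the finitely many resulting classes by a mutation-invariant. Throughout I would use that condition~\ref{it:b} is exactly what keeps origin-mutations inside the class: writing $|E|_{\mathbb{Z}}=h_E k_E$, the divisibility $h_E\mid|E|_{\mathbb{Z}}$ makes the polynomial-mutation hypothesis~\eqref{eq:polyn mut cond} available with the factor dictated by the origin, so that $\mu_{E,h_E}$ sends Painlev\'e pointed polygons to Painlev\'e pointed polygons while preserving the origin.

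For the existence part I would set up a descent on the normalized area $2\operatorname{Area}(N)=\sum_{E'}|E'|_{\mathbb{Z}}\,h_{E'}$. Using Proposition~\ref{prop:mutatation edges} to follow the edge vectors under $\mu_{E,h_E}$ (along $E$ the $h_E$ parallel primitive segments disappear, $h_{E,N}-h_E$ antiparallel ones appear, and the transverse segments shear by~\eqref{eq:mut segments}), together with the height identity~\eqref{eq:h in terms of edges}, one obtains a closed formula for the change of $2\operatorname{Area}(N)$ under an origin-mutation and its inverse. The combinatorial heart is then to show that unless $N$ already appears in Fig.~\ref{fi:Pain polig} there is a side and a direction for which this change is strictly negative; since $2\operatorname{Area}(N)$ is a positive integer the induced descent terminates, delivering a minimal representative.

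For rigidity the key is a complete mutation-invariant. Because condition~\ref{it:b} forces an empty residual basket, the singularity content reduces to the single number $n=\sum_E|E|_{\mathbb{Z}}/h_E=\sum_E k_E$ of primitive $T$-cones, which is invariant under origin-mutation (the toric shadow of the fact that the cusp data $x^{h}=y^{h}$ at infinity of the spectral curve, governing~\eqref{eq:genus reduction}, are preserved under polynomial mutation). Via Noether's relation $K^2+e=12$ for the associated del Pezzo surface this reads $n=12-d$, so $n$ alone recovers the type $E_{n-3}$ in all cases except the two degree-$8$ polygons; those I would separate by the finer lattice datum ($\mathbb{P}^1\times\mathbb{P}^1$ versus $\mathbb{F}_1$), equivalently by the $q$-Painlev\'e norm $|\alpha^2|$ recorded in Fig.~\ref{fi:qPeq}. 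With a complete invariant in hand, a direct enumeration of minimal polygons---the minimality bound confines each $h_E$ and each $k_E$, hence each edge vector, to a short range---matches the list of Fig.~\ref{fi:Pain polig} one-to-one with the values of the invariant.

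The hard part will be the enumeration of minimal polygons together with the proof that the descent never stalls away from the list, i.e.\ that $2\operatorname{Area}(N)$ has no spurious local minima among Painlev\'e pointed polygons. Establishing this needs the explicit shear~\eqref{eq:mut segments} and the height formula~\eqref{eq:h in terms of edges} pushed through a finite case analysis, precisely the work carried out in~\cite{Kasprzyk:2017minimality}; the two coincident degree-$8$ representatives are the most delicate point, since there the primary invariant $n$ degenerates and only the secondary lattice invariant distinguishes the classes.
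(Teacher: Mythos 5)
First, a structural point: the paper does not prove this theorem at all --- it is imported verbatim as \cite[Theorem 6]{Kasprzyk:2017minimality}, so there is no internal proof to compare against. Your sketch is therefore best read as an attempted reconstruction of that reference's argument, and at the level of strategy it does track it: the decomposition $2\operatorname{Area}(N)=\sum_E |E|_{\mathbb{Z}}\,h_E$ (valid because the origin is interior), the resulting area-change formula under an origin-mutation (which works out to $h_{\bar E}^2-h_E^2$, where $h_{\bar E}$ is the distance from the origin to the supporting line on the far side, so one can always descend when some side is ``closer'' than its opposite support), descent to minimal representatives, and separation of classes by the number $n=\sum_E |E|_{\mathbb{Z}}/h_E$ of primitive $T$-cones with $n=12-K^2$.

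However, as a standalone proof there are two genuine gaps. (i) Everything that makes the theorem true --- termination of the descent at the listed polygons, the enumeration of minimal polygons, and the fact that distinct entries of Fig.~\ref{fi:Pain polig} are not connected by area-preserving mutations --- is not argued but explicitly delegated back to \cite{Kasprzyk:2017minimality}; that turns the proposal into a plan whose core is the citation itself, i.e.\ exactly what the paper already does. (ii) The rigidity claim for the two $E_1$ polygons (both with $n=4$, $K^2=8$) is not established. ``The finer lattice datum ($\mathbb{P}^1\times\mathbb{P}^1$ versus $\mathbb{F}_1$)'' and ``the $q$-Painlev\'e norm $|\alpha^2|$'' are labels, not invariants you have shown to be preserved by polygon mutation: to make this step honest you need either Ilten's theorem that combinatorial mutations induce $\mathbb{Q}$-Gorenstein deformations (so the deformation class, hence the parity of the intersection form of the smoothing, is a mutation invariant and distinguishes $\mathbb{P}^1\times\mathbb{P}^1$ from $\mathbb{F}_1$), or a direct finite check that each of these two quadrilaterals mutates only to itself up to $SA(2,\mathbb{Z})$. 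A smaller but real defect of the same kind: you justify the mutation-invariance of $n$ by appeal to the cusp data of the spectral curve and \eqref{eq:genus reduction}, but within this paper that chain of statements is conjectural or deferred to \cite{Inprog}; the invariance should instead be proved combinatorially from Proposition~\ref{prop:mutatation edges}, where it is a two-line computation (the parallel side loses $h_E$ in length at distance $h_E$, the antiparallel side gains $h_{E,N}-h_E$ at distance $h_{\bar E}$, and transverse sides are sheared with their distances to the origin unchanged).
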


It was observed in \cite{Mizuno} that there is a one to one correspondence between 10 polygons which appear in this classification and Painlev\'e equations in Fig.~\ref{fi:qPeq}. We will revisit this correspondance below. So in the Fig.~\ref{fi:Pain polig} we label polygons (or, better to say, mutation classes of polygons) by the symmetries of the corresponding equations. Here and below, in the figures of pointed polygons we will draw the origin by a filled circle.

To each Painlev\'e pointed polygon we assign a decoration in which the partition assigned to a side \(E\) is \((h_E^{d_E})\), where  \(d_E=|E|_\mathbb{Z}/h_E\). With the given decoration we will define cluster Poisson variety \(\mathcal{X}_{N,\mathbf{H}}\) and reduced Goncharov-Kenyon integrable system on it as in Section \ref{ssec:reduced GK}. Note, however, that we will not rely on results of Section \ref{sec:zigzag reductions} in this section. Indeed, for polygons \(E_0,\dots,E_6\) all \(h_E=1\) hence all decorations are trivial (of the form \((1^{|E|})\)) and no reductions are needed. These polygons have only one integral point inside and are \emph{reflexive}. In these cases \(\mathcal{X}_{N,\mathbf{H}}=\mathcal{X}_{N}\) are standard Goncharov-Kenyon varieites, see Sec.\ref{ssec:cluster}. The reductions for the \(E_7\) and \(E_8\) cases are performed in Sec. \ref{sec:E7} and \ref{sec:E8}.

%
%



By \(G_\mathcal{Q}\) we denote the correponding cluster modular group, where \(\mathcal{Q}\) denotes the cluster quiver (constructed from certain consistent bipartite graph in cases without reduction and in Sec.~\ref{sec:E7} and~\ref{sec:E8} for the other cases).

\begin{Theorem} \label{th:Painleve1} 
	Groups \(G_\mathcal{Q}\) for Painlev\'e pointed polygons give symmetries of \(q\)-difference Painlev\'e equations.
\end{Theorem}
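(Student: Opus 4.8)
The plan is to verify the statement case by case over the ten mutation-equivalence classes of Painlev\'e pointed polygons listed in Theorem~\ref{th:KNP}. For each representative $N$, equipped with its decoration $(h_E^{d_E})$, I would first fix the cluster quiver $\mathcal{Q}$ whose modular group is $G_\mathcal{Q}$: for the reflexive polygons $E_0,\dots,E_6$, where every $h_E=1$ and no reduction is needed, $\mathcal{Q}$ is the face quiver of a consistent bipartite graph with Newton polygon $N$ (Section~\ref{ssec:cluster}); for $E_7$ and $E_8$ it is the reduced quiver constructed explicitly in Sections~\ref{sec:E7} and~\ref{sec:E8}. The goal is then to produce an injective homomorphism $W^{\mathrm{ae}}(E_n)\hookrightarrow G_\mathcal{Q}$ carrying the Sakai symmetries of the corresponding $q$-Painlev\'e equation to cluster transformations, compatibly with the action~\eqref{eq:actiononroot} on root variables.

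The heart of the argument is to realize the generators of $W^{\mathrm{ae}}(E_n)$ --- the simple reflections $s_0,\dots,s_n$ and the diagram automorphisms $\pi$ --- as concrete elements of $G_\mathcal{Q}$. Here the two dual classes of mutations become decisive. A mutation of the decorated polygon with respect to the origin, $\mu_{E,h_E}$ in the sense preceding Theorem~\ref{th:KNP}, is a zigzag mutation; by Theorem~\ref{th:KNP} the mutated polygon remains in the same class, so composing with a suitable $SA(2,\mathbb{Z})$ transformation returns $N$ to its standard representative and hence $\mathcal{Q}$ to itself. Such compositions, together with transpositions of parallel zigzags (Proposition~\ref{prop:zigzag transposition}) and ordinary $4$-gon face mutations, furnish the candidate generators. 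I would then identify the root variables $a_i$ with the appropriate zigzag variables $z_{E,i}$ and the shift $q$ with the imaginary-root monomial $\prod z_{E,i}^{h_{E,i}}$, and check directly that each candidate generator acts on them according to~\eqref{eq:actiononroot}.

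Next I would verify the defining relations of $W^{\mathrm{ae}}(E_n)$: the Coxeter relations among the $s_i$, encoded by the affine Dynkin diagram $E_n^{(1)}$, together with the relations involving $\pi$. The relevant Dynkin diagram and its imaginary-root markings are read off from the partition structure of the decoration, exactly as in Example~\ref{Ex:decorated polygons}; note that its nodes are realized by mutation \emph{sequences} rather than single mutations of $\mathcal{Q}$, so the self-duality $\mathcal{Q}\cong\mathcal{Q}^D$ special to the Painlev\'e case is what lets one organize these sequences using the zigzag combinatorics carried by the very quiver that governs the cluster dynamics. For the reflexive cases $E_0,\dots,E_6$ this verification essentially recovers \cite{Bershtein:2018cluster}, where the affine Weyl groups were realized via face mutations and permutations; there the main task is to match conventions and the polygon labeling of Figure~\ref{fi:Pain polig}.

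The main obstacle will be the two genuinely new cases $E_7$ and $E_8$, where the decorations are nontrivial (the side partition $(2)$ for $E_7$; the partitions $(2^3)$ and $(3^2)$ for $E_8$) and the quiver $\mathcal{Q}$ exists only after Hamiltonian reduction. For these I would need to exhibit the reduced seed explicitly, check that the candidate mutation sequences preserve it up to permutation, and confirm that the reduction is equivariant for the Weyl action, so that the generators descend to $\mathcal{X}_{\mathrm{red}}$ and close into $W^{\mathrm{ae}}(E_7)$, respectively $W^{\mathrm{ae}}(E_8)$, with the correct action on the reduced Casimir data. This is exactly the content of Sections~\ref{sec:E7} and~\ref{sec:E8} and is the step demanding the most care; once the generators and their relations are in hand, the identification of the $q$-Painlev\'e dynamics with the translation subgroup $P\subset W^{\mathrm{ae}}(E_n)$, and of the autonomous Hamiltonian $f_d$ with the reduced spectral curve, follows from the invariance of the partition function under polynomial mutation (Proposition~\ref{prop:zigzag4} and its generalizations).
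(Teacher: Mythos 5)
Your overall skeleton agrees with the paper's: unfold the statement into (i) an embedding $W^{\mathrm{ae}}(E_n)\hookrightarrow G_\mathcal{Q}$, (ii) the action~\eqref{eq:actiononroot} on a set of Casimir generators, (iii) the match with the Sakai action, and verify everything case by case, importing $E_0,\dots,E_6$ from \cite{Bershtein:2018cluster} and using the reduced seeds of Sections~\ref{sec:E7}, \ref{sec:E8} for $E_7$, $E_8$. The gap is in your central mechanism. You propose to realize the Weyl generators inside $G_\mathcal{Q}$ by zigzag (polygon) mutations $\mu_{E,h_E}$, corrected by $SA(2,\mathbb{Z})$. But zigzag mutations are trivial on the face side: Proposition~\ref{prop:zigzag4 quivers}(a) states that a length-$4$ zigzag mutation preserves the face quiver \emph{and the face variables}, and Theorem~\ref{th:mut red patch} states the same after reduction ($b^{\mathrm{red}}=\tilde b^{\mathrm{red}}$, $\mathbf{w}=\tilde{\mathbf{w}}$). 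So the transformations you take as generators induce the identity on $\mathcal{X}_{N,\mathbf{H}}$; they cannot move the Casimirs as~\eqref{eq:actiononroot} requires, and a fortiori cannot act on the two-dimensional symplectic leaves the way the Sakai family $\mathcal{S}_{\mathbf{a}}$ demands. The embedding into $G_\mathcal{Q}$ must instead be realized by face-mutation sequences and vertex permutations of $\mathcal{Q}$ itself --- words like $s_i=\mu_i\circ(i,j)\circ\mu_i$, the geometric $R$-matrix~\eqref{eq:permutation of zigzags}, and for $E_7$, $E_8$ explicit words such as $s_0=(10,11)\mu_5\mu_{6}\mu_8\mu_{10}\mu_{11}\mu_5\mu_{6}\mu_8$ --- with root variables the face-variable Casimir monomials as in~\eqref{CasE7} and~\eqref{eq:E8 root}. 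This is exactly what Sections~\ref{sec:examples}--\ref{sec:E8} supply.

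A related error: you identify the root variables $a_i$ with individual zigzag variables $z_{E,i}$. The $z$'s depend on the spectral parameters $\lambda,\mu$ and are not functions on $\mathcal{X}_{N,\mathbf{H}}$ at all (see the footnote in Section~\ref{ssec:PoI}); only degree-zero monomials in them descend to Casimirs. What your zigzag-side construction actually produces is the \emph{dual} copy of $W^{\mathrm{ae}}(E_n)$ --- Mizuno's realization \cite{Mizuno}, acting on $(\lambda,\mu)$ and the $z$'s while fixing the Painlev\'e dynamical variables $(\lambda_d,\mu_d)$ --- which in this paper is the ingredient for self-duality (Theorem~\ref{th:selfduality}) and the elliptic Weyl group (Theorem~\ref{th:extended group Painleve}), not for Theorem~\ref{th:Painleve1} itself. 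One could try to work entirely on the zigzag side and transport the action through $\mathcal{Q}\cong\mathcal{Q}^D$, but that isomorphism is only established by the same case-by-case computations, so nothing is gained, and your generators as described do not even close up: a single mutation with respect to the origin generally lands on a polygon in the same mutation class that is \emph{not} $SA(2,\mathbb{Z})$-equivalent to the representative (compare the three $E_8$ polygons of Remark~\ref{rem:E8 two polygons} and the extra $E_7$ polygons of Fig.~\ref{fi:triE7}), so no $SA(2,\mathbb{Z})$ correction returns $N$, and hence $\mathcal{Q}^D$, to itself after one step.
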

\begin{proof}
	More explicitly the statement means
	\begin{enumerate}[label=(\roman*)]
		\item \label{it:Weyl}
		There is an embedding of the Painlev\'e symmetry group \(W^\mathrm{ae}(E_n)\) into cluster modular group~\(G_\mathcal{Q}\).

		\item \label{it:Casimirs}
		There is a choice of generators \(a_0,\dots,a_n\) of Poisson center of \(\mathcal{X}_{N,\mathbf{H}}\) such that action of \(W^\mathrm{ae}(E_n)\) on them is given by formula~\eqref{eq:actiononroot}.
		
		\item \label{it:surf}
		The symplectic leaves \(\mathcal{X}_{N,\mathbf{H},a}\) specified by the values of Casimirs are two-dimensional. The action of the group \(W^\mathrm{ae}(E_n)\) on family of these surfaces  coincides with action on the Painlev\'e--Sakai family \(\mathcal{S}\).
	\end{enumerate}
	This was checked  case by case in \cite{Bershtein:2018cluster} and also revisited in \cite{Mizuno}. We present details on properties \ref{it:Weyl} and \ref{it:Casimirs} in Sec.~\ref{sec:examples}, \ref{sec:E7}, \ref{sec:E8} below.
\end{proof}

\begin{Remark}
	As was established in \cite{Mizuno} geometrically Theorem~\ref{th:Painleve1} connects Sakai's blowup description of phase space of Painlev\'e equation with blowup description of cluster varieties in \cite{Gross:2013birational}.  
\end{Remark}

The property \ref{it:surf} in particular means that the rank of Poisson bracket \( \operatorname{rk} \{\cdot,\cdot\}_{\mathcal{X}_{N,\mathbf{H}}} =2\). In autonomous case \(q=1\) the corresponding integrable system should consist of one Hamiltonian. Therefore the corresponding spectral curve should have genus 1. This can be seen directly just from the combinatorial definition of Painlev\'e pointed polygons.

\begin{Proposition}
	The genus of spectral curve for Painlev\'e pointed polygon is equal to 1.
\end{Proposition}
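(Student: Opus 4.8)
The plan is to derive the result purely by combining the genus formula \eqref{eq:genus reduction} for decorated Newton polygons with an elementary lattice-area identity special to Painlev\'e pointed polygons. First I would specialize \eqref{eq:genus reduction} to the decoration in use: for a Painlev\'e pointed polygon the partition attached to a side \(E\) is \((h_E^{d_E})\) with \(d_E=|E|_{\mathbb{Z}}/h_E\), so the contribution of that side to \(\sum_{E,i}h_{E,i}(h_{E,i}-1)/2\) is
\begin{equation*}
	d_E\cdot\frac{h_E(h_E-1)}{2}=\frac{|E|_{\mathbb{Z}}}{h_E}\cdot\frac{h_E(h_E-1)}{2}=\frac{|E|_{\mathbb{Z}}(h_E-1)}{2}.
\end{equation*}
Summing over sides, the genus reduces to \(g(\overline{\mathcal{C}})=I-\tfrac12\sum_{E}|E|_{\mathbb{Z}}(h_E-1)\), so it suffices to prove the lattice identity \(I=1+\tfrac12\sum_{E}|E|_{\mathbb{Z}}(h_E-1)\).

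Second, I would establish the key area formula \(2\operatorname{Area}(N)=\sum_{E}|E|_{\mathbb{Z}}h_E\). Condition \ref{it:b} forces \(h_E\geq1\) for every side, and primitivity of the vertices (condition \ref{it:a}) prevents the origin from being a vertex, so the origin lies in the strict interior of \(N\); this legitimizes triangulating \(N\) from the origin, with one triangle \(OPQ\) per side, where \(E=Q-P\). The lattice area of such a triangle is \(\tfrac12|\det(P,Q)|=\tfrac12|E|_{\mathbb{Z}}\,|\det(P,u)|\), where \(u=E/|E|_{\mathbb{Z}}\) is the primitive edge direction; and \(|\det(P,u)|\) is precisely the integral distance \(h_E\) from the origin to the line through \(E\). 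Summing over the triangles yields the identity.

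Third, I would feed this into Pick's theorem \(\operatorname{Area}(N)=I+B/2-1\), together with \(B=\sum_{E}|E|_{\mathbb{Z}}\) (each side contributes \(|E|_{\mathbb{Z}}\) boundary lattice points). This gives \(2I=2\operatorname{Area}(N)-B+2=\sum_{E}|E|_{\mathbb{Z}}h_E-\sum_{E}|E|_{\mathbb{Z}}+2=\sum_{E}|E|_{\mathbb{Z}}(h_E-1)+2\), i.e.\ the required expression for \(I\). Substituting into the specialized genus formula collapses everything to \(g(\overline{\mathcal{C}})=1\).

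The computation is essentially routine; the only step demanding genuine care is the area identity, and within it the assertion that the integral distance from the origin to a side equals \(|\det(P,u)|\), along with the interiority of the origin that makes the radial triangulation valid. I would verify the distance formula by applying an \(SL(2,\mathbb{Z})\) transformation sending \(u\mapsto(1,0)\): the line through \(E\) becomes horizontal at height \(\det(P,u)\), and the integral distance reads off directly from the \(SA(2,\mathbb{Z})\)-invariant definition given earlier. Interiority follows from \(h_E\geq1\) for all sides. Everything else is bookkeeping.
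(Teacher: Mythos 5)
Your proof is correct and takes essentially the same route as the paper's: both combine the genus formula \eqref{eq:genus reduction} with Pick's theorem and a radial triangulation of \(N\) from the origin, using the integral-distance interpretation of \(h_E\) to compute the area. The only cosmetic difference is that the paper subdivides each side \(E\) into \(d_E\) segments and gets \(d_E\) triangles of area \(h_E^2/2\), whereas you keep one triangle per side of area \(|E|_{\mathbb{Z}}h_E/2\) and make the identity \(2\operatorname{Area}(N)=\sum_E |E|_{\mathbb{Z}}h_E\) explicit; the bookkeeping is equivalent.
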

\begin{proof}
	Mark \(d_E-1\) points on each side \(E\) that divide side into \(d_E\) segments of length \(h_E\). Connect vertices of \(N\) and new points on sides to the origin. We get decomposition of polygon \(N\) into triangles; for each side \(E\) we get \(d_E\) triangles which we denote by \(\Delta_{E,i}\), \(1\leq i \leq d_E\). 
	
	It follows from the definition that side of \(\Delta_{E,i}\) opposite to the origin has integral length \(h_E\) and integral distance from the origin to this side is also equal to \(h_E\). Hence  \(\operatorname{Area}(\Delta_{E,i})=h_E^2/2\).

	
	Using the formula \eqref{eq:genus reduction} and Pick's theorem we get
	\begin{multline}
		g(C)= \operatorname{Area}(N)-B/2+1-\sum_{E,i} h_E(h_E-1)/2
		\\
		=\sum_{E,i} \Big( \operatorname{Area}(\Delta_{E,i}) -  h_E/2-h_E(h_E-1)/2 \Big)+1=1.
	\end{multline}
\end{proof}

\subsection{Self-duality}

Note that while both papers \cite{Bershtein:2018cluster} and \cite{Mizuno} mentioned above, connect cluster mutations with \(q\)-Painlev\'e equations, the dimer model interpretation of these mutations is different. Namely, mutations in \cite{Bershtein:2018cluster} are basically face 4-gon mutations, while mutations in \cite{Mizuno} are polynomial mutations which are zigzag mutations as was explained in Sec.~\ref{ssec:zigzag mut 1}. The existence of two descriptions of the same \(q\)-Painlev\'e Weyl group is a manifestation of the following self-duality.

\begin{Theorem} \label{th:selfduality}
	(a) Face and zigzag quivers for Painlev\'e pointed polygons are mutation equivalent.

	(b) There exists function $f$ depending on spectral variables \(\lambda,\mu\), root variables \(\mathbf{a}\), and element \(\mathrm{w}\in W(E_n)\) such that the properly normalized partition function of the dimer model has the form 
	\begin{equation}\label{eq:selfduality}
		\mathcal{Z}\left(\mathbf{a}|\lambda_d,\mu_d;\lambda,\mu\right)=f(\mathbf{a}|\lambda,\mu)-f(\mathrm{w}(\mathbf{a})|\lambda_d,\mu_d).
	\end{equation}
\end{Theorem}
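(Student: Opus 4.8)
The plan is to verify both statements on the ten representative polygons of Theorem~\ref{th:KNP} and then propagate them along mutations with respect to the origin, which are precisely the zigzag (polygon) mutations of Section~\ref{sec:zigzag reductions}. Under such a mutation the reduced face quiver $\mathcal{Q}$ is invariant (Theorem~\ref{th:mut red patch} and its extension to arbitrary decorations; for the trivial decorations of $E_0,\dots,E_6$ this is Proposition~\ref{prop:zigzag4 quivers}(a), since those mutations are compositions of length-$4$ zigzag mutations), while the zigzag quiver $\mathcal{Q}^D$ is itself mutated and the partition function transforms by a polynomial mutation. Hence if (a) and (b) hold for one polygon of a mutation class they hold for the whole class, and it suffices to treat the representatives individually.

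For part (a) I would first record the matching of vertex numbers. Writing the decoration on a side $E$ as $(h_E^{d_E})$, the zigzag quiver $\mathcal{Q}^D$ has $\sum_E d_E$ vertices; on the other hand, decomposing $N$ into the triangles $\Delta_{E,i}$ of area $h_E^2/2$ gives $2\operatorname{Area}(N)=\sum_E d_E h_E^2$, so that $\sum_{E,i}(h_{E,i}^2-1)=2\operatorname{Area}(N)-\sum_E d_E$ and, by \eqref{eq:dim X_red}, $\dim\mathcal{X}_{N,\mathbf{H}}=\sum_E d_E$. Thus $\mathcal{Q}$ and $\mathcal{Q}^D$ always carry equally many vertices, which is the necessary consistency condition. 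For the reflexive cases $E_0,\dots,E_6$ both quivers are read off directly from the consistent bipartite graph and the connecting mutation sequence is short; the $E_7$ and $E_8$ cases require the reduced quivers, whose explicit forms and the mutation paths joining them to $\mathcal{Q}^D$ are produced in Sections~\ref{sec:E7} and~\ref{sec:E8}.

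For part (b) I would parametrize the reduced cluster variables $\mathbf{w}$ by the root variables $\mathbf{a}$ (a choice of Poisson-center generators as in property~\ref{it:Casimirs} of Theorem~\ref{th:Painleve1}) together with two dynamical coordinates, which in the autonomous limit become the Sakai coordinates $(\lambda_d,\mu_d)$. By the genus-one Proposition the spectral curve $\mathcal{Z}=0$ is a member of the $P$-invariant elliptic pencil $f_d(\mathbf{a}|\lambda_d,\mu_d)=c$, so $\mathcal{Z}$ is affine-linear in the modulus $c$; writing $\mathcal{Z}$ in the chosen normalization exhibits its $(\lambda,\mu)$-dependent part as one copy of the Hamiltonian and the term carrying $(\lambda_d,\mu_d)$ as another. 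The self-duality of (a)—the identification of $N$ with the dual polygon coming from $\mathcal{Q}^D$—is what forces these two parts to be the \emph{same} function $f$, up to relabelling the Casimirs by an element $\mathrm{w}\in W(E_n)$; reading off this relabelling from the induced action \eqref{eq:actiononroot} determines $\mathrm{w}$ and yields \eqref{eq:selfduality}.

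The hard part will be part (b): showing that the two occurrences are literally one and the same function $f$, rather than merely a spectral-curve side and a Hamiltonian side of comparable shape, and pinning down both the Weyl element $\mathrm{w}$ and the correct normalization of $\mathcal{Z}$. This is exactly where (a) is indispensable, since the mutation equivalence of $\mathcal{Q}$ and $\mathcal{Q}^D$ is what makes the spectral-parameter dependence and the dynamical dependence interchangeable; carrying it out in detail forces the case-by-case computations of Sections~\ref{sec:examples}, \ref{sec:E7} and~\ref{sec:E8}.
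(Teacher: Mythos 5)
Your core verification coincides with the paper's: in both cases the actual proof is the case-by-case computation, on the ten representatives of Theorem~\ref{th:KNP}, of the face/reduced quivers and of the dimer partition functions carried out in Sections~\ref{sec:examples}, \ref{sec:E7}, \ref{sec:E8}, and your vertex count $\dim\mathcal{X}_{N,\mathbf{H}}=\sum_E d_E$ is a correct consistency check. The problem is the propagation layer you wrap around this. It rests on statements the paper deliberately does not use here: invariance of the reduced seed under zigzag mutation is Theorem~\ref{th:mut red patch}, announced from \cite{Inprog}, and its extension to decorated polygons (which you need even to move inside a single mutation class) is only a \emph{Conjecture} in Section~\ref{ssec:reduced GK}; the paper states explicitly that Section~\ref{sec:Painleve} does not rely on Section~\ref{sec:zigzag reductions} precisely so that its Painlev\'e proofs are complete. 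Moreover, your claim that for $E_0,\dots,E_6$ the origin-mutations are compositions of length-4 zigzag mutations, so that Proposition~\ref{prop:zigzag4 quivers}(a) suffices, is false: mutating the $E_6$ triangle with respect to the origin along any side has $h_E=1$ but $h_{E,N}=3$, i.e.\ it is a length-6 zigzag mutation as in Section~\ref{ssec:l=3 h=1}, and the mutated polygon is a $2\times 3$ rectangle with two interior points and a nontrivial decoration $(2)$ on one side. So propagation even within the ``reflexive'' classes immediately forces you into reduced GK systems and the conjectural machinery.

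For part (b) the propagation is also not established as stated. Under a polynomial mutation, the spectral part $f(\mathbf{a}|\lambda,\mu)$ becomes the properly normalized polynomial $\tilde f$ of the \emph{new} polygon (Theorem~\ref{th:partition normalization}), while the Hamiltonian term $f(\mathrm{w}(\mathbf{a})|\lambda_d,\mu_d)$, being the constant term in the spectral variables, is unchanged; to get \eqref{eq:selfduality} for the mutated polygon you must still show that this unchanged function can be rewritten as $\tilde f(\tilde{\mathrm{w}}(\tilde{\mathbf{a}})|\tilde\lambda_d,\tilde\mu_d)$ for some Weyl element and for the new model's dynamical coordinates --- which is exactly the hard identity, and you do not address it. Relatedly, your heuristic that mutation equivalence of $\mathcal{Q}$ and $\mathcal{Q}^D$ ``forces'' the two occurrences of $f$ to be the same function is not an argument: quiver equivalence constrains the Poisson structure, not the partition function. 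The paper's own heuristic is strictly stronger --- an isomorphism of ribbon graphs $\Gamma\cong\Gamma^D$ inducing a bijection of dimer configurations --- and even that requires case-by-case input (its Steps 3 and 4). So: same computational core as the paper, but the superstructure you add contains genuine gaps and cannot be completed from what the paper proves.
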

Recall that \(\lambda_d,\mu_d\) denotes coordinates on the Sakai's surfaces \(\mathcal{S}_a\), i.e. dynamical variables for the Painlev\'e equation. We will explain who they are.
\begin{proof}
	The zigzag quivers are easy to compute from the  polygons on Fig.\ref{fi:Pain polig}. The face quivers for cases without reduction were computed in \cite{Bershtein:2018cluster} (see also \cite{Hanany:2012brane}), we recall them in Sec.~\ref{sec:examples}. The cluster (reduced face) quivers for \(E_7\) and \(E_8\) cases require reductions and are computed in Sec. \ref{sec:E7} and \ref{sec:E8} below. In all these cases face reduced quivers coincide with zigzag quivers. 
	
	We checked formula \eqref{eq:selfduality} in Sec.~\ref{sec:examples}, \ref{sec:E7}, \ref{sec:E8}. 
	by computing the dimer partition functions for all Painlev\'e pointed polygons cases.
	
	Let us now present generic of arguments for this theorem. While they are not used in the proof above, and also contain the steps we can only show by case by case analysis, we hope that these arguments explain why the theorem is true. We restrict ourselves to the cases without reduction (i.e. exclude \(E_7\) and \(E_8\) cases).
	
	\textbf{Step 1.} Recall that \(\Gamma^D\subset \Sigma^D\) denotes dual dimer model. In the Painlev\'e cases \(g(\Sigma^D)=1\), hence \(\Gamma^D\) is again a dimer model on a torus. We claim that it is consistent dimer model. To see this one need to check the properties \ref{it:1},\ref{it:2},\ref{it:3} in the Definition \ref{def:consistancy}. For the  property \ref{it:1} note that for any face of \(\Gamma\) the corresponding variable is not a Casimir function for the Poisson bracket. Hence the corresponding zigzag on \(\Gamma^D\) is homologically nontrivial. The properties \ref{it:2} and \ref{it:3} are easy to see in cases \(E_3, E_4, E_5, E_6\). Indeed, in these cases there is a choice of \(\Gamma\) in which any two faces have no more than one common edge, hence in the dual graph \(\Gamma^D\) any two zigzags have no more than one common edge. That implies \ref{it:2} and \ref{it:3} for \(E_3, E_4, E_5, E_6\) cases. The check for \(E_0, E_1, E_1, E_2\) cases can be done directly.
		
	\textbf{Step 2.} There exists choice of \(\Gamma\) such that ribbon graphs \(\Gamma\)  and \(\Gamma^D\) are isomorphic. Indeed, graphs \(\Gamma\)  and \(\Gamma^D\) have the same number of vertices and edges, therefore they have the same number of faces. Hence the corresponding Newton polygons \(N, N^D\) are of the same area. For a given area of \(N\) among the reflexive polygons there exists only one with the minimal number of arrows in zigzag quiver. This quiver is a face quiver for \(\Gamma^D\). It is easy te see that there exist only one up to isomorphism (and recoloring of white and black vertices in \(E_3\) case) bipartite graph with this face quiver and number edges equal to this minimal number of arrows. Since number of arrows for \(\Gamma\) and \(\Gamma^D\) are equal we see that these graphs are isomorphic (as ribbon graphs).  Case \(\operatorname{Area}(N)=2\) requires more care since there are two mutation non-equivalent polygons with this area. The menitioned above \(E_3\) case also can be checked directly.
	
	 The face quivers for graphs \(\Gamma\) and \(\Gamma^D\) are isomorphic since the graphs are isomorphic. Therefore, the face and zigzag quiver for \(\Gamma\) are isomorphic.
	
	\textbf{Step 3.} Let us denote by \(\varphi\) an isomorphism between \(\Gamma\) and \(\Gamma^D\). For example, consider curves representing \(A\) and \(B\) cycles on \(\Gamma^D\) and denote weights of \(\varphi^{-1}(A)\) and \(\varphi^{-1}(B)\) by dual spectral variables \(\lambda_d, \mu_d\). Recall that we always choose representatives of \(A\) and \(B\) cycles to be (linear combinations of) zigzags, hence \(\lambda_d, \mu_d\) are monomials in face variables, i.e. local coordinates on \(\mathcal{X}_{N}\). It is also clear from the definition that they are Darboux coordinates on symplectic leafs on \(\mathcal{X}_{N}\).
		
	Recall that the root generators \(a_0,\dots, a_n\) are generators of the the Poisson center of \(\mathcal{X}_N\). The corresponding paths generate kernel of the map \(H_1(\Gamma)\mapsto H_1(\Sigma)\oplus H_1(\Sigma^D)\). Geometrically these paths can be represented either as linear combinations of faces or as a linear combinations of zigzags. Since the description is symmetric with respect to duality we see that \(\varphi(a_i)\) should be monomial in root variables \(\mathbf{a}\). We claim that this monomial transformation is given by formula \eqref{eq:actiononroot} for certain element \(\mathrm{w}\in W(E_n)\) such that \(\varphi(\mathbf{a})=\mathrm{w}(\mathbf{a})\). We prove this via case by case analysis. Note that translation part \(P \in W^{\mathrm{ae}}\) acts trivially on root variables, therefore it is sufficient to take \(\mathrm{w}\in W(E_n)\).
	
	\textbf{Step 4.} Let \( \mathcal{Z}\left(\mathbf{a}|\lambda_d,\mu_d;\lambda,\mu\right)\) denotes partition function of dimer model. The Newton polygon~\(N\) has only one integral point inside and we normalize \( \mathcal{Z}\) such that the Hamiltonian corresponds to constant term of \(\mathcal{Z}\). Recall formula \eqref{eq:Z = sum D} for the partition function. Let us choose Kasteleyn orientation such that $q_{K,D_0}(\alpha)=0$ for \(\alpha=(a,b)\in H_1(\Sigma)\) if and only if \(a,b\) are both even. Since polygon \(N\) is reflexive the only such point \((a,b)\) of \(N\) is the origin. Therefore in the formula~\eqref{eq:Z = sum D} dimer configurations contributing to the Hamiltonian appear with the plus sign while ones corresponding to the boundary appear with the minus sign. It is convenient for us then to change overall sign of \(\mathcal{Z}\).
	
	Let \(f=\mathcal{Z}-\mathcal{Z}_{0,0}\). Since all terms on the boundary are expressed through Casimirs the function \(f\) depends on spectral parameters \(\lambda,\mu\) and root variables \(\mathbf{a}\). The isomorphism \(\varphi\)	induces one to one correspondence between dimer configurations in \(\Gamma\) and \(\Gamma^D\). The contribution of terms corresponding to \(f\) in \(\Gamma^D\) gives \(f(\mathrm{w}(\mathbf{a}),\lambda_d,\mu_d)\). This function does not depend on spectral variables, hence it  contributes to the Hamiltonian. Therefore we have 
	\begin{equation}
		\mathcal{Z}\left(\mathbf{a}|\lambda_d,\mu_d;\lambda,\mu\right)=f(\mathbf{a}|\lambda,\mu)-f(\mathrm{w}(\mathbf{a})|\lambda_d,\mu_d)+\text{(Extra terms)}.
	\end{equation}
	Here \(\text{(Extra terms)}\) correspond to contributions which do not depend neither on spectral variables \(\lambda,\mu\) nor on dynamical variables \(\lambda_d,\mu_d\). We claim that there is no such terms and check this case by case.
\end{proof}

\begin{Remark}
	The formula~\eqref{eq:selfduality} represents duality between \(\lambda_d,\mu_d\) which are Darboux coordinates on symplectic leafs on \(\mathcal{X}\) and spectral parameters \(\lambda,\mu\). It can be explained in terms spectral transform \cite{Kenyon:2006planar}, \cite{George2022inverse}. 
	Recall that the there exists a rational map \(\mathcal{X}_\Gamma \dashrightarrow \{(\overline{\mathcal{C}}, D,\nu)\}\), where \(\overline{\mathcal{C}}\) is spectral curve, \(D\) is divisor on \(\overline{\mathcal{C}}\) of given degree (usually of degree \(g\)) and \(\nu\) is some discrete data (parametrization of points at infinity by zigzag paths). 
	In our case, for a given values of Casismir functions the point on \(\mathcal{X}\) is parametrized by \((\lambda_d,\mu_d)\). 
	On the other hand, since \(g(\overline{\mathcal{C}})=1\) the pairs of spectral curve with given values of Casimirs and point on it are parametrized by pairs \((\lambda,\mu)\). 
	Hence the spectral transform gives a birational map \(\{(\lambda_d,\mu_d)\} \dashrightarrow \{(\lambda,\mu) \}\).
\end{Remark}

\subsection{Partition function and elliptic Weyl group}

Self-duality established in previous section suggests that there are two affine Weyl group acting on the Painlev\'e dimer models: face mutations and zigzag mutations. These groups can be combined into one elliptic (double affine) Weyl group. Namely, we will show that this group acts on the space with coordinates \(\lambda,\mu,\lambda_d,\mu_d, \mathbf{a}\) preserving the partition function \(\mathcal{Z}\). 

Note that for dimer models we have several results that allow to show invariance of the partition function, see Propositions~\ref{prop:4gon mutation}, \ref{prop:zigzag transposition}, \ref{prop:zigzag4} above. But technically we will not use them in this section for two reasons. One of the reasons is that they are not sufficiently strong to find the whole group, in particular (but not only) in cases with reduction. Another reason is that in Painlev\'e cases there exists more elementary approach bases on the specifics of the situation. This can be seen in the following lemma.

\begin{Lemma} \label{lem:Painleve by boundary}
	 Let \(N\) be a Painlev\'e pointed polygon and \(f_{N,\text{red}}(\lambda,\mu)\) be a Laurent polynomial with Newton Polygon \(N\) which satisfies reduction conditions and has vanishing constant term. Then the polynomial \(f_{N,\text{red}}(\lambda,\mu)\) is uniquely determined up to multiplicative constant by roots of restrictions \(f_{N,\text{red}}(\lambda,\mu)|_E\) on the sides \(E\) of \(N\).
\end{Lemma}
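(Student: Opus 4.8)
The plan is to take two Laurent polynomials $f_1,f_2$ with Newton polygon $N$, both satisfying the reduction conditions, both with vanishing constant term, and with the \emph{same} roots of the side restrictions, and to prove $f_1=\alpha f_2$ for a scalar $\alpha$. First I would argue side by side. For a side $E$ the restrictions $f_1|_E,f_2|_E$ are Laurent polynomials of the same degree $|E|_{\mathbb{Z}}$ whose coefficients at the two endpoints of $E$ are nonzero (these endpoints are vertices of $N$); since they have the same roots with multiplicities they are proportional, $f_1|_E=\alpha_E f_2|_E$. At a vertex $v=(a,b)$ of $N$ the monomial $\lambda^a\mu^b$ appears in the restrictions to both adjacent sides and its coefficient in $f_2$ is nonzero, since $v$ is a vertex of the Newton polygon; equating the two expressions for this coefficient forces the constants of the adjacent sides to coincide. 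Running once around the boundary cycle then shows that all $\alpha_E$ equal a single constant $\alpha$.

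Next I would pass to the difference $\delta:=f_1-\alpha f_2$. By construction $\delta|_E=0$ for every side $E$, so $\delta$ is supported on the interior lattice points of $N$, and its constant term vanishes because both $f_i$ have vanishing constant term. Thus $\delta$ is supported on the $I-1$ interior points distinct from the origin. For the polygons $E_0,\dots,E_6$ one has $I=1$, the origin is the only interior point, so $\delta=0$ and $f_1=\alpha f_2$ at once; no reduction condition is used, as all decorations are trivial.

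The substance lies in $E_7$ and $E_8$, where $I>1$, and this is the step I expect to be the main obstacle. Here I would linearise the reduction conditions. Sharing the same side roots means $f_1$ and $f_2$ single out the same numbers $C_j$ on each side $E$ (the roots of $f|_E$, with $1\le j\le d_E$). Applying the $SA(2,\mathbb{Z})$ transformation of Definition~\ref{def:mut polyn} that places $E$ parallel to $(-1,0)$ at height $h_E$ with the origin at height $0$, and writing $f=\sum_k\mu^k P_k(\lambda)$, the reduction condition~\eqref{eq:polyn mut cond} attached to the $j$-th group demands $(1+C_j\lambda^{-1})^k\mid P_k(\lambda)$ for $1\le k\le h_E$. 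Because the $C_j$ are common to $f_1$ and $f_2$, these are genuinely \emph{linear} constraints, so $\delta$ satisfies their homogeneous form: the interior row $P_k^{(\delta)}$ must vanish to order $k$ at $\lambda=-C_j$ for every group and every $1\le k\le h_E-1$ (the top row $k=h_E$ is a side restriction, already zero in $\delta$). Summing the counts $\sum_{k=1}^{h_E-1}k=h_E(h_E-1)/2$ over all groups gives, by \eqref{eq:genus reduction} together with $g(\overline{\mathcal{C}})=1$, exactly $\sum_{E,i}h_{E,i}(h_{E,i}-1)/2=I-1$ homogeneous linear conditions on the $I-1$ interior coefficients of $\delta$.

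It then remains to prove that these $I-1$ conditions are independent, which forces $\delta=0$; this independence is the crux and is where the explicit geometry of the Painlev\'e pointed polygons must enter. The mechanism is that the interior rows $P_k^{(\delta)}$ are short, so divisibility by a high power of a linear factor collapses them to zero. Already in the $E_7$ case (two sides with $h_E=2$, $d_E=1$) the single condition on each side reads $(1+C\lambda^{-1})\mid P_1^{(\delta)}$ with $P_1^{(\delta)}$ a single interior monomial, which is possible only if that monomial vanishes. I would establish the independence in general from the explicit reductions carried out in Sections~\ref{sec:E7} and~\ref{sec:E8}, after which $\delta=0$ and hence $f_1=\alpha f_2$, proving that $f_{N,\mathrm{red}}$ is determined up to a multiplicative constant.
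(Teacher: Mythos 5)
Your recasting of the problem as a square homogeneous linear system is sound, and the mechanism you use is exactly the paper's (vanishing boundary coefficients plus the linearity of the divisibility conditions once the side roots are fixed; your \(E_7\) check is literally the homogenized form of the paper's computation). But there is a genuine coverage gap. The lemma is stated for an \emph{arbitrary} Painlev\'e pointed polygon, and Theorem~\ref{th:KNP} classifies these only \emph{up to mutation}; polygon mutation preserves neither the number of interior points \(I\) nor the decoration. So your trichotomy --- ``\(E_0,\dots,E_6\): \(I=1\), done; \(E_7,E_8\): handled by the explicit reductions'' --- treats only the ten representative polygons of Fig.~\ref{fi:Pain polig}. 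A concrete polygon falling outside all three of your cases is the right-hand polygon of Fig.~\ref{fi:mutation polygon}: the triangle with vertices \((0,1),(-1,-2),(1,-2)\), pointed at the origin, is a Painlev\'e pointed polygon lying in the mutation class of a reflexive polygon (so ``type \(E_0,\dots,E_6\)''), yet it has \(I=2\) and carries the nontrivial decoration \((2)\) on its bottom side. The paper closes this hole in its very first sentence: the property ``determined by the boundary roots up to a multiplicative constant'' is preserved under polynomial mutation, hence it suffices to verify it on the ten representatives. Your proof needs either that mutation-invariance step (which itself requires a short argument: mutation \eqref{eq:PolMut} fixes the row \(P_0\), hence the constant term, and transforms side restrictions controllably), or else a proof that your \(I-1\) linear conditions are independent for \emph{every} Painlev\'e pointed polygon; you supply neither, and your own counting argument (which, to its credit, is valid in this generality, since genus \(1\) gives \(\sum_{E,i}h_{E,i}(h_{E,i}-1)/2=I-1\) for all such polygons) makes clear that independence is precisely the remaining content.

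A second, smaller defect: for \(E_8\) you defer independence to ``the explicit reductions carried out in Sections~\ref{sec:E7} and~\ref{sec:E8}'', but those sections compute cluster Hamiltonian reductions, not the rank of this \(6\times 6\) linear system, so nothing there can be cited as is. The system does collapse, by the homogeneous version of the paper's triangular elimination, and the order of elimination matters: the order-\(2\) condition on the vertical side forces \(\delta_{-2,0}=\delta_{-2,1}=0\) (since \((\mu+c)^2\) cannot divide \(\mu(\delta_{-2,0}+\delta_{-2,1}\mu)\) otherwise), the order-\(2\) condition on the slanted side likewise kills \(\delta_{2,0}\) and \(\delta_{-1,1}\), and only \emph{then} do the order-\(1\) conditions on the two sides kill \(\delta_{-1,0}\) and \(\delta_{1,0}\), because the order-\(1\) vertical condition involves \(\delta_{-1,1}\), which must already be known to vanish. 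This interleaving of the two decorated sides is short, but it must be written; the single-monomial argument that settles \(E_7\) does not apply verbatim to rows containing two interior points.
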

\begin{proof}
	This property (polynomial is essentially determined by its value on the boundary of Newton polygon) is preserved under mutation. So it is sufficient to check it for the polygons depicted on Fig.~\ref{fi:Pain polig}. Let us denote coefficient of the polynomial \(f_{N,\text{red}}(\lambda,\mu)\) as \(f_{a,b}\) via 
	\begin{equation}
		f_{N,\text{red}}(\lambda,\mu)=\sum_{(a,b)\in N}f_{a,b}\lambda^a\mu^b.
	\end{equation}
	
	We already assumed that \(f_{0,0}=0\). Values of the \(f_{a,b}\) on the boundary are determined by the roots of restrictions on sides up to overall multiplicative constant. 
	This proves lemma for the cases \(E_0,\dots,E_6\) (cases without reduction) since the corresponding Newton polygons have only one internal point \((0,0)\). 
	
	Let us consider \(E_7\). Due to reduction conditions, the restriction of \(f_{N,\text{red}}(\lambda,\mu)\) on the right side should have root of multiplicity 2, i.e. be proportional to  \((1+c\mu)^2\) for some \(c\). Moreover, due to reduction condition \eqref{eq:polyn mut cond} we have that \((1+c\mu) \) divides \( \big(f_{1,-1}\mu^{-1}+f_{1,0}+f_{1,1} \mu\big)\). This determines \(f_{1,0}\), since the coefficients on the boundary of \(N\) are determined. Similarly, one can show that coefficient \(f_{1,0}\) is determined.
	
	Let us consider \(E_8\). Due to reduction conditions, the restriction of \(f_{N,\text{red}}(\lambda,\mu)\) on the vertical  side should have root of multiplicity 3, i.e. be proportional to  \((1+c\mu^{-1})^3\). Moreover, due to reduction condition \eqref{eq:polyn mut cond} we have that \((1+c\mu^{-1})^2\) divides \( \big(f_{-2,-1}\mu^{-1}+f_{-2,0}+f_{-2,1} \mu\big)\). This determines \(f_{-2,0}\) and \(f_{-2,1}\). Similarly, using reduction along slanted side, one can show that coefficients \(f_{2,0}\) and \(f_{-1,1}\) are determined. Using condition for the vertical side again we have that \((1+c\mu^{-1})^1 \) divides \( \big(f_{-1,-1}\mu^{-1}+f_{-1,0}+f_{-1,1} \mu\big)\). This determines \(f_{-1,0}\), since \(f_{-1,-1}\) corresponds to the boundary of \(N\) and \(f_{-1,1}\) was determined above. Similarly, one can show that coefficient \(f_{1,0}\) is determined.	
\end{proof}

For example in Fig.~\ref{fi:example polygon and polynomial} we present the form of this polynomial in the case of \(E_5\) polygon. We give more examples below, see e.g. formulas~\eqref{eq:E7 Hamiltonian} for \(E_7\) case and \eqref{eq:E8 Hamiltonian} for \(E_8\) case.
\begin{figure}[h]
	\centering
	\begin{tikzpicture}[scale=0.7, font = \small]
			\def\xs{2}
			\def\ys{2}

			\draw[fill] (0,-\ys) circle (1pt) -- (\xs,-\ys) circle  (1pt) -- (\xs,0) circle (1pt) -- (\xs,\ys) circle (1pt) -- (0,\ys) circle (1pt)  -- (-\xs,\ys) circle (1pt) -- (-\xs,0) circle (1pt) -- (-\xs,-\ys) circle (1pt) -- (0,-\ys) circle (1pt) ;
			\draw[fill] (0,0) circle (2pt);

			\node[left] at (-\xs,0.5*\ys) {$(1+z_1)$};
			\node[left] at (-\xs,-0.5*\ys) {$(1+z_2)$};
			\node[below] at (-0.5*\xs,-\ys) {$(1+z_3)$};
			\node[below] at (0.5*\xs,-\ys) {$(1+z_4)$};
			\node[right] at (\xs,-0.5*\ys) {$(1+z_5)$};
			\node[right] at (\xs,0.5*\ys) {$(1+z_6)$};
			\node[above] at (-0.5*\xs,\ys) {$(1+z_8)$};			
			\node[above] at (0.5*\xs,\ys) {$(1+z_7)$};
	
		\begin{scope}[shift={(6+2*\xs,0)}]
			\node at (0,0) {\parbox{8cm}{
				\begin{align*}
					 f_{N,\text{red}}(\mathbf{z})&= (z_3z_4)^{1/4}(z_5z_6)^{1/2}(z_7z_8)^{3/4}
					 \\&\Big( ((1+z_1)(1+z_2)-1) 
					\\ &+((1+z_3)(1+z_4)-1)\prod\nolimits_{i=1}^2z_i 
					\\ &+((1+z_5)(1+z_6)-1)\prod\nolimits_{i=1}^4z_i
					\\ &+((1+z_7)(1+z_8)-1)\prod\nolimits_{i=1}^6z_i \Big).
				\end{align*}			
			}};
		\end{scope}
	\end{tikzpicture}
	\caption{$E_5$ polygon; on the left roots of the restrictions on the boundary, on the right  the corresponding polynomial \(f_{N,\text{red}}(\mathbf{z})\). 
		 \label{fi:example polygon and polynomial} }	
\end{figure}

Denote by \(\mathbf{z}\) the roots of its restriction \(f_{N,\text{red}}(\lambda,\mu)|_E\) on the sides \(E\) of \(N\). C.f. notations for zigzag variables and formula~\eqref{eq:Z|E prod} above. Due to reduction conditions assumed on \(f_{N,\text{red}}\) we have \(f(\lambda,\mu)|_E \sim \prod_{j=1}^{d_E} (1+z_{i_j})^{h_E}\) where \(z_{i_1},\dots z_{i_{d_E}}\) are variables corresponding to \(E\). For variable~\(z\) corresponding to the side \(E\) define \(h(z)=h_{E}\) and define degree by \(\deg(z)=E/|E|\in \mathbb{Z}^2\) to be a primitive vector parallel to \(E\) (c.f. \(z \mapsto [\zeta]\) for dimer models). We have \(\prod_i z_i^{h(z_i)}=1\) and \(\sum_i h(z_i) \wt(z_i)=0\) where product and sum runs over all variables.

By Lemma~\ref{lem:Painleve by boundary} the polynomial \(f_{N,\text{red}}(\lambda,\mu)|_E\) is determined by \(\mathbf{z}\) up to normalization which is monomial in \(f_{N,\text{red}}(\lambda,\mu)|_E\). We will write this polynomial by \(f_{N,\text{red}}(\mathbf{z})\). We know (see Theorem~\ref{th:Painleve1}) that affine Weyl group 
\(W^{\mathrm{ae}}(E_n)\) can be realized via permuations and polynomial mutations (note that polynomial mutations are cluster mutations for seed with reduced zigzag quiver \(\mathcal{Q}^D\) and variables \(\mathbf{z}\)). If element \(u\in W^{\mathrm{ae}}(E_n)\) transform variables \(\mathbf{z}\mapsto \mathbf{\tilde{z}}\) then the transformation of the polynomial \(f_{N,\text{red}}(\mathbf{z})\) would be proportional to \(f_{N,\text{red}}(\mathbf{\tilde{z}})\) by Lemma~\ref{lem:Painleve by boundary}. We claim that in appropriate normalization the transformed \(f\) is equal to \(f\) on transformed variables, not just proportional.
\begin{Theorem}\label{th:partition invariance}
	There exists normalization of \(f_{N,\text{red}}(\lambda,\mu)\) which is preserved by the action of affine Weyl group \(W^{\mathrm{ae}}(E_n)\). 
\end{Theorem}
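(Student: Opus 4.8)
The plan is to reduce the invariance to the generators of $W^{\mathrm{ae}}(E_n)$ and to package the residual freedom left by Lemma~\ref{lem:Painleve by boundary} as a single Laurent monomial, which I then trivialize by an explicit choice of normalization. By that lemma, for every $u\in W^{\mathrm{ae}}(E_n)$ the polynomial obtained by transporting $f_{N,\text{red}}$ through the mutations and permutations realizing $u$ is proportional to $f_{N,\text{red}}$ written in the transformed variables $u(\mathbf{z})$, the proportionality being a monomial $c_u$ in $\mathbf{z}$. These monomials satisfy a one-cocycle relation with respect to the $W^{\mathrm{ae}}(E_n)$-action on the lattice of monomials in $\mathbf{z}$, and replacing $f_{N,\text{red}}$ by $m(\mathbf{z})\,f_{N,\text{red}}$ modifies $c_u$ by the coboundary built from $m$. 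Thus the theorem is equivalent to the assertion that $\{c_u\}$ is a coboundary, i.e. that a normalizing monomial $m=\prod_i z_i^{\alpha_i}$ exists killing all the $c_u$ simultaneously.

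Since $W^{\mathrm{ae}}(E_n)$ is generated by the simple reflections $s_0,\dots,s_n$ and the external automorphisms $\pi$, I would compute $c_u$ only on these generators. The permutations $\pi$ are realized either by lattice symmetries of the pointed polygon $N$ or by transpositions of parallel zigzags, and in both cases they permute the boundary variables $z_i$ while preserving the invariants $\deg(z_i)$ and $h(z_i)$; hence a monomial $m$ whose exponents depend only on the side carrying each $z_i$ is automatically $\pi$-invariant, forcing $c_\pi=1$ after normalization. For a simple reflection $s_i$, realized by a polynomial mutation $\mu_{E,h}$ (possibly conjugated by permutations), the defining identity $\tilde f(\lambda,\nu)=f(\lambda,\mu)$ of \eqref{eq:PolMut} with $\nu=\mu(1+C\lambda^{-1})$ shows that the mutated polynomial is literally the original one after the spectral change of variable; combining this with the cluster law $\mathbf{z}\mapsto s_i(\mathbf{z})$ on the zigzag seed (of the type $\tilde z=z^{-1}$, $\tilde h_j=h_j(1+z)$, $\tilde k_j=k_j(1+z^{-1})^{-1}$ recorded in Prop.~\ref{prop:zigzag4 quivers}) pins down each $c_{s_i}$ as an explicit monomial in $\mathbf{z}$.

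The main obstacle is showing that these generator-wise monomials assemble into a single coboundary, i.e. that the exponents $\alpha_i$ can be chosen consistently so that $c_{s_i}=m(\mathbf{z})/m(s_i(\mathbf{z}))$ holds for every $i$. Here the constraints $\prod_i z_i^{h(z_i)}=1$ and $\sum_i h(z_i)\deg(z_i)=0$ are essential: they confine the relevant monomials to the right sublattice and ensure that the affine and Coxeter relations among the $s_i$ and $\pi$ impose no obstruction beyond a coboundary. Concretely I would fix $m$ by demanding that the normalized $f$ be balanced about the interior lattice point, assigning to $z_i$ the exponent governed by the integral distance from its side to the origin, exactly the pattern visible in the $E_5$ prefactor $(z_3z_4)^{1/4}(z_5z_6)^{1/2}(z_7z_8)^{3/4}$ of Fig.~\ref{fi:example polygon and polynomial}. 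With $m$ so chosen, the verification $c_{s_i}=m(\mathbf{z})/m(s_i(\mathbf{z}))$ becomes a finite check over the ten Painlev\'e polygons of Fig.~\ref{fi:Pain polig}, the $E_7$ and $E_8$ cases using the explicit reduced data of Sections~\ref{sec:E7} and~\ref{sec:E8}; this case-by-case computation, rather than any conceptual difficulty, is the real labor of the proof.
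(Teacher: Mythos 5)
Your overall skeleton --- reduce to generators, package the ambiguity left by Lemma~\ref{lem:Painleve by boundary} into monomial factors $c_u$, and look for a monomial $m$ trivializing them --- is consistent with what the paper does: its official proof of Theorem~\ref{th:partition invariance} is exactly a finite verification on generators, case by case over the ten polygons (Sections~\ref{sec:examples}, \ref{sec:E7}, \ref{sec:E8}). But your concrete candidate for $m$ is wrong, and since the entire content of the theorem is the existence of the correct $m$, this is a genuine gap rather than a cosmetic one. You propose to give $z_i$ an exponent ``governed by the integral distance from its side to the origin'' and claim this is ``exactly the pattern visible in the $E_5$ prefactor $(z_3z_4)^{1/4}(z_5z_6)^{1/2}(z_7z_8)^{3/4}$''. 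It is not: for the $E_5$ square every side is at integral distance $1$ from the origin, yet the exponents on the four sides are $0$, $1/4$, $1/2$, $3/4$ --- they vary cyclically, not with distance. Moreover, a distance-based $m$ would be invariant under the rotation $\pi$, whereas the actual prefactor is not: using $\prod_i z_i=1$, its image under $\pi=(1,3,5,7)(2,4,6,8)$ differs from it by the factor $z_1z_2$, which is compensated by the transformation of the sum; so your argument that such an $m$ forces $c_\pi=1$ automatically also breaks. With the wrong $m$, the generator-wise checks $c_{s_i}=m(\mathbf{z})/m(s_i(\mathbf{z}))$ would simply fail.

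What actually pins down the normalization, and what the paper uses for its uniform proof (Theorem~\ref{th:partition normalization} and Corollary~\ref{cor:polyinvatiance}), is a condition at the \emph{vertices} of $N$ rather than on its sides: each vertex coefficient $f_{A_i}$ must be a monomial with coefficient $1$, and $\prod_i f_{A_i}^{\phi_i}=1$, where $\phi_i$ are the rational barycentric (Wachspress) coordinates of the origin, formula~\eqref{eq:barycentric}. Invariance then follows uniformly: permutations within a side are trivial; for a mutation one checks that the $\phi_i$ are preserved (by $SL(2,\mathbb{Z})$-invariance plus a short computation at the four exceptional vertices, the merging case handled by $\tilde{\phi}_{\tilde{A}_1}=\phi_{A_1}+\phi_{A_2}$), and the residual factor $z_1^{\sum_i \phi_i r_i}$ is killed because $\deg(z_1)\neq 0$ while the $\phi$-weighted product of vertex monomials must have degree zero by the barycentric property. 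Finally, note that your cocycle framing is only a reformulation --- it does not produce $m$. Either you adopt the vertex/Wachspress characterization of $m$, in which case no case analysis is needed at all, or you must solve the coboundary equation separately in each of the ten cases, which is the paper's case-by-case proof and cannot be shortcut by the distance ansatz.
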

\begin{proof}
	We present \(f\) and cluster realization in all cases below in Sec.~\ref{sec:examples}, \ref{sec:E7}, \ref{sec:E8}.  The invariance of \(f\) can be checked in all cases directly.
\end{proof}

Let us now give uniform proof of Theorem~\ref{th:partition invariance} that do not require case by case analysis. In fact we prove slightly stronger statement. We will need the following  geometric notion.

\begin{Definition}
	Let \(N\) be a convex polygon with vertices \(A_1,\dots, A_m\) in counterclockwise order. For any point \(O\) let us define \emph{rational barycentric coordinates} to be 
	\begin{equation}\label{eq:barycentric}
		\phi_i(O)= \frac{2\operatorname{Area}(A_{i-1}A_{i}A_{i+1})}{2\operatorname{Area}(A_{i-1}A_{i}O)  \cdot 2\operatorname{Area}(OA_{i}A_{i+1})}.
	\end{equation}
\end{Definition}
Here we assumed periodicity in indices, i.e. \(A_0=A_m\) and \(A_{m+1}=A_1\). These coordinates were defined in \cite{Wachspress:1975} (and are usually called Wachspress coordinates), see e.g.  \cite{Floater:2014wachspress} for the concise introduction. These coordinates are usually normalized by \(\sum \phi_i(O)=1\), but normalization~\eqref{eq:barycentric} will be more convenient for us. Since we need this coordinates only for the origin we will suppress the argument \(O\).

The adjective \emph{barycentric} reflects the property 
\begin{equation}
	\sum\nolimits_{i=1}^m \phi_i \overline{O A_i}=0.
\end{equation} 
The adjective \emph{rational} reflects the property that \(\phi_i(O)\) are given by rational functions on coordinates of \(A_1, \dots, A_m\).

Let \(f_{A_i}\) denotes term of \(f\) corresponding to vertex \(A_i\). Since \(N\) is a Newton polygon of \(f(\lambda,\mu)\) we have \(\deg(f_{A_i})=\overline{OA_i}\).
\begin{Definition}\label{def:proper normalization}
	We will call polynomial \(f_{N,\text{red}}(\mathbf{z})\) to be \emph{properly normalized} if each \(f_{A_i}\) is a monomial in (fractional powers of) \(\mathbf{z}\) with coefficient 1 and \(\prod_{i=1}^m f_{A_i}^{\phi_i}=1\).	
\end{Definition}
This conditions determine \(f_{N,\text{red}}(\mathbf{z})\) uniquely. For example, the polynomial given in Fig.\ref{fi:example polygon and polynomial} is properly normalized.

\begin{Theorem}\label{th:partition normalization}
	Polynomial mutations and permutations of \(\mathbf{z}\) transform properly normalized polynomial \(f_{N,\text{red}}(\mathbf{z})\) to properly normalized polynomial \(f_{\widetilde{N},\text{red}}(\mathbf{\tilde{z}})\).
\end{Theorem}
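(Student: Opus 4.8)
The plan is to reduce to the two generating operations separately and, for the nontrivial one, to isolate a single mutation invariant that pins down the overall normalization. First I would dispose of permutations: a permutation of $\mathbf z$ either interchanges roots lying on the same side $E$ — in which case $N$, the Wachspress weights $\phi_i$, and each vertex monomial $f_{A_i}$ are literally unchanged, because $f_{N,\text{red}}(\mathbf z)$ is symmetric in the roots of a fixed side — or it is induced by a symmetry of the decorated polygon (an element of $SA(2,\mathbb Z)$), under which the vertices, the $z$-monomials, and the oriented areas in \eqref{eq:barycentric} all transform compatibly. In either case a properly normalized polynomial is sent to a properly normalized one, so it remains to treat a single polynomial mutation $\mu_{E,h}$.

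For $\mu_{E,h}$, Lemma~\ref{lem:Painleve by boundary} already gives that the mutated polynomial equals $c\,f_{\widetilde N,\text{red}}(\tilde{\mathbf z})$ for a scalar $c$, while Lemma~\ref{lem:zigzag mut - polyn mut} identifies the polygon mutation with $\mu_{E,h}(N)$ and fixes the induced transformation $\mathbf z\mapsto\tilde{\mathbf z}$ of the roots. Thus everything reduces to showing $c=1$, which I would split according to the two defining conditions of Definition~\ref{def:proper normalization}. Condition (a) — that every vertex term is a unit-coefficient monomial in fractional powers of $\tilde{\mathbf z}$ — I would deduce from the boundary structure: for any admissible polynomial the restriction to a side is $\prod_j(1+z_{i_j})^{h_{E'}}$ up to a global monomial (Theorem~\ref{th:zigzags boundary} together with the reduction conditions \eqref{eq:polyn mut cond}), so the ratio of the coefficients at two consecutive vertices is the unit $z$-monomial $\prod z^{h_{E'}}$. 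Since the substitution $\mu=\nu/(1+C\lambda^{-1})$ multiplies only the extreme-in-$\lambda$ coefficients, and only by powers of the single root $C$, this structure is preserved, and $c$ is forced to be a unit monomial in $\tilde{\mathbf z}$.

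The heart of the argument is condition (b), the equality $\prod_i\tilde f_{\tilde A_i}^{\tilde\phi_i}=1$. Rather than prove it for the mutated polynomial in isolation, I would establish the mutation invariance $\prod_i\tilde f_{\tilde A_i}^{\tilde\phi_i}=\prod_i f_{A_i}^{\phi_i}$ of the Wachspress-weighted vertex product and then invoke $\prod_i f_{A_i}^{\phi_i}=1$ for the properly normalized $f_{N,\text{red}}$. Writing each $f_{A_i}=\lambda^{a_i}\mu^{b_i}\cdot(z\text{-monomial})$ and using the barycentric identity $\sum_i\phi_i\,\overline{OA_i}=0$, the $\lambda,\mu$ parts drop out and only the $z$-exponents survive. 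I would then combine the explicit change of the vertex coefficients under $\mu=\nu/(1+C\lambda^{-1})$, organized according to the three families of segments in Proposition~\ref{prop:mutatation edges}, with the induced change $\phi_i\mapsto\tilde\phi_i$ of the Wachspress weights, and verify that the total exponent of $C$, and of every other root, cancels. The height identity \eqref{eq:h in terms of edges} is exactly what balances the parallel family (which loses $h$ segments) against the antiparallel family (which gains $h'=h_{E,N}-h$).

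The main obstacle is step (b): producing a case-free proof that the Wachspress-weighted vertex product is mutation invariant. The difficulty is that polygon mutation changes the vertex set — it deletes and creates vertices along $E$ and its antiparallel side and shears the remaining ones — so one must simultaneously track how the rational barycentric coordinates \eqref{eq:barycentric} behave under this piecewise-linear surgery and how the vertex coefficients absorb powers of the root $C$. Making these two transformations cancel cleanly, rather than verifying the identity polygon by polygon on Fig.~\ref{fi:Pain polig}, is where the real work lies; I expect the barycentric relation and the area identity \eqref{eq:h in terms of edges} to be the tools that carry it through.
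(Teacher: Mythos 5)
Your outline matches the paper's in its skeleton: permutations are disposed of by symmetry, and the content lies in showing that a single polynomial mutation preserves both conditions of Definition~\ref{def:proper normalization}, with condition (b) reduced to the invariance of the Wachspress-weighted vertex product. But you explicitly leave that invariance as ``the main obstacle,'' and this is exactly where the paper's proof does its work, using two ideas absent from your plan. First, the Wachspress coordinates are themselves invariant under polygon mutation: under the natural vertex correspondence $A_i \leftrightarrow \tilde A_i$ one has $\tilde\phi_i = \phi_i$ for \emph{every} $i$. This is because mutation acts on the vertices by a shear conjugate to $\begin{pmatrix} 1 & 1 \\ 0 & 1\end{pmatrix}$ applied to part of the polygon, formula \eqref{eq:barycentric} is $SL(2,\mathbb{Z})$-invariant, and the four vertices adjacent to the mutating side and its antiparallel side are settled by a one-line computation using that $\overline{A_2A_1}$ and $\overline{A_2\tilde A_1}$ are parallel. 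So there is no ``induced change $\phi_i \mapsto \tilde\phi_i$'' to balance against the change of coefficients --- the weights simply do not move, and your plan to engineer a cancellation between the two changes via the height identity \eqref{eq:h in terms of edges} is aimed at a difficulty that does not arise.

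Second, once $\tilde\phi_i = \phi_i$ is known, the paper avoids all exponent bookkeeping by a degree argument: the vertex coefficients transform as $\tilde f_{\tilde A_i} = f_{A_i}\, z_1^{r_i}$, so the weighted product equals $z_1^{\sum_i \phi_i r_i}$; on the other hand the barycentric property $\sum_i \phi_i\, \overline{O\tilde A_i} = 0$ forces the weighted product to have trivial degree in $(\lambda,\mu)$, and since $\deg z_1 \neq 0$ this gives $\sum_i \phi_i r_i = 0$ with no computation at all. Your proposed verification --- tracking the exponent of $C$ and of every other root over the three families of segments in Proposition~\ref{prop:mutatation edges} --- could in principle be completed, but it is precisely the laborious calculation that this trick renders unnecessary. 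Finally, you omit the degenerate cases in which the mutated side disappears (or no antiparallel side exists), where the vertex sets of $N$ and $\widetilde N$ have different cardinalities; the paper needs a separate lemma there ($\tilde\phi_{\tilde A_1} = \phi_{A_1} + \phi_{A_2}$ for the collapsing pair) before the main argument applies. As it stands, your proposal is a correct reduction plus an honest acknowledgment of the remaining gap, not a proof.
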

\begin{Corollary}
\label{cor:polyinvatiance}
	Cluster modular group \(G_{\mathcal{Q}}\) preserves properly normalized polynomial \(f_{N,\text{red}}(\mathbf{z})\).
\end{Corollary}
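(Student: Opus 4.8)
The plan is to deduce the corollary formally from Theorem~\ref{th:partition normalization}, the uniqueness clause following Definition~\ref{def:proper normalization}, and the definition of the cluster modular group. Recall that in the relevant cluster structure the variables $\mathbf{z}$ sit on the (reduced) zigzag quiver $\mathcal{Q}^D$ and that, as noted just before Theorem~\ref{th:partition invariance}, polynomial mutations are precisely the cluster mutations of this seed. By definition every element $g\in G_{\mathcal{Q}}$ is a composition of such polynomial mutations and of permutations of the $\mathbf{z}$ which returns the quiver $\mathcal{Q}^D$ to itself.

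First I would record that a closed loop $g$ in quiver space induces a closed loop in polygon space: by Lemma~\ref{lem:zigzag mut - polyn mut} each polynomial mutation realizes the polygon mutation $\mu_{E,h}$, so the sequence of polygons produced along $g$ begins and ends at $N$ (more precisely at a polygon $SA(2,\mathbb{Z})$-equivalent to $N$; since the ratios of triangle areas defining $\phi_i$ are invariant under $SA(2,\mathbb{Z})$, the proper normalization is equivariant and we may identify the endpoint with $N$). Consequently $g$ carries a Laurent polynomial with Newton polygon $N$ to another one with Newton polygon $N$, while transforming the boundary roots $\mathbf{z}\mapsto g(\mathbf{z})$. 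Applying Theorem~\ref{th:partition normalization} step by step along the elementary moves constituting $g$, the properly normalized polynomial $f_{N,\mathrm{red}}(\mathbf{z})$ is sent to a properly normalized Laurent polynomial with the same Newton polygon $N$ and with boundary variables $g(\mathbf{z})$.

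Finally I would invoke uniqueness: the conditions of Definition~\ref{def:proper normalization} pin down the properly normalized polynomial with Newton polygon $N$ uniquely in terms of its boundary variables. Hence the image of $f_{N,\mathrm{red}}$ under $g$ must literally be $f_{N,\mathrm{red}}$ evaluated at $g(\mathbf{z})$, giving $f_{N,\mathrm{red}}(g(\mathbf{z}))=f_{N,\mathrm{red}}(\mathbf{z})$, which is exactly the claimed $G_{\mathcal{Q}}$-invariance. Note that this argument never uses the precise identification $G_{\mathcal{Q}}\simeq W^{\mathrm{ae}}(E_n)$ from Theorem~\ref{th:Painleve1}; it works verbatim for any quiver-preserving mutation sequence, which is why the full cluster modular group is covered.

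Since the corollary is thus a formal consequence, the real work sits in Theorem~\ref{th:partition normalization}, and that is the step I would treat most carefully. The main obstacle there is checking that the Wachspress normalization $\prod_i f_{A_i}^{\phi_i}=1$ is compatible with the edge transformation rule~\eqref{eq:mut segments}: one must track how the rational barycentric weights $\phi_i$ of the origin change when a single side is mutated, and verify that the induced change of the vertex monomials $f_{A_i}$ (dictated by $\deg(f_{A_i})=\overline{OA_i}$ together with the substitution~\eqref{eq:PolMut} of Definition~\ref{def:mut polyn}) leaves the product condition intact. Establishing this compatibility is essentially an identity relating the areas $\operatorname{Area}(A_{i-1}A_iA_{i+1})$ and the distances from the origin before and after mutation; once it is in hand, both the preservation of proper normalization and hence the corollary follow immediately.
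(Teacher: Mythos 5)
Your proposal is correct and takes essentially the same route as the paper: the Corollary is stated there without a separate proof, being exactly the formal consequence you describe of Theorem~\ref{th:partition normalization} applied move by move along a quiver-preserving sequence, combined with the uniqueness of the properly normalized polynomial from Definition~\ref{def:proper normalization}. Your closing remark also correctly locates the real content in Theorem~\ref{th:partition normalization} itself, namely the compatibility of the Wachspress normalization \(\prod_i f_{A_i}^{\phi_i}=1\) with the segment transformation rule~\eqref{eq:mut segments}, which is precisely what the paper's proof of that theorem verifies.
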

Note that this corollary is slightly stronger then Theorem~\ref{th:partition invariance} since we have not proved (while expect to be true) that embedding \(W^{\mathrm{ae}}(E_n) \to G_{\mathcal{Q}}\) is an isomorphism.

\begin{Example}
	In the Figs.~\ref{fi:example properly normalized 1}, \ref{fi:example properly normalized 2}, \ref{fi:example properly normalized 3} we show three properly normalized polynomials. The transformation from Fig.~\ref{fi:example properly normalized 1} to Fig.~\ref{fi:example properly normalized 2} is given by mutation in vertex corresponding to \(z_6\). It is given by the following change of variables 
	\begin{equation}
		z_6\mapsto z_6^{-1},\quad z_1\mapsto z_1(1+z_6),\, z_2\mapsto z_2(1+z_6),\quad z_4\mapsto z_4(1+z_6^{-1})^{-1},\, z_5\mapsto z_5(1+z_6^{-1})^{-1}.
	\end{equation}
	\begin{figure}[h]
		\centering 
		\begin{tikzpicture}[scale=0.7, font = \small]
			\def\xs{2}
			\def\ys{2}

			\begin{scope}
				
				\draw[fill] (0,-\ys) circle (1pt) -- (\xs,-\ys) circle (1pt) -- (\xs,0) circle (1pt) -- (0,\ys) circle (1pt)  -- (-\xs,\ys) circle (1pt) -- (-\xs,0) circle (1pt) -- (0,-\ys) circle (1pt) ;
				\draw[fill] (0,0) circle (2pt);	
				
				\node[left] at (-\xs,0.5*\ys) {$(1+z_1)$};
				\node[below left] at (-0.5*\xs,-0.5*\ys) {$(1+z_2)$};
				\node[below] at (0.5*\xs,-\ys) {$(1+z_3)$};
				\node[right] at (\xs,-0.5*\ys) {$(1+z_4)$};
				\node[above right] at (0.5*\xs,0.5*\ys) {$(1+z_5)$};
				\node[above] at (-0.5*\xs,\ys) {$(1+z_6)$};			
								
			\end{scope}

			\begin{scope}[shift={(3+2*\xs,0)}]
				
				\draw[fill] (0,-\ys) circle (1pt) -- (\xs,-\ys) circle (1pt) -- (\xs,0) circle (1pt) -- (0,\ys) circle (1pt)  -- (-\xs,\ys) circle (1pt) -- (-\xs,0) circle (1pt) -- (0,-\ys) circle (1pt) ;
				\draw[fill] (0,0) circle (2pt);	
				
				\node[left] at (-\xs,0) {$1$};
				\node[left] at (0,-\ys) {$1$};
				\node[right] at (\xs,-\ys) {$1$};
				\node[right] at (\xs,0) {$1$};
				\node[right] at (0,\ys) {$1$};
				\node[left] at (-\xs,\ys) {$1$};

			\end{scope}
			\begin{scope}[shift={(6.5+4*\xs,0)}]
				\node at (0,0) {\parbox{8cm}
					{
						\begin{align*}
							f_{N,\text{red}}(\mathbf{z})&= z_2^{1/6}z_3^{1/3}z_4^{1/2}z_5^{2/3}z_6^{5/6}
							\\&\Big( 1+z_1(1+z_2(1
							\\ & +z_3(1+z_4(1+z_5)))) \Big).
						\end{align*}			
					}};
						
			\end{scope}	
		\end{tikzpicture}
		\caption{From left to right: polygon \(N\) with roots if the restriction on the boundary; values of rational barycentric coordinates; properly normalized partition function.\label{fi:example properly normalized 1}}
	\end{figure}
	\begin{figure}[h]
	\centering 
		\begin{tikzpicture}[scale=0.7, font = \small]
			\def\xs{2}
			\def\ys{2}
			
			\begin{scope}
				
				\draw[fill] (0,-\ys) circle (1pt) -- (\xs,-\ys) circle (1pt) -- (\xs,0) circle (1pt) -- (0,\ys) circle (1pt)  -- (-\xs,0) circle (1pt) -- (-\xs,-\ys) circle (1pt) -- (0,-\ys) circle (1pt) ;
				\draw[fill] (0,0) circle (2pt);	
				
				\node[above left ] at (-0.5*\xs,0.5*\ys) {$(1+z_1)$};
				\node[left] at (-1*\xs,-0.5*\ys) {$(1+z_2)$};
				\node[below] at (0.5*\xs,-\ys) {$(1+z_3)$};
				\node[right] at (\xs,-0.5*\ys) {$(1+z_4)$};
				\node[above right] at (0.5*\xs,0.5*\ys) {$(1+z_5)$};
				\node[below] at (-0.5*\xs,-\ys) {$(1+z_6)$};
				
			\end{scope}

			\begin{scope}[shift={(3+2*\xs,0)}]
				
				\draw[fill] (0,-\ys) circle (1pt) -- (\xs,-\ys) circle  (1pt) -- (\xs,0) circle (1pt) -- (0,\ys) circle (1pt)  -- (-\xs,0) circle (1pt) -- (-\xs,-\ys) circle (1pt) -- (0,-\ys) circle (1pt) ;
		  		\draw[fill] (0,0) circle (2pt);					
			
				\node[left] at (-\xs,0) {$1$};
				\node[left] at (-\xs,-\ys) {$1$};
				\node[right] at (\xs,-\ys) {$1$};
				\node[right] at (\xs,0) {$1$};
				\node[right] at (0,\ys) {$2$};

			\end{scope}
			\begin{scope}[shift={(6.5+4*\xs,0)}]
				\node at (0,0) {\parbox{8cm}
					{
						\begin{align*}
							f_{N,\text{red}}(\mathbf{z})&= z_2^{1/6}z_3^{1/3}z_6^{1/3}z_4^{1/2}z_5^{2/3}
							\\&\Big( 1+z_1(1+z_2(1
							\\ & +z_3z_6(1+z_4))) \Big).
						\end{align*}			
				}};
				
			\end{scope}	
		\end{tikzpicture}
		\caption{From left to right: polygon \(N\) with roots if the restriction on the boundary; values of rational barycentric coordinates; properly normalized partition function.\label{fi:example properly normalized 2}}
	\end{figure}
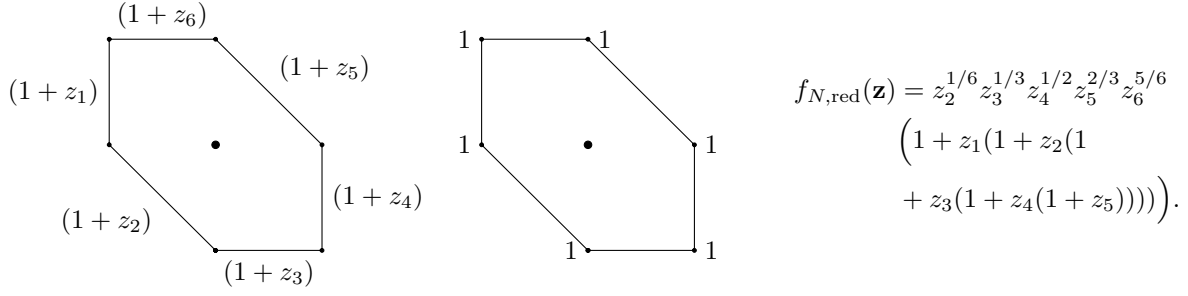
	
	The transformation from Fig.~\ref{fi:example properly normalized 2} to Fig.~\ref{fi:example properly normalized 3} is given by mutation in vertex corresponding to \(z_2\). It is given by the following change of variables 
	\begin{equation}
		z_2\mapsto z_2^{-1},\quad z_3\mapsto z_3(1+z_2),\, z_6\mapsto z_6(1+z_2),\quad z_1\mapsto z_1(1+z_2^{-1})^{-1},\, z_5\mapsto z_5(1+z_2^{-1})^{-1}.
	\end{equation}
	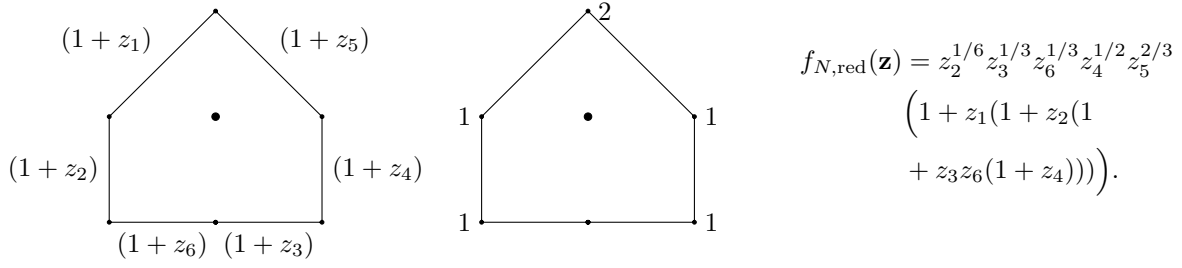
\begin{figure}[h]
		\centering 
		\begin{tikzpicture}[scale=0.7, font = \small]
			\def\xs{2}
			\def\ys{2}
			
			\begin{scope}
				
				\draw[fill] (0,-\ys) circle (1pt) -- (\xs,-\ys) circle (1pt) -- (\xs,0) circle (1pt) -- (\xs,\ys) circle (1pt)  -- (0,\ys) circle (1pt) -- (-\xs,-\ys) circle (1pt) -- (0,-\ys) circle (1pt) ;
				\draw[fill] (0,0) circle (2pt);	
				
				\node[above left ] at (-0.5*\xs,0) {$(1+z_1)$};
				\node[below] at (-0.5*\xs,-\ys) {$(1+z_6)$};
				\node[below] at (0.5*\xs,-\ys) {$(1+z_3)$};
				\node[right] at (\xs,-0.5*\ys) {$(1+z_4)$};
				\node[right] at (\xs,0.5*\ys) {$(1+z_2)$};
				\node[above] at (0.5*\xs,1*\ys) {$(1+z_5)$};
				
			\end{scope}

			\begin{scope}[shift={(3+2*\xs,0)}]
				
				\draw[fill] (0,-\ys) circle (1pt) -- (\xs,-\ys) circle (1pt) -- (\xs,0) circle (1pt) -- (\xs,\ys) circle (1pt)  -- (0,\ys) circle (1pt) -- (-\xs,-\ys) circle (1pt) -- (0,-\ys) circle (1pt) ;
				\draw[fill] (0,0) circle (2pt);	
				
				\node[left] at (-\xs,-\ys) {$2$};
				\node[right] at (\xs,-\ys) {$1$};
				\node[right] at (\xs,\ys) {$1$};
				\node[left] at (0,\ys) {$2$};

			\end{scope}
			\begin{scope}[shift={(6.5+4*\xs,0)}]
				\node at (0,0) {\parbox{8cm}
					{
						\begin{align*}
							f_{N,\text{red}}(\mathbf{z})&=  z_3^{1/3}z_6^{1/3}z_2^{1/2}z_4^{1/2}z_5^{2/3}
							\\&\Big( 1+z_1(1+z_3+z_6
							\\ & +z_3z_6(1+z_2)(1+z_4)) \Big).
						\end{align*}			
				}};
				
			\end{scope}	
		\end{tikzpicture}
		\caption{From left to right: polygon \(N\) with roots if the restriction on the boundary; values of rational barycentric coordinates; properly normalized partition function. \label{fi:example properly normalized 3}}
	\end{figure}
\end{Example}

\begin{proof}[Proof of Theorem~\ref{th:partition normalization}]
	If \(z_1,z_2\) correspond to the same side \(E\) of \(N\), then properly normalized polynomial is symmetric with respect to permutation of \(z_1\) and \(z_2\). Clearly, the nontrivial part of the theorem concerns mutations.
	
	Recall the combinatorial definition of mutation of polygon given in Proposition~\ref{prop:mutatation edges}. The formula \eqref{eq:mut segments} means that some edges remain unchanged and some are transformed by the element conjugated to \(\begin{pmatrix}
		1 & 1 \\ 0 & 1
	\end{pmatrix} \in SL(2,\mathbb{Z})\).
	
	Assume that we perform polynomial mutation for the variable \(z_1\) which corresponds to the side \(A_1A_2\) of \(N\). Assume that there is a side \(A_kA_{k+1}\) which is antiparallel to \(A_1A_2\). Assume that there are other variables corresponding to the side \(A_1A_2\), i.e. after mutation this side do not disappear. 
	
	Under these assumptions there is a natural one to one correspondence between vertices of the polygon \(N\) and vertices of the mutated polygon \(\widetilde{N}\). More precisely, one can assume that vertices \(\tilde{A}_{2},\dots, \tilde{A}_{k}\) of polygon \(\widetilde{N}\) coincide with corresponding vertices of polygon \(N\), while the remaining vertices \(\tilde{A}_{1}, \tilde{A}_{k+1},\dots, \tilde{A}_{n}\) are obtained from the corresponding vertices of \(N\) by the action of matrix conjugated to \(\begin{pmatrix}
		1 & 1 \\ 0 & 1
	\end{pmatrix} \in SL(2,\mathbb{Z})\) shifting parallel to \(A_1A_2\). The formula \eqref{eq:barycentric} for the rational barycentric coordinates is invariant under \(SL(2,\mathbb{Z})\) action. Hence the coordinates for all vertices except \(A_1,A_2,A_k,A_{k+1}\) are preserved under this polygon mutation \(\tilde{\phi}_i=\phi_i\). Moreover, it is easy to check that coordinates for that four vertices are also preserved. For example, for \(A_2\) one can notice that 
	\begin{equation}
		\frac{\operatorname{Area}(A_1A_{2}A_3)}{ \operatorname{Area}(A_{1}A_2O)}
		=\frac{\det(\overline{A_2A_{1}},\overline{A_2A_3})}{\det(\overline{A_2A_1},\overline{A_2O})}
		=\frac{\det(\overline{A_2\tilde{A}_1},\overline{A_2A_{3}})}{\det(\overline{A_2\tilde{A}_1},\overline{A_2O})}
		= \frac{\operatorname{Area}(\tilde{A}_1A_{2}A_3)}{ \operatorname{Area}(\tilde{A}_{1}A_2O)}.
	\end{equation}
	since \(\overline{A_2A_1}\) and \(\overline{A_2\tilde{A}_1}\) are parallel.

	Note that in this mutation the weights of all variables \(\mathbf{z}\) are multiplied by some functions of \(z_1\). Therefore, the coefficients corresponding to vertices are multiplied by the powers of \(z_1\), namely \(\tilde{f}_{\tilde{A}_i}=f_{A_i}z_1^{r_i}\) for some  \(r_i\in \mathbb{Q}\). Therefore 
	\begin{equation}
		\prod_{i=1}^m \tilde{f}_{\tilde{A}_i}^{\phi_i}=\prod_{i=1}^m f_{A_i}^{\phi_i} z_1^{\phi_i r_i}= z_1^{\sum \phi_i r_i}.	
	\end{equation}
	On the other hand, due to definition of  barycentric coordinates we have \(\deg (\prod_{i=1}^m \tilde{f}_{\tilde{A}_i}^{\phi_i})=1 \). Hence \(\sum \phi_i r_i=0\).
	
	It remains to consider case when side \(A_1A_2\) disappears after mutations or there is no side in \(N\) antiparallel to \(A_1A_2\). This situations are inverse to each other, so it is sufficient to consider one. Assume that \(h=h_{A_1A_2}=|A_1A_2|\). Then after polygon mutation the side \(A_1A_2\) disappears, i.e. gets replaced by one vertex, which we denote by \(\tilde{A}_1\). 
	\begin{Lemma}
		Under assumptions above we have \(\tilde{\phi}_{\tilde{A}_1}=\phi_{A_1}+\phi_{A_2}\).
	\end{Lemma}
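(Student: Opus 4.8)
The plan is to put the configuration into an explicit normal form, describe the mutation there as a piecewise shear, and then verify the identity by a direct $2\times 2$ determinant computation.

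First I would normalize by the symmetry. Since the rational barycentric coordinates \eqref{eq:barycentric} are invariant under the $SL(2,\mathbb{Z})$-action fixing $O$ (as already used in the proof of Theorem~\ref{th:partition normalization}), I may assume that $O$ is the origin and that $E=A_1A_2$ is horizontal. The case hypothesis $h=h_{A_1A_2}=|A_1A_2|$ says precisely that the integral distance from $O$ to $E$ equals its integral length $|E|=h$; placing $E$ below $O$ this reads $A_1=(a_1,-h)$, $A_2=(a_1+h,-h)$, and I set $A_3=(a_3,b_3)$, $A_m=(a_m,b_m)$.

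Next I would make the mutation explicit. By Proposition~\ref{prop:mutatation edges} and \eqref{eq:mut segments}, the primitive segments $e$ with $\det(E,e)<0$ (the chain running from the top vertex down to $A_1$) are unchanged, while those with $\det(E,e)>0$ (the chain from $A_2$ up to the top) are transformed by $e\mapsto e+\det(E,e)|E|^{-2}E$, which for $E=(h,0)$ is the shear $S=\begin{pmatrix}1&1\\0&1\end{pmatrix}$; the $h$ segments parallel to $E$ all disappear, so the side collapses. Anchoring the unchanged chain, the collapsed vertex is $\tilde A_1=A_1$, its predecessor in $\widetilde N$ is $\tilde A_m=A_m$, and its successor is $\tilde A_3=A_1+S(A_3-A_2)$. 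Writing $\mathbf c:=S(A_3-A_2)$ one computes $\mathbf c=(a_3-a_1+b_3,\,b_3+h)$.

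Then I would substitute into \eqref{eq:barycentric}. Using $\det(A_1,A_2)=h^2$, the two areas $\det(A_1-A_m,A_2-A_m)=h(b_m+h)$ and $\det(A_2-A_1,A_3-A_1)=h(b_3+h)$, and the crucial simplification $\det(A_1,\mathbf c)=\det(A_2,A_3)$, valid exactly because the distance equals $h$, the three coordinates become
\[
\phi_{A_1}=\frac{b_m+h}{h\,P},\qquad \phi_{A_2}=\frac{b_3+h}{h\,R},\qquad \tilde\phi_{\tilde A_1}=\frac{\det(A_1-A_m,\mathbf c)}{P\,R},
\]
where $P=\det(A_m,A_1)$ and $R=\det(A_2,A_3)$. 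The asserted equality $\tilde\phi_{\tilde A_1}=\phi_{A_1}+\phi_{A_2}$ is then equivalent to the polynomial identity $h\,\det(A_1-A_m,\mathbf c)=(b_m+h)R+(b_3+h)P$, which I would confirm by expanding both sides in $a_1,a_m,a_3,b_m,b_3$: every monomial matches, the cross terms $a_1 b_m b_3$ cancelling against each other and the $h^2a_1$ terms cancelling, which yields the claim.

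The main obstacle is the middle step: fixing the canonical position (the translation) of the mutated polygon, i.e.\ justifying $\tilde A_1=A_1$ with neighbours $A_m$ and $A_1+S(A_3-A_2)$. This is the only place where the mutation conventions genuinely enter, and it must be pinned against the polynomial mutation \eqref{eq:PolMut} of Definition~\ref{def:mut polyn}, which fixes the line through $O$ parallel to $E$ and collapses the edge $E$ to a point. Once the positions are fixed the determinant identity is routine; the clean cancellation ultimately reflects the barycentric relation $\sum_i\phi_i\,\overline{OA_i}=0$ together with the fact that $S$ fixes both $O$ and $E$.
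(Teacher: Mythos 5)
Your proposal is correct and follows essentially the same route as the paper: normalize by an $SL(2,\mathbb{Z})$-transformation fixing $O$ so that $A_1A_2$ is horizontal at height $-h$, write explicit coordinates for the vertices before and after the mutation, and verify $\tilde{\phi}_{\tilde A_1}=\phi_{A_1}+\phi_{A_2}$ by a direct determinant/area computation. The only (immaterial) difference is the anchoring convention: you fix the chain through $A_m,A_1$ and place the collapsed vertex at $A_1$, while the paper fixes the chain through $A_2,A_3$ and shears $A_0$, the two choices being related by the origin-fixing shear $(x,y)\mapsto(x-y,y)$ under which the rational barycentric coordinates are invariant.
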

	\begin{proof}
		Without loose of generality one can assume that side \(A_1A_2\) is horizontal and point \(A_1\) has coordinates  \((0,-h)\). The coordinates of other points involved in computation are 
		\begin{equation}
			A_0=(-a,-b),\;\; A_2=\tilde{A}_1=(h,-h),\;\; A_3=\tilde{A}_2=(c,-d),\;\; \tilde{A}_0=(-a+b,-b).
		\end{equation}
		\begin{figure}[h]
			\centering
			\begin{tikzpicture}[font = \small]
				\def\xs{1}
				\def\ys{1}
				\begin{scope}
			        \node[circle, fill, inner sep=1pt, label=above:{$O$}] (O) at (0,0) {};
			        \node[circle, fill, inner sep=1pt, label=left:{$A_0$}] (A1) at (-1.5*\xs,-0.5*\ys) {};
			        \node[circle, fill, inner sep=1pt, label=below:{$A_1$}] (A2) at (0,-2*\ys) {};
			        \node[circle, fill, inner sep=1pt, label=below:{$A_2$}] (A3) at (2*\xs,-2*\ys) {};
			        \node[circle, fill, inner sep=1pt, label=right:{$A_3$}] (A4) at (4*\xs,-0.5*\ys) {};

`					\draw (A1) to (A2) to (A3) to (A4);
					\draw[dashed] (O) -- (A1) ;
					\draw[dashed] (O) -- (A2) ;
					\draw[dashed] (O) -- (A3) ;
					\draw[dashed] (O) -- (A4) ;
				\end{scope}
				\begin{scope}[shift={(7*\xs,0)}]
					\node[circle, fill, inner sep=1pt, label=above:{$O$}] (O) at (0,0) {};
					\node[circle, fill, inner sep=1pt, label=left:{$\tilde{A}_0$}] (A1) at (-1*\xs,-0.5*\ys) {};
					\node[circle, fill, inner sep=1pt, label=below:{$\tilde{A}_1$}] (A2) at (2*\xs,-2*\ys) {};
					\node[circle, fill, inner sep=1pt, label=right:{$\tilde{A}_2$}] (A3) at (4*\xs,-0.5*\ys) {};

					\draw (A1) to (A2) to (A3);
					\draw[dashed] (O) -- (A1) ;
					\draw[dashed] (O) -- (A2) ;
					\draw[dashed] (O) -- (A3) ;

				\end{scope}
			\end{tikzpicture}
			\caption{\label{fi:side disappear} On the left polygon before mutation, one the right polygon after mutation.}
		\end{figure}
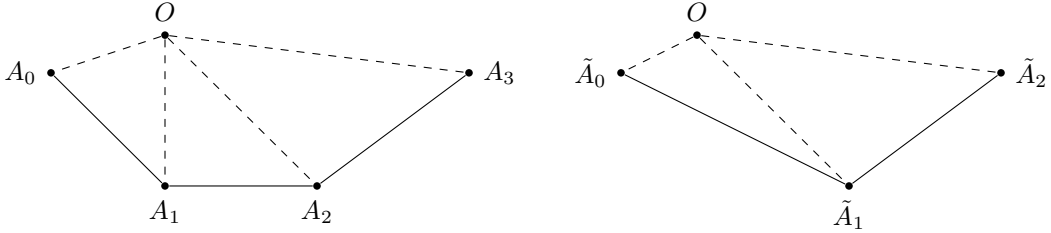
		
		Then we have 
		\begin{equation}
			\phi_{A_1}+\phi_{A_2} = \frac{h (h-b)}{h^2 h a}+ \frac{h (h-d)}{h^2 h (c-d)}=
			\frac{(c-h)(h-b)+(h-d)(a-b+h)}{h a h (c-d)}=\tilde{\phi}_{\tilde{A}_1}.
		\end{equation}
	\end{proof}
	Therefore we can use the same argument as above to formally decompose new vertex \(\tilde{A}_1\) into two with barycentric coordinates \(\phi_{A_1}\) and \(\phi_{A_2}\).
\end{proof}

\begin{Definition}
	The\emph{ elliptic Weyl group} is a semidirect product \(W\ltimes (P\oplus P)\).
\end{Definition}

The elliptic Weyl groups were introduced in \cite{Saito:1997}, see also exposition in \cite[Sec. 1.2]{Ion:2006Triple}. There is also a central extension of elliptic Weyl group by an abelian group \(\mathbb{Z}\). This extension is called \emph{double affine Weyl group}, two lattices \(P\) do not commute there.

\begin{Theorem} \label{th:extended group Painleve}
	There is an action of elliptic Weyl group \(W\ltimes (P \oplus P)\) on the space with cooridinates \(\lambda,\mu,\lambda_d,\mu_d, \mathbf{a}\) that preserves partition function \(\mathcal{Z}\left(\mathbf{a}|\lambda_d,\mu_d;\lambda,\mu\right)\).
\end{Theorem}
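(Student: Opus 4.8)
The plan is to assemble the elliptic Weyl group action out of the two realizations of $W^{\mathrm{ae}}(E_n)$ that are already available — the ``spectral'' one by zigzag/polynomial mutations acting on $(\lambda,\mu,\mathbf{a})$ and the ``dynamical'' one by face mutations (the Sakai--Painlev\'e dynamics) acting on $(\lambda_d,\mu_d,\mathbf{a})$ — and then to exploit the self-dual shape of $\mathcal{Z}$ from Theorem~\ref{th:selfduality}. Writing $W^{\mathrm{ae}}(E_n)\simeq W\ltimes P$, I would let the two copies of $P$ in $W\ltimes(P\oplus P)$ be exactly the translation lattices of these two affine Weyl groups, while the single finite factor $W$ is the common Coxeter subgroup generated by the reflections $s_1,\dots,s_n$.

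First I would fix the action on coordinates. A spectral translation $t^{(1)}_\omega\in P_1$ acts by the corresponding composition of polynomial mutations on $(\lambda,\mu)$, fixing $\lambda_d,\mu_d$; since the translation part of $W^{\mathrm{ae}}$ acts trivially on the root variables (as recalled in Step~3 of the proof of Theorem~\ref{th:selfduality}) it also fixes $\mathbf{a}$. Symmetrically, a dynamical translation $t^{(2)}_\omega\in P_2$ acts by the Painlev\'e flow on $(\lambda_d,\mu_d)$, fixing $\lambda,\mu$ and $\mathbf{a}$. A reflection $s_i\in W$ acts diagonally: on $\mathbf{a}$ by the finite-Weyl part of \eqref{eq:actiononroot}, on $(\lambda,\mu)$ by its spectral mutation sequence, and on $(\lambda_d,\mu_d)$ by the matching reflection in the dynamical realization, the matching being dictated by the duality element $\mathrm{w}\in W(E_n)$ of Theorem~\ref{th:selfduality}.

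Next I would verify the elliptic relations. The Coxeter relations among the $s_i$ and the semidirect relations $s_i t^{(a)}_\omega s_i^{-1}=t^{(a)}_{s_i\omega}$ hold inside each affine realization and are inherited by the diagonal action. The essential new relation is $t^{(1)}_\omega t^{(2)}_{\omega'}=t^{(2)}_{\omega'}t^{(1)}_\omega$: it holds because $t^{(1)}$ moves only the $(\lambda,\mu)$ block and $t^{(2)}$ only the $(\lambda_d,\mu_d)$ block, both leaving $\mathbf{a}$ fixed, so the two operators act on disjoint sets of variables and commute. This commutativity is precisely what produces the elliptic group $W\ltimes(P\oplus P)$ rather than its central (double affine) extension, where the two lattices fail to commute.

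Finally I would check invariance of $\mathcal{Z}=f(\mathbf{a}|\lambda,\mu)-f(\mathrm{w}(\mathbf{a})|\lambda_d,\mu_d)$. Under $t^{(1)}_\omega$ the first summand is preserved by the properly-normalized invariance of $f$ (Theorem~\ref{th:partition invariance}), while the second is untouched because $\lambda_d,\mu_d,\mathbf{a}$ are fixed; the case of $t^{(2)}_\omega$ is symmetric. Under $s_i$ the first summand is invariant by the finite-part invariance of $f$, and for the second one uses $\mathrm{w}(s_i\mathbf{a})=(\mathrm{w}s_i\mathrm{w}^{-1})(\mathrm{w}(\mathbf{a}))$ together with invariance of $f$ under the dynamical realization acting through the conjugated reflection $\mathrm{w}s_i\mathrm{w}^{-1}\in W$. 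The main obstacle I anticipate is exactly this last compatibility: one must confirm that the finite Weyl group embeds identically, up to the fixed twist $\mathrm{w}$, into both the face-mutation and the zigzag-mutation realizations, so that a single diagonal $W$ preserves both copies of $f$ at once. As elsewhere in this section, I expect this to rest on the case-by-case identifications of Sections~\ref{sec:examples}, \ref{sec:E7}, \ref{sec:E8}, with the self-duality of Theorem~\ref{th:selfduality} supplying the uniform conceptual reason.
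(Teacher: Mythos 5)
Your proposal is correct and follows essentially the same route as the paper: both rest on the self-dual form $\mathcal{Z}=f(\mathbf{a}|\lambda,\mu)-f(\mathrm{w}(\mathbf{a})|\lambda_d,\mu_d)$, the $W^{\mathrm{ae}}(E_n)$-invariance of the properly normalized $f$ (Theorems~\ref{th:partition invariance}, \ref{th:partition normalization}), the definition of the dual action by conjugation with the duality element $\mathrm{w}$, and the splitting of each translation into two commuting factors acting separately on $(\lambda,\mu)$ and $(\lambda_d,\mu_d)$, which is possible precisely because translations fix the root variables $\mathbf{a}$. The paper's proof is just a more compressed version of the same argument, obtaining the diagonal affine action first and then splitting its translation subgroup to produce $W\ltimes(P\oplus P)$.
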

\begin{proof}
	Recall formula \eqref{eq:selfduality} for the partition function. It follows from Theorem~\ref{th:partition normalization} that function \(f(\mathbf{a}|\lambda,\mu)\) is invariant under the action of affine Weyl group \(W^{\mathrm{ae}}(E_n)\). Conjugating the action by the element \(\mathrm{w}\in W(E_n)\), we can define action of \(W^{\mathrm{ae}}(E_n)\) on \(\lambda_d,\mu_d\) which preserves the function \(f(\mathrm{w}(\mathbf{a})|\lambda_d,\mu_d)\). Therefore the group \(W^{\mathrm{ae}}(E_n)\) preserves partition function \(\mathcal{Z}\left(\mathbf{a}|\lambda_d,\mu_d;\lambda,\mu\right)\). 
	
	Note that translation subgroup \(P\subset W^{\mathrm{ae}}(E_n)\) preserves root variables \(\mathbf{a}\). Hence any translation can be decomposed into a product of two commuting transformation, where one of them transforms \(\lambda,\mu\) and another one transforms \(\lambda_d,\mu_d\). Each of these factors preserves partition function \(\mathcal{Z}\left(\mathbf{a}|\lambda_d,\mu_d;\lambda,\mu\right)\).  Thus we obtained action of the group \(W\ltimes (P\oplus P)\).
\end{proof}

\begin{Remark}
	We can consider partition function \(\mathcal{Z}\left(\mathbf{a}|\lambda_d,\mu_d;\lambda,\mu\right)\) as a polynomial in 4 variables \(\lambda,\mu,\lambda_d,\mu_d\). Then the Newton \emph{polytope} \(\hat{N}\) is 4-dimensional. Original polygon \(N\) and dual polygon \(N^D\) are projections on \(\hat{N}\) (c.f. \cite[Sec. 4]{Franco:2023twin}). Moreover, both zigzag and face mutations are mutations of \(\mathcal{Z}\left(\lambda_d,\mu_d;\lambda,\mu\right)\) as a polynomial on 4 variables. Therefore the elliptic Weyl group can be considered as a subgroup in a group of polynomial mutations of  \(\mathcal{Z}\left(\lambda_d,\mu_d;\lambda,\mu\right)\).
\end{Remark}

\begin{Remark}
	It would be interesting to find an analog of elliptic Weyl group action after deautonomization, i.e. without condition \(q=1\). This could be connected to the quantization of the spectral parameters \(\lambda,\mu\). In terms of group it is tempting to suggest that this would lead  to the mentioned above central extension of double affine Weyl group.
\end{Remark}

\begin{Remark}
	In the proof of Theorem~\ref{th:extended group Painleve} we defined action of the elements of Weyl group on both spectral and dynamical variables. From the cluster point of view the same element of the group is realized both by composition of mutations and permutations on face quiver and composition of mutations and permutations on zigzag quiver. This can be quite nontrivial, for example, simple transposition of two zigzag variables correspond to transformation \(R\) given in formula \eqref{eq:permutation of zigzags} in terms of face quiver.
\end{Remark}

\section{\(q\)-Painlev\'e for reflexive polygons} \label{sec:examples}

We denote by \(\mu_i\) mutations in the vertices of the face quiver \(\mathcal{Q}\) and \(\mu_{d,i}\) in the vertices of the zigzag quiver \(\mathcal{Q}^D\). In presentation of elliptic Weyl group \(W\ltimes (P\oplus P)\) we denote by \(T\) and \(T_d\) translation given by compositions of mutations in quiver \(\mathcal{Q}\) and \(\mathcal{Q}^D\) correspondingly. In particular, translations \(T\) preserve spectral variables \(\lambda,\mu\) and acts non-trivially on dual spectral variables \(\lambda_d, \mu_d\). Contrary \(T_d\) preserves \(\lambda_d, \mu_d\) and transforms \(\lambda,\mu\).

In this and following sections we will be using shorthand notations 
\(a_{i_1i_2\dots i_k}=a_{i_1}\cdot a_{i_2}\cdot\dots \cdot a_{i_k}\), \(x_{i_1i_2\dots i_k}=x_{i_1}\cdot x_{i_2}\cdot\dots \cdot x_{i_k}\), and  \(s_{i_1i_2\dots i_k}=s_{i_1}\cdot s_{i_2}\cdot\dots \cdot s_{i_k}\).

The Painlev\'e equations below are parametrized by pairs symmetry type/surface type. The symmetry types were given in the Fig.~\ref{fi:qPeq} above, and surface type for the symmetry type \(E_n^{(1)}\) is given by \(A_{8-n}^{(1)}\)  We included surface types to make comparison with literature on \(q\)-difference Painlev\'e equations easier.

In figures below we label faces and zigzags by their variables. We also encircle face variables. The edge weights are depicted in blue. The sign in front of weight represents Kasteleyn sign. 

\subsection{\(E_1^{(1)}/A_7^{(1)}\)}

The bipartite graph, Newton polygon and quiver corresponding to this example were already presented on Fig.~\ref{fi:ex square}. In order to compute dimer partition function by the formula \eqref{eq:ZKast} we will need to specify edge weights and Kasteleyn signs. They are given in Fig. \ref{fi:E1p}. 
\begin{figure}[h]
	\centering
    \includegraphics[]{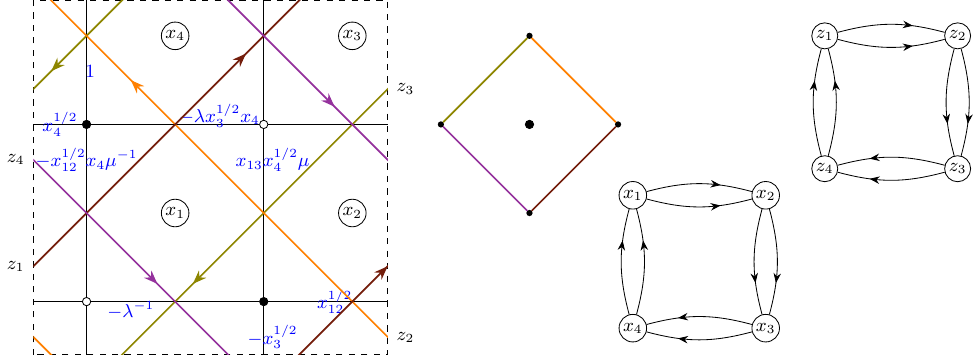}

	\caption{\label{fi:E1p}
		For \(E_1^{(1)}/A_7^{(1)'}\) Painlev\'e equation: bipartite graph with edge weights, Newton polygon, face quiver with face variables, zigzag quiver with zigzag variables.}
\end{figure}

The dual spectral parameters can be introduced by
\begin{equation}
	\lambda_d = \sqrt{x_1 x_2}, \; \mu_d = \sqrt{\frac{x_2}{x_1}}, \; a_0 = x_1 x_3, \; a_1 = x_2 x_4.
\end{equation}
Zigzag variables can be expressed through $\lambda,\mu, a_0, a_1$ via formulas
\begin{equation}
	z_1 = \lambda \mu, \; z_2 = \frac{\mu}{\lambda} a_0, \; z_3 = \frac{1}{\lambda \mu} \frac{1}{a_0}, \; z_4 = \frac{\lambda}{\mu}.
\end{equation}

The Hamiltonian \eqref{TodaHam} comes from the normalized partition function of dimers
\begin{equation}
	\mathcal{Z}\left(\mathbf{a}|\lambda_d,\mu_d;\lambda,\mu\right)=f(z|\lambda,\mu)-f(z|\lambda_d,\mu_d)
\end{equation}
where
\begin{equation}
	f(\mathbf{a}|\lambda,\mu)= a_1^{1/4}(\lambda + \lambda^{-1} + a_0 \mu + \mu^{-1}).
\end{equation}

The finite symmetry group is generated by order four element $\pi$ acting on face and zigzag variables by transformations
\begin{subequations}
	\begin{alignat}{3}
		&\pi|_{\text{face variables}}= (4,3,2,1),& \qquad &\pi|_{\text{zigzag variables}}= (4_d,3_d,2_d,1_d).
	\end{alignat}
\end{subequations}
Its action on $\mathbf{a}$ variables given on Fig.~\ref{fi:E1pdynkin}, right.
\begin{figure}[h]
	\begin{center}
		\begin{tikzpicture}[elt/.style={circle,draw=black!100,thick, inner sep=0pt,minimum size=2mm},scale=1.25]
			
			\path 	(0,0) 	node 	(a0) [elt] {}
			(1,0) 	node 	(a1) [elt] {};
			\draw [<->,black,line width=1pt ] (a0) -- (a1);
			
			\node at ($(a0.north) + (0.05,0.2)$) 	{$a_{0}$};			    
			\node at ($(a1.north) + (0.05,0.2)$) 	{$a_{1}$};
			
		\end{tikzpicture}
		\hspace{2cm}
		\begin{tikzpicture}
			\node [shape=rectangle] {
				\begin{tabular}{|c||c|c|}
					\hline
					& \(a_0\) & \(a_1\) \\
					\hline
					\(\pi\)	& \(a_1\) & \(a_0\)
					\\
					\hline	
				\end{tabular}
			}; 
		\end{tikzpicture}
	\end{center}
	\caption{\label{fi:E1pdynkin}
		For \(A_1^{(1)}/A_7^{(1)'}\) Painlev\'e equation: Dynkin diagram and action on $\mathbf{a}$ variables}
\end{figure}

Action of $\pi$ on spectral and dual spectral variables is as follows:
\begin{center}
	\begin{tabular}{|c||c|c|c|c|c|}
		\hline
		& \(\lambda\) & \(\mu\) & \(\lambda_d\) & \(\mu_d\) \\
		\hline
		\(\pi\)	&  \(\sqrt{a_0} \mu\) & \(\frac{\sqrt{a_0}}{\lambda}\) & \( \sqrt{a_0} \mu_d \)& \( \frac{\sqrt{a_0}}{\lambda_d} \)
		\\
		\hline 	
	\end{tabular}
\end{center}
The symmetry group also contains lattice of translations of the form \(P\oplus P_d\) were \(P\simeq P_d\simeq \mathbb{Z}\oplus \mathbb{Z}/2\mathbb{Z}\). It is generated by infinite order translations  \(T=(1,2)(3,4)\mu_3\mu_1\), \(T_d=(1_d,2_d)(3_d,4_d)\mu_{d,3}\mu_{d,1}\), and order two ''translations'' given by $t = (1,3)(2,4)$,  $t_d = (1_d,3_d)(2_d,4_d)$, which act on spectral and dual spectral parameters by
{\scriptsize
	\begin{subequations}
		\begin{align}
			&T(\lambda_d)= \frac{ \mu_d(\lambda_d + a_0 \mu_d)}{\lambda_d  + \mu_d}, &		&T(\mu_d)=\frac{\lambda_d + \mu_d}{  \lambda_d(\lambda_d +  a_0 \mu_d)},
			\\
			&t(\lambda_d)= \frac{1}{\lambda_d}, &		&t(\mu_d)=\frac{a_1}{\mu_d},
			\\
			&T_{d}(\lambda)= \frac{\mu (a_0 \lambda  \mu + 1)}{\lambda  \mu + 1}, &		&T_{d}(\mu)= \frac{a_0 \lambda  \mu + 1}{a_0 \lambda(\lambda  \mu + 1)},
			\\
			&t_{d}(\lambda)= \frac{1}{\lambda}, &		&t_{d}(\mu)= \frac{a_1}{\mu}.
		\end{align}
	\end{subequations}
}
The relations between translations and $\pi$ are
\begin{equation}
	\pi^2 = t_d t, ~~~ \pi t = t \pi, ~~~ \pi T \pi = T^{-1}, ~~~ \pi t_d = t_d \pi, ~~~ \pi T_d \pi = T^{-1}_d.
\end{equation}

\subsection{\((E_1^{(1)})_{|\alpha^2|=8}/A_7^{(1)}\)}

We choose the bipartite graph and weights of the edges as on the Fig.~\ref{fi:E1}.

\begin{figure}[h]
	\centering
	\includegraphics[]{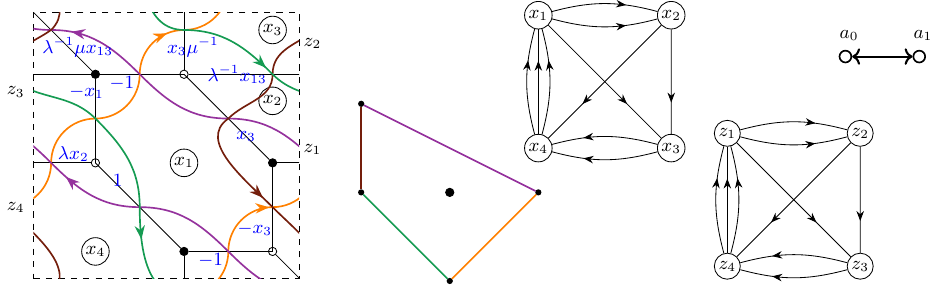}

	\caption{\label{fi:E1}
		For \(E_1^{(1)}/A_7^{(1)}\) Painlev\'e equation: bipartite graph with edge weights, Newton polygon, face quiver with face variables, zigzag quiver with zigzag variables, Dynkin diagram}
\end{figure}

The root variables and dual spectral parameters $\lambda_d, \mu_d$ can be chosen as
\begin{equation}
    \lambda_d = \frac{x_3}{x_2}, \; \mu_d = \frac{1}{x_2}, \; a_0 = \frac{(x_2)^2 x_4}{x_3}, \; a_1 = \frac{x_1 (x_3)^2}{x_2}
\end{equation}
and zigzag variables can be chosen as 
\begin{equation}
    z_1 = \frac{\mu}{\lambda^2} a_1, \; z_2 = \frac{1}{\mu}, \; z_3 = \frac{\lambda}{\mu}, \; z_4 = \lambda \mu \frac{1}{a_1}.
\end{equation}
The dimer partition function has form 
\begin{equation}
	\mathcal{Z}=f(\mathbf{a}|\lambda,\mu)-f(\mathbf{a}|\lambda_d,\mu_d),
\end{equation}
where
\begin{equation}
	f(\mathbf{a}|\lambda,\mu)=a_1^{-1}\lambda+\lambda^{-1}\mu+\lambda^{-1}+\mu^{-1}.
\end{equation}
The symmetry group in this case coinsides with translation group $P\oplus P_d$ where $P \simeq P_d \simeq \mathbb{Z}$. Generators of this group has claster description \(T=(1324)\mu_3\) and $T_d=(1_d 3_d 2_d 4_d)\mu_{d,3}$. Their action on spectral and dual spectral parameters given by 
{\scriptsize
	\begin{subequations}
		\begin{align}
			&T(\lambda_d)= \frac{a_1(\lambda_d + \mu_d)}{\lambda_d \mu_d}, &		&T(\mu_d)=\frac{\lambda_d}{\mu_d},
			\\
            &T_{d}(\lambda)= \frac{a_1(\lambda + \mu)}{\lambda \mu}, &		&T_{d}(\mu)=\frac{\lambda}{\mu}.
			\end{align}
	\end{subequations}
}

\subsection{\(E_2^{(1)}/A_6^{(1)}\)}
We choose the bipartite graph and weights of the edges as on the Fig.~\ref{fi:E2}.

\begin{figure}[h]
    \centering
    \includegraphics{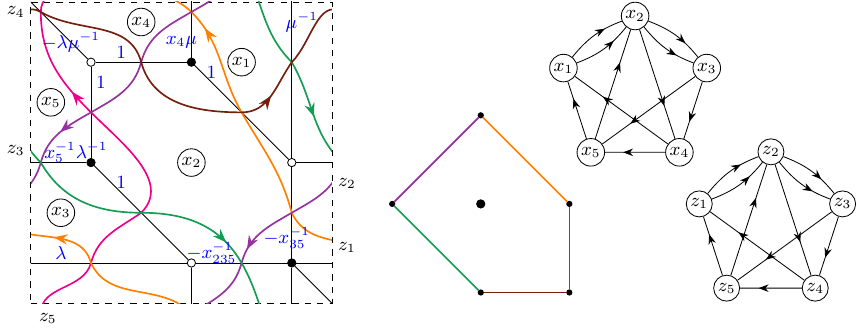}
	\caption{\label{fi:E2}
		For \(E_2^{(1)}/A_6^{(1)}\) Painlev\'e equation: bipartite graph with edge weights, Newton polygon, face quiver with face variables, zigzag quiver with zigzag variables, Dynkin diagram}
\end{figure}
The root variables \(\mathbf{a}\) and dual spectral parameters \(\lambda_d,\mu_d\) can be chosen as follows
\begin{equation}
	\lambda_d=x_4,\; \mu_d=x_5,\; a_0=x_1 x_3,\; a_2=\frac{x_3}{x_1} \left(\frac{x_5}{x_4}\right)^2.
\end{equation}
The zigzag variables are equal to 
\begin{equation}
	z_1=\dfrac{\mu}{\lambda} \dfrac{1}{a_0 b},\; z_2=\dfrac{1}{\lambda \mu} a_0,\; z_3= \dfrac{\lambda}{\mu} b,\; z_4=\lambda,\; z_5 = \mu,
\end{equation}
where we introduced variable $b = \sqrt{\frac{a_2}{a_0}}$. This variable has geometric meaning, in this case the rational surface \(\mathcal{S}_{\mathbf{a}}\) by construction depends on it, see e.g. \cite[p. 216]{Sakai:2001}. This agrees with the fact that this square root can be taken in cluster variables.

The formula for the normalized partition function has the form (see Theorem~\ref{th:selfduality})
\begin{equation}
	\mathcal{Z}\left(\mathbf{a}|\lambda_d,\mu_d;\lambda,\mu\right) =f(\mathbf{a}|\lambda,\mu)-f(s_{1}\mathbf{a}|\lambda_d,\mu_d),
\end{equation}
where 
\begin{equation}
	f(\mathbf{a}|\lambda,\mu) = \lambda + \lambda\mu^{-1} + \mu^{-1} + b^{-1}\lambda^{-1} + a_0^{-1}b^{-1}\mu
	.	
\end{equation}
The symmetry group contains generators
\begin{subequations}
	\begin{alignat}{3}
		&s_0|_{\text{face variables}} = \mu_3 (1,3) \mu_3,& \qquad &s_0|_{\text{zigzag variables}}= \mu_{d,3} (1_d,3_d) \mu_{d,3}, 
		\\
		&s_1|_{\text{face variables}}= \mu_2 \mu_5 (4,5) \mu_5 \mu_2,& \qquad &s_{1}|_{\text{zigzag variables}} = \mu_{d,2} \mu_{d,5} (4_d,5_d) \mu_{d,5} \mu_{d,2},
		\\
        &\pi|_{\text{face variables}}= (1,2,3,5,4)\mu_5,& \qquad &\pi|_{\text{zigzag variables}}= (1_d,2_d,3_d,4_d,5_d)\mu_{d,5}.
	\end{alignat}
\end{subequations}
which constitute an affine Weyl group $W(A_1^{(1)})$ extended by generator $\pi$ of infinite order acting on $A_1^{(1)}$ by external automorpism in agreement with \cite[p. 226]{Sakai:2001}. The finite symmetry group acts on root variables as in  Fig.~\ref{fi:E2dynkin}.
\begin{figure}[h]
\begin{center}
		\begin{tikzpicture}[elt/.style={circle,draw=black!100,thick, inner sep=0pt,minimum size=2mm},scale=1.25]

			\path 	(0,0) 	node 	(a0) [elt] {}
				(1,0) 	node 	(a1) [elt] {}
				(2,0) 	node 	(a2) [elt] {}
				(3,0)	node 	(a3) [elt] {};
			\draw [<->,black,line width=1pt] (a0) -- (a1);
			\draw [<->,black,line width=1pt ] (a2) -- (a3);
			
			\node at ($(a0.north) + (0.05,0.2)$) 	{$a_{0}$};			    
			\node at ($(a1.north) + (0.05,0.2)$) 	{$a_{1}$};
			\node at ($(a2.north) + (0.05,0.2)$)  {$a_{2}$};
			\node at ($(a3.north) + (0.05,0.2)$)  {$a_{3}$};

        \end{tikzpicture}
\hspace{2cm}
\begin{tikzpicture}
\node [shape=rectangle] {
	\begin{tabular}{|c||c|c|}
		\hline
		& \(a_0\) & \(a_2\)\\
		\hline
		\(s_0\)	& \(a_0^{-1}\) & \(a_2\)
		\\
		\hline	
		\(s_1\)	& \(a_0^{-1}\) & \(a_2\)
		\\
		\hline	
		\(\pi\)	& \(a_0^{-1}\) & \(a_2\)
		\\
		\hline	
	\end{tabular}
}; 
\end{tikzpicture}
\end{center}
\caption{\label{fi:E2dynkin}
		For \(E_2^{(1)}/A_6^{(1)}\) Painlev\'e equation: Dynkin diagram and action on $\mathbf{a}$ variables}
\end{figure}

The action on spectral parameters and dual spectral parameters is presented in the following table.
\begin{center}
	\begin{tabular}{|c||c|c|c|c|}
		\hline
		& \(\lambda\) & \(\mu\) & \(\lambda_d\) & \(\mu_d\) \\
		\hline
		\(s_0\)	& \( \frac{\lambda (a_0 b \lambda + \mu )}{a_0 (b \lambda + \mu )} \) & \(\frac{\mu (a_0 b \lambda + \mu )}{a_0(b \lambda + \mu )}\) & \(\frac{a_0 \lambda_d \left(b \lambda_d + \mu_d \right)}{a_0 b \lambda_d + \mu_d}\) & \(\frac{a_0 \mu_d (b \lambda_d + \mu_d )}{a_0 b \lambda_d +\mu_d}\)
		\\
		\hline
		\(s_1\) & \(\frac{\lambda  (\lambda  \mu +\lambda +1)}{\lambda \mu + a_0 \lambda +a_0}\) & \( \frac{\mu  (\lambda  \mu + \lambda + a_0)}{a_0 (\lambda  \mu +\lambda +1)} \) & \(\frac{a_0 \lambda_d (\lambda_d \mu_d + \lambda_d+1)}{a_0 \lambda_d \mu_d + \lambda_d + 1}\) & \(\frac{\mu_d (a_0 \lambda_d \mu_d+a_0 \lambda_d+1)}{\lambda_d \mu_d+\lambda_d+1}\)
		\\
		\hline		
		\(\pi\) & \(\mu^{-1}\) & \( b \lambda  (1 + \mu^{-1}) \) & \(\mu_d^{-1}\) & \(a_0 b \lambda_d(1+\mu_d^{-1})\)
		\\
		\hline		
	\end{tabular}
\end{center}

The translations $T_1, T_{1,d}$ and $T_2, T_{2,d}$ originate from $s_0\pi$ and $\pi^2$, preserve parameters \(\mathbf{a}\), and act as 
{\scriptsize
	\begin{subequations}
		\begin{align}
			&T_{1}(\lambda_d)= \frac{\lambda_d \mu_d+\lambda_d+1}{\mu_d(a_0 \lambda_d\mu_d+a_0 \lambda_d+1)}, &		&T_{1}(\mu_d)=\frac{a_0 b \lambda_d (\mu_d+1) (\lambda_d \mu_d+\lambda_d+1)}{\mu_d(a_0 \lambda_d \mu_d + a_0 \lambda_d + 1)},
			\\
			&T_{2}(\lambda_d)= \frac{\mu_d}{a_0 b \lambda_d (\mu_d+1)}, &		&T_{2}(\mu_d)=\frac{a_0 b \lambda_d \mu_d+a_0 b \lambda_d+\mu_d}{a_0 \lambda_d \mu_d (\mu_d+1)},
			\\
            &T_{1,d}(\lambda)= \frac{a_0 (\lambda  \mu +\lambda +1)}{\mu  (\lambda  \mu + \lambda + a_0)}, &		&T_{1,d}(\mu)= \frac{a_0 b \lambda  (\mu +1) (\lambda  \mu +\lambda +1)}{\mu  (\lambda  \mu +\lambda + a_0)},
            \\
            &T_{2,d}(\lambda)= \frac{\mu }{b \lambda  (\mu +1)}, &		&T_{2,d}(\mu)= \frac{a_0 (b \lambda  \mu +b \lambda +\mu )}{\lambda  \mu  (\mu +1)}.
			\end{align}
	\end{subequations}
}
The group contains \(W \ltimes (P\oplus P_d)\) in agreement with Theorem \ref{th:extended group Painleve}. Here \(W\simeq W(A_1)\) is generated by $s_1$ and \(P\simeq P_d\simeq \mathbb{Z}^2\) is generated by $T_1$ and $T_2$.

\subsection{\(E_3^{(1)}/A_5^{(1)}\)}
We choose the bipartite graph and weights of the edges as on the Fig.~\ref{fi:E3}.

\begin{figure}[h]
	\centering
	\includegraphics[]{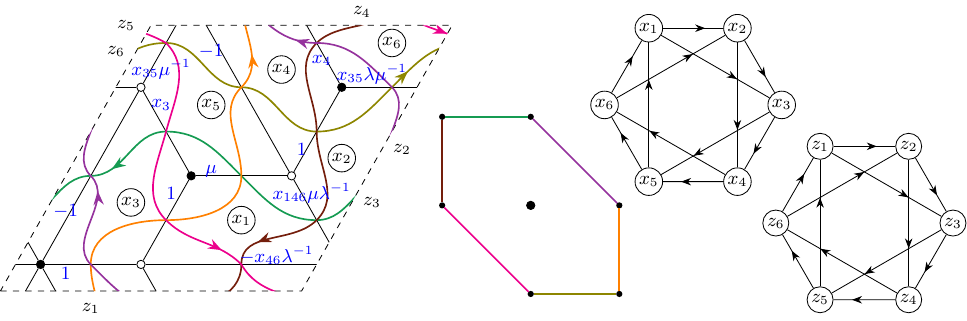}

	\caption{\label{fi:E3}
		For \(E_3^{(1)}/A_5^{(1)}\) Painlev\'e equation: bipartite graph with edge weights, Newton polygon, face quiver with face variables, zigzag quiver with zigzag variables}
\end{figure}
The root variables \(\mathbf{a}\) and dual spectral parameters \(\lambda_d,\mu_d\) can be chosen as follows
\begin{equation}
	\lambda_d=x_1,\; \mu_d=x_2,\; a_1=\frac{1}{x_1 x_4},\; a_2=\frac{1}{x_2 x_5},\; a_4=\frac{1}{x_1 x_3 x_5}.
\end{equation}
The zigzag variables are equal to 
\begin{equation}
	z_1=\mu,\; z_2=\lambda^{-1}\mu a_1^{-1}a_4,\; z_3=\lambda^{-1}a_2,\; z_4=\mu^{-1}a_1,\; z_5=\lambda\mu^{-1}a_2^{-1}a_4^{-1},\; z_6=\lambda.
\end{equation}

The formula for the normalized partition function has the form (see Theorem~\ref{th:selfduality})
\begin{equation}
	\mathcal{Z}\left(\mathbf{a}|\lambda_d,\mu_d;\lambda,\mu\right) =f(\mathbf{a}|\lambda,\mu)-f(s_{1214}\mathbf{a}|\lambda_d,\mu_d),
\end{equation}
where 
\begin{equation}
	f(\mathbf{a}|\lambda,\mu)=a_1^{1/3}a_2^{-1/3}a_4^{1/2}\Big(a_1^{-1}a_2 \mu \lambda^{-1}+a_1^{-1}\mu +a_2\lambda^{-1}+a_4^{-1} \lambda \mu^{-1}+a_4^{-1}\lambda+a_4^{-1}\mu^{-1}\Big).
\end{equation}
The generators of simple reflections have the form
\begin{subequations}
	\begin{alignat}{3}
		&s_1|_{\text{face variables}}= \mu_1 (1,4) \mu_1,& \qquad &s_1|_{\text{zigzag variables}}= \mu_{d,1} (1_d,4_d) \mu_{d,1}, 
		\\
		&s_2|_{\text{face variables}}= \mu_2 (2,5) \mu_2,& \qquad &s_2|_{\text{zigzag variables}}= \mu_{d,3} (3_d,6_d) \mu_{d,3}, 
		\\
		&s_0|_{\text{face variables}}= \mu_3 (3,6) \mu_3,& \qquad &s_0|_{\text{zigzag variables}}= \mu_{d,2} (2_d,5_d) \mu_{d,2}, 
		\\
		&s_3|_{\text{face variables}}= \mu_1 \mu_3 (3,5) \mu_3 \mu_1,& \qquad & s_3|_{\text{zigzag variables}}= \mu_{d,1} \mu_{d,3} (3_d,5_d) \mu_{d,3} \mu_{d,1},
		\\
		&s_4|_{\text{face variables}}= \mu_2 \mu_4(4,6) \mu_4 \mu_2,& \qquad & s_4|_{\text{zigzag variables}}= \mu_{d,2} \mu_{d,4} (4_d,6_d) \mu_{d,4} \mu_{d,2}, 
		\\
		&\pi|_{\text{face variables}}= (1,2,3,4,5,6),& \qquad &\pi|_{\text{zigzag variables}}= (6,5,4,3,2,1). 				
	\end{alignat}
\end{subequations}



The finite symmetry group acts on root variables as in  Fig.~\ref{fi:E3dynkin}.
\begin{figure}[h]
\begin{center}
		\begin{tikzpicture}[elt/.style={circle,draw=black!100,thick, inner sep=0pt,minimum size=2mm},scale=1.25]

			\path 	(0,0) 	node 	(a1) [elt] {}
				(1,0) 	node 	(a2) [elt] {}
				(0.5,0.866) 	node 	(a0) [elt] {}
				(2,0) 	node 	(a3) [elt] {}
				(3,0)	node 	(a4) [elt] {};
			\draw [black,line width=1pt] (a0) -- (a1) -- (a2) -- (a0);
			\draw [<->,black,line width=1pt ] (a3) -- (a4);
			
			\node at ($(a0.west) + (-0.2,0)$) 	{$a_{0}$};			    
			\node at ($(a1.south west) + (-0.2,-0.2)$) 	{$a_{1}$};
			\node at ($(a2.south east) + (0.2,-0.2)$)  {$a_{2}$};
			\node at ($(a3.north) + (0,0.2)$)  {$a_{3}$};
			\node at ($(a4.north) + (0,0.2)$)  {$a_{4}$};	

		\end{tikzpicture}
\hspace{2cm}
\begin{tikzpicture}
\node [shape=rectangle] {
	\begin{tabular}{|c||c|c|c|c|c|}
		\hline
		& \(a_1\) & \(a_2\) & \(a_4\) \\
		\hline
		\(s_1\)	& \(a_1^{-1}\) & \(a_1 a_2\) & \(a_4\)
		\\
		\hline	
		\(s_2\)	& \(a_1 a_2\) & \(a_2^{-1}\) & \(a_4\)
		\\
		\hline	
		\(s_4\)	& \(a_1\) & \(a_2\) & \(a_4^{-1}\)
		\\
		\hline	
	\end{tabular}
}; 
\end{tikzpicture}
\end{center}
\caption{\label{fi:E3dynkin}
		For \(E_3^{(1)}/A_5^{(1)}\) Painlev\'e equation: Dynkin diagram and action on $\mathbf{a}$ variables}
\end{figure}

Action on (dual-) spectral parameters is contained in the following table.
\begin{center}
	\begin{tabular}{|c||c|c|c|c|c|c|c|}
		\hline
		& \(\lambda\) & \(\mu\) & \(\lambda_d\) & \(\mu_d\) \\
		\hline
		\(s_1\)	& \(\frac{\lambda(1+\mu)a_1}{a_1+\mu}\) & \(a_1^{-1}\mu\) & \(a_1 \lambda_d\) & \(\mu_d\frac{1+a_1\lambda_d}{a_1(1+\lambda_d)}\)
		\\
		\hline 
		\(s_2\)& \(a_2^{-1}\lambda\) &\(\mu\frac{a_2+\lambda}{1+\lambda }\) & \(\lambda_d\frac{\mu_d+1}{a_2 \mu_d+1}\) & \(a_2\mu_d\)
		\\
		\hline
		\(s_4\)& \( \frac{\lambda(a_4\mu+a_1\lambda +a_1)}{a_4(\mu+a_1\lambda +a_1 )}\) & \( \frac{\mu(a_4\mu+a_1\lambda +a_1a_4)}{\mu+a_1\lambda +a_1 }\) & \(\frac{\lambda_d(a_2\mu_d+a_4\lambda_d+a_4)}{a_2\mu_d+\lambda_d+1}\) & \(\frac{\mu_d(a_2 \mu_d+a_4 \lambda_d+1)}{a_4(a_2\mu_d+\lambda_d+1)} \)
		\\
		\hline		
	\end{tabular}
\end{center}

The translations preserve parameters \(\mathbf{a}\) and act as 
{\scriptsize
	\begin{subequations}
		\begin{align}
			&T_{1}(\lambda_d)=\frac{\mu_d(1+a_1\lambda_d) (a_1a_2\lambda_d\mu_d+a_2\mu_d+\lambda_d+1)}{a_1a_4\lambda_d(\lambda_d+1) (a_1\lambda_d \mu_d+a_1\lambda_d+a_1+\mu_d)}, &		&T_{1}(\mu_d)=\frac{a_1a_2\lambda_d\mu_d+a_2\mu_d+\lambda_d+1}{a_2\lambda_d (a_1\lambda_d \mu_d+a_1\lambda_d+a_1+\mu_d)}
			\\	
			&T_{2}(\lambda_d)=\frac{a_2 \mu_d+\lambda_d \mu_d+\lambda_d+1}{\mu_d (a_1 a_2 \lambda_d \mu_d+a_1 a_2 \lambda_d+a_2 \mu_d+1)}, &	&T_{2}(\mu_d)=\frac{a_1 a_4 \lambda_d (\mu_d+1) (a_2\mu_d+\lambda_d\mu_d+\lambda_d+1)}{\mu_d (a_2 \mu_d+1) (a_1 a_2\lambda_d \mu_d+a_1 a_2 \lambda_d+a_2 \mu_d+1)}
			\\
			&T_{4}(\lambda_d)=\frac{a_2 \mu_d+\lambda_d+1}{a_1 \lambda_d (a_2\mu_d+a_4 \lambda_d+a_4)}, &
			&T_{4}(\mu_d) =	\frac{a_4 (a_2 \mu_d+\lambda_d+1)}{a_2 \mu_d (a_2\mu_d+a_4 \lambda_d+1)}
			\\
			&T_{1,d}(\lambda)=\frac{a_2 (a_1 \lambda  \mu +a_1 \lambda +a_1+\mu )}{\mu  (a_1 a_2+a_2 \mu +\lambda  \mu +\lambda )},& &T_{1,d}(\mu)=\frac{a_1 \lambda  (\mu +1) (a_1 \lambda  \mu +a_1 \lambda +a_1+\mu )}{a_4 \mu  (a_1+\mu ) (a_1 a_2+a_2 \mu +\lambda  \mu +\lambda )}
			\\
			&T_{2,d}(\lambda)=\frac{a_4 \mu  (a_2+\lambda ) (a_1 a_2 \lambda +a_1 a_2+a_2 \mu +\lambda  \mu )}{a_1 \lambda  (\lambda +1) (a_2 \mu +\lambda  \mu +\lambda +1)},& &T_{2,d}(\mu)=\frac{a_1 a_2 \lambda +a_1 a_2+a_2 \mu +\lambda  \mu }{\lambda  (a_2 \mu +\lambda  \mu +\lambda +1)}
			\\
			&T_{4,d}(\lambda)=\frac{a_2 a_4 (a_1 \lambda +a_1+\mu )}{\lambda  (a_1 \lambda +a_1+a_4 \mu )},& &T_{4,d}(\mu)=\frac{a_1 (a_1 \lambda +a_1+\mu )}{\mu  (a_1 a_4+a_1 \lambda +a_4 \mu )}
			\end{align}
	\end{subequations}
}

The whole group is \(W \ltimes (P\oplus P_d)\) in agreement with Theorem \ref{th:extended group Painleve}. Here \(W\simeq W(A_2+A_1)\) and \(P\simeq P_d\simeq \mathbb{Z}^3\)

\subsection{\(E_4^{(1)}/A_4^{(1)}\)}
We choose the bipartite graph and weights of the edges as on the Fig.~\ref{fi:E4}.

\begin{figure}[h]
    \centering
    \includegraphics{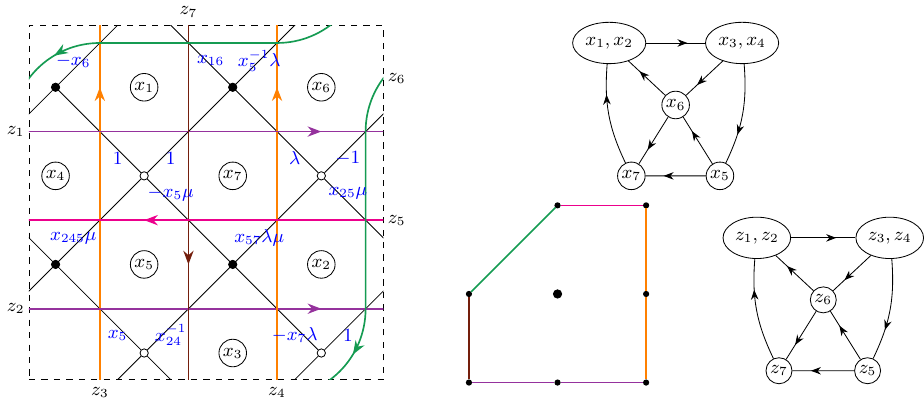}
\caption{\label{fi:E4}
		For \(E_4^{(1)}/A_4^{(1)}\) Painlev\'e equation: bipartite graph with edge weights, Newton polygon, face quiver with face variables, zigzag quiver with zigzag variables}
\end{figure}
The root variable \(\mathbf{a}\) and dual spectral parameters \(\lambda_d,\mu_d\) can be chosen as follows
\begin{equation}
	\lambda_d=x_1,\; \mu_d=x_4,\; a_0 = x_1 x_3 x_6, \; a_1=\frac{x_2}{x_1},\; a_2=x_1 x_5,\; a_3=x_3 x_7, \; a_4 = \frac{x_4}{x_3}.
\end{equation}
The zigzag variables are equal to 
\begin{equation}
	z_1=\lambda,\; z_2=\lambda \frac{1}{a_0},\; z_3=\mu a_0 a_1 a_4,\; z_4=\mu,\; z_5=\frac{1}{\lambda} \frac{1}{a_3 a_4}\; z_6 = \frac{1}{\lambda \mu} \frac{1}{a_1},\; z_7 = \frac{1}{\mu} a_3.
\end{equation}
The formula for the normalized partition function has the form (see Theorem~\ref{th:selfduality})
\begin{equation}
	\mathcal{Z}\left(\mathbf{a}|\lambda_d,\mu_d;\lambda,\mu\right) =f(\mathbf{a}|\lambda,\mu)-f(s_{2324}\mathbf{a}|\lambda_d,\mu_d),
\end{equation}
where 
\begin{multline}
	f(\mathbf{a}|\lambda,\mu)=a_0^{-5/5}a_1^{-3/5} a_2^{-6/5} a_3^{-4/5} a_4^{-2/5} \Big(\lambda^{-1} a_0 a_2 + \mu a_0 a_1 a_2 + \\ + \lambda  \mu + \lambda (1+a_2 a_3) + \lambda \mu^{-1} a_2 a_3 + \mu^{-1} a_2 a_3(1+a_0) + \lambda^{-1} \mu^{-1} a_0 a_2 a_3\Big).
\end{multline}
The simple reflections act by
\begin{subequations}
\label{eq:A4reflections}
	\begin{alignat}{3}
		&s_0|_{\text{face variables}}= \mu_6 \mu_3 (1,3) \mu_3 \mu_6,& \qquad &s_0|_{\text{zigzag variables}}= (1_d,2_d), 
		\\
		&s_1|_{\text{face variables}}= (1,2),& \qquad &s_1|_{\text{zigzag variables}}= \mu_{d,2} \mu_{d,7} \mu_{d,5} (3_d,5_d) \mu_{d,5} \mu_{d,7} \mu_{d,2}, 
		\\
		&s_2|_{\text{face variables}}= \mu_1(1,5)\mu_1,& \qquad &s_2|_{\text{zigzag variables}} = \mu_{d,3} (3_d,7_d) \mu_{d,3}, 
		\\
		&s_3|_{\text{face variables}}= \mu_3 (3,7) \mu_3,& \qquad & s_3|_{\text{zigzag variables}} = \mu_{d,4} (4_d,7_d) \mu_{d,4},
		\\
		&s_4|_{\text{face variables}}= (3,4),& \qquad & s_4|_{\text{zigzag variables}}= \mu_{d,6} \mu_{d,3} (2_d,3_d) \mu_{d,3} \mu_{d,6}, 
		\\
		&\pi|_{\text{face variables}}= (1,3)(2,4,5,6,7)\mu_3,& \qquad &\pi|_{\text{zigzag variables}}= (1_d,4_d,5_d,6_d,3_d)(2_d,7_d)\cdot \nonumber \\
        & & \qquad &  \hspace{4.5cm} \cdot \mu_{d,1}\mu_{d,2}\mu_{d,6}\mu_{d,1}\mu_{d,3}.
	\end{alignat}
\end{subequations}

The reflections affine Weyl group $A_4^{(1)}$ whose Dynkin diagram and action on $\mathbf{a}$ variables is given on Fig.~\ref{fi:E4dynkin}.
\begin{figure}[h]
\begin{center}
\begin{tikzpicture}[elt/.style={circle,draw=black!100,thick, inner sep=0pt,minimum size=2mm},scale=1.25]

			\path (0.809,0.588) 	node 	(a0) [elt] {}
			(1.618,0) 	node 	(a1) [elt] {}
			(1.309,-0.951) 	node 	(a2) [elt] {}
			(0.309,-0.951)	node 	(a3) [elt] {}
           	(0,0) 	node 	(a4) [elt] {};
	\draw [black,line width=1pt] (a0) -- (a1) -- (a2) -- (a3) -- (a4) -- (a0);

			\node at ($(a0.north) + (0,0.2)$) 	{$a_{0}$};
	\node at ($(a1.north east) + (0.2,0)$) {$a_{1}$};
	\node at ($(a2.east) + (0.2,0)$) {$a_{2}$};
	\node at ($(a3.west) + (-0.2,0)$) {$a_{3}$};
	\node at ($(a4.north west) + (-0.2,0)$)  {$a_{4}$};	

\end{tikzpicture}
\hspace{2cm}
\begin{tikzpicture}
\node [shape=rectangle] {
	\begin{tabular}{|c||c|c|c|c|c|}
		\hline
		& \(a_0\) & \(a_1\) & \(a_2\) & \(a_3\) & \(a_4\) \\
		\hline
		\(s_0\)	& \(a_0^{-1}\) & \(a_0 a_1\) & \(a_2\) & \(a_3\) & \(a_0 a_4\)
		\\
		\hline
		\(s_1\)	& \(a_1 a_0 \) & \(a_1^{-1}\) & \(a_1 a_2\) & \(a_3\) & \(a_4\)
        \\
        \hline
		\(s_2\)	& \(a_0 \) & \(a_2 a_1\) & \(a_2^{-1}\) & \(a_2 a_3\) & \(a_4\)
        \\
        \hline
		\(s_3\)	& \(a_0 \) & \(a_1\) & \(a_3 a_2\) & \(a_3^{-1}\) & \(a_3 a_4\)
        \\
        \hline
		\(s_4\)	& \(a_4 a_0 \) & \(a_1\) & \(a_2\) & \(a_4 a_3\) & \(a_4^{-1}\)
        \\
        \hline
		\(\pi\)	& \(a_2\) & \(a_3\) & \(a_4\) & \(a_0\) & \(a_1\)
        \\
        \hline
	\end{tabular}
}; 
\end{tikzpicture}
\end{center}
\caption{\label{fi:E4dynkin}
		For \(E_4^{(1)}/A_4^{(1)}\) Painlev\'e equation: Dynkin diagram and action on $\mathbf{a}$ variables}
\end{figure}

The action of reflections on $\lambda,\lambda$ and $\mu,\mu_d$ is provided in the following table:
\begin{center}
	\begin{tabular}{|c||c|c|c|c|}
		\hline
		& \(\lambda\) & \(\mu\) & \(\lambda_d\) & \(\mu_d\) \\
		\hline
		\(s_0\) & \(\frac{\lambda }{a_0}\) & \( \mu \) & \(\frac{\lambda_d (\lambda_d \mu_d + a_4 \lambda_d + a_4)}{\lambda_d \mu_d + a_0 a_4 \lambda_d + a_0 a_4}\) & \(\frac{\mu_d ( \lambda_d \mu_d + a_4 \lambda_d + a_0 a_4)}{\lambda_d \mu_d + a_4 \lambda_d + a_4}\)
		\\
		\hline
		\(s_1\) & \(\frac{a_1 \lambda ( \lambda \mu + a_2 a_3 \lambda + a_0 a_1 a_2  \mu + a_0 a_2 a_3 )}{\lambda \mu + a_1 a_2 a_3 \lambda + a_0 a_1 a_2 \mu + a_0 a_1 a_2 a_3 }\) & \( \frac{\mu  (\lambda \mu + a_2 a_3 \lambda + a_0 a_1 a_2 \mu + a_0 a_1 a_2 a_3 )}{\lambda \mu + a_2 a_3 \lambda + a_0 a_2 \mu + a_0 a_2 a_3} \) & \(a_1 \lambda_d\) & \(\mu_d\)
        \\
        \hline
		\(s_2\) & \(\frac{\lambda  (\mu + a_2 a_3)}{a_2 (\mu + a_3)}\) & \( \mu \) & \( \frac{\lambda_d}{a_2} \) & \( \frac{\mu_d (\lambda_d+a_2)}{\lambda_d+1} \)
        \\
        \hline
		\(s_3\) & \(\frac{a_3 \lambda  (\mu +1)}{\mu + a_3}\) & \( \frac{\mu }{a_3} \) & \( \frac{a_3 \lambda_d (\mu_d + a_4)}{\mu_d + a_3 a_4} \) & \(\mu_d\)
        \\
        \hline
		\(s_4\) & \(\frac{a_4 \lambda  (\lambda  \mu + a_2 a_3 \lambda + a_0 a_2 a_3 )}{\lambda  \mu + a_2 a_3 a_4 \lambda + a_0 a_2 a_3 a_4 }\) & \( \frac{\mu  ( \lambda  \mu + a_2 a_3 \lambda + a_0 a_2 a_3 a_4 )}{\lambda  \mu + a_2 a_3 \lambda + a_0 a_2 a_3} \) & \( \lambda_d \) & \(\frac{\mu_d}{a_4}\)
        \\
        \hline
		\(\pi\) & \( \frac{\lambda  \mu + a_0 a_2 a_3 + a_2 a_3 \lambda }{a_0 a_1 \mu  (\lambda  \mu + a_0 a_2 a_3 a_4 + a_2 a_3 \lambda )}\) & \( \frac{\lambda(\mu + a_2 a_3)}{a_2 a_3} \) & \( \frac{a_4}{\mu_d} \) & \( \frac{a_1 \lambda_d (\mu_d + a_4)}{a_4} \)
        \\
        \hline
	\end{tabular}
\end{center}

The translations $T_i$ and $T_{i,d}$ preserving parameters \(\mathbf{a}\) are inherited from the translations in the affine Weyl group. The formulas are lengthy so we will not provide them here. The whole group is \(W \ltimes (P\oplus P_d)\) in agreement with Theorem \ref{th:extended group Painleve}, with the finite symmetry group \(W\simeq W(A_4)\) generated by $s_1, s_2, s_3, s_4$ and with the lattice \(P\simeq P_d\simeq \mathbb{Z}^4\).

\subsection{\(E_5^{(1)}/A_3^{(1)}\)}
We choose the bipartite graph and weights of the edges as on the Fig.~\ref{fi:D5}
\begin{figure}[h]
    \centering
    \includegraphics{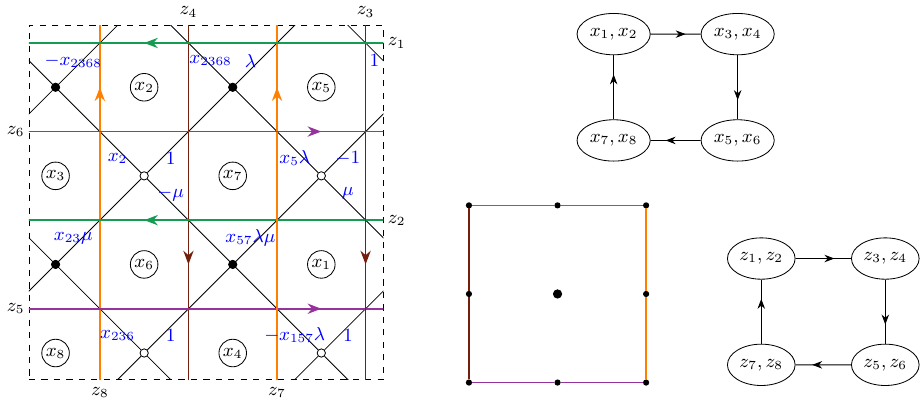}
	\caption{\label{fi:D5}
		Bipartite graph, Newton polygon, face quiver with face variables, zigzag quiver with zigzag variables}
\end{figure}

The root variables \(\mathbf{a}\) and dual spectral parameters \(\lambda_d,\mu_d\) can be chosen as follows
\begin{multline}
	\lambda_d=x_2^{-1},\; \mu_d=x_3^{-1},\; a_0=x_2 x_3 x_5 x_7,\; a_1=x_1 x_3 x_6 x_7,\;a_2=(x_3 x_7)^{-1},\\
    a_3=(x_2 x_6)^{-1},\; a_4=x_2 x_4 x_6 x_7,\; a_5=x_2 x_3 x_6 x_8.
\end{multline}
Notice that the \(\mathbf{a}\) variables satisfy constraint $a_0 a_1 (a_2)^2 (a_3)^2 a_4 a_5 = 1$. The zigzag variables are equal to 
\begin{multline}
	z_{1} = \lambda^{-1}, \; z_{2} = \lambda^{-1}(a_0)^{-1}, \; z_{3} = \mu^{-1}, \;	
	z_{4} = \mu^{-1} (a_{5})^{-1}, \; \\  z_{5} = \lambda a_{0}a_1 a_2, \;
	z_{6} = \lambda a_0 a_2, \; z_{7} = \mu a_{3} a_4 a_5,  \;
	z_{8} = \mu a_{3}a_5.
\end{multline}

The formula for the normalized partition function has the form (see Theorem~\ref{th:selfduality})
\begin{equation}
	\mathcal{Z}\left(\mathbf{a}|\lambda_d,\mu_d;\lambda,\mu\right) =f(\mathbf{a}|\lambda,\mu)-f(s_{1523452}\mathbf{a}|\lambda_d,\mu_d),
\end{equation}
where 
\begin{multline}
	f(\mathbf{a}|\lambda,\mu) = a_5^{1/4} a_1^{2/4} a_3^{2/4} a_4^{3/4} a_2^{4/4} \Big( a_0^2 a_1 a_2^2 a_3 \lambda \mu^{-1} + a_0 a_3 a_5 \lambda  \mu + a_5 a_3 \lambda^{-1} \mu + a_3 \lambda^{-1}  \mu^{-1} + \\ + a_3 (1 + a_5) \lambda^{-1} + a_0(1+a_0 a_1 a_2^2 a_3^2 a_5 ) \lambda + a_0 a_2 a_3(1+a_1) \mu^{-1} + a_3 a_5(1+a_0) \mu \Big).
\end{multline}

The generators of simple reflections have the form
\begin{subequations}
	\begin{alignat}{3}
		&s_0|_{\text{face variables}}= \mu_5 \mu_7 \mu_2 (2,3) \mu_2 \mu_7 \mu_5 ,& \qquad & s_0|_{\text{zigzag variables}}= (1_d,2_d), 	
		\\
		&s_1|_{\text{face variables}}= \mu_1 \mu_3 \mu_6 (6,7) \mu_6 \mu_3 \mu_1,& \qquad &s_1|_{\text{zigzag variables}}= (5_d,6_d),
		\\
		&s_2|_{\text{face variables}}= \mu_3 (3,7) \mu_3 ,&\qquad  &s_2|_{\text{zigzag variables}}= \mu_{d,2} (2_d,6_d) \mu_{d,2},
		\\
		&s_3|_{\text{face variables}}= \mu_2 (2,6) \mu_2,& \qquad & s_3|_{\text{zigzag variables}}=\mu_{d,4} (4_d,8_d) \mu_{d,4},
		\\
		&s_4|_{\text{face variables}}= \mu_2 \mu_4 \mu_6 (6,7) \mu_6 \mu_4 \mu_2,& \qquad & s_4|_{\text{zigzag variables}}= (7_d,8_d),
		\\
		&s_5|_{\text{face variables}}= \mu_2 \mu_3 \mu_6 (6,8) \mu_6 \mu_3 \mu_2,& \qquad & s_5|_{\text{zigzag variables}}= (3_d,4_d),
		\\
		&\pi|_{\text{face variables}}=(2,3,6,7)(1,4,5,8),& \qquad &\pi|_{\text{zigzag variables}} = (1,3,5,7)(2,4,6,8). 				
	\end{alignat}
\end{subequations}
The corresponding Dynkin diagram and action of generators on root variables is presented on Fig.~\ref{fi:E5dynkin}.
\begin{figure}[h]
\begin{center}
	\begin{tikzpicture}[elt/.style={circle,draw=black!100,thick, inner sep=0pt,minimum size=2mm},scale=1.25]
		\path 	( -1.5,1)	node 	(a0) [elt] {}
        (-1.5,-1) 	node 	(a1) [elt] {}
		(-0.5,0) 	node 	(a2) [elt] {}
		( 0.5,0)  node  	(a3) [elt] {}
		( 1.5,1) 	node  	(a4) [elt] {}
		( 1.5,-1) 	node 	(a5) [elt] {};
	\draw [black,line width=1pt ] (a0) -- (a2) -- (a3) -- (a4) (a1) --(a2) (a3) --  (a5);
	\node at ($(a1.south) + (0,-0.2)$) 	{$a_{1}$};
	\node at ($(a2.south) + (0,-0.2)$)  {$a_{2}$};
	\node at ($(a3.south) + (0,-0.2)$)  {$a_{3}$};
	\node at ($(a4.south) + (0,-0.2)$)  {$a_{4}$};	
	\node at ($(a5.south) + (0,-0.2)$)  {$a_{5}$};		
	\node at ($(a0.south) + (0,-0.2)$) 	{$a_{0}$};		
\end{tikzpicture}
\hspace{2cm}
\begin{tikzpicture}
\node [shape=rectangle] {
	\begin{tabular}{|c||c|c|c|c|c|c|}
		\hline
		& \(a_0\) & \(a_1\) & \(a_2\) & \(a_3\) & \(a_4\) & \(a_5\) \\
		\hline
		\(s_0\)	& \(a_0^{-1}\) & \(a_1\) & \(a_0 a_2\) & \(a_3\) & \(a_4\) & \(a_5\)
		\\
		\hline	
		\(s_1\)	& \(a_0\) & \(a_1^{-1}\) & \(a_1 a_2\) & \(a_3\) & \(a_4\) & \(a_5\)
		\\
		\hline	
		\(s_2\)	& \(a_0 a_2\) & \(a_1 a_2\) & \(a_2^{-1}\) & \(a_2 a_3\) & \(a_4\) & \(a_5\)
		\\
		\hline	
		\(s_3\)	& \(a_0\) & \(a_1\) & \(a_2 a_3\) & \(a_3^{-1}\) & \(a_3 a_4\) & \(a_3 a_5\)
		\\
		\hline	
		\(s_4\)	& \(a_0\) & \(a_1\) & \(a_2\) & \(a_3 a_4\) & \(a_4^{-1}\) & \(a_5\)
		\\
		\hline	
		\(s_5\)	& \(a_0\) & \(a_1\) & \(a_2\) & \(a_3 a_5\) & \(a_4\) & \(a_5^{-1}\)
		\\
		\hline	
		\(\pi\)	& \(a_4\) & \(a_5\) & \(a_3\) & \(a_2\) & \(a_1\) & \(a_0\)
		\\
		\hline	
	\end{tabular}
}; 
\end{tikzpicture}
\end{center}
\caption{\label{fi:E5dynkin}
		For \(D_5^{(1)}/A_3^{(1)}\) Painlev\'e equation: Dynkin diagram and action on $\mathbf{a}$ variables}
\end{figure}

The action on spectral parameters and dual spectral parameters is given in the table below.
\begin{center}
	\begin{tabular}{|c||c|c|c|c|}
		\hline
		& \(\lambda\) & \(\mu\) & \(\lambda_d\) & \(\mu_d\)
        \\
		\hline
		\(s_0\)	& \(a_0 \lambda\) & \(\mu\) & \(\frac{\lambda_d (a_0 (\lambda_d+1) \mu_d+a_2 a_0 \lambda_d+1)}{a_0 a_2 \lambda_d+\lambda_d \mu_d+\mu_d+1}\) & \(\frac{a_0 \mu_d (a_2 \lambda_d+\lambda_d \mu_d+\mu_d+1)}{a_0 \lambda_d \mu_d+a_0 a_2 \lambda_d+\mu_d+1}\)
		\\
		\hline	
		\(s_1\) & \(\lambda\) & \(\mu\) & \(\frac{\lambda_d (\lambda_d (a_2+\mu_d)+a_2 a_3 (a_1 \mu_d+1))}{\lambda_d (a_1 a_2+\mu_d)+a_1 a_2 a_3 (\mu_d+1)}\) & \(\frac{\mu_d (\lambda_d (a_2+\mu_d)+a_1 a_2 a_3 (\mu_d+1))}{\lambda_d (a_2+\mu_d)+a_2 a_3 (\mu_d+1)}\)
		\\
		\hline	
		\(s_2\) & \(\lambda\) & \( \frac{\mu(1+a_0 \lambda) }{1 + a_0 a_2 \lambda} \) & \(\frac{\lambda_d (a_2+\mu_d)}{\mu_d+1}\) & \(\frac{\mu_d}{a_2}\)
		\\
		\hline	
		\(s_3\) & \(\frac{\lambda(1 + a_3 a_5  \mu )}{a_3 (1 + a_5 \mu)}\) & \(\mu\) & \(\frac{\lambda_d}{a_3}\) & \(\frac{a_3 (\lambda_d+1) \mu_d}{a_3+\lambda_d}\)
		\\
		\hline	
		\(s_4\) & \(\lambda\) & \(\mu\) & \(\frac{a_4 \lambda_d (a_2 (a_3+\lambda_d)+(\lambda_d+1) \mu_d)}{a_2 a_4 (a_3+\lambda_d)+(\lambda_d+1) \mu_d}\) & \(\frac{\mu_d (a_2 (a_3 a_4+\lambda_d)+(\lambda_d+1) \mu_d)}{a_4 \lambda_d \mu_d+a_2 a_4 (a_3+\lambda_d)+\mu_d}\)
		\\
		\hline	
		\(s_5\) & \(\lambda\) & \(a_5 \mu\) & \(\frac{\lambda_d (a_3 a_5 \lambda_d \mu_d+a_3 a_5 \mu_d+a_3+\lambda_d)}{a_3 \lambda_d \mu_d+a_3 (\mu_d+1)+\lambda_d}\) & \(\frac{\mu_d (a_3 a_5 \lambda_d \mu_d+a_3 a_5 (\mu_d+1)+\lambda_d)}{a_3 a_5 \lambda_d \mu_d+a_3 (\mu_d+1)+\lambda_d}\)
		\\
		\hline
		\(\pi\) & \(\frac{1}{a_3 a_4 a_5 \mu}\) & \(\lambda\) & \(\frac{a_2}{\mu_d}\) & \(\lambda_d\)
		\\
		\hline
	\end{tabular}
\end{center}

The formulas for translations $T_1,...,T_5$, $T_{1,d},...,T_{5,d}$ are quite lengthy, so we will not provide them. However they can be obtained from translations in $W^{\mathrm{ae}}(D_5)$ written using generators above. The whole group is \(W \ltimes (P\oplus P_d)\) in agreement with Theorem \ref{th:extended group Painleve}. Here \(W\simeq W(D_5)\) and \(P\simeq P_d\simeq \mathbb{Z}^5\).

\subsection{\(E_6^{(1)}/A_2^{(1)}\)}
We choose the bipartite graph and weights of the edges as on the Fig.\ref{fi:E6}
\begin{figure}[h]
    \centering
    \includegraphics{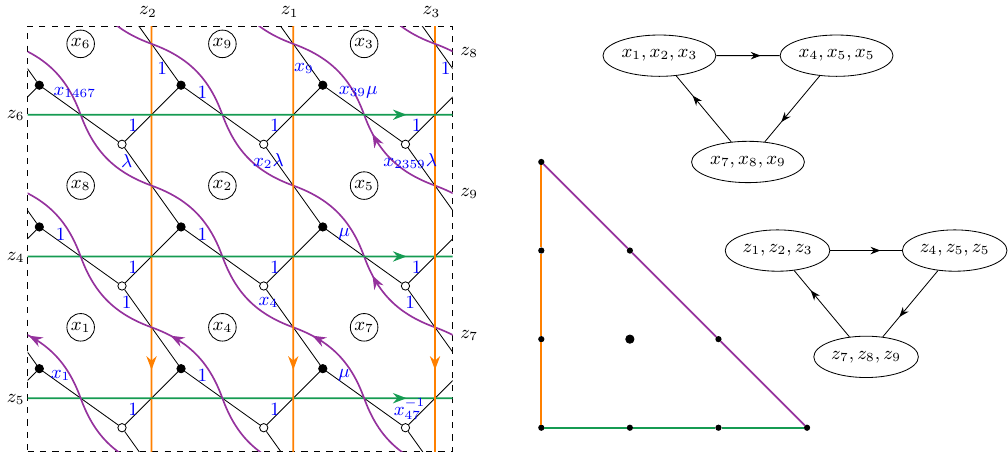}
%
	\caption{\label{fi:E6}
		Bipartite graph, Newton polygon, face quiver with face variables, zigzag quiver with zigzag variables}
\end{figure}

The root variables \(\mathbf{a}\) and dual spectral parameters \(\lambda_d,\mu_d\) can be chosen as follows
\begin{multline}
	\lambda_d=x_1^{-1},\; \mu_d=x_4,\; a_1=x_2x_3^{-1},\; a_2=x_1x_2^{-1},\;a_3=x_1^{-1}x_4^{-1}x_7^{-1},\\
    a_4=x_4x_5^{-1},\; a_5=x_5x_6^{-1},\; a_6=x_7x_8^{-1},\; a_0=x_8x_9^{-1}.
\end{multline}
The zigzag variables are equal to 
\begin{multline}
	z_{1} = \lambda^{-1}, \; z_{2} = \lambda^{-1} a_{3456}^{-1}, \; z_{3} = \lambda^{-1}a_{12334456}^{-1}, \;
	z_{4} = \mu, \; z_{5} = \mu a_{3}^{-1}, 
	\\ 
	z_{6} = \mu a_{2346}, \; z_{7} = \lambda \mu^{-1} a_{2}^{-1}, \;	
	z_{8} = \lambda \mu^{-1}a_{345}^{-1}, \; z_{9} = \lambda \mu^{-1} a_{1233456}^{-1}
	.
\end{multline}

The formula for the normalized partition function has the form (see Theorem~\ref{th:selfduality})
\begin{equation}
	\mathcal{Z}\left(\mathbf{a}|\lambda_d,\mu_d;\lambda,\mu\right) =f(\mathbf{a}|\lambda,\mu)-f(s_{435624346}\mathbf{a}|\lambda_d,\mu_d),
\end{equation}
where 
\begin{multline}
	f(\mathbf{a}|\lambda,\mu) =a_{1}^{-\frac{1}{3}} a_{2}^{-\frac{2}{3}} a_{3}^{-2} a_{4}^{-\frac{4}{3}} a_{5}^{-\frac{2}{3}} a_{6}^{-1} \Big(\lambda^2 \mu^{-1} a_{1} a_{2} a_{3}^{4} a_{4}^{3} a_{5}^{2} a_{6}^{2} 
	\\
	+ \lambda \mu^{-1} a_{3}^{2} a_{4} a_{5} a_{6} \left(1 + a_{1} a_{2} a_{3} a_{4} + a_{1} a_{2} a_{3}^{2} a_{4}^{2} a_{5} a_{6}\right) 
	+ \lambda a_{3}^{2} a_{4}^{2} a_{5} a_{6} \left(1 + a_{1} a_{2} a_{3} a_{6} + a_{1} a_{2}^{2} a_{3}^{2} a_{4} a_{5} a_{6}\right) 
	\\
	+ \mu^{-1} a_{3} \left(1 + a_{3} a_{4} a_{5} a_{6} + a_{1} a_{2} a_{3}^{2} a_{4}^{2} a_{5} a_{6}\right) 
	+ \mu a_{3} a_{4} a_{6} \left(1 + a_{2} a_{3} a_{4} a_{5} + a_{1} a_{2}^{2} a_{3}^{2} a_{4} a_{5} a_{6}\right) 
	+ \lambda^{-1} \mu^{-1} a_{3} 
	\\
	+ \lambda^{-1} \left(1 + a_{3} + a_{2} a_{3}^{2} a_{4} a_{6}\right) 
	+ \lambda^{-1} \mu \left(1 + a_{2} a_{3} a_{4} a_{6} + a_{2} a_{3}^{2} a_{4} a_{6}\right) + \lambda^{-1} \mu^{2} a_{2} a_{3} a_{4} a_{6}\Big)	
\end{multline}

Probably it is more transparent to write restrictions of the dimer partition function \(\mathcal{Z}|_E\) and Hamiltonian \(f|_E\) on the the sides \(E\) of \(N\) (c.f. Lemma~\ref{lem:Painleve by boundary}). In the Fig.~\ref{fi:E6 roots} we present these polynomials in factorized form (omitting the constant factors).
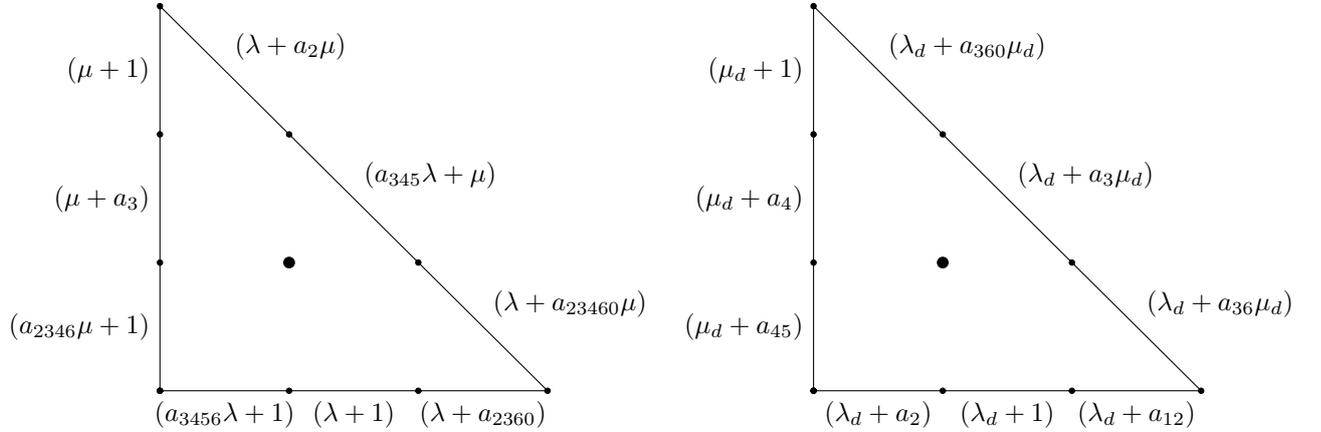
\begin{figure}[h]
	\begin{center}
			\begin{tikzpicture}[font = \small]
				\def\xs{1.7};
				\def\ys{1.7};
					
				\begin{scope}

					\node[left] at (-\xs,1.5*\ys) {$(\mu +1) $};
					\node[left] at (-\xs,0.5*\ys) {$ (\mu+a_3)  $};
					\node[left] at (-\xs,-0.5*\ys) {$(a_{2346}\mu +1)$};
						
					\node[below] at (-0.5*\xs,-1*\ys) {$ (a_{3456}\lambda +1)$};
					\node[below] at (0.5*\xs,-1*\ys) {$(\lambda +1)  $};
					\node[below] at (1.5*\xs,-1*\ys) {$(\lambda +a_{2360})$};
						
					\node[above right] at (-0.5*\xs,1.5*\ys) {$(\lambda+a_2 \mu)  $};

					\node[above right] at (0.5*\xs,0.5*\ys) {$ (a_{345} \lambda +\mu )$};

					\node[above right] at (1.5*\xs,-0.5*\ys) {$ (\lambda +a_{23460}\mu )$};

					\draw[fill] (-1*\xs,-1*\ys) circle (1pt) -- (0,-1*\ys) circle  (1pt) -- (1*\xs,-1*\ys) circle (1pt) -- (2*\xs,-1*\ys) circle (1pt) -- (1*\xs,0) circle (1pt) -- (0,1*\ys) circle (1pt) -- (-1*\xs,2*\ys) circle (1pt) -- (-1*\xs,1*\ys) circle (1pt) -- (-1*\xs,0) circle (1pt) -- (-1*\xs,-1*\ys) circle (1pt) ;
					\draw[fill] (0,0) circle (2pt);	
				\end{scope}		
			
				\begin{scope}[shift={(3*\xs+3.5,0)}]
					
					\node[left] at (-\xs,1.5*\ys) {$(\mu_d+1) $};
					\node[left] at (-\xs,0.5*\ys) {$ (\mu_d+a_4) $};
					\node[left] at (-\xs,-0.5*\ys) {$(\mu_d+a_{45})$};
					
					\node[below] at (-0.5*\xs,-1*\ys) {$ (\lambda_d +a_2)$};
					\node[below] at (0.5*\xs,-1*\ys) {$(\lambda_d +1) $};
					\node[below] at (1.5*\xs,-1*\ys) {$ (\lambda_d +a_{12})$};
%


					\node[above right] at (-0.5*\xs,1.5*\ys) {$(\lambda_d+a_{360}\mu_d) $};
					
					\node[above right] at (0.5*\xs,0.5*\ys) {$ (\lambda_d+a_3 \mu_d)$};
					
					\node[above right] at (1.5*\xs,-0.5*\ys) {$ (\lambda_d+a_{36} \mu_d)$};

%

						\draw[fill] (-1*\xs,-1*\ys) circle (1pt) -- (0,-1*\ys) circle  (1pt) -- (1*\xs,-1*\ys) circle (1pt) -- (2*\xs,-1*\ys) circle (1pt) -- (1*\xs,0) circle (1pt) -- (0,1*\ys) circle (1pt) -- (-1*\xs,2*\ys) circle (1pt) -- (-1*\xs,1*\ys) circle (1pt) -- (-1*\xs,0) circle (1pt) -- (-1*\xs,-1*\ys) circle (1pt) ;
						\draw[fill] (0,0) circle (2pt);	
							
				\end{scope}
			\end{tikzpicture}			
			\caption{\label{fi:E6 roots} The factorized formulas for the restriction to boundary intervals of the dimer partition function and of the Hamiltonian for \(E_6^{(1)}/A_2^{(1)}\) case, up to multiplication by constant.}
		\end{center}
\end{figure}

The generators of simple reflections have the form
\begin{subequations}
	\begin{alignat}{3}
		&s_1|_{\text{face variables}}= (2,3),& \qquad &s_1|_{\text{zigzag variables}}= \mu_{d,3} \mu_{d,6} (6_d,8_d) \mu_{d,6} \mu_{d,3},
		\\
		&s_2|_{\text{face variables}}= (1,2),&\qquad  &s_2|_{\text{zigzag variables}}= \mu_{d,1} \mu_{d,4} (4_d,7_d) \mu_{d,4} \mu_{d,1},
		\\
		&s_3|_{\text{face variables}}= \mu_1 \mu_4 (4,7) \mu_4 \mu_1,& \qquad & s_3|_{\text{zigzag variables}}=(4_d,5_d),
		\\
		&s_4|_{\text{face variables}}= (4,5),& \qquad & s_4|_{\text{zigzag variables}}= \mu_{d,4} \mu_{d,3} (3_d,9_d) \mu_{d,3} \mu_{d,4},
		\\
		&s_5|_{\text{face variables}}= (5,6),& \qquad & s_5|_{\text{zigzag variables}}= \mu_{d,2} \mu_{d,6} (6_d,7_d) \mu_{d,6} \mu_{d,2},
		\\
		&s_6|_{\text{face variables}}= (7,8),& \qquad & s_6|_{\text{zigzag variables}}= \mu_{d,4} \mu_{d,2} (2_d,8_d) \mu_{d,2} \mu_{d,4},
		\\
		&s_0|_{\text{face variables}}= (8,9),& \qquad & s_0|_{\text{zigzag variables}}= \mu_{d,2} \mu_{d,4} (4_d,6_d) \mu_{d,4} \mu_{d,2}, 		
		\\
		&\pi|_{\text{face variables}}=(1,4,7)(2,5,8)(3,6,9),& \qquad &\pi|_{\text{zigzag variables}}= (3,2,1)(9,8,7). 				
	\end{alignat}
\end{subequations}
Their action on $\mathbf{a}$ variables is according to (\ref{eq:actiononroot}) and Dynkin diagram drawn on Fig.~\ref{fi:E6dynkin}.
\begin{figure}[h]
\begin{center}
\begin{tikzpicture}[elt/.style={circle,draw=black!100,thick, inner sep=0pt,minimum size=2mm},scale=1.25]
	\path 	(-2,0) 	node 	(a1) [elt] {}
	(-1,0) 	node 	(a2) [elt] {}
	( 0,0) node  	(a3) [elt] {}
	( 1,0) 	node  	(a4) [elt] {}
    ( 2,0) 	node 	(a5) [elt] {}
	( 0,1)	node 	(a6) [elt] {}
	( 0,2)	node 	(a0) [elt] {};
	\draw [black,line width=1pt ] (a1) -- (a2) -- (a3) -- (a4) -- (a5) (a3) --  (a6) -- (a0);
	\node at ($(a1.south) + (0,-0.2)$) 	{$a_{1}$};
	\node at ($(a2.south) + (0,-0.2)$)  {$a_{2}$};
	\node at ($(a3.south) + (0,-0.2)$)  {$a_{3}$};
	\node at ($(a4.south) + (0,-0.2)$)  {$a_{4}$};	
	\node at ($(a5.south) + (0,-0.2)$)  {$a_{5}$};		
	\node at ($(a6.west) + (-0.2,0)$) 	{$a_{6}$};	
	\node at ($(a0.west) + (-0.2,0)$) 	{$a_{0}$};		
\end{tikzpicture}
\hspace{1cm}
\begin{tikzpicture}
\node [shape=rectangle] {
	\begin{tabular}{|c||c|c|c|c|c|c|c|}
		\hline
		& \(a_0\) & \(a_1\) & \(a_2\) & \(a_3\) & \(a_4\) & \(a_5\) & \(a_6\) \\
		\hline	
		\(s_1\)	& \(a_0\) & \(a_1^{-1}\) & \(a_1 a_2\) & \(a_3\) & \(a_4\) & \(a_5\) & \(a_6\)
		\\
		\hline	
		\(s_2\)	& \(a_0\) & \(a_2 a_1\) & \(a_2^{-1}\) & \(a_2 a_3\) & \(a_4\) & \(a_5\) & \(a_6\)
		\\
		\hline	
		\(s_3\)	& \(a_0\) & \(a_1\) & \(a_3 a_2\) & \(a_3^{-1}\) & \(a_3 a_4\) & \(a_5\) & \(a_3 a_6\)
		\\
		\hline	
		\(s_4\)	& \(a_0\) & \(a_1\) & \(a_2\) & \(a_4 a_3\) & \(a_4^{-1}\) & \(a_4 a_5\) & \(a_6\)
		\\
		\hline	
		\(s_5\)	& \(a_0\) & \(a_1\) & \(a_2\) & \(a_3\) & \(a_5 a_4\) & \(a_5^{-1}\) & \(a_6\)
		\\
		\hline	
		\(s_6\)	& \(a_6 a_0\) & \(a_1\) & \(a_2\) & \(a_6 a_3\) & \(a_4\) & \(a_5\) & \(a_6^{-1}\)
		\\
		\hline
		\(s_0\)	& \(a_0^{-1}\) & \(a_1\) & \(a_2\) & \(a_3\) & \(a_4\) & \(a_5\) & \(a_0 a_6\)
		\\	
        \hline	
	\end{tabular}
}; 
\end{tikzpicture}
\end{center}
\caption{\label{fi:E6dynkin}
		For \(E_6^{(1)}/A_2^{(1)}\) Painlev\'e equation: Dynkin diagram and action on $\mathbf{a}$ variables}
\end{figure}

The formulas for translations $T_1,...,T_6$, $T_{1,d},...,T_{6,d}$ are quite lengthy, so we will not provide them. However they can be obtained from translations in $W^{\mathrm{ae}}(E_6)$ written using generators above. The whole group is \(W \ltimes (P\oplus P_d)\) in agreement with Theorem \ref{th:extended group Painleve}. Here \(W\simeq W(E_6)\) and \(P\simeq P_d\simeq \mathbb{Z}^6\).


\section{\(q\)-Painlev\'e \(E_7^{(1)}\) } \label{sec:E7}

\subsection{Reduction of cluster variety and \(W^{\mathrm{ae}}(E_7)\)}

The pointed Newton polygon \(N\) for $E_7^{(1)}$ case is a rectangle with the sides 2 and 4 drawn on a Fig.~\ref{fig:E7polygon}, see Section \ref{sec:Painleve} and in particular Fig.~\ref{fi:Pain polig} above. We choose consistent  bipartite graph as on Fig.~\ref{fig:E7polygon}, right. The corresponding quiver is drawn on the Fig.~\ref{fig:E7quivers}, left. We denote the corresponding cluster seed by \(\mathsf{s}\).

\begin{Remark} Notice that the bipartite graph which we chose is different from more simple so called the fence-net bipartite graph used \cite[Fig. 6]{MS:2019}. Two graphs are related by sequence of 4-gon mutations $\mu_8 \mu_6 \mu_4 \mu_2$. This makes parallel zigzags to be adjacent, preparing two $\Pi_{4, 2}$ patches for the reductions. 
\end{Remark}

\begin{figure}[h]
	\begin{center}
		\begin{tikzpicture}[scale=1.0, font = \small]
			\draw[fill] (0,0) circle (1pt) -- (1,0) circle (1pt) -- (2,0) circle (1pt) -- (2,1) circle (1pt) -- (2,2) circle (1pt) -- (2,3) circle (1pt) -- (2,4) circle (1pt) -- (1,4) circle (1pt) -- (0,4) circle (1pt) -- (0,3) circle (1pt) -- (0,2) circle (1pt) -- (0,1) circle (1pt) -- (0,0);
			\draw[fill] (1,2) circle (2pt);
			\draw (1,1) circle (2pt);
			\draw (1,3) circle (2pt);
		\end{tikzpicture}
        \qquad\qquad\qquad
        \includegraphics{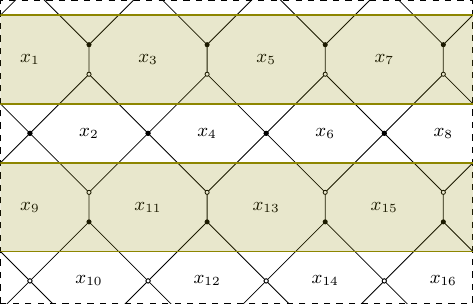}
		\caption{On the left: pointed Newton polygon for the Painlev\'e $E_7^{(1)}$ reduction. On the right: corresponding bipartite graph drawn on the torus. Patches between parallel zigzag paths, where reductions is performed, are indicated by olive strips.}
		\label{fig:E7polygon}
	\end{center}
\end{figure}

\begin{figure}[h]
	\begin{center}
        \includegraphics{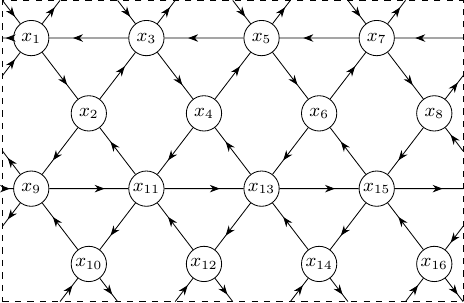}	
        \quad
        \includegraphics{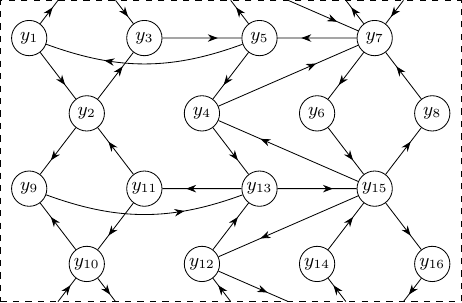}	
		\caption{On the left: quiver for the dimer model of Fig.~\ref{fig:E7polygon}. On the right: quiver after mutation sequence $\mu_{13}\mu_{15}\mu_{5}\mu_{7}$.}
		\label{fig:E7quivers}
	\end{center}
\end{figure}

We impose then two reduction conditions with \(h=2\) corresponding to the top and bottom sides of the rectangle, both of integer length 2. Hence, by the formula \eqref{eq:dim X_red} we get that dimension of the reduced space is
\begin{equation}
	\label{eq:dimE7}
	\dim \mathcal{X}_{\text{red}}=
	2\operatorname{Area}(N) - 2\cdot 3 = 10
\end{equation}
which is consistent with having $8$ Casimir root variables $a_0,...,a_7$ and two-dimensional phase space with coordinates $\lambda_d, \mu_d$ for Painlev\'e \(E_7^{(1)}\) system. 

The reduction conditions come from patches between parallel zigzag paths on the bipartite graph, as it was discussed in Section \ref{ssec:zigzag mut 1}. Followon Lemma~\ref{lem:J implies I} we define 
\begin{equation}
	\label{Gaired}
	\begin{gathered}
		C_1 = x_1 x_3 x_5 x_7, ~~~
		C_2 = x_9 x_{11} x_{13} x_{15},
		\\
		H_1=\mathsf{S}(x_1, x_7, x_5), ~~~
		H_2=\mathsf{S}(x_{11}, x_{13}, x_{15}),
	\end{gathered}
\end{equation}
and define the reduction ideal by
\begin{equation}
	\label{JE7}
	J=(C_1-1,C_2-1,H_1+1,H_2+1).
\end{equation}

The ideal has simpler form in the seed related by a sequence of mutations $\boldsymbol{\mu}=\mu_{13}\mu_{15}\mu_{5}\mu_{7}$. The transformed quiver is drawn on the Fig.~\ref{fig:E7quivers}, right. The following proposition follows from a direct computation

\begin{Proposition}	

(a) In a seed \(\mathsf{t}=\boldsymbol{\mu}(\mathsf{s})\) the ideal (\ref{JE7}) is generated by
\begin{equation}
	\label{eq:JE7 in y}
	J=(y_1 y_3 - 1, y_9 y_{11} - 1, y_1 + 1, y_{11} + 1),
\end{equation}
where $y_i = \boldsymbol{\mu}(x_i)$.

(b) The ideal \(J\) is closed under the Poisson bracket.
\end{Proposition}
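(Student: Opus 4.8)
The plan is to verify both claims by direct computation, organized around the fact that the two reduction patches are disjoint and symmetric, so that the mutations $\mu_5\mu_7$ and $\mu_{13}\mu_{15}$ act on essentially independent sets of variables.

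For part (a), I would first read off from the quiver of $\mathsf{s}$ (Fig.~\ref{fig:E7quivers}, left) the relevant entries of the exchange matrix $b$, namely those coupling the mutation vertices $5,7,13,15$ to the vertices $1,3,9,11$ entering the generators of $J$. Applying the exchange rule \eqref{eq:mutation rule} along $\boldsymbol{\mu}=\mu_{13}\mu_{15}\mu_5\mu_7$ — first $\mu_7$, then $\mu_5$ in the bottom patch, and symmetrically $\mu_{15},\mu_{13}$ in the top patch — one tracks the images of $x_1,x_3,x_5,x_7$ as rational functions of $\mathbf{y}$. Since $C_1=x_1x_3x_5x_7$ is a Casimir monomial (it equals $z_1z_2^{-1}$, a ratio of central zigzag variables), the factors $(1+x_k^{\pm1})$ produced by the two mutations cancel and $\boldsymbol{\mu}(C_1)$ is again a Laurent monomial; one checks it equals $y_1y_3$, so that $C_1-1\mapsto y_1y_3-1$. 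For the Hamiltonian generator one expands $H_1=\mathsf{S}(x_1,x_7,x_5)=x_1(1+x_7(1+x_5))$ and verifies, using the telescoping structure of $\mathsf{S}$, that after $\mu_7\mu_5$ it collapses (up to an invertible monomial prefactor, which does not change the ideal) to $y_1$, giving $H_1+1\mapsto y_1+1$. The symmetric computation in the top patch yields $C_2-1\mapsto y_9y_{11}-1$ and $H_2+1\mapsto y_{11}+1$.

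For part (b), I would exploit that $C_1,C_2$ are Casimir functions and that cluster mutations are Poisson maps: hence the images $y_1y_3$ and $y_9y_{11}$ are again Casimirs in the seed $\mathsf{t}$. Consequently $\{y_1y_3-1,\,\cdot\,\}=\{y_9y_{11}-1,\,\cdot\,\}=0$, so every bracket among the four generators involving the first two vanishes identically. The only bracket left to examine is $\{y_1+1,\,y_{11}+1\}=\{y_1,y_{11}\}=b^{\mathsf{t}}_{1,11}\,y_1y_{11}$, computed from the cluster bracket \eqref{eq:Poisson cluster} in the mutated seed. Reading off the quiver of $\mathsf{t}$ (Fig.~\ref{fig:E7quivers}, right) shows that vertices $1$ and $11$ are not adjacent, so $b^{\mathsf{t}}_{1,11}=0$ and this bracket vanishes as well; thus $\{J,J\}\subset J$. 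Alternatively, (b) is a special case of the general closedness statement for reduction ideals discussed in Section~\ref{ssec:zigzag Poisson}.

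The main obstacle is the bookkeeping in part (a): several of the variables $x_1,x_3$ are themselves modified by the $(1+x_k)$ factors of the intermediate mutations, so one must carry these corrections carefully and confirm that they conspire — via the Casimir constraint and the freedom to rescale by invertible monomials — to turn the genuine polynomial $H_1+1$ into the clean binomial $y_1+1$. Verifying the non-adjacency $b^{\mathsf{t}}_{1,11}=0$ in part (b) is comparatively immediate once the mutated quiver has been drawn.
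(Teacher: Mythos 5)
Your proposal is correct and takes essentially the same approach as the paper, which establishes this proposition by direct computation in the mutated seed. Your structural shortcuts — that Casimir monomials such as $C_1=x_1x_3x_5x_7$ stay Laurent monomials under mutation, and that the Poisson-map property of mutations reduces part (b) to the single adjacency check $b^{\mathsf{t}}_{1,11}=0$ — are sound and simply organize that same computation efficiently.
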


The fact that ideal in a seed \(\boldsymbol{\mu}\mathsf{s}\) if given by binomial conditions should be viewed as a particular case of Conjecture~\ref{conj:reductioncord} (a). In the seed \(\boldsymbol{\mu}\mathsf{s}\) the equations that determines ideal \(J\) can be easily solved 
\begin{equation}
	\label{eq:E6constr}
	y_1 = y_3 = y_9 = y_{11} = -1.
\end{equation}
In accordance with Conjecture~\ref{conj:reductioncord} (b), introduce functions \(\mathbf{w}\) by
\begin{equation}
	\label{glu_var}
	\begin{gathered}
	    w_1 = y_7, ~~~ w_2 = y_{10} y_5 y_{13}, ~~~ w_3=y_2 y_5 y_{13}, ~~~ w_4 = y_{14}, ~~~ w_5 = y_6, \\
	    w_6 = y_{15}, ~~~ w_7 = y_{12}, ~~~ w_8 = y_4, ~~~ w_9 = y_{16}, ~~~ w_{10} = y_8.
	\end{gathered}  
\end{equation}
It is straightforward to check that
\begin{Proposition}
	a) The variables \(\mathbf{w}\) are local coordinates on the Hamiltonian reduction with respect to ideal \(J\).
 
	b) The Poisson bracket between \(\{w_i,w_j\}=b^{\mathrm{red}}_{ij}w_iw_j\) is logarithmically constant,  where the matrix \(b^{\mathrm{red}}\) is the adjacency matrix for a quiver in Fig.~\ref{fi:E7quiver}.
 \end{Proposition}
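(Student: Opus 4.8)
The plan is to realize the Hamiltonian reduction of the chart $\mathcal{X}_\mathsf{t}$ explicitly in the mutated seed $\mathsf{t}=\boldsymbol{\mu}(\mathsf{s})$, following the scheme of Lemma~\ref{lem:patch cluster chart} and Definition~\ref{def:reduction}. First I would separate the four generators of $J$ in \eqref{eq:JE7 in y} by type. The binomials $y_1y_3-1$ and $y_9y_{11}-1$ are $\boldsymbol{\mu}(C_1)-1$ and $\boldsymbol{\mu}(C_2)-1$; since $C_1,C_2$ are Casimirs and mutation sends Casimirs to Casimirs, the monomials $y_1y_3$ and $y_9y_{11}$ are central in the seed $\mathsf{t}$, so their exponent vectors lie in $\ker\hat b$, where $\hat b$ is the exchange matrix of the quiver on Fig.~\ref{fig:E7quivers}, right. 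The binomials $y_1+1$ and $y_{11}+1$ are the moment-map conditions $\boldsymbol{\mu}(H_1)+1$, $\boldsymbol{\mu}(H_2)+1$. On the locus $\mathbf{V}(J)=\{y_1=y_3=y_9=y_{11}=-1\}$ of \eqref{eq:E6constr} the log-canonical identity $\{y_a,\prod_k y_k^{m_k}\}=\big(\sum_k\hat b_{ak}m_k\big)\,y_a\prod_k y_k^{m_k}$ shows that $X_{y_1}$ and $X_{y_{11}}$ act as two algebraic torus actions; they commute because $J$ is Poisson-closed (the previous proposition), which forces $\hat b_{1,11}=0$. The reduction is then the quotient of $\mathbf{V}(J)$ by this $(\mathbb{C}^*)^2$.

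For part (a) I would verify that each $w_i$ of \eqref{glu_var} descends to this quotient. Writing $w_i=\prod_k y_k^{m_{ik}}$, the same identity gives $\{w_i,y_1\}=-(\hat b m_i)_1\,y_1w_i$ and $\{w_i,y_{11}\}=-(\hat b m_i)_{11}\,y_{11}w_i$; since $y_1=y_{11}=-1$ on $\mathbf{V}(J)$ and $w_i$ is a unit, the normalization condition $\{w_i,J\}\subset J$ reduces to the two weight equations $(\hat b m_i)_1=(\hat b m_i)_{11}=0$ (the Casimir brackets vanish identically by the kernel property). I would read the ten exponent vectors off \eqref{glu_var}, read $\hat b$ off Fig.~\ref{fig:E7quivers}, right, and check these equations for $i=1,\dots,10$. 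It then remains to show that the $w_i$ are functionally independent and generate the ring of invariant monomials: concretely, that the $10\times 12$ exponent matrix in the twelve free variables $y_2,y_4,y_5,y_6,y_7,y_8,y_{10},y_{12},y_{13},y_{14},y_{15},y_{16}$ spans the rank-$10$ lattice orthogonal to the two weight vectors $(\hat b_{1k})_k,(\hat b_{11,k})_k$. This exhibits $\mathbf{w}$ as coordinates on the $10$-dimensional reduced chart, consistently with \eqref{eq:dimE7}.

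For part (b) I would use that the bracket of two monomials is again log-canonical: $\{w_i,w_j\}=(m_i^{\mathsf T}\hat b\,m_j)\,w_iw_j$, i.e.\ $b^{\mathrm{red}}=M\hat b M^{\mathsf T}$ with $M=(m_{ik})$ the exponent matrix; skew-symmetry of $b^{\mathrm{red}}$ is automatic from $\hat b^{\mathsf T}=-\hat b$. Because each $w_i$ normalizes $J$ (part (a)), this ambient bracket descends verbatim to the reduced Poisson bracket, with no correction terms, so $\{w_i,w_j\}_{\mathrm{red}}=b^{\mathrm{red}}_{ij}w_iw_j$ is logarithmically constant. What is left is the finite computation of the integer matrix $M\hat b M^{\mathsf T}$ and the verification that it is the adjacency matrix of the quiver on Fig.~\ref{fi:E7quiver}.

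The routine but bulky parts are the two weight checks and the product $M\hat b M^{\mathsf T}$ for these specific ten monomials. The main conceptual step is the passage from the ambient bracket to the reduced one, which is clean precisely because the $w_i$ are chosen invariant under the two torus actions (hence lie in the normalizer of $J$) and because the Casimir directions sit in $\ker\hat b$, so fixing $y_1y_3$ and $y_9y_{11}$ entails no quotient. The only genuine subtlety I anticipate is confirming that the ten $w_i$ generate the full invariant lattice rather than a finite-index sublattice --- i.e.\ that they give honest cluster coordinates, not a finite cover --- which I would settle by producing a unimodular relation between their exponent vectors and a standard basis of the orthogonal lattice.
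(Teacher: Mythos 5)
Your proposal is correct and follows essentially the same route as the paper, which disposes of this proposition as a direct check (the analogous $E_8$ proposition is proved by exactly your three ingredients: Poisson commutation of the $w_i$ with the binomial generators of $J$ in the seed $\mathsf{t}$, algebraic independence, and the dimension count $\dim\mathcal{X}_{\mathrm{red}}=10$, with part (b) following from the log-canonical form of the bracket on monomials). Your additional care about the torus-quotient picture and about the $w_i$ generating the full invariant lattice rather than a finite-index sublattice is a sound refinement of what the paper leaves implicit, but it is not a different method.
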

Moreover, we conjecture that  \(\mathbf{w}\) are cluster coordinates on \(\mathcal{X}_{\text{red}}\), which means that any mutations in \(\mathbf{w}\) variables can be lifted to sequences of mutations of \(\mathbf{y}\) variables.

Notice also that the quiver on Fig.~\ref{fi:E7quiver} coincides with the zigzag quiver of pointed Newton polygon from Fig.~\ref{fig:E7polygon} in accordance with Theorem \ref{th:selfduality}.

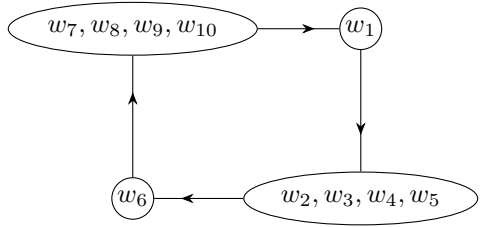
\begin{figure}[h]
	\begin{center}
        \begin{tikzpicture}[scale=1.5, font = \small]
            
            \node[styleNode, ellipse, minimum height=0.7cm] (yright) at(2,0){$w_2, w_3, w_4, w_5$};

            \node[styleNode, ellipse, minimum height=0.7cm] (yleft) at(0,1.5){$w_7, w_8, w_9, w_{10}$};

            \node[styleNode] (ytop) at(2,1.5){$w_1$};

            \node[styleNode] (ybottom) at(0,0){$w_6$};
            
            \draw[styleArrow](ytop) to[] (yright); 		

            \draw[styleArrow](yright) to[] (ybottom); 
            
            \draw[styleArrow](ybottom) to[] (yleft); 		
            
            \draw[styleArrow](yleft) to[] (ytop); 		
		
        \end{tikzpicture}
		\caption{Quiver defining Poisson brackets for \(\mathbf{w}\) variables}
		\label{fi:E7quiver}
	\end{center}
\end{figure}

The independent Casimir functions for the bracket can be chosen as
\begin{equation}
	\label{CasE7}
	\begin{gathered}		
		a_1=\frac{w_2}{w_3},\ \ \ a_2=\frac{w_3}{w_4},\ \ \ a_3=\frac{w_4}{w_5},\ \ \
        a_4=w_5 w_7,
		\\
		a_5=\frac{w_8}{w_7},\ \ \ a_6=\frac{w_9}{w_8},\ \ \ a_7=\frac{w_{10}}{w_9}, \ \ \
        a_0=w_1 w_6.
	\end{gathered}
\end{equation}
Recall that we have a distingwished Casimir function \(q=\prod_{f_i \in F(\Gamma)} x_i\) such that integrable system live on sub-variety \(q=1\). For \(q\neq 1\) this parameter plays a role of shift of difference (\(q\)--Painlev\'e) equation. In variables \(\mathbf{w}\) and \(\mathbf{a}\) it has a form 
%
%
\begin{equation}
	\label{qE7}
    q = \prod_{i=1}^{16} x_i = w_1^2  w_2  w_3  w_4  w_5  w_6^2  w_7  w_8  w_9 w_{10} = a_1 a_2^2 a_3^3  a_4^4 a_5^3 a _6^2 a_7 a_0^2.
\end{equation}
Note that face variables \(\mathbf{w}\) appears in this formula in non-unit powers contrary to definition of \(q\) for seeds construction from consistent dimer models withour reduction. 

The variables $a_i$ will play a role of multiplicative root variables of $E_7^{(1)}$. Namely, the action of Weyl group defined in \eqref{WE7} on \(\mathbf{a}\) agrees with the formula~\eqref{eq:actiononroot}. 

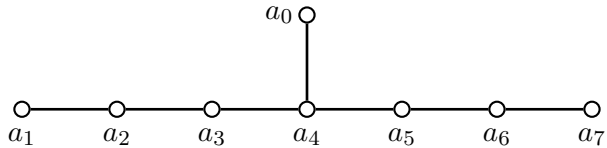
\begin{figure}[h]
	\begin{center}
		\begin{tikzpicture}[elt/.style={circle,draw=black!100,thick, inner sep=0pt,minimum size=2mm},scale=1.25]
			\path 	(-2,0) 	node 	(a1) [elt] {}
			(-1,0) 	node 	(a2) [elt] {}
			( 0,0) node  	(a3) [elt] {}
			( 1,0) 	node  	(a4) [elt] {}
			( 2,0) 	node 	(a5) [elt] {}
			( 3,0)	node 	(a6) [elt] {}
			( 4,0)	node 	(a7) [elt] {}
			( 1,1)	node 	(a0) [elt] {};
			\draw [black,line width=1pt ] (a1) -- (a2) -- (a3) -- (a4) -- (a5) --  (a6) -- (a7) (a4) -- (a0);
			\node at ($(a1.south) + (0,-0.2)$) 	{$a_{1}$};
			\node at ($(a2.south) + (0,-0.2)$)  {$a_{2}$};
			\node at ($(a3.south) + (0,-0.2)$)  {$a_{3}$};
			\node at ($(a4.south) + (0,-0.2)$)  {$a_{4}$};	
			\node at ($(a5.south) + (0,-0.2)$)  {$a_{5}$};		
			\node at ($(a6.south) + (0,-0.2)$) 	{$a_{6}$};	
			\node at ($(a7.south) + (0,-0.2)$) 	{$a_{7}$};	
			\node at ($(a0.west) + (-0.2,0)$) 	{$a_{0}$};		
		\end{tikzpicture}
		\caption{Dynkin diagram of $E_7^{(1)}$}
		\label{fi:DynE7}
	\end{center}
\end{figure}

\begin{Proposition}
	The generators \(s_0,\dots,s_7\) given below satisfy relations of \(W^{\mathrm{ae}}(E_7)\)	
    \begin{equation}
    	\label{WE7}
    	\begin{gathered}	
    	s_1=(2,3),\ \ \
        s_2=(3,4), \ \ \
        s_3=(4,5), \ \ \
        s_4 = \mu_5 (5,7) \mu_5,
    	\\
    	s_5=(7,8), \ \ \
        s_6=(8,9), \ \ \
        s_7=(9,10), \ \ \
        s_0= \mu_6 (1,6) \mu_6.
    	\end{gathered}
    \end{equation}
\end{Proposition}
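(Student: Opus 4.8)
The plan is to verify the defining relations of $W^{\mathrm{ae}}(E_7)$ — namely $s_i^2=\mathrm{id}$, the braid relation $s_is_js_i=s_js_is_j$ for each edge $\{i,j\}$ of the Dynkin diagram of Fig.~\ref{fi:DynE7}, and the commutation $s_is_j=s_js_i$ for each non-edge — as identities in the cluster modular group $G_{\mathcal{Q}^{\mathrm{red}}}$ attached to the reduced quiver of Fig.~\ref{fi:E7quiver}. The first point to record is that every $s_i$ actually lies in $G_{\mathcal{Q}^{\mathrm{red}}}$: the four vertices in each ellipse $\{w_2,w_3,w_4,w_5\}$ and $\{w_7,w_8,w_9,w_{10}\}$ are mutually interchangeable (equal incidences, no arrows among themselves), so each transposition among them is a quiver automorphism, while for $s_4=\mu_5(5,7)\mu_5$ and $s_0=\mu_6(1,6)\mu_6$ one checks by a local computation near vertices $5,7$ (resp.\ $1,6$) that the mutation--transposition--mutation sequence returns $\mathcal{Q}^{\mathrm{red}}$ to itself. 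Throughout I will use the standard principle that a quiver-preserving cluster transformation is determined by the field automorphism it induces, hence by its action on the Casimirs $\mathbf{a}$ of \eqref{CasE7} together with the leaf coordinates $(\lambda_d,\mu_d)$, which form a rational coordinate system on $\mathcal{X}_{\mathrm{red}}$ (their number $8+2$ matches $\dim\mathcal{X}_{\mathrm{red}}=10$); so each relation reduces to an equality of these induced coordinate transformations.

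The involution relations are immediate: a transposition squares to the identity, and $s_4,s_0$ are conjugates of transpositions by the involutive mutations $\mu_5,\mu_6$, so $s_4^2=\mu_5(5,7)\mu_5\mu_5(5,7)\mu_5=\mathrm{id}$ and likewise $s_0^2=\mathrm{id}$. The relations internal to the two type-$A_3$ subchains $s_1,s_2,s_3$ and $s_5,s_6,s_7$ are purely symmetric-group identities among transpositions of interchangeable ellipse vertices — the braids $(2,3)(3,4)(2,3)=(3,4)(2,3)(3,4)$, $(3,4)(4,5)(3,4)=(4,5)(3,4)(4,5)$, etc., and commutations such as $(2,3)(4,5)=(4,5)(2,3)$ — and hold automatically once interchangeability is established. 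The commutation relations between generators of disjoint support are equally clear; the ones relating $s_4$ or $s_0$ to the distant transpositions require only that the mutations $\mu_5,\mu_6$ do not touch the transposed vertices, which is read off the quiver.

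For the Casimir component of the relations I would first check that each generator acts on $(a_0,\dots,a_7)$ by the reflection rule \eqref{eq:actiononroot} with the $E_7^{(1)}$ Cartan matrix; since \eqref{eq:actiononroot} is the standard contragredient action of $W^{\mathrm{ae}}(E_7)$ on the root lattice, which manifestly obeys all the Coxeter relations, the $\mathbf{a}$-part of every relation then holds with no further work, after a short generator-by-generator computation of monomial transformations. What remains, and what I expect to be the crux, is the action on the two-dimensional symplectic leaf $(\lambda_d,\mu_d)$, where the genuinely new relations are the three braids along the edges incident to $a_4$, i.e.\ $s_3s_4s_3=s_4s_3s_4$, $s_4s_5s_4=s_5s_4s_5$ and $s_0s_4s_0=s_4s_0s_4$. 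To settle these I would compute the explicit birational action of $s_3,s_4,s_5,s_0$ on $(\lambda_d,\mu_d)$, unwinding the conjugated words $\mu_5(5,7)\mu_5$ and $\mu_6(1,6)\mu_6$ through the mutation rule \eqref{eq:mutation rule}, and verify each identity as an equality of rational maps of $(\mathbb{C}^*)^2$. The main obstacle is precisely this bookkeeping: the transposition $(5,7)$ mixes a vertex of one ellipse with the branch region, so the composition does not collapse formally and must be carried out on the cluster variables, where the sign and ordering conventions in \eqref{eq:mutation rule} are easy to mishandle. Equivalently, one may verify these three relations directly on the full tuple $\mathbf{w}$ by composing the mutation sequences; this is the step that, if desired, is most naturally delegated to a symbolic computation.
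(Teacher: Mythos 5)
The paper itself offers no proof of this proposition: immediately after stating it, it remarks that "these formulas for the action are standard" and defers to \cite{Bershtein:2018cluster,Mizuno,Masuda:2023birational}, where such mutation--permutation realizations of affine Weyl groups are verified case by case. Your proposal is therefore a genuinely different, self-contained route, and its architecture is the right one: check that each \(s_i\) preserves the quiver of Fig.~\ref{fi:E7quiver}; use that \((a_0,\dots,a_7,\lambda_d,\mu_d)\) is a coordinate system on the ten-dimensional reduction, so relations can be checked componentwise; dispose of the \(\mathbf{a}\)-component once and for all by verifying that each generator in \eqref{WE7} acts by \eqref{eq:actiononroot} (the paper asserts exactly this compatibility); and reduce everything to the three braid relations \(s_3s_4s_3=s_4s_3s_4\), \(s_4s_5s_4=s_5s_4s_5\), \(s_0s_4s_0=s_4s_0s_4\) acting on \((\lambda_d,\mu_d)\), to be verified by composing the maps \eqref{eq:mutation rule}. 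This is exactly the kind of computation the cited references carry out; the only incompleteness is that these three crux identities are announced rather than computed, so as written your text is a proof plan whose hardest step is delegated.

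There is, however, one step whose stated justification is wrong. You claim the commutations of \(s_4\) and \(s_0\) with the distant transpositions "require only that the mutations \(\mu_5,\mu_6\) do not touch the transposed vertices." For \(\mu_5\) this is true (vertex \(5\) is adjacent only to \(w_1\) and \(w_6\)), but it is false for \(\mu_6\): vertex \(6\) is adjacent to all eight ellipse vertices, so \(\mu_6\) multiplies every one of \(w_2,\dots,w_5,w_7,\dots,w_{10}\) by a factor \((1+w_6)^{\pm 1}\) or \((1+w_6^{-1})^{\pm1}\), i.e.\ it touches precisely the vertices transposed by \(s_1,s_2,s_3,s_5,s_6,s_7\). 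The correct (and equally short) argument is equivariance of mutation under quiver automorphisms: any transposition \(\sigma\) of interchangeable ellipse vertices is an automorphism of \(\mathcal{Q}^{\mathrm{red}}\) fixing \(1\) and \(6\), hence \(\sigma\mu_6\sigma^{-1}=\mu_{\sigma(6)}=\mu_6\) (the mutation formulas treat the vertices of each ellipse symmetrically), and \(\sigma\) commutes with \((1,6)\), whence \(\sigma s_0\sigma^{-1}=s_0\). With this repair, and with the three braid relations actually carried out symbolically, your argument goes through and reconstructs what the paper's references contain.
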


These formulas for the action are standard (see \cite{Bershtein:2018cluster, Mizuno,Masuda:2023birational}).

\subsection{Hamiltonian and spectral curve}

The dimer partition function before the reduction can be obtained as determinant of Kasteleyn operator with the weights given on Fig.~\ref{fig:E7Kast}. Here we used shorthand notation \(x_{i_1i_2\dots i_k}=x_{i_1}\cdot x_{i_2}\cdot\dots \cdot x_{i_k}\) as in Sec.\ref{sec:examples} , while we separate indices \(i_j\) by commas when they correspond to two-digit numbers.

\begin{figure}[h]
    \centering
    \includegraphics{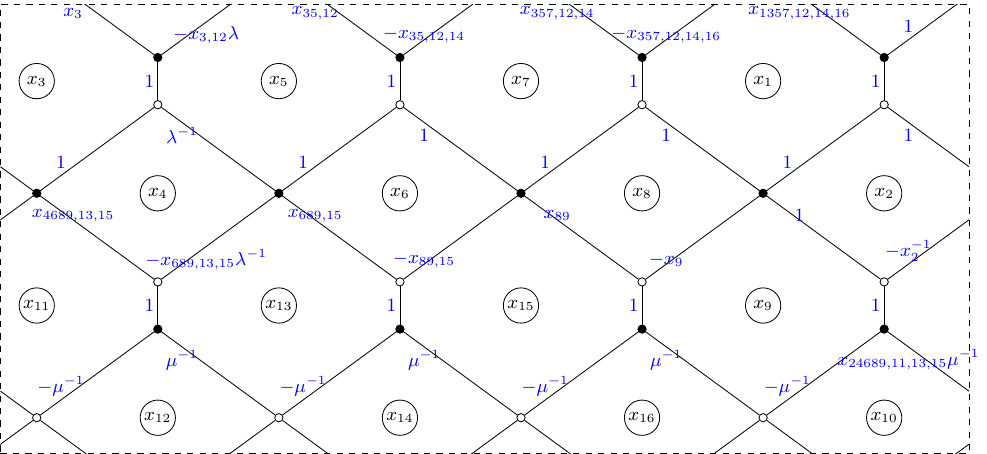}
	\caption{\label{fig:E7Kast}
		Kasteleyn operator for the Newton polygon drawn on Fig.~\ref{fig:E7polygon}.}
\end{figure}

\begin{Remark}
	One can also compute polynomial determining the same spectral curve using approach of \cite{Fock:2016}, see Remark~\ref{rem:FM}. For example, we can take \(M=6\) and consider the loop group element
	\begin{multline}
		L(\lambda)=
		H_4(x_{13})F_4H_4(x_{14})E_4H_3(x_{15})F_3H_3(x_{16})E_3H_2(x_9)F_2H_2(x_{10})E_2 H_1(x_{11})F_1H_1(x_{12})E_1
		\\
		\Lambda(\lambda x_{12345678}^{-1})
		H_4(x_5)E_4H_4(x_6)F_4H_3(x_7)E_3H_3(x_8)F_3H_2(x_1)E_2H_2(x_2)F_2 H_1(x_3)E_1H_1(x_4)F_1\Lambda(\lambda),
	\end{multline}
	which corresponds to the word \(\left(\bar{s}_4s_4\bar{s}_3s_3\bar{s}_2s_2\bar{s}_1s_1\Lambda\right)\left(s_4\bar{s}_4s_3\bar{s}_3s_2\bar{s}_2s_1\bar{s}_1\Lambda\right)\in W^{\mathrm{ae}} (A_5 \times A_5)\). Here \(\Lambda \in W^{\mathrm{ae}} (A_5 \times A_5)\) is an automorhism that cyclically permutes both \(s_1,\dots, s_M\) and \(\bar{s}_1,\dots,\bar{s}_M\), and 
	\(\Lambda(\lambda)\in \widehat{PGL(M)}\) denotes its lift to the coextended loop group given by \cite[eq. (20)]{Fock:2016}. Then \(\mathcal{Z}(\lambda,\mu)\sim\det\left(L(\lambda/\mu)+\mu\right)\). 
\end{Remark}

Now let us impose the reduction conditions given by ideal \(J\) defined in \eqref{JE7}, note also that that \(q=1\). Then it is straightforward to check that the function $\mathcal{Z}_{\mathrm{red}}(\lambda,\mu)=\left.\mathcal{Z}(\lambda,\mu)\right|_{\mathbf{V}(J)}$ satisfies reduction conditions \eqref{eq:polyn mut cond} for two horizontal sides in Newton polygon on Fig.~\ref{fig:E7polygon}.
This is a particular case of Lemma \ref{lem:J implies I}.
More explicitly this means that polynomials corresponding to the top and bottom sides of the polygon on Fig.~\ref{fig:E7polygon} turn into full squares, and polynomials corresponding to the the next-to-top most and next-to-bottom most horizontal lines are divisible by same multipliers.
%
%
%
Geometrically it means that the curve acquires two nodal singularities, and its (smooth) genus drops from $g=3$ (as for generic curve with Newton polygon on Fig.~\ref{fig:E7polygon}) to $g=1$, as it should be for a Painlev\'e system.

Moreover, one can check that the function \(\mathcal{Z}_{\mathrm{red}}(\lambda,\mu)\) after reduction is a Laurent polynomial of the spectral variables \(\lambda,\mu\) and coordinates \emph{after reduction} \(\mathbf{w}\). The explicit formula for the this polynomial is rather involved. However, according to Lemma \ref{lem:Painleve by boundary}, the polynomial corresponding to (reduced) pointed polygon can be recovered, up to common multiplier and constant term, from the knowledge of the roots of its restrictions to the ``boundaries'' of Newton polygon.  The corresponding factors for \(\mathcal{Z}_{\mathrm{red}}(\lambda,\mu)\) are shown on Fig.~\ref{fi:4*2 SC roots}, where we use shorthand notation
\begin{equation}
    \label{eq:avarshorthand}
	\mathbf{a}^{\mathbf{r}}=\prod\nolimits_{i=0}^7 a_i^{r_i}.
\end{equation}
for the products of root variables.

\begin{figure}[h]
	\begin{center}
		\begin{tikzpicture}[scale=0.5, font = \small]
			
			
			\node[left] at (0,0.5) {$\mu +\mathbf{a}^{\mathbf{r}_{10}}$};
			\node[left] at (0,1.5) {$\mu +\mathbf{a}^{\mathbf{r}_9}$};
			\node[left] at (0,2.5) {$\mu +\mathbf{a}^{\mathbf{r}_8}$};
			\node[left] at (0,3.5) {$\mu +\mathbf{a}^{\mathbf{r}_7}$};

			\node[below] at (1,0) {$(\lambda + \mathbf{a}^{\mathbf{r}_1})^2$};

			\node[right] at (2,0.5) {$\mu +\mathbf{a}^{\mathbf{r}_{2}}$};
			\node[right] at (2,1.5) {$\mu +\mathbf{a}^{\mathbf{r}_3}$};
			\node[right] at (2,2.5) {$\mu +\mathbf{a}^{\mathbf{r}_4}$};
			\node[right] at (2,3.5) {$\mu +\mathbf{a}^{\mathbf{r}_5}$};

			\node[above] at (1,4) {$(\lambda + \mathbf{a}^{\mathbf{r}_6})^2$};
   
			\draw[fill] (0,0) circle (1pt) -- (1,0) circle (1pt) -- (2,0) circle (1pt) -- (2,1) circle (1pt) -- (2,2) circle (1pt) -- (2,3) circle (1pt) -- (2,4) circle (1pt) -- (1,4) circle (1pt) -- (0,4) circle (1pt) -- (0,3) circle (1pt) -- (0,2) circle (1pt) -- (0,1) circle (1pt) -- (0,0);
			\draw[fill] (1,2) circle (2pt);
			\draw (1,1) circle (2pt);
			\draw (1,3) circle (2pt);
		
		\node at (15,2.5)	
		{$\begin{tabular}{c|cccccccc}
			& $a_0$ & $a_1$& $a_2$& $a_3$& $a_4$& $a_5$ & $a_6$ & $a_7$ 
			\\
			\hline
			$\mathbf{r}_1$ & 0 & 0 & 0 & 0 & 0 & 0 & 0 & 0 
			\\
			\hline
            $\mathbf{r}_2$ & 0 & 0 & 0 & 0 & 0& 0& 0 & 0 
			\\
			$\mathbf{r}_3$ & 1 & 0& 0& 1& 1 &1 & 0 & 0 			
			\\
			$\mathbf{r}_4$ & 1 & 0& 1& 2& 2 &1 & 0 & 0 			
			\\
			$\mathbf{r}_5$ & 2 & 1& 2& 3& 3 &2 & 1 & 1 			
			\\
			\hline
            $\mathbf{r}_6$ & 1 & 0 & 1 & 1 & 2 & 2 & 1 & 1 
			\\
			\hline
			$\mathbf{r}_7$ & 0 & 0& 0& 1& 0 &0 & 0 & 0 			
			\\
			$\mathbf{r}_8$ & 1 & 0& 0& 1& 1 &0 & 0 & 0
			\\
			$\mathbf{r}_9$ & 1 & 1& 1& 2& 2& 1& 0 & 0 			
			\\
			$\mathbf{r}_{10}$ & 2 & 1& 2& 3& 3& 2& 1 & 0
		\end{tabular}$};
		\end{tikzpicture}		
		\caption{Factors of the restriction of spectral curve polynomial \(\mathcal{Z}_{\mathrm{red}}(\lambda,\mu)\) on the sides  of \(N\)}
		\label{fi:4*2 SC roots}
	\end{center}
\end{figure}

\begin{figure}[h]
	\begin{center}
		\begin{tikzpicture}[scale=0.5, font = \small]
			
			\node[left] at (0,0.5) {$\mu_d +\mathbf{a}^{\mathbf{d}_{10}}$};
			\node[left] at (0,1.5) {$\mu_d +\mathbf{a}^{\mathbf{d}_9}$};
			\node[left] at (0,2.5) {$\mu_d +\mathbf{a}^{\mathbf{d}_8}$};
			\node[left] at (0,3.5) {$\mu_d +\mathbf{a}^{\mathbf{d}_7}$};

			\node[below] at (1,0) {$(\lambda_d + \mathbf{a}^{\mathbf{d}_1})^2$};

			\node[right] at (2,0.5) {$\mu_d +\mathbf{a}^{\mathbf{d}_{2}}$};
			\node[right] at (2,1.5) {$\mu_d +\mathbf{a}^{\mathbf{d}_3}$};
			\node[right] at (2,2.5) {$\mu_d +\mathbf{a}^{\mathbf{d}_4}$};
			\node[right] at (2,3.5) {$\mu_d +\mathbf{a}^{\mathbf{d}_5}$};

			\node[above] at (1,4) {$(\lambda_d + \mathbf{a}^{\mathbf{d}_6})^2$};
   
			\draw[fill] (0,0) circle (1pt) -- (1,0) circle (1pt) -- (2,0) circle (1pt) -- (2,1) circle (1pt) -- (2,2) circle (1pt) -- (2,3) circle (1pt) -- (2,4) circle (1pt) -- (1,4) circle (1pt) -- (0,4) circle (1pt) -- (0,3) circle (1pt) -- (0,2) circle (1pt) -- (0,1) circle (1pt) -- (0,0);
			\draw[fill] (1,2) circle (2pt);
			\draw (1,1) circle (2pt);
			\draw (1,3) circle (2pt);
		
		\node at (13.5,2.5)	
		{$\begin{tabular}{c|cccccccc}
			& $a_0$ & $a_1$& $a_2$& $a_3$& $a_4$& $a_5$ & $a_6$ & $a_7$ 
			\\
			\hline
			$\mathbf{d}_1$ & 0 & 0& 0& 0& 0& 0 &0&0 
			\\
			\hline
            $\mathbf{d}_2$ & 0 & 0 & 0 & 0 & 0 & 0 & 0 & 0 
			\\
			$\mathbf{d}_3$ & 0 & 1 & 0 & 0 & 0 & 0 & 0 & 0 
			\\
			$\mathbf{d}_4$ & 0 & 1 & 1 & 0 & 0 & 0 & 0 & 0
			\\
			$\mathbf{d}_5$ & 0 & 1 & 1 & 1 & 0 & 0 & 0 & 0	
			\\
			\hline
            $\mathbf{d}_6$ & 1 & 0 & 0 & 0 & 0 & 0 & 0 & 0
			\\
			\hline
			$\mathbf{d}_7$ & 0 & 1 & 1 & 1 & 1 & 0 & 0 & 0
			\\
			$\mathbf{d}_8$ & 0 & 1 & 1 & 1 & 1 & 1 & 0 & 0
			\\
			$\mathbf{d}_9$ & 0 & 1 & 1 & 1 & 1 & 1 & 1 & 0
			\\
			$\mathbf{d}_{10}$ & 0 & 1 & 1 & 1 & 1 & 1 & 1 & 1	
		\end{tabular}$};

		\begin{scope}[shift={(25.3,0)}]

			\node[left] at (0,0.5) {$1 +w_{10}$};
			\node[left] at (0,1.5) {$1 +w_9$};
			\node[left] at (0,2.5) {$1 +w_8$};
			\node[left] at (0,3.5) {$1 +w_7$};

			\node[below] at (1,0) {$(1 + w_1)^2$};

			\node[right] at (2,0.5) {$1 +w_2$};
			\node[right] at (2,1.5) {$1 +w_3$};
			\node[right] at (2,2.5) {$1 +w_4$};
			\node[right] at (2,3.5) {$1 +w_5$};

			\node[above] at (1,4) {$(1 + w_6)^2$};
   
			\draw[fill] (0,0) circle (1pt) -- (1,0) circle (1pt) -- (2,0) circle (1pt) -- (2,1) circle (1pt) -- (2,2) circle (1pt) -- (2,3) circle (1pt) -- (2,4) circle (1pt) -- (1,4) circle (1pt) -- (0,4) circle (1pt) -- (0,3) circle (1pt) -- (0,2) circle (1pt) -- (0,1) circle (1pt) -- (0,0);
			\draw[fill] (1,2) circle (2pt);
			\draw (1,1) circle (2pt);
			\draw (1,3) circle (2pt);

        \end{scope}
        
		\end{tikzpicture}		
		\caption{Factors of the restriction of Hamiltonian \(f(\mathrm{w}(\mathbf{a})|\lambda_d,\mu_d)\) on the sides  of \(N\)}
		\label{fi:4*2 Ham roots}
	\end{center}
\end{figure}

The constant term in the dimer partition function \(\mathcal{Z}_{\mathrm{red}}(\lambda,\mu)\) is a Hamiltonian $f_d(\mathbf{w})$ of $E_7^{(1)}$ Painlev\'e system. Picking dual spectral variables to be
\begin{equation}
    \lambda_d = w_1, \; \mu_d = w_2,
\end{equation}
it is straightforward computation to show that in consistency with Theorem \ref{th:selfduality} the Hamiltonian $f(w(a), \lambda_d, \mu_d)$ has a Newton polygon coinciding with the one for the \(\mathcal{Z}_{\mathrm{red}}(\lambda,\mu)\). Moreover, it satisfies the same reduction constraints, so it can be recovered from its restrictions to the boundary sides of Newton polygon up to multiplier as it was proved in Lemma \ref{lem:Painleve by boundary}. The corresponding factors are given on the Fig.~\ref{fi:4*2 Ham roots}, right.

According to Theorem \ref{th:selfduality} there should exist element $\mathrm{w} \in W(E_7)$ acting on root variables, mapping spectral curve to dual spectral curve, i.e. satisfying \(\mathrm{w}^{-1}(\mathbf{a}^{\mathbf{d_i}}) = \mathbf{a}^{r_{i}}\) for any $0\leq i \leq 9$. It is straightforward to check that an element
\begin{equation}
    \mathrm{w} = s_{03243546510243245104653240340532460532435120}
\end{equation}
satisfies this requirement.

The Hamiltonian can be written in a compact way using \(\mathbf{w}\) variables (in proper normalization, see Definition~\ref{def:proper normalization})
\begin{multline}\label{eq:E7 Hamiltonian}
	f(\mathrm{w}(\mathbf{a})|\lambda_d,\mu_d)=
	w_6^{1/2} w_{1}^{3/2} \prod\nolimits_{i=5}^8 w_i^{1/2} \Big(
	\big(\prod\nolimits_{i=2}^5 (1+w_i)-1\big)
	+ \prod\nolimits_{i=2}^5 w_i \, \big((1+w_6)^2-1\big)
	\\
	+w_6^2 \prod\nolimits_{i=2}^5 w_i \big(\prod\nolimits_{i=7}^{10} (1+w_i)-1\big) 
	+w_6^2 \prod\nolimits_{i=1}^4 w_i\prod\nolimits_{i=7}^{10} w_i \, \big((1+w_{1})^2-1\big) 
	\\
	+w_6 \prod\nolimits_{i=1}^4 w_i\, \big(\sum\nolimits_{i=2}^5 w_i^{-1}+\sum\nolimits_{i=7}^{10} w_i\big)+w_{1}^{-1}  \big(\sum _{i=7}^{10} w_i^{-1} +\sum\nolimits _{i=2}^5w_i\big)\Big).
\end{multline}
This formula can be deduced from the simple expressions for the factors corresponding to the boundary sides of Newton polygon in \(\mathbf{w}\) variables given on the Fig.~\ref{fi:4*2 Ham roots}, right. Notice that this is consistent with the general duality idea that the dual zigzag variables (i.e. face variables) have to be the roots of restrictions of Hamiltonian to the boundary sides of its Newton polygon. This is another evidence that \(\mathbf{w}\) variables are the right analog of the face variables for the reduced cluster variety. According to Corollary \ref{cor:polyinvatiance} the Hamiltonian \eqref{eq:E7 Hamiltonian} (at $q=1$) is invariant under the action of the cluster mapping class group $\mathcal{G}_\mathcal{Q}$ for the quiver from Fig.~\ref{fi:E7quiver}.

\begin{Remark}
	Using polynomial mutation one can find other pointed Painlev\'e polygons corresponding to $E_7^{(1)}$. Three of them are given on Fig.~\ref{fi:triE7}. For example, using the formula~\eqref{eq:dim X_red} we have for the left, central, and right polygons correspondingly
	\begin{subequations}
		\begin{align}
			&\dim \mathcal{X}_{\text{red}}=
			2\operatorname{Area}(N_{\text{left}}) - 2\cdot 3 = 10,
			\\
			&\dim \mathcal{X}_{\text{red}}=
			2\operatorname{Area}(N_{\text{central}}) - 8-2\cdot 3 = 10,			
			\\
			&\dim \mathcal{X}_{\text{red}}=
			2\operatorname{Area}(N_{\text{right}}) - 8 = 10.			
		\end{align}
	\end{subequations}
	
%

\begin{figure}[h]
	\begin{center}
		\begin{tikzpicture}[scale=0.75, font = \small]
			
			\begin{scope}[shift={(0.5,0)}]	
				
				\draw[fill] (0,0) circle (1pt) -- (0,1) circle (1pt) -- (0,2) circle (1pt) -- (0,3) circle (1pt) -- (0,4) circle (1pt) -- (1,3) circle (1pt) -- (2,2) circle (1pt) -- (3,1) circle (1pt) -- (4,0) circle (1pt) -- (3,0) circle (1pt) -- (2,0) circle (1pt) -- (1,0) circle (1pt) -- (0,0);
				\draw[fill] (2,1) circle (2pt);
				\draw (1,1) circle (2pt);
				\draw (1,2) circle (2pt);

				
			\end{scope}
			\begin{scope}[shift={(7.5,0)}]	
				
				\draw[fill] (0,0) circle (1pt) -- (0,1) circle (1pt) -- (0,2) circle (1pt) -- (0,3) circle (1pt) -- (0,4) circle (1pt) -- (1,4) circle (1pt) -- (2,4) circle (1pt) -- (3,4) circle (1pt) -- (3,3) circle (1pt) -- (3,2) circle (1pt)-- (3,1) circle (1pt) -- (3,0) circle (1pt) -- (2,0) circle (1pt) -- (1,0) circle (1pt) -- (0,0);

				\draw[fill] (2,1) circle (2pt);
				\draw (1,1) circle (2pt);
				\draw (1,2) circle (2pt);
				\draw (1,3) circle (2pt);
				\draw (2,2) circle (2pt);
				\draw (2,3) circle (2pt);
				
%
%
			\end{scope}

			\begin{scope}[shift={(18,1)}]
				
				\draw[fill] (-4,2) circle (1pt) -- (-2,1) circle (1pt) -- (0,0) circle (1pt) -- (2,-1) circle (1pt) -- (1,-1) circle (1pt) -- (0,-1) circle (1pt) -- (-1,-1) circle (1pt) -- (-2,-1) circle (1pt) 
				-- (-3,-1) circle (1pt) -- (-4,-1) circle (1pt)  -- (-4,0) circle (1pt) -- (-4,1) circle (1pt) -- (-4,2);
				\draw[fill] (-1,0) circle (2pt);
				\draw (-2,0) circle (2pt);
				\draw (-3,0) circle (2pt);
				\draw (-3,1) circle (2pt);

				
			\end{scope}

		\end{tikzpicture}	
%
%
%
%
%
		\caption{Three another pointed polygons for Painlev\'e $E_7^{(1)}$ \label{fi:triE7}}
		
	\end{center}
\end{figure}
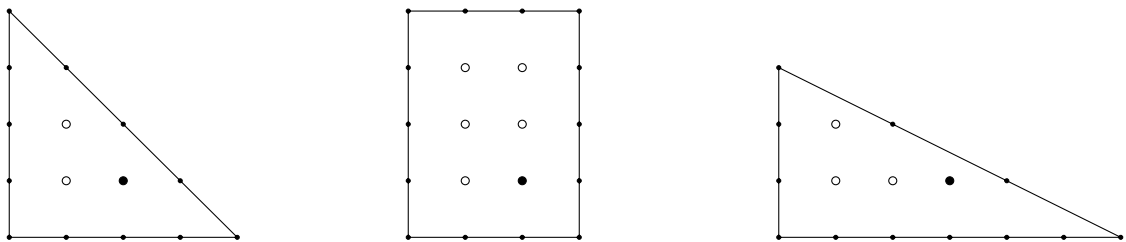
\end{Remark}

\section{\(q\)-Painlev\'e \(E_8^{(1)}\) } \label{sec:E8}
\subsection{Reduction of cluster variety and \(W^{\mathrm{ae}}(E_8)\)}
Consider \(N\) to be a rectangular triangle with the cathetuses 3 and 9, as presented on the left of Fig.~\ref{fi:3*9graph}.  
%
%
%
%

\begin{figure}[h]
	\begin{center}
		\begin{tikzpicture}[scale=0.5, font = \small]
			
			\node[below] at (2,-4) {$B$};  
			\node[below] at (-1,-4) {$C$};                  
			\node[right] at (-1,5) {$A$};                  
			
			\draw[fill] (2,-4) circle (1pt) -- (1,-1) circle (1pt) -- (0,2) circle (1pt) -- (-1,5) circle (1pt) -- (-1,4) circle (1pt) -- (-1,3) circle (1pt) -- (-1,2) circle (1pt) -- 
			(-1,1) circle (1pt) -- (-1,0) circle (1pt) -- (-1,-1) circle (1pt) -- (-1,-2) circle (1pt) 
			-- (-1,-3) circle (1pt) -- (-1,-4) circle (1pt)  -- (0,-4) circle (1pt) -- (1,-4) circle (1pt) -- (2,-4);
			\draw[fill] (0,-1) circle (2pt);
			\draw (0,-2) circle (2pt);
			\draw (0,-3) circle (2pt);
			\draw (1,-3) circle (2pt);
			
			\draw (0,0)  circle (2pt);
			\draw (0,1) circle (2pt);
			\draw (1,-2) circle (2pt);
			
		\node at (17,0) {\includegraphics{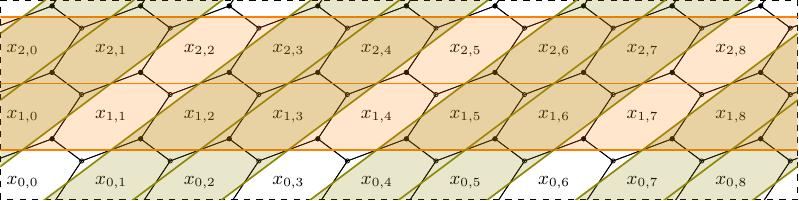}};		
		\end{tikzpicture}				
			\caption{\label{fi:3*9graph}
				On the left: Newton polygon $N$ for $E_8^{(1)}$. On the right: bipartite graph corresponding to \(N\). The graph assumed to be drawn on a torus. The two sets of parallel zigzags are drown in orange and olive, the patches for the reduction are filled.			
			}	
	\end{center}
\end{figure}

The corresponding bipartite graph \(\Gamma\) is a hexagonal grid  shown on Fig.~\ref{fi:3*9graph} on the right and the corresponding quiver \(\mathcal{Q}\) is shown on the Fig.~\ref{fi:3*9quiver}. We denote the cluster variables by \(x_{i,j}\) and the whole seed by \(\mathsf{s}\).

\begin{figure}[h]
	\begin{center}
		\includegraphics{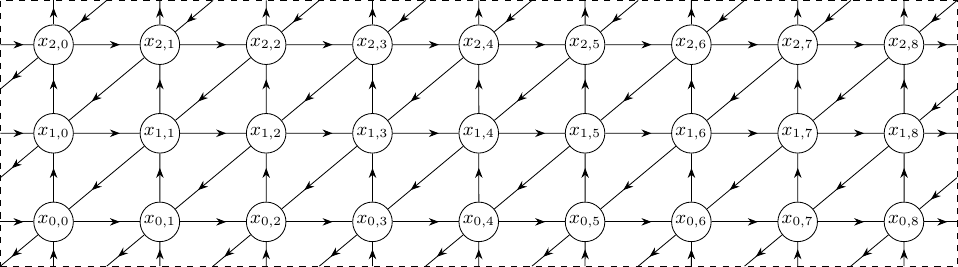}				
		\caption{Quiver for bipartite graph given in Fig. \ref{fi:3*9graph}. The quiver is assumed to be drawn on a torus, the dashed rectangle represents fundamental domain.}
		\label{fi:3*9quiver}
	\end{center}
\end{figure}

We impose reduction conditions with \(h=3\) corresponding to the sides $BC$ and $AB$ of the triangle, both of integer length 3. Hence, by the formula \eqref{eq:dim X_red} we get that expected dimension of the cluster variety is 11, while the expected dimension of the phase space of integrable system is 10. This is agreed with Painlev\'e \(E_8^{(1)}\) expectations, namely we expect to have 9 root variables \(a_0,\dots,a_8\) subject of constrain that \(q=1\) and two dynamical variables \(\lambda_d,\mu_d\). 

One can impose reduction along the patches shown on the right of Fig.~\ref{fi:3*9graph}. These patches are of the form \(\Pi_{9,3}\) and the generators of the corresponding ideals have the form (c.f. formulas~\eqref{eq:C,H},~\eqref{eq:Sevostyanov})
\begin{align*}
	&C_1 = x_{1,0} x_{1,1} x_{1,2} x_{1,3} x_{1,4} x_{1,5} x_{1,6} x_{1,7} x_{1,8},&
	&C_3 = x_{0,2} x_{0,5} x_{0,8} x_{1,0} x_{1,3} x_{1,6} x_{2,1} x_{2,4} x_{2,7}, \\
	&C_2 = x_{2,0} x_{2,1} x_{2,2} x_{2,3} x_{2,4} x_{2,5} x_{2,6} x_{2,7} x_{2,8},& 
	&C_4 = x_{0,1} x_{0,4} x_{0,7} x_{1,2} x_{1,5} x_{1,8} x_{2,0} x_{2,3} x_{2,6}, \\
	&H_1 = \mathsf{S}(x_{1,2},x_{1,3},x_{1,4},x_{1,5},x_{1,6},x_{1,7},x_{1,8},x_{1,0}),& 
	&H_3 = \mathsf{S}(x_{2,1},x_{1,0},x_{0,8},x_{2,7},x_{1,6},x_{0,5},x_{2,4},x_{1,3}),  \\
	&H_2 =\mathsf{S}(x_{2,2},x_{2,3},x_{2,4},x_{2,5},x_{2,6},x_{2,7},x_{2,8},x_{2,0}),& 
	&H_4 = \mathsf{S}(x_{2,0},x_{1,8},x_{0,7},x_{2,6},x_{1,5},x_{0,4},x_{2,3},x_{1,2}) \\
	&H_{12} = \{H_1, H_2\},&\quad &H_{34} = \{H_3, H_4\}.
\end{align*}
and the reduction ideal is generated by 
\begin{equation}
	J=\big(C_1-1,C_2-1,H_1+1,H_2+1,H_{12},C_3-1,C_4-1,H_3+1,H_4+1,H_{34}\big).
\end{equation}

Now we can perform the following sequence of mutations 
\begin{multline}
	\label{muE8}
	\boldsymbol{\mu}=
	\mu_{{1,8}} \mu_{{1,7}} \mu_{{2,6}} \mu_{{2,0}} \mu_{{1,0}} \mu_{{2,8}} \mu_{{1,7}} \mu_{{1,1}} \mu_{{0,8}} \mu_{{2,8}} \mu_{{0,1}} \mu_{{0,7}} \mu_{{1,0}} \mu_{{2,8}} \mu_{{1,6}}  \mu_{{2,7}} \mu_{{0,8}} 
	\\
	\mu_{{1,0}} \mu_{{2,1}} \mu_{{0,4}} \mu_{{1,5}} \mu_{{2,6}} \mu_{{0,7}} \mu_{{1,8}} \mu_{{2,0}} \mu_{{0,1}} \mu_{{2,4}} \mu_{{0,5}} \mu_{{1,6}} \mu_{{2,7}} \mu_{{0,8}} \mu_{{1,0}} \mu_{{2,1}}
\end{multline}
where \(\mu_{i,j}\) stands for mutation in vertex corresponding to \(x_{i,j}\) in the Fig. \ref{fi:3*9quiver}. Let us  denote variables after \(
\boldsymbol{\mu}\) by \(y_{i,j}\). It is straightforward to compute quiver after 
\(\boldsymbol{\mu}\), but it looks to be not very illuminating so we omit it.

\begin{Proposition}	
	(a) In coordinates \(y_{i,j}\) the ideal \(J\) is generated by 
	\begin{multline}\label{eq:E8 J in y}
		J=\big(1+y_{0,2}, 1+y_{1,3}, 1+y_{2,3}, 1+y_{0,5}, 1+y_{2,5},1+y_{2,1}, \\ 1-y_{1,2} y_{2,4}, 1-y_{0,7} y_{2,7}, 1-y_{1,4} y_{1,6}, 1+y_{0,4} y_{1,6} y_{2,2}\big)
	\end{multline}
	
	(b) The ideal \(J\) is closed under the Poisson bracket.
\end{Proposition}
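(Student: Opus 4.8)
The two parts are of quite different character, so I would dispatch them separately, beginning with (b), which is coordinate-free and short, and then turning to (a), the computational heart of the statement. For (b), Poisson closure is intrinsic — every cluster mutation is a Poisson map — so it suffices to verify $\{J,J\}\subset J$ in the original chart $\mathsf{s}$, where the generators are the explicit functions $C_1,\dots,C_4,H_1,\dots,H_4,H_{12},H_{34}$. The four functions $C_a$ are Casimirs (recall $C_a=z_az_{a+1}^{-1}$ from \eqref{eq:C,H}), so $\{C_a,f\}=0$ for every $f$ and these brackets lie in $J$ trivially. For the Hamiltonians I would invoke Lemma~\ref{lem:Sevostyanov}: within each patch the triple $H_1,H_2,H_{12}=\{H_1,H_2\}$ (respectively $H_3,H_4,H_{34}$) closes into the Sevostyanov algebra, and the relations \eqref{eq:Sevostyanov} show that $\{H_1,H_{12}\}$ and $\{H_2,H_{12}\}$ are multiples of $H_{12}$, hence in $J$, and likewise for the second patch. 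The only remaining brackets are the cross-patch ones $\{H_i,H_j\}$ with $i\in\{1,2,12\}$ and $j\in\{3,4,34\}$; these lie in $J$ by the Poisson-compatibility lemma for ideals attached to distinct (or coincident) sides quoted in Section~\ref{ssec:reduced GK}. Assembling these cases gives $\{J,J\}\subset J$.

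\textbf{Part (a).} As in the $E_7^{(1)}$ case, this is a direct computation with the mutation rule \eqref{eq:mutation rule}, and the specific sequence \eqref{muE8} is precisely the one manufactured by the recipe of Lemma~\ref{lem:patch cluster chart}(a). I would organize the verification so as to separate the easy monomial part from the genuinely nonlinear part, and prove the equality of ideals $\boldsymbol{\mu}(J)=\tilde J$ (with $\tilde J$ the binomial ideal \eqref{eq:E8 J in y}) by checking both inclusions on transformed generators. First, the four Casimirs remain \emph{monomials} under every mutation along the way: if $C=\prod_i x_i^{n_i}$ has exponent vector $\mathbf{n}$ lying in the kernel of the current exchange matrix $b$, then in $\mu_k(C)$ the correction factors $(1+x_k^{\pm})$ combine to the power $\sum_{i\ne k}n_ib_{ik}=0$, so $C$ is carried to a monomial whose exponent vector is transformed by the linear (tropical) part of the mutation. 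Since Casimirs are preserved by Poisson maps, the kernel condition holds at every intermediate seed, so monomiality persists. Tracking these four exponent vectors through \eqref{muE8} is then pure integer linear algebra and yields the four monomial relations among \eqref{eq:E8 J in y}.

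\textbf{The hard part.} The remaining six generators come from the telescoping sums $H_1,H_2,H_3,H_4$ and their brackets $H_{12},H_{34}$; these are the nonlinear, and hence the genuinely difficult, content of the verification. Here I would proceed step by step, using at each stage telescoping identities of the type displayed in the Remark following Section~\ref{ssec:zigzag mut 2} — which rewrite $1+\mathsf{S}(\dots)$ as a binomial modulo the relations already imposed — so that each successive mutation in \eqref{muE8} strips one term off these sums. The sequence is designed precisely so that $H_1+1,H_2+1$ (respectively $H_3+1,H_4+1$) collapse to the single-variable conditions $1+y_{\bullet}$ and $H_{12},H_{34}$ collapse to the remaining binomials; since $\boldsymbol{\mu}\colon\mathsf{s}\to\mathsf{t}$ is a biregular isomorphism of the two cluster tori, the transformed generators cut out exactly $\tilde J$. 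The principal obstacle is the bookkeeping in this last step: controlling the signs coming from the Kasteleyn data together with the mutation signs, and confirming that no spurious generators survive across the thirty-four mutations. In practice this is most safely confirmed by symbolic computation, exactly as for $E_7^{(1)}$, with the structural observations above serving to make the outcome transparent rather than to replace the computation.
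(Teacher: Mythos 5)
Your treatment of part (a) is essentially the paper's own: the paper proves (a) by ``straightforward computation'', and your organization of that computation is a legitimate, slightly more structured version of the same check. Your observation that the four Casimirs stay monomial under every mutation is correct (a monomial whose exponent vector lies in \(\ker b\) is carried to a monomial by \eqref{eq:mutation rule}, since the \((1+x_k^{\pm1})\) factors cancel), and the rest is the same finite symbolic verification the paper performs. One small point: Kasteleyn signs play no role in (a) — the ideal \(J\) and the mutation rule involve only face variables — so that particular bookkeeping worry is spurious.

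Part (b) is where there is a genuine problem, not of mathematics but of logical standing within this paper. You establish closure in the seed \(\mathsf{s}\) by invoking Lemma~\ref{lem:Sevostyanov} for the intra-patch brackets and the cross-patch lemma of Section~\ref{ssec:reduced GK} for the rest. Both of these are cited to \cite{Inprog}: the paper says explicitly that their proofs are postponed and that the reader may regard them as conjectures, and it equally explicitly states that the Painlev\'e sections (of which this \(E_8^{(1)}\) computation is part, and which are meant to supply \emph{evidence} for those very statements) do not rely on Section~\ref{sec:zigzag reductions}. So your argument for (b) is at best conditional and, as support for the conjectures, circular. The fix is cheap and is what the paper does: once (a) is proved, work in the seed \(\mathsf{t}=\boldsymbol{\mu}(\mathsf{s})\), where all ten generators \eqref{eq:E8 J in y} have the form \(1\pm m\) with \(m\) a cluster monomial. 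Their pairwise brackets are \(\{m_i,m_j\}=c_{ij}\,m_im_j\) with \(c_{ij}\) an integer read off the mutated exchange matrix, and direct inspection gives \(c_{ij}=0\) for all pairs, i.e.\ the generators Poisson commute; an ideal whose generators pairwise Poisson commute is automatically closed under the bracket, and since \(\boldsymbol{\mu}\) is a Poisson isomorphism of charts the closure transfers back to \(\mathsf{s}\). (Alternatively one may stay in \(\mathsf{s}\), as the paper also permits, but then the brackets \(\{H_a,H_{ab}\}\), the cross-patch brackets, etc.\ must be computed directly rather than quoted from the unproven lemmas.)
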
 
This proposition follows from straightforward computation. For part (b) one can use both original seed \(\mathsf{s}\) or \(\boldsymbol{\mu}(\mathsf{s})\), in the latter case it is easily seen, that generators~\eqref{eq:E8 J in y} do Poisson commute.

The part (a) of the proposition explains the purpose of the transformation \(\boldsymbol{\mu}\), namely in new variables the ideal (\ref{eq:E8 J in y}) is generated by the Poisson commuting binomials, (c.f. Lemma~\ref{lem:patch cluster chart}).

Introduce now new variables
\begin{multline}
	\label{wE8}
	w_1=-y_{0,3} y_{0,4} y_{1,5} y_{2,2} y_{2,4} y_{2,6},\;\; w_2=-y_{1,5} y_{2,4} y_{2,6} \;\;	w_3=y_{0,0}, \;\; w_4=y_{0,1}, \;\;  w_5=y_{0,6},
	\\ 
	 w_6=y_{1,7},\;\; w_7=y_{1,8}\;\; w_{8}=y_{1,1},\;\; w_{9}=y_{2,6},\;\; w_{10}=y_{1,0},,\;\; w_{11}=-y_{0,8} y_{2,0},.
\end{multline}

\begin{Proposition}
	(a) The variables \(w_1,\dots,w_{11}\) are local coordinates on the Hamiltonian reduction with respect to ideal \(J\).
	
	(b) The Poisson bracket between \(\{w_i,w_j\}=b^{\mathrm{red}}_{ij}w_iw_j\) is logarithmically constant,  where the matrix \(b^{\mathrm{red}}\) is the adjacency matrix for a quiver in Fig.~\ref{fi:E8 reduction quiver}.
	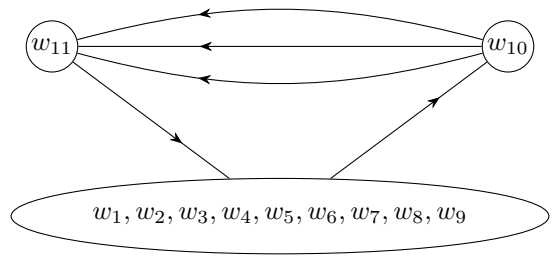
\begin{figure}[h]
		\begin{center}
			\begin{tikzpicture}[scale=1.5, font = \small]
				
				\node[styleNode, ellipse, minimum height=1cm] (ybottom) at(0,0){$w_1, w_2, w_3, w_4, w_5, w_6, w_7, w_{8}, w_{9}$};

				\node[styleNode] (yright) at(2,1.5){$w_{10}$};

				\node[styleNode] (yleft) at(-2,1.5){$w_{11}$};
				
				\draw[styleArrow](yleft) to[] (ybottom); 		
				\draw[styleArrow](ybottom) to[] (yright); 		
				\draw[styleArrow](yright) to[] (yleft); 		
				\draw[styleArrow](yright) to[bend right=15] (yleft); 		
				\draw[styleArrow](yright) to[bend left=15] (yleft); 		
			\end{tikzpicture}
		\caption{Quiver for the Poisson bracket after reduction in variables \eqref{eq:E8 J in y}}
		\label{fi:E8 reduction quiver}	
		\end{center}		
	\end{figure}	
\end{Proposition}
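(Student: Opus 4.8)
The plan is to turn both parts into linear algebra on the exponent lattice of the $\mathbf{y}$-chart, exploiting that after the mutation sequence $\boldsymbol{\mu}$ in \eqref{muE8} everything in sight is \emph{monomial}. Since cluster mutation is a Poisson isomorphism of charts carrying $J$ to $J$, the Hamiltonian reduction of Definition~\ref{def:reduction} may be computed in the seed $\mathsf{t}=\boldsymbol{\mu}(\mathsf{s})$. By the preceding Proposition the bracket there is log-constant, $\{y_{i,j},y_{k,l}\}=b^{\mathsf t}_{(i,j),(k,l)}\,y_{i,j}y_{k,l}$ with $b^{\mathsf t}$ the mutated exchange matrix, and $J$ is generated by the ten binomials of \eqref{eq:E8 J in y}, whose monomial parts $m^{(r)}=\prod_{i,j}y_{i,j}^{a^{(r)}_{i,j}}$ pairwise Poisson-commute. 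Encoding a Laurent monomial $\prod y^{c}$ by its exponent vector $c\in\mathbb{Z}^{27}$, one has $\{\prod y^{c},\prod y^{c'}\}=\langle c,c'\rangle\,\prod y^{c}\prod y^{c'}$ with the skew pairing $\langle c,c'\rangle=c^{\top}b^{\mathsf t}c'$. Set $W=\operatorname{span}_{\mathbb{Z}}\!\big(a^{(1)},\dots,a^{(10)}\big)$; the pairwise commutativity says exactly that $W$ is \emph{isotropic}, and a finite check that the ten vectors $a^{(r)}$ are linearly independent gives $\operatorname{rank}W=10$.

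Next I would identify the reduced algebra intrinsically. On $\mathbf{V}(J)$ every monomial $m^{(r)}$ takes a constant value, and a monomial $\prod y^{c}$ Poisson-commutes with all generators of $J$ if and only if $\langle c,a^{(r)}\rangle=0$ for every $r$, i.e. $c\in W^{\perp}$ (the symplectic orthogonal with respect to $b^{\mathsf t}$). Hence the invariant subalgebra $(\mathcal{O}(\mathcal{X}_{\mathsf t})/J)^{\{\bullet,J\}}$ is spanned by the monomials with exponent in $W^{\perp}$, taken modulo the relations $m^{(r)}=\text{const}$; concretely it is the group algebra of the quotient lattice $W^{\perp}/W$. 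A dimension count (the bracket on $\mathcal{X}_\Gamma$ has rank $2I=14$, and $W$ is isotropic of rank $10$) yields $\operatorname{rank}(W^{\perp}/W)=11$, in agreement with \eqref{eq:dim X_red}; this is the same mechanism as in Lemma~\ref{lem:patch cluster chart}, applied to the two $\Pi_{9,3}$ patches simultaneously.

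To prove (a) I would read off the exponent vectors $c^{(1)},\dots,c^{(11)}$ of $w_1,\dots,w_{11}$ from \eqref{wE8} and verify two things: first that $\langle c^{(s)},a^{(r)}\rangle=0$ for all $r,s$, so that each $w_s$ indeed descends to the reduction; and second that the classes of $c^{(1)},\dots,c^{(11)}$ form a $\mathbb{Z}$-basis of $W^{\perp}/W$, equivalently that $\{c^{(s)}\}\cup\{a^{(r)}\}$ is a basis of the \emph{saturation} of $W^{\perp}$ in $\mathbb{Z}^{27}$. The saturation (primitivity) statement is the subtle point: it is what guarantees that $\mathbf{w}$ defines a torus chart $\mathcal{X}_{\mathrm{red}}\cong(\mathbb{C}^{*})^{11}$ rather than a finite cover, so that the $w_s$ are genuine local coordinates and not merely algebraically independent functions. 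For (b) I would compute $\{w_s,w_{s'}\}=\langle c^{(s)},c^{(s')}\rangle\,w_s w_{s'}$; the pairing is well defined on $W^{\perp}/W$ precisely because $W\perp W^{\perp}$ and $W$ is isotropic, so $b^{\mathrm{red}}_{ss'}:=\langle c^{(s)},c^{(s')}\rangle$ is an integral skew-symmetric matrix, and a direct evaluation matches it with the adjacency matrix of the quiver in Fig.~\ref{fi:E8 reduction quiver} (nine mutually non-adjacent vertices $w_1,\dots,w_9$ and the triple arrow between $w_{10}$ and $w_{11}$).

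The main obstacle is bookkeeping rather than conceptual: one must first produce the mutated exchange matrix $b^{\mathsf t}$ from the length-$34$ sequence $\boldsymbol{\mu}$ of \eqref{muE8}, and then evaluate all the pairings $\langle c^{(s)},a^{(r)}\rangle$ and $\langle c^{(s)},c^{(s')}\rangle$ over $27$ coordinates. This is exactly the kind of finite but large computation the paper defers to direct verification, and it is best organized by recording each $w_s$ and each generator of $J$ as a vector in $\mathbb{Z}^{27}$ and working with the single matrix $b^{\mathsf t}$. The only genuinely non-automatic step is confirming the saturation claim in (a); I would establish it by exhibiting an explicit unimodular change of basis of $\mathbb{Z}^{27}$ adapted to the flag $W\subset W^{\perp}$, which simultaneously certifies that $w_1,\dots,w_{11}$ are coordinates and that $b^{\mathrm{red}}$ is the claimed integral matrix.
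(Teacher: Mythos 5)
Your strategy is correct and follows the same skeleton as the paper's own proof: both pass to the seed $\mathsf{t}=\boldsymbol{\mu}(\mathsf{s})$ where $J$ becomes the binomial ideal \eqref{eq:E8 J in y}, and both begin by checking that each $w_s$ Poisson-commutes with the ten generators (in your language, $\langle c^{(s)},a^{(r)}\rangle=0$). The difference is in how the conclusion of (a) is reached. The paper, after the commutation check, merely observes that $w_1,\dots,w_{11}$ are algebraically independent and \emph{recalls} that $\dim\mathcal{X}_{\mathrm{red}}=11$ --- a number which, inside this paper, comes from formula \eqref{eq:dim X_red}, i.e.\ from Lemma~\ref{lem:patch cluster chart}(c)/Conjecture~\ref{conj:generic}, whose proofs are deferred to \cite{Inprog}. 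That is enough for ``local coordinates'' but nothing more. You instead identify the invariant algebra $(\mathcal{O}(\mathcal{X}_{\mathsf t})/J)^{\{\bullet,J\}}$ with the (sign-twisted) group algebra of the lattice $W^{\perp}/W$ and then prove that the exponent vectors of $w_1,\dots,w_{11}$ form a $\mathbb{Z}$-basis of it. This buys two things the paper's two-line argument does not: the dimension $11$ is \emph{derived} from the explicit lattice data rather than quoted from a deferred result, and the saturation/unimodularity step proves the strictly stronger statement that $\mathbf{w}$ is a genuine torus chart --- precisely the statement the paper afterwards only \emph{conjectures} (``we conjecture that $w_1,\dots,w_{11}$ are natural cluster coordinates on $\mathcal{X}_{\mathrm{red}}$''). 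Part (b) is in both treatments the same direct evaluation of the skew pairing, which indeed descends to $W^{\perp}/W$ for the reason you give.

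One step of your narrative is imprecise as stated. From ``$\operatorname{rank} b^{\mathsf t}=2I=14$ and $W$ isotropic of rank $10$'' it does \emph{not} follow that $\operatorname{rank}(W^{\perp}/W)=11$. Setting $K=\ker b^{\mathsf t}$ (so $\dim K=27-14=13$), one has
\begin{equation*}
	\dim W^{\perp}=27-\bigl(10-\dim(W\cap K)\bigr),
	\qquad\text{hence}\qquad
	\operatorname{rank}\bigl(W^{\perp}/W\bigr)=7+\dim(W\cap K),
\end{equation*}
and isotropy alone only forces $\dim(W\cap K)\ge 3$ (the image of $W$ in the $14$-dimensional symplectic quotient is isotropic, so has dimension at most $7$), which would allow the answer $10$. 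You need $\dim(W\cap K)=4$ exactly; morally this reflects that four of the original generators of $J$ are the Casimir relations $C_a-1$, but in the $\mathbf{y}$-chart it is simply one more finite linear-algebra check on the vectors $a^{(r)}$ against $b^{\mathsf t}$, on the same footing as the pairings you already plan to compute. With that check added, your argument is complete and in fact sharper than the one in the paper.
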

To prove (a) one have to check, first, that variables \(\mathbf{w}\) Poisson commute with generators of the ideal \eqref{eq:E8 J in y}. Then we note that clearly \(w_1,\dots, w_{11}\) are algebraically independent and recall that \(\dim \mathcal{X}_{\text{red}}=11\).

We conjecture that \(w_1,\dots, w_{11}\) are natural cluster coordinates on \(\mathcal{X}_{\text{red}}\). This means in particular, that mutations with respect to these variables should be a descent of some mutation sequences on $\mathcal{X}$. For variables \(w_4,\dots,w_{11}\) this is clear since they are cluster variables for seed \(\boldsymbol{\mu}(\mathsf{s})\). For variables \(w_1,w_2,w_{11}\) it becomes quite nontrivial, we checked this for \(w_{11}\) using another sequence \(\boldsymbol{\mu}'\). Additional argument for our choice of cluster variables comes from the study of the Hamiltonian below.

Note also that the quiver on Fig.~\ref{fi:E8 reduction quiver} coincides with zigzag quiver constructed from the polygon on Fig.~\ref{fi:3*9graph}, in agreement with Theorem~\ref{th:selfduality}.

The Poisson bracket on \(\mathcal{X}_{\text{red}}\) defined by the quiver on Fig.~\ref{fi:E8 reduction quiver} has rank 2. Therefore it has 9 linearly independent Casimir functions. We choose them as follows 
\begin{multline}\label{eq:E8 root}
	a_0=w_{5} w_{6} w_{8} w_{10} w_{11},\; a_1=\frac{w_5}{w_{8}},\; a_2=\frac{w_6}{w_5},\; a_3=\frac{w_7}{w_6},\\ a_4=\frac{w_2}{w_7},\; a_5=\frac{w_1}{w_2},\; a_6=\frac{w_{9}}{w_1},\; a_7=\frac{w_4}{w_{9}},\; a_8=\frac{w_3}{w_4}.
\end{multline}
In such definition the variables \(\mathbf{a}\) serve as root variables for the root system \(E_8^{(1)}\), see Fig.~\ref{fi:E8 numbering} for the labeling of the nodes on the Dynkin diagram.

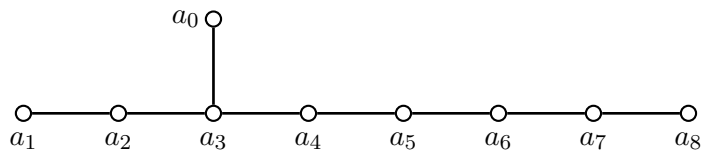
\begin{figure}[h]
	\begin{center}
		\begin{tikzpicture}[elt/.style={circle,draw=black!100,thick, inner sep=0pt,minimum size=2mm},scale=1.25]
			\path 	(-2,0) 	node 	(a1) [elt] {}
			(-1,0) 	node 	(a2) [elt] {}
			( 0,0) node  	(a3) [elt] {}
			( 1,0) 	node  	(a4) [elt] {}
			( 2,0) 	node 	(a5) [elt] {}
			( 3,0)	node 	(a6) [elt] {}
			( 4,0)	node 	(a7) [elt] {}
			( 5,0)	node 	(a8) [elt] {}
			( 0,1)	node 	(a0) [elt] {};
			\draw [black,line width=1pt ] (a1) -- (a2) -- (a3) -- (a4) -- (a5) --  (a6) -- (a7) --(a8) (a3) -- (a0);
			\node at ($(a1.south) + (0,-0.2)$) 	{$a_{1}$};
			\node at ($(a2.south) + (0,-0.2)$)  {$a_{2}$};
			\node at ($(a3.south) + (0,-0.2)$)  {$a_{3}$};
			\node at ($(a4.south) + (0,-0.2)$)  {$a_{4}$};	
			\node at ($(a5.south) + (0,-0.2)$)  {$a_{5}$};		
			\node at ($(a6.south) + (0,-0.2)$) 	{$a_{6}$};	
			\node at ($(a7.south) + (0,-0.2)$) 	{$a_{7}$};	
			\node at ($(a8.south) + (0,-0.2)$) 	{$a_{8}$};	
			\node at ($(a0.west) + (-0.2,0)$) 	{$a_{0}$};		
		\end{tikzpicture}
		\caption{Root numbering for $E_8^{(1)}$ Dynkin diagram}
		\label{fi:E8 numbering}
	\end{center}
\end{figure}

The Casimir function \(q\) (which corresponds to imaginary root and is responsible for deatonomization) has the following expressions
\begin{equation}\label{eq:q E8}
	q=\prod_{i=0}^2\prod_{j=0}^{8}x_{i,j} = w_1 w_2 w_3 w_4 w_5 w_6 w_7 w_8 w_9 w_{10}^3  w_{11}^3=a_0^3 a_1^2 a_2^4 a_3^6 a_4^5 a_5^4 a_6^3 a_7^2 a_8^1.
\end{equation}
The exponents for \(a_i\) in \eqref{eq:q E8} coincides with exponents of roots for \(E_8^{(1)}\). 

The Weyl group \(W^{\mathrm{ae}}(E_8)\) can be realized using mutations and transpositions (the formulas are standard, see \cite{Bershtein:2018cluster, Mizuno,Masuda:2023birational})

\begin{Proposition}
	The generators \(s_0,\dots,s_8\) given below satisfy relations of \(W^{\mathrm{ae}}(E_8)\)	
	\begin{equation}
		\begin{aligned}
			s_1=(5,8),\;\; s_2=(5,6),\;\; s_3=(6,7),\;\;  s_4=(2,7),\;\; s_5=(1,2),\;\; s_6=(1,9),
			\\
			s_7=(4,9),\;\; s_8=(3,4),\;\; s_0=(10,11)\mu_5\mu_{6}\mu_8\mu_{10}\mu_{11}\mu_5\mu_{6}\mu_8.
		\end{aligned}
	\end{equation}	
\end{Proposition}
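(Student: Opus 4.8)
The plan is to verify the defining Coxeter relations of $W^{\mathrm{ae}}(E_8)$ read off from Fig.~\ref{fi:E8 numbering} — namely $s_i^2=\mathrm{id}$ for all $i$, the commutation $(s_is_j)^2=\mathrm{id}$ for nodes $i,j$ not joined by an edge, and the braid relation $(s_is_j)^3=\mathrm{id}$ for adjacent nodes — directly as identities of birational transformations of the reduced cluster variety $\mathcal{X}_{\text{red}}$. Since each $s_i$ preserves the reduced quiver of Fig.~\ref{fi:E8 reduction quiver}, it is an element of the cluster modular group $G_{\mathcal{Q}}$, and a relation $\mathrm{w}=\mathrm{id}$ holds in $G_{\mathcal{Q}}$ precisely when the word $\mathrm{w}$ acts trivially on the finite collection of cluster variables $w_1,\dots,w_{11}$. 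Thus every relation reduces to a finite check.

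First I would dispose of all relations not involving $s_0$. The generators $s_1,\dots,s_8$ are pure transpositions of the variables, and as permutations of the symbols $\{1,\dots,9\}$ they realize the edges of the path
\[
8 \,-\, 5 \,-\, 6 \,-\, 7 \,-\, 2 \,-\, 1 \,-\, 9 \,-\, 4 \,-\, 3 ,
\]
for instance $s_1=(5,8)$, $s_2=(5,6)$, $s_3=(6,7)$, and so on down to $s_8=(3,4)$. Transpositions supported on the edges of a tree on nine vertices generate $S_9=W(A_8)$ and satisfy exactly the Coxeter relations dictated by that tree: two consecutive generators share a single symbol, forcing $(s_is_j)^3=\mathrm{id}$, while non-consecutive generators are disjoint transpositions and hence commute. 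Reading off this shared/disjoint pattern shows it coincides with the $A_8$ subdiagram of Fig.~\ref{fi:E8 numbering}, so all relations among $s_1,\dots,s_8$ hold automatically.

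It remains to treat $s_0=(10,11)\mu_5\mu_6\mu_8\mu_{10}\mu_{11}\mu_5\mu_6\mu_8$, for which I must check (i) $s_0^2=\mathrm{id}$, (ii) $(s_0s_i)^2=\mathrm{id}$ for $i\neq 3$, and (iii) the single braid relation $(s_0s_3)^3=\mathrm{id}$ with the adjacent node $3$. Because $s_0$ contains a nontrivial sequence of mutations, its action on $\mathbf{w}$ is a genuine rational transformation rather than a mere permutation, so computing the composed maps is the technical heart of the argument; I would carry these out by tracking $w_1,\dots,w_{11}$ through the mutation word. As an independent consistency check I would also compute the induced action on the root variables $a_0,\dots,a_8$ of \eqref{eq:E8 root} and confirm it matches \eqref{eq:actiononroot} for the $E_8^{(1)}$ Cartan matrix, which pins down the identification of $s_0$ with the affine node and rules out accidental degeneracies.

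The hard part will be the braid relation $(s_0s_3)^3=\mathrm{id}$: unlike the transposition relations it genuinely exercises the nonlocal mutation part of $s_0$, so the threefold composition must be expanded and simplified as an identity of rational functions in $w_1,\dots,w_{11}$. This is a finite but lengthy computation; conceptually it is guaranteed to succeed because the very same generators realize $W^{\mathrm{ae}}(E_8)$ in the cluster descriptions of \cite{Bershtein:2018cluster,Mizuno,Masuda:2023birational}, and the reduction performed above merely re-expresses those formulas in the coordinates $\mathbf{w}$ on $\mathcal{X}_{\text{red}}$.
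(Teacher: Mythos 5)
Your proposal is correct in substance, but it takes a more computational route than the paper, whose ``proof'' of this proposition is deliberately light: it declares the formulas standard (citing \cite{Bershtein:2018cluster,Mizuno,Masuda:2023birational}) and adds one structural remark that does the real conceptual work for $s_0$. Your treatment of $s_1,\dots,s_8$ — transpositions along the path $8-5-6-7-2-1-9-4-3$, hence the $A_8$ Coxeter relations for free — is exactly what is implicit in the paper, and is fine. The difference is in $s_0$. You propose to verify $s_0^2=\mathrm{id}$, the commutations, and the braid relation $(s_0s_3)^3=\mathrm{id}$ by expanding rational functions in $w_1,\dots,w_{11}$; the paper instead observes that $s_0$ is the conjugate of $(10,11)\mu_{10}\mu_{11}$ by $\mu_5\mu_6\mu_8$ (the three mutations pairwise commute since vertices $5,6,8$ are disconnected from each other), and that $\mu_5\mu_6\mu_8$ kills the three arrows between vertices $10$ and $11$, after which $w_{10}'$ and $w_{11}'$ Poisson-commute with Casimir product, so $(10,11)\mu_{10}\mu_{11}$ is an automorphism of the mutated quiver. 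This buys two things cheaply that you obtain only by brute force: quiver preservation for $s_0$, and $s_0^2=\mathrm{id}$ (a conjugate of an involution is an involution, since $\mu_{10},\mu_{11}$ commute and are involutions on the disconnected quiver). Note also a small logical gap in your setup: you invoke ``since each $s_i$ preserves the reduced quiver'' to place the generators in $G_{\mathcal{Q}}$ and to reduce relations to identities on the variables, but for $s_0$ that preservation is precisely a nontrivial claim — the paper's conjugation remark is the clean way to establish it, and your plan should make this an explicit step rather than a hypothesis. For the braid relation with $s_3$ and the commutations, neither you nor the paper writes out the check — the paper outsources it to the cited literature, you defer it as a finite computation — so on that point the two arguments are on equal footing.
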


Let us comment to formula for \(s_0\). It is easy to see from Fig.~\ref{fi:E8 reduction quiver} that the mutation sequence \(\mu_5\mu_{6}\mu_8\) removes three edges between vertices 10 and 11. The new variables \(w_{10}'\) and \(w_{11}'\) commute and their product is Casimir. Hence \((10,11)\mu_{10}\mu_{11}\) is an automorphism of the obtained quiver.

\subsection{Hamiltonian and spectral curve}

In order to construct spectral curve and Hamiltonian we will use approach of \cite{Fock:2016}, see Remark~\ref{rem:FM}. Let as take \(M=9\) and consider the element
\begin{multline}
	L(\lambda)=H_0(x_{0,0}) E_0  H_3(x_{1,3}) E_3 H_6(x_{2,6})E_6 H_2(x_{1,2}) E_2 H_5(x_{2,5})E_5 H_8(x_{0,8}) E_8 	
	\\
	H_1(x_{1,1}) E_1 H_4(x_{2,4}) E_4 H_7(x_{0,7}) E_7 H_0(x_{1,0}) E_0  H_3(x_{2,3})E_3 H_6(x_{0,6})E_6 H_2(x_{2,2})E_2 	
	\\ 
	H_5(x_{0,5}) E_5 H_8(x_{1,8}) E_8 H_1(x_{2,1}) E_1  H_4(x_{0,4})E_4 H_7(x_{1,7}) E_7 H_0(x_{0,2})E_0 H_3(x_{0,3})E_3  
	\\ H_6(x_{1.6}) E_6 H_2(x_{0,2}) E_2  H_5(x_{1,5}) E_5 H_8(x_{2,8}) E_8 H_1(x_{0,1}) E_1 H_4(x_{1,4}) E_4 H_7(x_{2,7}) E_7, 
\end{multline}
which correspond to the word \((s_{036258147})^3\in W^{\mathrm{ae}} (A_8 \times A_8)\). Then the partition function can be defined as
\begin{equation}
	\mathcal{Z}(\lambda,\mu)=\lambda^{-1}\mu^{-1}\Big(\det\big(g(\lambda)+\mu\big)\Big|_{\displaystyle \lambda \mapsto   x_{0,0}^{-1} x_{1,4} x_{1,5} x_{1,6} x_{1,7} x_{1,8} x_{2,7} x_{2,8}\lambda \mu^{-1}  }\Big).
\end{equation}
Note that here we made \(SL(2,\mathbb{Z})\) transformation on \(\lambda, \mu\) in order to obtain the Newton polygon given in Fig.~\ref{fi:3*9graph}. and also some rescaling in order to make \(\lambda,\mu\) be Casimir functions. 

Now impose reduction conditions given by ideal \(J\), note also that that \(q=1\).
Then it is straightforward to check (and also follows from Lemma~\ref{lem:J implies I}) that the function $\mathcal{Z}_{\mathrm{red}}(\lambda,\mu)=\mathcal{Z}(\lambda,\mu)|_{\mathbf{V}(J)}$ would satisfy conditions \eqref{eq:polyn mut cond} for the sides $AB$ and $BC$. Then up to constant term and overall monomial factor the Laurent polynomial $\mathcal{Z}_{\mathrm{red}}(\lambda,\mu)$ is determined by \(\mathcal{Z}_{\mathrm{red}}|_E\) for the sides \(E\) (c.f. Lemma~\ref{lem:Painleve by boundary}). In the Fig.~\ref{fi:3*9 SC roots} we presented these polynomials in factorized form. Note that the factors are expressed in terms of root variables \(\mathbf{a}\) defined in formula \eqref{eq:E8 root}, where we also use shorthand notation similar to \ref{eq:avarshorthand}. 

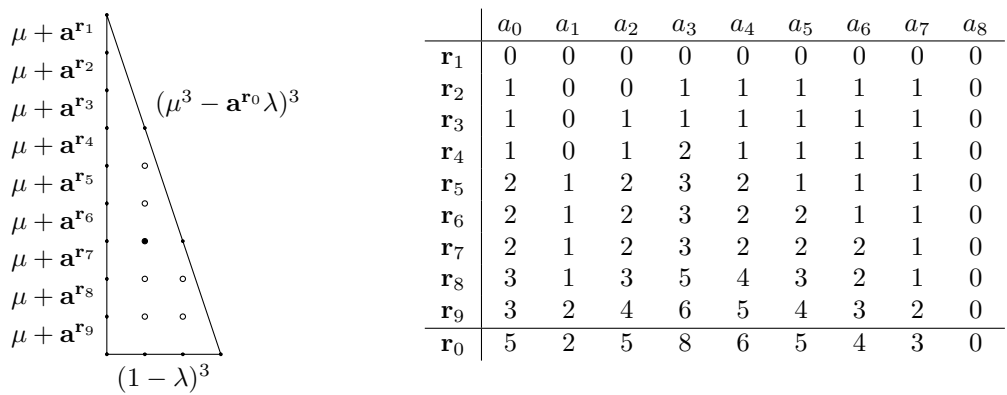
\begin{figure}[h]
	\begin{center}
		\begin{tikzpicture}[scale=0.5, font = \small]
			
			
			\node[left] at (-1,-3.5) {$\mu +\mathbf{a}^{\mathbf{r}_9}$};
			\node[left] at (-1,-2.5) {$\mu +\mathbf{a}^{\mathbf{r}_8}$};
			\node[left] at (-1,-1.5) {$\mu +\mathbf{a}^{\mathbf{r}_7}$};
			\node[left] at (-1,-0.5) {$\mu +\mathbf{a}^{\mathbf{r}_6}$};
			\node[left] at (-1,0.5) {$\mu +\mathbf{a}^{\mathbf{r}_5}$};
			\node[left] at (-1,1.5) {$\mu +\mathbf{a}^{\mathbf{r}_4}$};
			\node[left] at (-1,2.5) {$\mu +\mathbf{a}^{\mathbf{r}_3}$};
			\node[left] at (-1,3.5) {$\mu +\mathbf{a}^{\mathbf{r}_2}$};
			\node[left] at (-1,4.5) {$\mu +\mathbf{a}^{\mathbf{r}_1}$};
			
			\node[below] at (0.5,-4) {$(1-\lambda)^3$};
			
			\node[above right] at (0,2) {$(\mu^3-\mathbf{a}^{\mathbf{r}_{0}} \lambda)^3$};
			
			\draw[fill] (2,-4) circle (1pt) -- (1,-1) circle (1pt) -- (0,2) circle (1pt) -- (-1,5) circle (1pt) -- (-1,4) circle (1pt) -- (-1,3) circle (1pt) -- (-1,2) circle (1pt) -- 
			(-1,1) circle (1pt) -- (-1,0) circle (1pt) -- (-1,-1) circle (1pt) -- (-1,-2) circle (1pt) 
			-- (-1,-3) circle (1pt) -- (-1,-4) circle (1pt)  -- (0,-4) circle (1pt) -- (1,-4) circle (1pt) -- (2,-4);
			\draw[fill] (0,-1) circle (2pt);
			\draw (0,-2) circle (2pt);
			\draw (0,-3) circle (2pt);
			\draw (1,-3) circle (2pt);
			
			\draw (0,0)  circle (2pt);
			\draw (0,1) circle (2pt);
			\draw (1,-2) circle (2pt);

		\node at (15,0.5)	
		{$\begin{tabular}{c|ccccccccc}
			& $a_0$ & $a_1$& $a_2$& $a_3$& $a_4$& $a_5$ & $a_6$ & $a_7$ & $a_8$ 
			\\
			\hline
			$\mathbf{r}_1$ & 0 & 0& 0& 0& 0& 0 &0&0&0 
			\\
			$\mathbf{r}_2$ & 1 & 0 & 0 & 1 & 1 & 1 & 1 & 1 & 0 
			\\
			$\mathbf{r}_3$ & 1 & 0& 1& 1& 1& 1 &1 & 1 & 0 			
			\\
			$\mathbf{r}_4$ & 1 & 0& 1& 2& 1& 1 &1 & 1 & 0 			
			\\
			$\mathbf{r}_5$ & 2 & 1& 2& 3& 2& 1 &1 & 1 & 0 			
			\\
			$\mathbf{r}_6$ & 2 & 1& 2& 3& 2& 2 &1 & 1 & 0 			
			\\
			$\mathbf{r}_7$ & 2 & 1& 2& 3& 2& 2 &2 & 1 & 0 			
			\\
			$\mathbf{r}_8$ & 3 & 1& 3& 5& 4& 3 &2 & 1 & 0 			
			\\
			$\mathbf{r}_9$ & 3 & 2& 4& 6& 5& 4 &3 & 2 & 0 			
			\\
			\hline
			$\mathbf{r}_{0}$ & 5 & 2& 5& 8& 6& 5 &4 & 3 & 0 			
		\end{tabular}$};
		\end{tikzpicture}		
		\caption{Factors of the restriction of spectral curve polynomial \(\mathcal{Z}_{\mathrm{red}}(\lambda,\mu)\) on the sides  of \(N\)}
		\label{fi:3*9 SC roots}
	\end{center}
\end{figure}

The constant term of \(\mathcal{Z}_{\mathrm{red}}(\lambda,\mu)\) is a Hamiltonian \(f_d(\mathbf{w})\) Let us introduce dual spectral variables as 
\begin{equation}
	\lambda_d=-w_{10},\; \mu_d=w_7^{-1}.
\end{equation}
Then the Hamiltonian is a function on dual spectral parameters and root variables. We denote it by \(f(\mathrm{w}(a)|\lambda_d,\mu_d)\) in view of Theorem~\ref{th:selfduality}. It is checked by straightforward (but rather cumbersome) computation that the Newton polygon of the Hamiltonian coincides with \(N\) given in Fig.~\ref{fi:3*9graph}, Moreover, the Hamiltonian \(f(\mathrm{w}(a)|\lambda_d,\mu_d)\) also satisfies reduction conditions along the sides \(AB\) and \(BC\). Hence it is also determined by the restriction of  \(f(\mathrm{w}(a)|\lambda_d,\mu_d)\) to the sides of \(N\), we give them on the Fig.~\ref{fi:3*9 Ham roots}. 

\begin{figure}[h]
	\begin{center}
		\begin{tikzpicture}[scale=0.5, font = \small]
			
			
			\node[left] at (-1,-3.5) {$\mu_d +\mathbf{a}^{\mathbf{d}_9}$};
			\node[left] at (-1,-2.5) {$\mu_d +\mathbf{a}^{\mathbf{d}_8}$};
			\node[left] at (-1,-1.5) {$\mu_d +\mathbf{a}^{\mathbf{d}_7}$};
			\node[left] at (-1,-0.5) {$\mu_d +\mathbf{a}^{\mathbf{d}_6}$};
			\node[left] at (-1,0.5) {$\mu_d +\mathbf{a}^{\mathbf{d}_5}$};
			\node[left] at (-1,1.5) {$\mu_d +\mathbf{a}^{\mathbf{d}_4}$};
			\node[left] at (-1,2.5) {$\mu_d +\mathbf{a}^{\mathbf{d}_3}$};
			\node[left] at (-1,3.5) {$\mu_d +\mathbf{a}^{\mathbf{d}_2}$};
			\node[left] at (-1,4.5) {$\mu_d +\mathbf{a}^{\mathbf{d}_1}$};
			
			\node[below] at (0.5,-4) {$(1-\lambda_d)^3$};
			
			\node[above right] at (0,2) {$(\mu_d^3-\mathbf{a}^{\mathbf{d}_{0}} \lambda_d)^3$};
			
			\draw[fill] (2,-4) circle (1pt) -- (1,-1) circle (1pt) -- (0,2) circle (1pt) -- (-1,5) circle (1pt) -- (-1,4) circle (1pt) -- (-1,3) circle (1pt) -- (-1,2) circle (1pt) -- 
			(-1,1) circle (1pt) -- (-1,0) circle (1pt) -- (-1,-1) circle (1pt) -- (-1,-2) circle (1pt) 
			-- (-1,-3) circle (1pt) -- (-1,-4) circle (1pt)  -- (0,-4) circle (1pt) -- (1,-4) circle (1pt) -- (2,-4);
			\draw[fill] (0,-1) circle (2pt);
			\draw (0,-2) circle (2pt);
			\draw (0,-3) circle (2pt);
			\draw (1,-3) circle (2pt);
			
			\draw (0,0)  circle (2pt);
			\draw (0,1) circle (2pt);
			\draw (1,-2) circle (2pt);

			\node at (13,0.5)	
			{
				$\begin{tabular}{c|ccccccccc}
					& $a_0$ & $a_1$& $a_2$& $a_3$& $a_4$& $a_5$ & $a_6$ & $a_7$ & $a_8$ 
					\\
					\hline
					$\mathbf{d}_1$ & 0& -1& -1& -1& 0& 0 &0&0&0 
					\\
					$\mathbf{d}_2$ & 0 & 0 & -1 & -1 & 0 & 0 & 0 & 0 & 0 
					\\
					$\mathbf{d}_3$ & 0 & 0& 0& -1 & 0& 0 &0 & 0 & 0 			
					\\
					$\mathbf{d}_4$ & 0 & 0& 0& 0& 0& 0 &0 & 0 & 0 			
					\\
					$\mathbf{d}_5$ & 0 & 0& 0& 0& 1& 0 &0 & 0 & 0 			
					\\
					$\mathbf{d}_6$ & 0 & 0& 0& 0& 1& 1 &0 & 0 & 0 			
					\\
					$\mathbf{d}_7$ & 0 & 0& 0& 0& 1& 1 &1 & 0 & 0 			
					\\
					$\mathbf{d}_8$ & 0 & 0& 0& 0& 1& 1 &1 & 1 & 0 			
					\\
					$\mathbf{d}_9$ & 0 & 0& 0& 0& 1& 1 &1 & 1 & 1 			
					\\
					\hline
					$\mathbf{d}_{0}$ & -1 & -1& -2& -3& 0& 0 &0 & 0 & 0 			
				\end{tabular}$
			};
			\begin{scope}[shift={(25.7,0)}]
				
				
				\node[left] at (-1,-3.5) {$1+w_1$};
				\node[left] at (-1,-2.5) {$1+w_2$};
				\node[left] at (-1,-1.5) {$1+w_3$};
				\node[left] at (-1,-0.5) {$1+w_4$};
				\node[left] at (-1,0.5) {$1+w_5$};
				\node[left] at (-1,1.5) {$1+w_6$};
				\node[left] at (-1,2.5) {$1+w_7$};
				\node[left] at (-1,3.5) {$1+w_{8}$};
				\node[left] at (-1,4.5) {$1+w_{9}$};
				
				\node[below] at (0.5,-4) {$(1+w_{10})^3$};
				
				\node[above right] at (0,2) {$(1+w_{11})^3$};
				
				\draw[fill] (2,-4) circle (1pt) -- (1,-1) circle (1pt) -- (0,2) circle (1pt) -- (-1,5) circle (1pt) -- (-1,4) circle (1pt) -- (-1,3) circle (1pt) -- (-1,2) circle (1pt) -- 
				(-1,1) circle (1pt) -- (-1,0) circle (1pt) -- (-1,-1) circle (1pt) -- (-1,-2) circle (1pt) 
				-- (-1,-3) circle (1pt) -- (-1,-4) circle (1pt)  -- (0,-4) circle (1pt) -- (1,-4) circle (1pt) -- (2,-4);
				\draw[fill] (0,-1) circle (2pt);
				\draw (0,-2) circle (2pt);
				\draw (0,-3) circle (2pt);
				\draw (1,-3) circle (2pt);
				
				\draw (0,0)  circle (2pt);
				\draw (0,1) circle (2pt);
				\draw (1,-2) circle (2pt);
				
		\end{scope}
			
		\end{tikzpicture}		
		\caption{Factors of the restriction of Hamiltonian \(f(\mathrm{w}(\mathbf{a})|\lambda_d,\mu_d)\) on the sides  of \(N\)}
		\label{fi:3*9 Ham roots}
		
	\end{center}
\end{figure}
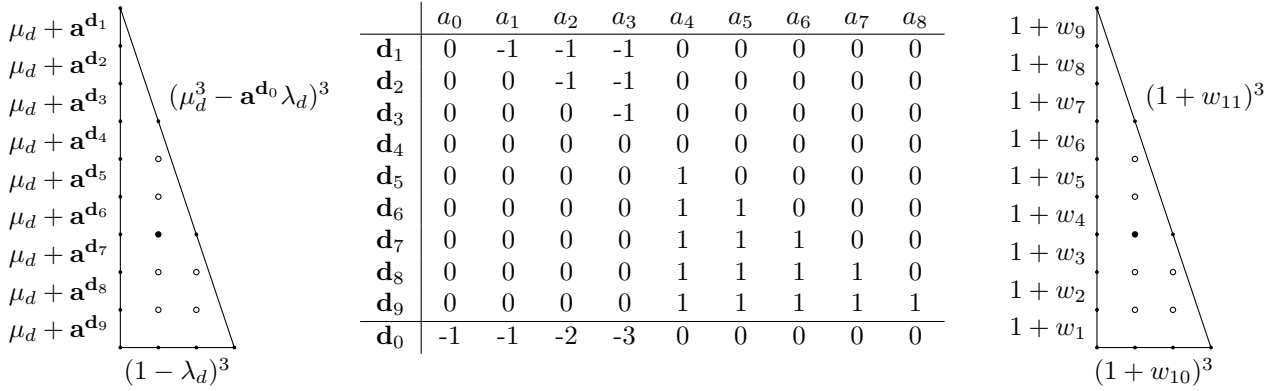
Note also that the factors of \(f(\mathrm{w}(a)|\lambda_d,\mu_d)\) have very simple form in terms of variables on the reduction \(\mathbf{w}\), they are given on the right of Fig.~\ref{fi:3*9 Ham roots}. Recall that in the case without reduction such factors should correspond to zigzags of the dual graph \(\Gamma^D\), i.e. face variables. Taking into account Theorem~\ref{th:zigzags boundary} above and self-duality Theorem~\ref{th:selfduality} this is an additional argument in favor our choice cluster coordinates \(\mathbf{w}\) on the reduction.

As was shown in Lemma~\ref{lem:Painleve by boundary} the function \(f_d(\mathbf{a}|\lambda_d,\mu_d)\) is determined by information above (restrictions on the boundary of \(N\) and reduction conditions). Explicitly it has the form 
\begin{multline}\label{eq:E8 Hamiltonian}
	f(\mathrm{w}(a)|\lambda_d,\mu_d)=w_{10} w_{11}^2 \Big( \big(\prod\nolimits_{i=1}^9 (1+w_i)-1\big)  +  \big((1+w_{10})^3-1\big)\prod\nolimits_{i=1}^9 w_i 
	+   \big((1+w_{11})^3-1\big)w_{10}^3\prod\nolimits_{i=1}^9 w_i
	\\ 
	+(w_{11}^{-2}+2w_{11}^{-1}) \sum\nolimits_{i=1}^9 w_i+
	(w_{10}^2+2 w_{10}) \prod\nolimits_{i=1}^9 w_i   \sum\nolimits_{i=1}^9 {w_i}^{-1}
	+w_{11}^{-1}\sum_{1\leq i<j\le 9} w_i w_j
	\\ 
	+w_{10} \prod\nolimits_{i=1}^9 w_i \sum_{1\leq i<j\le 9} w_i^{-1} w_j^{-1}
	+w_{11} w_{10}^2 \prod\nolimits_{i=1}^9 w_i\sum\nolimits_{i=1}^9 w_i^{-1}+w_{10}^{-1}w_{11}^{-2}\sum\nolimits_{i=1}^9 w_i \Big). 
\end{multline}

In order to show self-duality Theorem~\ref{th:selfduality} in this case it remains to find an element \( \mathrm{w}^{-1} \in W(E_8)\) which transforms \(f(\mathrm{w}(\mathbf{a})|\lambda_d,\mu_d)\) to \(f(\mathbf{a}|\lambda,\mu)=(\mathcal{Z}_{\mathrm{red}}(\lambda,\mu)-\text{constant term})\). Since these polynomials are determined by their behavior on the boundary of \(N\) it is sufficient to find element \( \mathrm{w} \in W(E_8)\) such that  \(\mathrm{w}^{-1}(\mathbf{a}^{\mathbf{d}_i})= \mathbf{a}^{\mathbf{r}_{\sigma(i)}}\) for any \(0 \leq i \leq 9\) and \(\sigma\) be a permutation of \(\{1,\dots,9\}\). One of the possible solutions has the form 
\begin{equation}
	\mathrm{w}^{-1}=s_{1 2 3 4 5 6 7 6 5 4 3 2 0 3 4 5 6 1 2 3 4 0 5 3 4 2 3 1 2 0 3 4 5 6 7 3 4 5 6 2 3 4 5 1 2 0 3 0 4 3 2 3 1 0 5 4 3 2 4 3 0 6 5 4 3 0 2 5 4 3 6 5 4 3 1 2 0 3 0}.		
\end{equation}
We do not have an interpretation of this particular element. Only its existence is important for the self-duality Theorem~\ref{th:selfduality} and construction of the double affine Weyl group in Theorem~\ref{th:extended group Painleve}. This existence can be also shown in non constructive manner by comparing the scalar product of the vectors \(\mathbf{r}_0,\dots, \mathbf{r}_9\) and \(\mathbf{d}_0,\dots, \mathbf{d}_9\) in a root lattice.

\begin{Remark} \label{rem:E8 two polygons}
	One can also construct integrable system corresponding to \(q\)-Painlev\'e \(E_8^{(1)}\) (namely the phase space, Hamiltonian and spectral curve) starting from the pointed Painlev\'e polygon given by rectangle \(3\times 6\), see  Fig.~\ref{fi:E8 rectangle} left. The corresponding spectral curve agrees with one given in \cite[Fig. 2]{Moriyama:2021quantum}. 
	
	Another option is to start from triangle with side of length \(6\), see Fig.~\ref{fi:E8 rectangle} right. This agrees with the \cite[Fig 12]{Benini:2009webs}

	It is easy to see (and also follows from Theorem~\ref{th:KNP}) that these cases are related to Newton polygon on Fig.~\ref{fi:3*9graph} by mutations of pointed polygons.
\begin{figure}[h]
	\begin{center}
		\begin{tikzpicture}[scale=0.75, font = \small]
			
			\draw[fill] (-3,2) circle (1pt) -- (-2,2) circle (1pt) -- (-1,2) circle (1pt) -- (0,2) circle (1pt) -- (1,2) circle (1pt) -- (2,2) circle (1pt) -- (3,2) circle (1pt) -- (3,1) circle (1pt) 
			-- (3,0) circle (1pt) -- (3,-1) circle (1pt)  -- (2,-1) circle (1pt) -- (1,-1) circle (1pt) -- (0,-1) circle (1pt) --(-1,-1) circle (1pt) --(-2,-1) circle (1pt) --(-3,-1) circle (1pt) --(-3,0) circle (1pt) --(-3,1) circle (1pt) -- (-3,2);
			\draw[fill] (0,0) circle (2pt);
			\draw (1,0) circle (2pt);
			\draw (1,1) circle (2pt);
			\draw (2,0) circle (2pt);
			\draw (2,1) circle (2pt);
			\draw (0,1) circle (2pt);
			\draw (-1,0) circle (2pt);
			\draw (-1,1) circle (2pt);
			\draw (-2,0) circle (2pt);
			\draw (-2,1) circle (2pt);
		\end{tikzpicture}
		\qquad \qquad\qquad
		\begin{tikzpicture}[scale=0.75, font = \small]
			
			\draw[fill] (-3,2) circle (1pt) -- (-3,3) circle (1pt) -- (-3,4) circle (1pt) -- (-3,5) circle (1pt) -- (-2,4) circle (1pt) -- (-1,3) circle (1pt) -- (0,2) circle (1pt) -- (1,1) circle (1pt) 
			-- (2,0) circle (1pt) -- (3,-1) circle (1pt)  -- (2,-1) circle (1pt) -- (1,-1) circle (1pt) -- (0,-1) circle (1pt) --(-1,-1) circle (1pt) --(-2,-1) circle (1pt) --(-3,-1) circle (1pt) --(-3,0) circle (1pt) --(-3,1) circle (1pt) -- (-3,2);
			\draw[fill] (0,1) circle (2pt);
			\draw (1,0) circle (2pt);
			\draw (0,0) circle (2pt);
			\draw (-1,0) circle (2pt);
			\draw (-1,1) circle (2pt);
			\draw (-1,2) circle (2pt);
			\draw (-2,0) circle (2pt);
			\draw (-2,1) circle (2pt);
			\draw (-2,2) circle (2pt);
			\draw (-2,3) circle (2pt);
		\end{tikzpicture}
		
		\caption{Two another pointed polygons of Painlev\'e $E_8^{(1)}$
		\label{fi:E8 rectangle}}
	\end{center}
\end{figure}
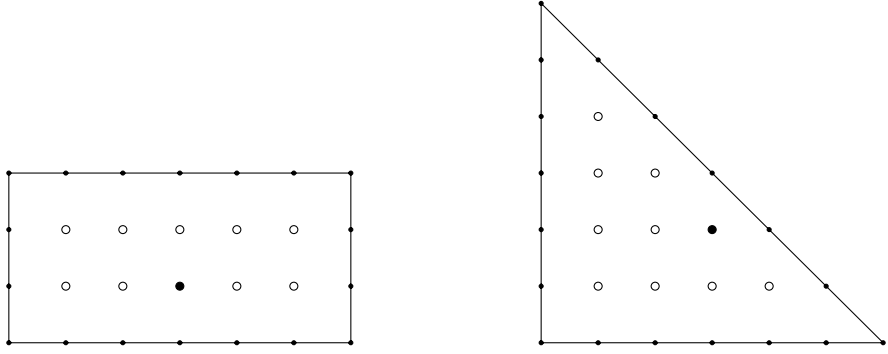

\end{Remark}

\bibliographystyle{alpha}
\addcontentsline{toc}{section}{\refname}  
\bibliography{Reduction}

\end{document}